\newtheorem{lem}{Lemma}[section]
\newtheorem{thm}{Theorem}[section]
\newtheorem{alg}{Algorithm}[section]
\newtheorem{ass}{Assumption}
\newtheorem{prop}{Proposition}[section]
\newenvironment{proof}[1][Proof]{\noindent\textbf{#1.} }{\ \rule{0.5em}{0.5em}}
\DeclareMathOperator*{\argmax}{arg\,max}
\DeclareMathOperator*{\argmin}{arg\,min}
\newcommand{\eps}{\varepsilon}
\numberwithin{equation}{section}
\begin{document}

\title{Non-separable Models with High-dimensional Data\thanks{
First draft: February, 2017. We are grateful to Alex Belloni, Xavier D'Haultf%
œ{}uille, Michael Qingliang Fan, Bryan Graham, Yu-Chin Hsu, Yuya Sasaki, and
seminar participants at Academia Sinica, Duke, Asian Meeting of the
Econometric Society, China Meeting of the Econometric Society, and the 7th
Shanghai Workshop of Econometrics. Su acknowledges the funding support
provided by the Lee Kong Chian Fund for Excellence.}}
\author{Liangjun Su\thanks{
School of Economics, Singapore Management University, 90 Stamford Road,
Singapore 178903. E-mail: ljsu@smu.edu.sg.} \and Takuya Ura\thanks{
Department of Economics, University of California, Davis. One Shields
Avenue, Davis, CA 95616. E-mail: takura@ucdavis.edu.} \and Yichong Zhang%
\thanks{
School of Economics, Singapore Management University, 90 Stamford Road,
Singapore 178903. E-mail: yczhang@smu.edu.sg.}}
\maketitle

\begin{abstract}
This paper studies non-separable models with a continuous treatment when the
dimension of the control variables is high and potentially larger than the
effective sample size. We propose a three-step estimation procedure to
estimate the average, quantile, and marginal treatment effects. In the first
stage we estimate the conditional mean, distribution, and density objects by
penalized local least squares, penalized local maximum likelihood
estimation, and numerical differentiation, respectively,
where control variables are selected via a localized method of $L_{1}$%
-penalization at each value of the continuous treatment. In the second stage
we estimate the average and marginal distribution of the potential
outcome via the plug-in principle. In the third stage, we estimate the
quantile and marginal treatment effects by inverting the estimated
distribution function and using the local linear regression, respectively.
We study the asymptotic properties of these estimators and propose a
weighted-bootstrap method for inference. Using simulated and real datasets,
we demonstrate that the proposed estimators perform well in finite samples.
\bigskip

\noindent \textbf{Keywords:} Average treatment effect, High dimension, Least
absolute shrinkage and selection operator (Lasso), Nonparametric quantile
regression, Nonseparable models, Quantile treatment effect, Unconditional
average structural derivative\bigskip

\noindent \textbf{JEL codes:} C21, J62
\end{abstract}

\nopagebreak 

\section{Introduction\label{sec:intro}}

Non-separable models without additivity appear frequently in econometric
analyses, because economic theory motivates a nonlinear role of the
unobserved individual heterogeneity \citep{AM05} and its
multi-dimensionality \citep{BC07,CHH03,CHS10}. A large fraction of the
previous literature on non-separable models has used control variables to
achieve the unconfoundedness condition \citep{RR83}, that is, the
conditional independence between a regressor of interest (or a treatment)
and the unobserved individual heterogeneity given the control variables.
Although including high-dimensional control variables make unconfoundedness more
plausible, the estimation and inference become more challenging, as well. It
remains unanswered how to select control variables among potentially very
many variables and conduct proper statistical inference for parameters of
interest in non-separable models with a continuous treatment.

This paper proposes estimation and inference for unconditional parameters,\footnote{To be more specific, the parameters of interest are unconditional on covariates but conditional on the treatment level.}
including unconditional means of the potential outcomes, the unconditional
cumulative distribution function, the unconditional quantile function, and
the unconditional quantile partial derivative with the presence of both
continuous treatment and high-dimensional covariates.\footnote{%
We focus on unconditional parameters, in which (potentially
high-dimensional) covariates are employed to achieve the unconfoundedness
but the parameters of interest are unconditional on the covariates.
Unconditional parameters are simple to display and the simplicity is crucial
especially when the covariates are high dimensional. As emphasized in \cite%
{FM13} and \cite{P10}, unconditional parameters have two additional
attractive features. First, by definition, they capture all the individuals
in the sample at the same time instead of investigating the underlying
structure separately for each subgroup defined by the covariates $X$. The
treatmen effect for the whole population is more policy-relevant. Second, an
estimator for unconditional parameters can have better finite/large sample
properties.} The proposed method estimates the parameters of interest in
three stages. The first stage selects controls by the method of least
absolute shrinkage and selection operator (Lasso) and predicts reduced-form
parameters such as the conditional expectation and distribution of the
outcome given the variables and treatment level and the conditional density
of the treatment given the control variables. We allow for different control
variables to be selected at different values of the continuous treatment.
The second stage recovers the average and the marginal distribution of the
potential outcome by plugging the reduced-form parameters into doubly robust
moment conditions. The last stage recovers the quantile of the potential
outcome and its derivative with respect to the treatment by inverting the
estimated distribution function and using the local linear regression,
respectively. The inference is implemented via a weighted-bootstrap
without recalculating the first stage variable selections, which saves
considerable computation time.

To motivate our parameters of interest, we relate our estimands (the
population objects that our procedure aims to recover) with the structural
outcome function. Notably, we extend \cite{HM07} and \cite{S15} to
demonstrate that the unconditional derivative of the quantile of the
potential outcome with respect to the treatment is equal to the weighted
average of the marginal effects over individuals with same outcomes and
treatments. 

This paper contributes to two important strands of the econometric
literature. The first is the literature on non-separable models with a
continuous treatment, in which previous analyses have focused on a fixed and
small number of control variables; see, e.g., \cite{C03}, \cite{CIN07}, \cite%
{HM07}, \cite{IN09}, \cite{M94} and \cite{RM03}. The second is a growing
literature on recovering the causal effect from the high-dimensional data;
see, e.g., \cite{BCCH12}, \cite{BCH14jep}, \cite{CHS15}, \cite{CHSAnnual15}, 
\cite{F15}, \cite{AI15}, \cite{CCDDHN16}, \cite{BCH14}, \cite{AW16}, \cite%
{BCFH13}, and \cite{BCW17}. Our paper complements the previous works by
studying both the variable selection and post-selection inference of causal
parameters in a \textit{non-separable} model with a \textit{continuous}
treatment. Recently, \cite{CMN16}, \cite{CMM17}, and \cite{CJN17} have considered
the semiparametric estimation of the causal effect in a setting with many
included covariates and proposed novel bias-correction methods to conduct
valid inference. Comparing with them, we deal with the fully nonparametric
model with an ultra-high dimension of potential covariates, and rely on the
approximate sparsity to reduce dimensionality.

The treatment variable being continuous imposes difficulties in both
variable selection and post-selection inference. To address the former, we
use penalized local Maximum Likelihood and Least Square estimations
(hereafter, MLE and LS, respectively) to select control variables for
each value of the continuous treatment. The penalized local LS was previously studied by \cite{K15} and \cite{LM16}.\footnote{We thank the referee for the reference.} The local MLE complements the LS method by estimating a nonlinear and high-dimensional model with varying coefficients indexed by 
not only the continuous treatment variable but also a location variable. Our approach directly extend the distribution regression proposed in \cite{CFM13} to the high-dimensional varying coefficient setting. By relying on kernel smoothing
method, we require a different penalty loading than the traditional Lasso
method. \cite{CZW11} and \cite{NL14} develop general theories of
estimation, inference, and hypothesis testing of penalized (Pseudo) MLE. We
complement their results by considering the local likelihood with an $L_{1}$
penalty term. \cite{BCCW15} construct uniformly valid confidence bands
for the Z-estimators of unconditional moment equalities. Our results are not
covered by theirs, either, as our parameters are defined based on
conditional moment equalities. To prove the statistical properties of the
penalized local MLE, we establish a local version of the
compatibility condition (\citeauthor{BV11}, \citeyear{BV11}), which itself
is new to the best of our knowledge.

For the post-selection inference, we establish doubly robust moment
conditions for the continuous treatment effect model. Our parameters of
interest is irregularly identified by the definition in \cite{KT10}, as they
are identified by a thin-set. Therefore, by averaging observations only when
their treatment levels are close to the one of interest, the convergence rates
of our estimators are nonparametric, which is in contrast with the $\sqrt{n}$%
-rate obtained in \cite{BCFH13} and \cite{F15}. Albeit motivated by distinct
models, \cite{BCC16} also estimate the irregular identified parameters in
the high-dimensional setting. However, the irregularity faced by \cite{BCC16}
is not due to the continuity of the variable of interest. Consequently, \cite%
{BCC16} do not study the regularized estimator with localization as we do
in this paper. 

Estimation based on doubly robust moments is also related to the literature of semiparametric efficiency. The idea of doubly robust estimation can be traced back to the  nonparametric efficiency theory for functional estimation developed by \cite{B83}, \cite{P90}, \cite{B93}, and \cite{N94}. \cite{R01} and \cite{VR03} study the semiparametric doubly robust estimators by modeling both the treatment and outcome processes. \cite{VD03} allow for nonparametric modeling in causal inference problems.  When both processes are nonparametrically estimated, the doubly robust methods can achieve faster rates of convergence than their nuisance estimator, making the estimator less sensitive to the curse of dimensionality and model selection bias. Their use in causal inference is also considered by \cite{RR95}, \cite{H98}, \cite{VR03}, \cite{HIR03}, \cite{V06}, \cite{F07}, \cite{t07}, \cite{VR11}, \cite{K17}, and \cite{R17}, among others. 

Among the works above, our paper is most closely related to \cite{K17}, who consider the doubly robust estimation for the average treatment effect when the treatment variable is continuous. Our paper complements theirs in four aspects. First, the estimation procedures are different. \cite{K17} first estimate the efficient influence function for the weighted average of the mean effect over all treatment levels, and then, use kernel smoothing to estimate the mean effect at each treatment level. On the contrary, we directly consider the doubly-robust moment for the parameters of interest. Second, \cite{K17} mainly focus on the mean effect, while we also consider quantile and marginal treatment effects. We obtain linear expansions for our estimators uniformly over both the quantile index and the treatment variable. Third, \cite{K17} do not construct detailed estimators of their nuisance parameters, but instead, impose high-level assumptions. To verify such high-level assumptions in the high-dimensional setting is nontrivial. In contrast, we provide valid estimators for our nuisance parameters via both regularization and localization, and derive their statistical properties. Fourth, we take into account the fact that the dimension of covariates may increase with the sample size so that the complexity of our nuisance parameter estimator measured by the uniform entropy will diverge to infinity. Such a situation is ruled out by \cite{K17}.

To obtain uniformly valid results over values of the
continuous treatment, we derive linear expansions of the rearrangement
operator for a local process which is not tight, extending the existing
results in \cite{CFG10}. 

We study the finite sample performance of our estimation procedure via Monte
Carlo simulations and an empirical application. The simulations suggest that
the proposed estimators perform reasonably well in finite samples. In the
empirical exercise, we estimate the distributional effect of parental income
on son's income and intergenerational elasticity using the 1979 National
Longitudinal Survey of Youth (NLSY79). We control for a large dimension of
demographic variables. The quantiles of son's potential income are in
general upward slopping with respect to parental income, but for the subsample of blacks, the
intergenerational elasticities are not statistically significant. 

The rest of this paper is organized as follows. Section \ref{sec:model}
presents the model and the parameters of interest. Section \ref%
{sec:estimators} proposes an estimation method in the presence of
high-dimensional covariates. Section \ref{sec:inf} demonstrates the validity
of a bootstrap inference procedure. Section \ref{sec:sim} presents Monte
Carlo simulations. Section \ref{sec:app} illustrates the proposed estimator
using NLSY79. Section \ref{sec:concl} concludes. Proofs of the main theorems
and Lemma \ref{lem:localRE} are reported in the appendix. Proofs of the rest
of the lemmas are collected in an online supplement.

Throughout this paper, we adopt the convention that the capital letters,
such as $A$, $Y$, $X$, denote random elements while their corresponding
lower cases denote realizations. $C$ denotes an arbitrary positive constant that may not be the
same in different contexts. For a sequence of random variables $%
\{U_{n}\}_{n=1}^{\infty }$ and a random variable $U$, $U_{n}\rightsquigarrow
U$ indicates weak convergence in the sense of \cite{VW96}. When $U_{n}$ and $%
U$ are $k$-dimensional elements, the space of the sample path is $\Re ^{k}$
equipped with Euclidean norm. When $U_{n}$ and $U$ are stochastic processes,
the space of sample path is $L^{\infty }(\{v\in \Re^{k}:|v|<B\})$ for some
positive $B$ equipped with sup norm. The letters $\mathbb{P}_{n}
$, $\mathbb{P}$, and $\mathcal{U}_{n}$ denote the empirical process,
expectation, and U-process, respectively. In particular, $\mathbb{P}_{n}$
assigns probability $\frac{1}{n}$ to each observation and $\mathcal{U}_{n}$
assigns probability $\frac{1}{n(n-1)}$ to each pair of observations. $%
\mathbb{E}$ also denotes expectation. We use $\mathbb{P}$ and $\mathbb{E}$
exchangeably. For any positive (random) sequence $(u_{n},v_{n})$, if there
exists a positive constant $C$ independent of $n$ such that $u_{n}\leq
Cv_{n} $, then we write $u_{n}\lesssim v_{n}$. $||\cdot ||_{Q,q}$ denotes $%
L^{q}$ norm under measure $Q$, where $q=1,2,\infty $. If measure $Q$ is
omitted, the underlying measure is assumed to be the counting measure. For
any vector $\theta $, $||\theta ||_{0}$ denotes the number of its nonzero
coordinates. $\text{Supp}(\theta )$, the support of a $p$-dimensional vector 
$\theta $, is defined as $\{j:\theta _{j}\neq 0\}$. For $T\subset
\{1,2,\cdots ,p\}$, let $|T|$ be the cardinality of $T$, $T^{c}$ be the
complement of $T$, and $\theta _{T}$ be the vector in $\Re ^{p}$ that has
the same coordinates as $\theta $ on $T$ and zero coordinates on $T^{c}$.
Last, let $a\vee b=\max (a,b)$.

\section{Model and Parameters of Interest}

\label{sec:model} Econometricians observe an outcome $Y$, a continuous
treatment $T$, and a set of covariates $X$, which may be high-dimensional.
They are connected by a measurable function $\Gamma (\cdot )$, i.e., 
\begin{equation*}
Y=\Gamma (T,X,A),
\end{equation*}%
where $A$ is an unobservable random vector and may not be weakly separable
from observables $(T,X)$, and $\Gamma $ may not be monotone in either $T$ or 
$A$.

Let $Y(t)=\Gamma (t,X,A)$. We are interested in the average $\mathbb{E}Y(t)$%
, the marginal distribution $\mathbb{P}(Y(t)\leq u)$ for some $u\in \Re $, and the
quantile $q_{\tau }(t)$, where we denote $q_{\tau }(t)$ as the $\tau $-th
quantile of $Y(t)$ for some $\tau \in (0,1)$. We are also interested in the
causal effect of moving $T$ from $t$ to $t^{\prime }$, i.e., $\mathbb{E}%
(Y(t)-Y(t^{\prime }))$ and $q_{\tau }(t)-q_{\tau }(t^{\prime })$. Last, we
are interested in the average marginal effect $\mathbb{E}[\partial
_{t}\Gamma (t,X,A)] $ and quantile partial derivative $\partial_t q_\tau(t)$%
. Next, we specify conditions under which the above parameters are
identified.\medskip

\begin{ass}
The random variables $A$ and $T$ are conditionally independent given $%
X $.
\label{ass:unconfoundedness}\medskip
\end{ass}

Assumption \ref%
{ass:unconfoundedness} is known as the unconfoundedness condition, which
is commonly assumed in the treatment effect literature. See \cite{M10}, \cite%
{MM11}, \cite{HIR03} and \cite{F07} for the case of discrete treatment and 
\cite{GIG14}, \cite{GW14}, and \cite{HI04} for the case of continuous
treatment. It is also called the conditional independence assumption in \cite%
{HM07}, which is weaker than the full joint independence between $A$ and $%
(T,X)$. Note that $X$ can be arbitrarily correlated with the unobservables $A
$. This assumption is more plausible when we control for sufficiently many
and potentially high-dimensional covariates.

%


\begin{thm}
Suppose Assumption \ref{ass:unconfoundedness} holds and $\Gamma(\cdot)$ is
differentiable in its first argument. Then the marginal distribution of $Y(t)
$ and the average marginal effect $\partial_t \mathbb{E}Y(t)$ are
identified. In addition, if Assumption \ref{ass:reg_sasaki} in the Appendix
holds and $X$ is continuously distributed, then $\partial_tq_\tau(t) = 
\mathbb{E}_{\mu_{\tau,t}}[\partial_t\Gamma(t,X,A)]$, where, for $f_{(X,A)}$
denoting the joint density of $(X,A)$, $\mu_{\tau,t}$ is the probability
measure on $\{(x,a): \Gamma(t,x,a)=q_\tau(t)\}$ with density $\frac{%
f_{(X,A)}}{c_f\|\nabla_{(x,a)}\Gamma(t,\cdot,\cdot)\|}$, where 
$$c_f = \int_{(x,a): \Gamma(t,x,a)=q_\tau(t)} \frac{%
f_{(X,A)}(x,a)}{\|\nabla_{(x,a)}\Gamma(t,\cdot,\cdot)\|}dxda.$$

 \label{thm:uasf}
\end{thm}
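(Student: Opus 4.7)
The plan is to dispatch the three identification claims in turn, relying on unconfoundedness (Assumption \ref{ass:unconfoundedness}.2), standard dominated-convergence arguments, and Federer's coarea formula.

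First I would establish identification of the marginal distribution and the mean of $Y(t)$. Writing $Y(t) = \Gamma(t,X,A)$ and using $A \perp T \mid X$,
\[
P(Y(t) \leq u \mid X = x) = P(\Gamma(t,X,A) \leq u \mid X = x, T = t) = P(Y \leq u \mid X = x, T = t),
\]
so marginalizing over $F_X$ yields $F_{Y(t)}(u) = \int P(Y \leq u \mid X = x, T = t)\,dF_X(x)$, which is a functional of the law of observables. Replacing the indicator by $Y$ itself gives $\mathbb{E} Y(t) = \int \mathbb{E}[Y \mid X = x, T = t]\,dF_X(x)$, and since $\Gamma$ is differentiable in $t$, a dominated-convergence argument passes $\partial_t$ inside the integrals and identifies $\partial_t \mathbb{E} Y(t) = \mathbb{E}[\partial_t \Gamma(t,X,A)]$.

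For the quantile partial derivative I would implicitly differentiate the defining equation $F_{Y(t)}(q_\tau(t)) = \tau$ to get
\[
\partial_t q_\tau(t) = -\frac{\partial_t F_{Y(t)}(u)\big|_{u = q_\tau(t)}}{f_{Y(t)}(q_\tau(t))}.
\]
Starting from $F_{Y(t)}(u) = \int \mathbf{1}\{\Gamma(t,x,a) \leq u\} f_{(X,A)}(x,a)\, dx\, da$ and appealing to Assumption \ref{ass:reg_sasaki} for smoothness of $\Gamma(t,\cdot,\cdot)$ and non-vanishing of $\nabla_{(x,a)} \Gamma$ on the level set $\mathcal{L}_{\tau,t} = \{(x,a) : \Gamma(t,x,a) = q_\tau(t)\}$, Federer's coarea formula rewrites the density at $q_\tau(t)$ as
\[
f_{Y(t)}(q_\tau(t)) = \int_{\mathcal{L}_{\tau,t}} \frac{f_{(X,A)}}{\|\nabla_{(x,a)} \Gamma(t,\cdot,\cdot)\|}\, dS,
\]
while combining coarea with chain-rule differentiation of the indicator in $t$ yields
\[
-\partial_t F_{Y(t)}(q_\tau(t)) = \int_{\mathcal{L}_{\tau,t}} \partial_t \Gamma(t,x,a)\, \frac{f_{(X,A)}}{\|\nabla_{(x,a)} \Gamma(t,\cdot,\cdot)\|}\, dS.
\]
Taking the ratio cancels the common normalizing constant and returns $\mathbb{E}_{\mu_{\tau,t}}[\partial_t \Gamma(t,X,A)]$, as claimed.

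The main obstacle is justifying the coarea step: Assumption \ref{ass:reg_sasaki} must guarantee that $\mathcal{L}_{\tau,t}$ is a smooth hypersurface of positive codimension, that $\|\nabla_{(x,a)} \Gamma\|$ is bounded away from zero on a neighborhood of it, and that $t$-differentiation commutes with the surface integral. Operationally I would mirror the geometric argument of \cite{S15} (which itself extends \cite{HM07}), checking that the regularity conditions restated in the Appendix are strong enough for his proof to go through once the (potentially high-dimensional) covariate $X$ is included alongside the unobservable $A$ in the level-set analysis.
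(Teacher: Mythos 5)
Your proposal is correct in substance but reaches the quantile-derivative formula by a different route than the paper. For the first claim you and the paper agree: unconfoundedness gives $P(Y(t)\leq u)=\mathbb{E}[\mathbb{E}(1\{Y\leq u\}\mid X,T=t)]$, which identifies the marginal distribution of $Y(t)$; one small caveat is that your dominated-convergence step asserting $\partial_t\mathbb{E}Y(t)=\mathbb{E}[\partial_t\Gamma(t,X,A)]$ needs an integrable dominating function that the theorem does not assume, and it is not needed for the identification claim --- since $\mathbb{E}Y(t)$ is identified for every $t$, its derivative in $t$ is automatically identified, which is how the paper treats it. For the second claim, the paper does not differentiate $F_{Y(t)}(q_\tau(t))=\tau$ directly; instead it constructs an auxiliary treatment $T^{\ast}$ with the same marginal law as $T$ but independent of $(X,A)$, sets $Y^{\ast}=\Gamma(T^{\ast},X,A)$, observes that the conditional $\tau$-quantile of $Y^{\ast}$ given $T^{\ast}=t$ equals the unconditional quantile $q_\tau(t)$ of $Y(t)$, and then invokes Theorem 1 of \cite{S15} for the triple $(Y^{\ast},T^{\ast},(X,A))$, checking only that Assumption \ref{ass:reg_sasaki} delivers Sasaki's conditions (independence holds by construction, which is exactly why the result cannot be applied to $(Y,T,(X,A))$ directly). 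You instead inline the analytic content of Sasaki's proof: implicit differentiation of the quantile identity together with the coarea formula on the level set $\{\Gamma(t,x,a)=q_\tau(t)\}$, with the unconditional density $f_{(X,A)}$ appearing because the object is the unconditional distribution of $Y(t)$. That computation does produce the stated weighted-average representation, but it leaves you responsible for justifying differentiation of the indicator under the integral, smoothness and nondegeneracy of the level set, and interchange of $t$-differentiation with the surface integral --- precisely the work that Assumption \ref{ass:reg_sasaki} and Sasaki's Theorem 1 package for you; your plan to "mirror" his geometric argument is acceptable but is where all the remaining effort sits. The paper's exogenization device buys a two-line proof at the cost of an auxiliary construction; your direct route is self-contained and makes the role of the coarea formula explicit, at the cost of re-deriving the regularity analysis.
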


Several comments are in order. First, because the marginal distribution of $%
Y(t)$ is identified, so be its average, quantile, average marginal effect,
and quantile partial derivative. As pointed out by \cite{IN09}, a
non-separable outcome with a general disturbance is equivalent to treatment
effect models. Therefore, we can view $Y(t)$ as the potential outcome. Under
unconfoundedness, the identification of the marginal distribution of the
potential outcome with a continuous treatment has already been established
in \cite{HI04} and \cite{GW14}. The first part of Theorem \ref{thm:uasf}
just re-states their results. Second, the second result indicates that the
partial quantile derivative identifies the weighted average marginal effect
for the subpopulation with the same potential outcome, i.e., $\{Y(t) =
q_\tau(t)\}.$ The result is closely related to, but different from \cite{S15}%
. We consider the unconditional quantile of $Y(t)$, whereas he considered
the conditional quantile of $Y(t)$ given $X$. Note that $q_\tau(t)$ is not
the average of the conditional quantile of $Y(t)$ given $X$. Third, we
require $X$ to be continuous just for the simplicity of derivation. If some
elements of $X$ are discrete, a similar result can be established in a
conceptually straightforward manner by focusing on the continuous covariates
within samples homogenous in the discrete covariates, at the expense of
additional notation. Finally, we do not require $X$ to be continuous when
establishing the estimation and inference results below.


\section{Estimation}

\label{sec:estimators} 
Let $f_t(x) = f_{T|X}(t|x)$ denote the conditional density of $T$ evaluated
at $t$ given $X=x$ and $d_t(\cdot )$ denote the Dirac function such that for
any function $g(\cdot )$, 
\begin{equation*}
\int g(s)d_t(s)ds=g(t).
\end{equation*}%
In addition, let $Y_{u}(t)=1\{Y(t)\leq u\}$ and $Y_{u}=1\{Y\leq u\}$ for
some $u\in \Re $. Then $\mathbb{E}(Y(t))$ and $\mathbb{E}(Y_{u}(t))$ can be
identified by the method of generalized propensity score as proposed in \cite%
{HI04}, i.e., 
\begin{equation}
\mathbb{E}(Y(t))=\mathbb{E}\biggl(\frac{Y d_t(T)}{f_{t}(X)}\biggr)\quad 
\text{and}\quad \mathbb{E}(Y_{u}(t))=\mathbb{E}\biggl(\frac{Y_{u}d_t(T)}{%
f_{t}(X)}\biggr).  \label{eq:mom1}
\end{equation}

There is a direct analogy between \eqref{eq:mom1} for the continuous
treatment and $\mathbb{E}(Y_u(t)) = \mathbb{E}(\frac{Y_u1\{T =t\}}{\mathbb{P}(T=t|X)}%
) $ when the treatment $T$ is discrete: the indicator function shrinks to a
Dirac function and the propensity score is replaced by the conditional
density. Following this analogy, \cite{HI04} called $f_t(X)$ the generalized
propensity.

\cite{BCFH13} and \cite{F15} considered the model with a discrete treatment
and high-dimensional control variables, and proposed to use the doubly
robust moment for inference. Following their lead, we propose the
corresponding doubly robust moment when the treatment status is continuous.
Let $\nu _{t}(x)=\mathbb{E}(Y|X=x,T=t)$ and $\phi _{t,u}(x)=\mathbb{E}%
(Y_{u}|X=x,T=t)$, then 
\begin{equation}
\mathbb{E}(Y(t))=\mathbb{E}\biggl[\biggl(\frac{(Y-\nu _{t}(X))d_t(T)}{%
f_{t}(X)}\biggr)+\nu _{t}(X)\biggr]  \label{eq:doublerobust_E}
\end{equation}%
and 
\begin{equation}
\mathbb{E}(Y_{u}(t))=\mathbb{E}\biggl[\biggl(\frac{(Y_{u}-\phi
_{t,u}(X))d_t(T)}{f_{t}(X)}\biggr)+\phi _{t,u}(X)\biggr].
\label{eq:doublerobust}
\end{equation}%
%
We propose the following three-stage procedure to estimate $\mu (t):= 
\mathbb{E}Y(t)$, $\alpha (t,u):= \mathbb{P}(Y(t)\leq u)$, $q_{\tau }(t)$, and $%
\partial _{t}q_{\tau }(t)$:

\begin{itemize}
\item[1.] Estimate $\nu_t(x)$, $\phi _{t,u}(x)$, and $f_{t}(x)$ by $\widehat{%
\nu}_t(x)$, $\widehat{\phi }_{t,u}(x)$ and $\hat{f}_t(x)$, respectively, using the first-stage bandwidth $h_1$. 

\item[2.] Estimate $\mu (t)$ and $\alpha (t,u)$ by 
\begin{equation*}
\hat{\mu}(t)=\frac{1}{n}\sum_{i=1}^{n}\biggl[\biggl(\frac{(Y-\widehat{\nu }%
_{t}(X_{i}))}{\hat{f}_{t}(X_{i})h_2}K(\frac{T_{i}-t}{h_2})\biggr)+\widehat{\nu }%
_{t}(X_{i})\biggr]
\end{equation*}%
and 
\begin{equation*}
\hat{\alpha}(t,u)=\frac{1}{n}\sum_{i=1}^{n}\biggl[\biggl(\frac{(Y_{u}-%
\widehat{\phi }_{t,u}(X_{i}))}{\hat{f}_{t}(X_{i})h_2}K(\frac{T_{i}-t}{h_2})%
\biggr)+\widehat{\phi }_{t,u}(X_{i})\biggr],\quad \text{respectively,}
\end{equation*}%
where $K(\cdot )$ and $h_2$ are a kernel function and the second-stage bandwidth, respectively. Then
rearrange $\hat{\alpha}(t,u)$ to obtain $\hat{\alpha}^{r}(t,u)$, which is monotone in $%
u$.

\item[3] Estimate $q_{\tau }(t)$ by inverting $\hat{a}^{r}(t,u)$ with
respect to (w.r.t.) $u$, i.e., $\hat{q}_{\tau }(t)=\inf \{u:\hat{a}%
^{r}(t,u)\geq \tau \};$ estimate $\partial_t \mu(t) = \mathbb{E}\partial
_{t}\Gamma(t,X,A)$ by $\breve{\beta}^{1}(t)$, which is the estimator of the
slope coefficient in the local linear regression of $\hat{\mu}(T_{i})$ on $%
T_{i}$; estimate $\partial _{t}q_{\tau }(t)$ by $\hat{\beta}_{\tau }^{1}(t)$%
, which is the estimator of the slope coefficient in the local linear
regression of $\hat{q}_{\tau }(T_{i})$ on $T_{i}$. 
\end{itemize}


\subsection{The First Stage Estimation}

\label{sec:1st}

In this section, we define the first stage estimators and derive their
asymptotic properties. Since $\nu_t(x)$, $\phi_{t,u}(x)$, and $f_t(x)$ are
local parameters w.r.t. $T=t$, in addition to using $L_1$ penalty to select
relevant covariates, we rely on a kernel function to implement the
localization. In particular, we propose to estimate $\nu_t(x)$, $%
\phi_{t,u}(x)$, and $f_t(x)$ by a penalized local LS, a penalized local MLE,
and numerical differentiation, respectively. 

\subsubsection{Penalized Local LS and MLE}

Recall $\nu _{t}(x)=\mathbb{E}(Y|X=x,T=t)$ and $\phi _{t,u}(x)=\mathbb{E}%
(Y_{u}|X=x,T=t)$ where $Y_{u}=1\{Y\leq u\}$. We approximate $\nu _{t}(x)$
and $\phi _{t,u}(x)$ by $b(x)^{\prime }\gamma_{t}$ and $\Lambda
(b(x)^{\prime }\theta_{t,u})$, respectively, where $\Lambda (\cdot )$ is
the logistic CDF and $b(X)$ is a $p\times 1$ vector of basis functions with
potentially large $p$. In the case of high-dimensional covariates, $b(X)$ is
just $X$, while in the case of nonparametric sieve estimation, $b(X)$ is a
series of bases of $X$. The approximation errors for $\nu _{t}(x)$ and $\phi
_{t,u}(x)$ are given by $r_{t}^{\nu }(x)=\nu _{t}(x)-b(x)^{\prime }\gamma
_{t}$ and $r_{t,u}^{\phi }(x)=\phi _{t,u}(x)-\Lambda (b(x)^{\prime }\theta
_{t,u}),$ respectively.

Note that we only approximate $\nu_t(x)$ and $\phi_{t,u}(x)$ by a linear
regression and a logistic regression, respectively, with the approximation
errors satisfying Assumption \ref{ass:approx} below. Assumption \ref{ass:approx} below puts a sparsity
structure on $\nu_t(x)$ and $\phi_{t,u}(x)$ so that the number of effective
covariates that can affect them is much smaller than $p$. If the effective
covariates are a few discrete variables that have a few categories, then we can saturate the regressions by low-dimensional dummy variables so that
there is no approximate error. If some of the effective covariates are
continuous, then we can include sieve bases in the linear regression so that
the approximation error can still satisfy Assumption \ref{ass:approx}. One possible scenario that the approximate sparsity condition may fail is when there are a substantial amount of discrete variables that are all on the same footing (e.g., job occupation dummies). In this case, it is hard to define a sparse approximation.\footnote{We thank the Associate Editor for this point.} Last,
the coefficients $\gamma_t$ and $\theta_{t,u}$ are both functional
parameters that can vary with their indexes. This provides additional
flexibility of our setup against misspecification.

We estimate $\nu _{t}(x)$ and $\phi _{t,u}(x)$ by $\widehat{\nu }%
_{t}(x)=b(x)^{\prime }\hat{\gamma}_{t}$ and $\widehat{\phi }%
_{t,u}(x)=\Lambda (b(x)^{\prime }\hat{\theta}_{t,u})$, respectively, where%
\begin{equation}
\hat{\gamma}_{t}=\argmin_{\gamma }\frac{1}{2n}%
\sum_{i=1}^{n}(Y_{i}-b(X_{i})^{\prime }\gamma )^{2}K(\frac{T_{i}-t}{h_1})+%
\frac{\lambda }{n}||\widehat{\Xi }_{t}\gamma ||_{1},  \label{eq:e}
\end{equation}

\begin{equation}
\hat{\theta}_{t,u}=\argmin_{\theta }\frac{1}{n}\sum_{i=1}^{n}M(1\{Y_{i}\leq
u\},X_{i};\theta )K(\frac{T_{i}-t}{h_1})+\frac{\lambda }{n}||\widehat{\Psi }%
_{t,u}\theta ||_{1},  \label{eq:m}
\end{equation}%
$\left\Vert \cdot \right\Vert _{1}$ denotes the $L_{1}$ norm, $h_1$ is the first-stage bandwidth, $\lambda =\ell
_{n}(\log (p\vee nh_1)nh_1)^{1/2}$ for some slowly diverging sequence $\ell _{n}$%
, and $M(y,x;g)=-[y\log (\Lambda (b(x)^{\prime }g))+(1-y)\log (1-\Lambda
(b(x)^{\prime }g))]$. Our penalty term $\lambda$ is different from the one
used in \cite{BCFH13} and \cite{BCK18}, i.e., $\lambda^* = 1.1
\Phi^{-1}(1-\gamma/p)n^{1/2}$, where $\gamma = o(1)$ is some user-supplied constant, and $%
\Phi(\cdot)$ is the standard normal CDF. \cite{BCFH13} suggest $\gamma = C/(n\log(n))$, which implies that  
\begin{equation*}
\Phi^{-1}(1-\gamma/p) \sim  [\log(1/C) + \log(p) + \log(n) + \log(\log(n))]^{1/2} \sim \sqrt{\log(p \vee n)}.
\end{equation*}
Therefore, our
penalty term $\lambda$ is of same order of magnitude of $\lambda^*$ if $nh_1$
is replaced with $n$ and $\ell_n$ is removed. We need to use $nh_1$ in our
penalty due to the presence of the kernel function in our estimation
procedure. In particular, the effective sample size is of the
same order of $nh_1$.\footnote{%
Note that $\log(n)$ and $\log(nh_1)$ are of the same order of magnitude.} We will specify the order of magnitude of $h_1$ in Assumption \ref{ass:approx}. The role
played by $\ell_n$ in our penalty is similar to that of $\gamma$ in $%
\lambda^*$, which is to control the selection error uniformly. We refer
readers to \citet[Equation (6.4)]{BCFH13} for a more detailed discussion on
this point. Since we do not use the advanced technique of self-normalized
process as in \cite{BCFH13}, we multiply the sequence $\ell_n$ with $%
\sqrt{\log(p \vee n)}$ while in $\lambda^*$, $\log(\gamma)$ is additive to $%
\log(pn)$ inside the square root. We propose a rule-of-thumb $\lambda$ in
Section \ref{sec:sim} and study the sensitivity of our inference method
against the choice of $\lambda$ in Section \ref{sec:addsim} of the
supplementary material. 


In \eqref{eq:e} and \eqref{eq:m}, $\widehat{\Xi }_{t}=\text{diag}(\tilde{l}%
_{t,1},\cdots ,\tilde{l}_{t,p})$ and $\widehat{\Psi }_{t,u}=\text{diag}%
(l_{t,u,1},\cdots ,l_{t,u,p})$ are generic penalty loading matrices. The
infeasible loading matrices we would like to use are $\widehat{\Xi }_{t,0}=%
\text{diag}(\tilde{l}_{t,0,1},\cdots ,\tilde{l}_{t,0,p})$ and $\widehat{\Psi 
}_{t,u,0}=\text{diag}(l_{t,u,0,1},\cdots ,l_{t,u,0,p})$ in which%
\begin{equation*}
\tilde{l}_{t,0,j}=\biggl|\biggl|(Y-\nu _{t}(X))b_{j}(X)K(\frac{T-t}{h_1}%
)h_1^{-1/2}\biggr|\biggr|_{\mathbb{P}_n,2}
\end{equation*}%
and 
\begin{equation*}
l_{t,u,0,j}=\biggl|\biggl|(Y_{u}-\phi _{t,u}(X))b_{j}(X)K(\frac{T-t}{h_1}%
)h_1^{-1/2}\biggr|\biggr|_{\mathbb{P}_n,2} ,
\end{equation*}%
respectively. Since $\nu _{t}(\cdot )$ and $\phi _{t,u}(\cdot )$ are not
known, we follow \cite{BCFH13} and propose an iterative algorithm to obtain
the feasible versions of the loading matrices. The statistical properties of
the feasible loading matrices are summarized in Lemma \ref{lem:E32} in the
Appendix.

\begin{alg}
\begin{enumerate}
\item Let $\widehat{\Xi }_{t}^{0}=\text{diag}(\tilde{l}_{t,1}^{0},\cdots ,%
\tilde{l}_{t,p}^{0})$ and $\widehat{\Psi }_{t,u}^{0}=\text{diag}%
(l_{t,u,1}^{0},\cdots ,l_{t,u,p}^{0})$, where $\tilde{l}%
_{t,j}^{0}=||Yb_{j}(X)K(\frac{T-t}{h_1})h_1^{-1/2}||_{\mathbb{P}_n,2}$ and $%
l_{t,u,j}^{0}=||Y_{u}b_{j}(X)K(\frac{T-t}{h_1})h_1^{-1/2}||_{\mathbb{P}_n,2}.$ Using $%
\widehat{\Xi }_{t}^{0}$ and $\widehat{\Psi }_{t,u}^{0}$, we can compute $%
\hat{\gamma}_{t}^{0}$ and $\hat{\theta}_{t,u}^{0}$ by \eqref{eq:e} and %
\eqref{eq:m}. Let $\widehat{\nu }_{t}^{0}(x)=b(x)^{\prime }\hat{\gamma}%
_{t}^{0}$ and $\widehat{\phi }_{t,u}^{0}(x)=\Lambda (b(x)^{\prime }\hat{%
\theta}_{t,u}^{0})$ for $x=X_{1},...,X_{n}.$

\item For $k=1,\cdots ,K$ for some fixed positive integer $K$, we compute $%
\widehat{\Xi }_{t}^{k}=\text{diag}(\tilde{l}_{t,1}^{k},\cdots ,\tilde{l}%
_{t,p}^{k})$ and $\widehat{\Psi }_{t,u}^{k}=\text{diag}(l_{t,u,1}^{k},\cdots
,l_{t,u,p}^{k}),$ where 
\begin{equation*}
\tilde{l}_{t,j}^{k}=\biggl|\biggl|(Y-\widehat{\nu }_{t}^{k-1}(X))b_{j}(X)K(%
\frac{T-t}{h_1})h_1^{-1/2}\biggr|\biggr|_{\mathbb{P}_n,2}
\end{equation*}%
and 
\begin{equation*}
l_{t,u,j}^{k}=\biggl|\biggl|(Y_{u}-\widehat{\phi }_{t,u}^{k-1}(X))b_{j}(X)K(%
\frac{T-t}{h_1})h_1^{-1/2}\biggr|\biggr|_{\mathbb{P}_n,2} .
\end{equation*}%
Using $\widehat{\Xi }_{t}^{k}$ and $\widehat{\Psi }_{t,u}^{k}$, we can
compute $\hat{\gamma}_{t}^{k}$ and $\hat{\theta}_{t,u}^{k}$ by \eqref{eq:e}
and \eqref{eq:m}. Let $\widehat{\nu }_{t}^{k}(x)=b(x)^{\prime }\hat{\gamma}%
_{t}^{k}$ and $\widehat{\phi }_{t,u}^{k}(x)=\Lambda (b(x)^{\prime }\hat{%
\theta}_{t,u}^{k})$ for $x=X_{1},...,X_{n}.$ The final penalty loading
matrices $\widehat{\Xi }_{t}^{K}$ and $\widehat{\Psi }_{t,u}^{K}$ will be
used for $\widehat{\Xi }_{t}$ and $\widehat{\Psi }_{t,u}$ in \eqref{eq:e}
and \eqref{eq:m}. \label{alg:psi1}
\end{enumerate}
\end{alg}

Let $\widetilde{\mathcal{S}}_t^\mu$ and $\widetilde{\mathcal{S}}_{t,u}$
contain the supports of $\hat{\gamma}_{t}$ and $\hat{\theta}_{t,u}$,
respectively, such that $|\widetilde{\mathcal{S}}_t^\mu|\lesssim \sup_{t\in 
\mathcal{T}}||\widehat{\gamma }_{t}||_{0}$, and $|\widetilde{\mathcal{S}}%
_{t,u}|\lesssim \sup_{(t,u)\in \mathcal{T}\mathcal{U}}||\widehat{\theta }%
_{t,u}||_{0}$. For each $(t,u)\in \mathcal{T}\mathcal{U}:= \mathcal{T}\times 
\mathcal{U}$ where $\mathcal{T}$ and $\mathcal{U}$ are compact subsets of
the supports of $T$ and $Y$, respectively, the post-Lasso estimator of $%
\gamma _{t}$ and $\theta _{t,u}$ based on the set of covariates $\widetilde{%
\mathcal{S}}_t^\mu$ and $\widetilde{\mathcal{S}}_{t,u}$ are defined as 
\begin{equation*}
\tilde{\gamma}_{t}\in \argmin_{\gamma }\sum_{i=1}^{n}(Y_{i}-b(X_{i})^{\prime
}\gamma )^{2}K(\frac{T_{i}-t}{h_1}),\quad s.t.\quad \text{Supp}(\gamma )\in 
\widetilde{\mathcal{S}}_t^\mu,
\end{equation*}%
and 
\begin{equation*}
\tilde{\theta}_{t,u}\in \argmin_{\theta }\sum_{i=1}^{n}M(1\{Y_{i}\leq
u\},X_{i};\theta )K(\frac{T_{i}-t}{h_1}),\quad s.t.\quad \text{Supp}(\theta
)\in \widetilde{\mathcal{S}}_{t,u}.
\end{equation*}%
The post-Lasso estimators of $\nu _{t}(x)$ and $\phi _{t,u}(X)$ are given by 
$\widetilde{\nu }_{t}(X)=b(X)^{\prime }\tilde{\gamma}_{t}$ and $\widetilde{%
\phi }_{t,u}(X)=\Lambda (b(X)^{\prime }\tilde{\theta}_{t,u})$, respectively.

\subsubsection{Conditional Density Estimation}

Following \cite{BCK18}, we propose to first estimate $F_t(X)$, the conditional CDF of $T$ given $X$, by the (logistic) distributional lasso regression studied in \cite{BCFH13} and then take the numerical derivative. Following \cite{BCFH13}, we approximate $F_t(X)$ by a Logistic CDF $\Lambda(b(X)'\beta_t)$ and the approximation error is denoted as $r_t^F(x) = F_t(x) - \Lambda(b(x)'\beta_t)$. We estimate $\beta_t$ by $\hat{\beta}_t$, which is computed as 
\begin{align}
\label{eq:pcd}
\hat{\beta}_t = \argmin_\beta \frac{1}{n}\sum_{i=1}^n M(1\{T_i \leq t\},X_i;\beta) + \frac{\tilde{\lambda}}{n}||\hat{\Psi}_t\beta||_1 \quad \text{and} \quad \hat{F}_t(x) = \Lambda(b(x)'\hat{ \beta}_t),
\end{align}
where $M(\cdot)$ is the logistic likelihood as defined previously, the penalty $$\tilde{\lambda} = 1.1
\Phi^{-1}(1-\gamma/\{p \vee nh_1\})n^{1/2}$$ 
is slightly modified from but of the same order of magnitude as $\lambda^*$ used in \cite{BCFH13} and \cite{BCK18}, for some $\gamma \rightarrow 0$ specified in Section \ref{sec:sim}, and the penalty loading $\hat{\Psi}_t$ is estimated in Algorithm \ref{alg:psi2} below, which is also due to \cite{BCFH13}:
\begin{alg}
	\begin{enumerate}
		\item Let $\widehat{\Psi }_{t}^{0}=\text{diag}(l_{t,1}^{0},\cdots 
		,l_{t,p}^{0})$ where $l_{t,j}^{0}=||1\{T \leq t\}b_{j}(X)||_{\mathbb{P}_n,2}.$ Using $\widehat{\Psi }_{t}^{0}$, we can compute $ 
		\hat{\beta}_{t}^{0}$ and $\hat{F}_t(X)$ by the (logistic) distributional lasso regression.
		
		\item For $k=1,\cdots ,K$, we compute $\widehat{\Psi }_{t}^{k}=\text{diag}
		(l_{t,1}^{k},\cdots ,l_{t,p}^{k})$ where  
		\begin{equation*}
			l_{t,j}^{k}=\biggl|\biggl|\biggl(1\{T \leq t\}-\hat{F}_t^{k-1}(X)\biggr)b_{j}(X)\biggr|\biggr|_{\mathbb{P}_n,2}.
		\end{equation*}
		Using $\widehat{\Psi }_{t}^{k}$, we can compute $\hat{\beta}_{t}^{k}$ and $ 
		\hat{F}_{t}^{k}(X)$ by the (logistic) distributional lasso regression. The 
		final penalty loading matrix $\widehat{\Psi }_{t}^{K}$ will be used as $\widehat{\Psi }_{t}$ in \eqref{eq:pcd}. \label{alg:psi2}
	\end{enumerate}
\end{alg}

Then, $f_t(X)$, the conditional density of $T=t$ give $X$ is computed as 
\begin{align*}
\hat{f}_t(X) = \frac{\hat{F}_{t+h_1}(X) - \hat{F}_{t-h_1}(X)}{2h_1},
\end{align*}
where $h_1$ is the first-stage bandwidth.

\subsubsection{Asymptotic Properties of the First Stage Estimators}

To study the asymptotic properties of the first stage estimators, we need
some assumptions.

\begin{ass}
Let $\mathcal{T}\mathcal{U}$ be a compact subset of the support of $(T,Y)$ and $\mathcal{X}$ be the support of $X$.

\begin{enumerate}
\item The sample $\{Y_i,T_i,X_i\}_{i=1}^n$ is i.i.d.
	
\item $||\max_{j \leq p}|b_j(X)|||_{\mathbb{P}, \infty} \leq \zeta_n$ and $\underline{C%
} \leq \mathbb{E}b_j(X)^2 \leq 1/\underline{C}$ $j=1,\cdots,p.$

\item $\sup_{(t,u)\in \mathcal{T}\mathcal{U}}\max(||\gamma_t||_0,||\beta_t||_0,||\theta_{t,u}||_0) \leq s$ for
some $s$ which possibly depends on the sample size $n$.

\item $\sup_{t \in \mathcal{T}}||r_t^F(X)||_{\mathbb{P}_n,2} = O_p((s \log(p \vee n)/(n))^{1/2}) $ and 
\begin{equation*}
\sup_{(t,u)\in \mathcal{T}\mathcal{U}}\left[||r_{t,u}^\nu(X)K(\frac{T-t}{h_1})^{1/2}||_{\mathbb{P}_n,2} + ||r_{t,u}^\phi(X)K(\frac{%
T-t}{h_1})^{1/2}||_{\mathbb{P}_n,2}\right] =O_p((s \log(p \vee n)/n)^{1/2}).
\end{equation*}

\item $\sup_{t\in \mathcal{T}}||r_t^F(X)||_{\mathbb{P}, \infty} =O((\log(p \vee n)s^2\zeta_n^2/(n))^{1/2})$
and

\begin{equation*}
\sup_{(t,u) \in \mathcal{T} \mathcal{U}}\left[||r_{t,u}^\nu(X)||_{\mathbb{P}, \infty} + ||r_{t,u}^\phi(X)||_{\mathbb{P}, \infty}\right] = O((\log(p
\vee n)s^2\zeta_n^2/(nh_1))^{1/2}).
\end{equation*}

	\item $f_t(x)$ is second-order differentiable w.r.t. $t$ with bounded derivatives uniformly over $(t,x) \in \mathcal{TX}$, where $\mathcal{T}$ is a compact subset of the support of $T$ and $\mathcal{X}$ is the support of $X$. 
\item $\zeta_n^2s^2 \ell_n^2 \log(p \vee n)/(nh_1) \rightarrow 0$, $nh_1^5/(\log(p \vee  n)) \rightarrow 0.$
\end{enumerate}
\label{ass:approx}
\end{ass}

Assumption \ref{ass:approx}.1 is common for cross-sectional observations. Assumption \ref{ass:approx}.2 is the same as Assumption 6.1(a) in \cite%
{BCFH13}. Assumption \ref{ass:approx}.3 requires that $\nu _{t}(x)$, $\phi
_{t,u}(x)$, and $F_{t}(x)$ are approximately sparse, i.e., they can be
well-approximated by using at most $s$ elements of $b(x)$. This approximate
sparsity condition is common in the literature on high-dimensional data
(see, e.g., \cite{BCFH13}). Assumption \ref{ass:approx}.4 and \ref%
{ass:approx}.5 specify how well the approximations are in terms of $%
L_{\mathbb{P}_n,2}$ and $L_{\mathbb{P}, \infty }$ norms. The exact rate for $r_t^F(X)$ follows \cite{BCFH13}. The rates for $r_{t,u}^\nu(X)$ and $r_{t,u}^\phi(X)$ are different from that for $r_t^F(X)$ because their approximations are local in $%
T=t$. If the models for $\nu_t(\cdot)$, $\phi_{t,u}(\cdot)$, and
$F_t(\cdot)$ are correctly specified and exactly sparse, i.e., the
coefficients for all but $s$ regressors are zero, then there are no
approximate errors. This implies $r_t^F(\cdot)$, $r_{t,u}^{\nu}(\cdot)$, and 
$r_{t,u}^\phi(\cdot)$ equal to zero so that Assumption \ref{ass:approx}.4
and \ref{ass:approx}.5 hold automatically. In the sieve estimation, $%
X$ is finite dimensional and $b(X)$ is just a sequence of sieve bases of $X$%
. Then $r_t^F(\cdot)$, $r_{t,u}^{\nu}(\cdot)$, and $%
r_{t,u}^\phi(\cdot)$ are the sieve approximation bias. Assumptions \ref%
{ass:approx}.3 and \ref{ass:approx}.4 can be verified under some smoothness
conditions (see, e.g., \cite{C07}). Therefore, Assumption \ref{ass:approx}.4
and \ref{ass:approx}.5 are in spirit close to the smoothness condition.
Assumption \ref{ass:approx}.6 is the smoothness of the true density, which is needed for the theoretical analysis of the numerical derivative. Because $\mathcal{T}$ needs not be the whole support of $T$, this condition is plausible. In a simple case, if $T = \mu(X) + U$, $|\mu(x)|$ is bounded uniformly over $x \in \mathcal{X}$, and $U$ is independent of $X$ and logistically distributed, then this condition holds. Assumption \ref{ass:approx}.7 imposes conditions on the rates at which $s$, $%
\zeta _{n}$, and $p$ grow with sample size $n$. It ensures that the first
stage nuisance parameters are estimated with sufficient accuracy. In particular, we require $s^2/(nh_1) \rightarrow 0$. Comparing with the condition that $s^2/n \rightarrow 0$ imposed in \cite{BCFH13}, our condition reflects the local nature of our estimation procedure in the sense that our effective sample size is of order of magnitude $nh_1$.

\begin{ass}
\begin{enumerate}
\item $K(\cdot )$ is a symmetric probability density function (PDF) with
\begin{equation*}
\int uK(u)du=0, \quad \text{and} \quad \kappa_{2} := \int u^{2}K(u)du<\infty.
\end{equation*}
There exists a positive constant $\overline{C}_{K}$ such that $%
\sup_{u}u^{l}K\left( u\right) \leq \overline{C}_{K}$ for $l=0,1.$

\item There exists some positive constant $\underline{C}<1$ such that $%
\underline{C}\leq f_{t}(x)\leq 1/\underline{C}$ uniformly over $(t,x)\in 
\mathcal{T}\mathcal{X}$.

\item $\nu _{t}(x)$ and $\phi _{t,u}(x)$ are three times differentiable
w.r.t. $t$, with all three derivatives being bounded uniformly over $%
(t,x,u)\in \mathcal{T}\mathcal{X}\mathcal{U}.$

\item For the same $\underline{C}$ as above, $\underline{C}\leq \mathbb{E}%
(Y_{u}(t)|X=x)\leq 1-\underline{C}$ uniformly over $(t,x,u)\in \mathcal{T}%
\mathcal{X}\mathcal{U}:= \mathcal{T}\mathcal{X}\times \mathcal{U}$.
\end{enumerate}

\label{ass:ker}
\end{ass}


Assumption \ref{ass:ker}.1 holds for many kernel functions, e.g., uniform and Gaussian kernels. Since $f_{T}(X)$ was referred to as the generalized
propensity by \cite{HI04}, Assumption \ref{ass:ker}.2 is analogous to the
overlapping support condition commonly assumed in the treatment effect
literature; see, e.g., \cite{HIR03} and \cite{F07}. Since the conditional
density also has the sparsity structure as assumed in Assumption \ref%
{ass:approx}, at most $s$ members of $X$'s affect the conditional density,
which makes Assumption \ref{ass:ker}.2 more plausible. Assumption \ref%
{ass:ker}.3 imposes some smoothness conditions that are widely assumed in
the nonparametric kernel literature. Assumption \ref{ass:ker}.4 holds if $%
\mathcal{XU}$ is compact.
\medskip

\begin{ass}
There exists a sequence $\ell _{n}\rightarrow \infty $ such that, with
probability approaching one,%
\begin{equation*}
0<\kappa ^{\prime }\leq \inf_{\delta \neq 0,||\delta ||_{0}\leq s\ell _{n}}%
\frac{||b(X)^{\prime }\delta ||_{\mathbb{P}_n,2}}{||\delta ||_{2}}\leq \sup_{\delta
\neq 0,||\delta ||_{0}\leq s\ell _{n}}\frac{||b(X)^{\prime }\delta
||_{\mathbb{P}_n,2}}{||\delta ||_{2}}\leq \kappa ^{^{\prime \prime }}<\infty .
\end{equation*}%
\label{ass:eigen}\medskip
\end{ass}

Assumption \ref{ass:eigen} is the restricted eigenvalue condition commonly
assumed in the high-dimensional data literature. Based on \cite{BRT09}, 
\begin{equation*}
\inf_{\delta \neq 0,||\delta ||_{0}\leq s\ell _{n}}\frac{||b(X)^{\prime
}\delta ||_{\mathbb{P}_n,2}}{||\delta ||_{2}} \quad \text{and} \quad \sup_{\delta
\neq 0,||\delta ||_{0}\leq s\ell _{n}}\frac{||b(X)^{\prime }\delta
||_{\mathbb{P}_n,2}}{||\delta ||_{2}}
\end{equation*}
are the minimal and maximal eigenvalues of Gram submatrices formed by any $%
s\ell_{n}$ components of $b(X)$. Because $p \gg n$, the matrix $b(X)^{\prime
}b(X)$ is not invertible. However, because $s \ell_{n} \ll n$, Assumption %
\ref{ass:eigen} implies that the Gram submatrices can still be invertible.
We refer interested readers to \cite{BRT09} for more details and \cite{BV11}
for a textbook treatment.

Since there is a kernel in the Lasso objective functions in \eqref{eq:e} and %
\eqref{eq:m}, the asymptotic properties of $\hat{\gamma}_{t}$ and $\hat{%
\theta}_{t,u}$ cannot be established by directly applying the results in 
\cite{BCFH13}. The key missing piece is the following local version of the
compatibility condition. Let $\mathcal{S}_{t,u}$ be an arbitrary subset of $%
\{1,\cdots ,p\}$ such that $\sup_{(t,u)\in \mathcal{TU}}|\mathcal{S}%
_{t,u}|\leq s$ and $\Delta _{c,t,u}=\{\delta :||\delta _{\mathcal{S}%
_{t,u}^{c}}||_{1}\leq c||\delta _{\mathcal{S}_{t,u}}||_{1}\}$ for some $%
c<\infty $ independent of $(t,u)$.


\begin{lem}
If Assumptions \ref{ass:unconfoundedness}--\ref{ass:eigen} hold, then there
exists $\underline{\kappa }=\kappa ^{\prime }\underline{C}^{1/2}/4>0$ such
that, w.p.a.1, 
\begin{equation*}
\inf_{(t,u)\in \mathcal{T}\mathcal{U}}\inf_{\delta \in \Delta _{c,t,u}}\frac{%
||b(X)^{\prime }\delta K(\frac{T-t}{h_1})^{1/2}||_{\mathbb{P}_n,2}}{||\delta _{%
\mathcal{S}_{t,u}}||_{2}\sqrt{h_1}}\geq \underline{\kappa }.
\end{equation*}%
\label{lem:localRE}
\end{lem}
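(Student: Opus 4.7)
The plan is to recast the ratio as a quadratic form in $\delta$ and then compare the kernel-weighted empirical Gram matrix to the (unweighted) population Gram matrix, exploiting the density lower bound in Assumption \ref{ass:ker}.2. Writing $\widehat{\Sigma}_t := (nh)^{-1}\sum_{i=1}^n K(\tfrac{T_i-t}{h}) b(X_i)b(X_i)'$, the square of the ratio in the lemma equals $\delta'\widehat{\Sigma}_t\delta / \|\delta_{\mathcal{S}_{t,u}}\|_2^2$, and the goal is to show $\delta'\widehat{\Sigma}_t\delta \geq (\kappa')^2 \underline{C}/16 \cdot \|\delta_{\mathcal{S}_{t,u}}\|_2^2$ uniformly over $(t,u) \in \mathcal{T}\mathcal{U}$ and $\delta \in \Delta_{c,t,u}$, w.p.a.1. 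Since $\widehat{\Sigma}_t$ does not depend on $u$, uniformity over $u$ reduces to uniformity over subsets $\mathcal{S}_{t,u}$ of size at most $s$.

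First, compute $\mathbb{E}[\widehat{\Sigma}_t]$. A change of variables $v=(s-t)/h$ and Taylor expansion give $\mathbb{E}[K(\tfrac{T-t}{h})/h \mid X=x] = f_t(x) + O(h^2)$, so $\mathbb{E}[\widehat{\Sigma}_t] = \mathbb{E}[f_t(X) b(X)b(X)'] + O(h^2)\,\mathbb{E}[b(X)b(X)']$, whence, by Assumption \ref{ass:ker}.2,
\begin{equation*}
\delta' \mathbb{E}[\widehat{\Sigma}_t] \delta \;\geq\; (\underline{C} - O(h^2))\, \|b(X)'\delta\|_{P,2}^2
\end{equation*}
uniformly in $t$. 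Second, use a maximal Bernstein inequality with a covering argument to show
\begin{equation*}
\sup_{t\in\mathcal{T}}\; \sup_{\|\delta\|_0 \leq s\ell_n,\, \|\delta\|_2 = 1}\; \bigl|\delta'(\widehat{\Sigma}_t - \mathbb{E}\widehat{\Sigma}_t)\delta\bigr| = o_p(1);
\end{equation*}
this uses the boundedness $|b_j(X)| \leq \zeta_n$ from Assumption \ref{ass:approx}.1, an $h$-net on $\mathcal{T}$ (with the Lipschitz interpolation following from the compactly supported kernel), a union bound over the $\binom{p}{s\ell_n}$ sparse supports, and the growth condition $\zeta_n^2 s^2 \ell_n^2 \log(p\vee n)/(nh) \to 0$ from Assumption \ref{ass:approx}.5, which reflects the effective sample size $nh$. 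Third, invoke Assumption \ref{ass:eigen}, which together with the concentration above transfers $\|b(X)'\delta\|_{P_n,2}^2 \geq (\kappa')^2 \|\delta\|_2^2$ to $\|b(X)'\delta\|_{P,2}^2 \geq (\kappa')^2 \|\delta\|_2^2/2$ on the sparse cone. Combining gives $\delta'\widehat{\Sigma}_t \delta \geq (\kappa')^2 \underline{C}/4 \cdot \|\delta\|_2^2$ uniformly for $s\ell_n$-sparse $\delta$ and $t \in \mathcal{T}$, w.p.a.1.

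Finally, extend from sparse $\delta$ to $\delta \in \Delta_{c,t,u}$ by the standard Bickel–Ritov–Tsybakov / B\"uhlmann–van de Geer argument: partition the coordinates of $\delta_{\mathcal{S}_{t,u}^c}$ into blocks of size $s$ sorted by decreasing magnitude, and use $\|\delta_{\mathcal{S}_{t,u}^c}\|_1 \leq c\|\delta_{\mathcal{S}_{t,u}}\|_1 \leq c\sqrt{s}\|\delta_{\mathcal{S}_{t,u}}\|_2$ together with the upper sparse eigenvalue $\kappa''$ (also from Assumption \ref{ass:eigen}) to control the tail contribution by $\|\delta_{\mathcal{S}_{t,u}}\|_2$. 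This delivers the compatibility constant with the claimed value $\underline{\kappa} = \kappa'\underline{C}^{1/2}/4$, where the $1/4$ absorbs (a) the $h^2$ slack in the kernel bias, (b) the concentration slack between empirical and population Gram, and (c) the cone-to-sparse reduction.

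The hard part will be the uniform concentration in the second step. Unlike the standard RE result, the kernel weighting forces the effective sample size to be $nh$ rather than $n$, and the supremum over $t$ must be handled jointly with the supremum over a combinatorial family of sparse supports of cardinality growing with $p$. A standard Bernstein bound gives control at a single $(t,\delta)$ pair, but propagating this to a uniform bound requires a careful chaining/net argument: an $h$-grid on $\mathcal{T}$ (exploiting that $t \mapsto K((T_i-t)/h)$ is piecewise Lipschitz with constant $\asymp 1/h$) combined with a union bound over $\binom{p}{s\ell_n}$ support sets. Getting the rates to match Assumption \ref{ass:approx}.5 is precisely why the $\ell_n$ factor appears in the sparsity budget.
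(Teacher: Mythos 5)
Your overall strategy mirrors the paper's proof: lower-bound the kernel-weighted quadratic form through the conditional density bound $f_t(x)\geq\underline{C}$ (Assumption \ref{ass:ker}.2), prove a uniform-in-$(t,\delta)$ concentration for sparse directions (the paper does this by symmetrization, a contraction principle, and the maximal inequality of \cite{CCK14}, with entropy $\binom{p}{s+m}(A/\varepsilon)^{s+m}$ and $\sigma^2\lesssim h$; your Bernstein-plus-net route with an $h$-grid on $\mathcal{T}$ reaches the same $o_p(1)$ bound under Assumption \ref{ass:approx}.5), invoke Assumption \ref{ass:eigen}, and finish with the Bickel--Ritov--Tsybakov block decomposition. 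One small imprecision: the transfer between the empirical and population unweighted Gram matrices (needed because Assumption \ref{ass:eigen} is stated for $\|b(X)'\delta\|_{P_n,2}$ while the kernel-bias step produces $\|b(X)'\delta\|_{P,2}$) requires a separate concentration bound for $(\mathbb{P}_n-\mathbb{P})(b(X)'\delta)^2$ over sparse directions; you gesture at this but only state the concentration for the kernel-weighted Gram. The same argument covers it, so this is easily repaired.

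The genuine gap is in your last step. You partition $\delta_{\mathcal{S}_{t,u}^c}$ into blocks of size $s$. With fixed block size $s$, the tail contribution in the BRT argument is of order $\kappa''\underline{C}^{-1/2}\sqrt{c}\,\sqrt{s/m}\,\|\delta_{\mathcal{S}_{t,u}}\|_2$ with $m=s$, i.e.\ a non-vanishing constant multiple of $\|\delta_{\mathcal{S}_{t,u}}\|_2$; the resulting lower bound is roughly $\kappa'\underline{C}^{1/2}/2-2\kappa''\underline{C}^{-1/2}\sqrt{c}$, which need not be positive and certainly does not yield the stated constant $\underline{\kappa}=\kappa'\underline{C}^{1/2}/4$ in general. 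The paper avoids this precisely by exploiting the $\ell_n$ slack in Assumption \ref{ass:eigen}: it takes blocks of size $m=s\ell_n^{1/2}$, so that $\sqrt{s/m}=\ell_n^{-1/4}\to 0$ and the tail term is asymptotically negligible, delivering $\kappa'\underline{C}^{1/2}/4$ for $n$ large. Your concentration step already budgets sparsity $s\ell_n$, so the fix is available within your own argument: choose growing blocks with $s\ll m\leq s\ell_n$ (e.g.\ $m=s\ell_n^{1/2}$), verify the upper sparse-eigenvalue bound $\|b(X)'\delta_l K(\tfrac{T-t}{h})^{1/2}\|_{P_n,2}^2\lesssim \underline{C}^{-1}h(\kappa'')^2\|\delta_l\|_2^2$ for these blocks, and then the claimed constant follows exactly as in the paper. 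As written, however, the block-size-$s$ reduction does not establish the lemma.
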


Note $\mathcal{S}_{t,u}$ in Lemma \ref{lem:localRE} is either the support of 
$\theta _{t,u}$ or the support of $\gamma _{t}$. For the latter case, the
index $u$ is not needed. We refer to Lemma \ref{lem:localRE} as the local
compatibility condition because (1) there is a kernel function implementing
the localization; and (2) by the Cauchy inequality, Lemma \ref{lem:localRE}
implies 
\begin{equation*}
\inf_{(t,u)\in \mathcal{T}\mathcal{U}}\inf_{\delta \in \Delta _{c,t,u}}\frac{%
\sqrt{s}||b(X)^{\prime }\delta K(\frac{T-t}{h_1})^{1/2}||_{\mathbb{P}_n,2}}{||\delta _{%
\mathcal{S}_{t,u}}||_{1}\sqrt{h_1}}\geq \underline{\kappa }.
\end{equation*}

\citet[Lemma 4.2]{BRT09} show that, under Assumption \ref{ass:eigen}, we
have the following compatibility condition: 
\begin{align}  \label{eq:RE}
\inf_{(t,u)\in \mathcal{T}\mathcal{U}}\inf_{\delta \in \Delta _{c,t,u}}\frac{
\sqrt{s}||b(X)^{\prime }\delta ||_{\mathbb{P}_n,2}}{||\delta _{\mathcal{S}%
_{t,u}}||_{1}} \geq \inf_{(t,u)\in \mathcal{T}\mathcal{U}}\inf_{\delta \in
\Delta _{c,t,u}}\frac{ ||b(X)^{\prime }\delta ||_{\mathbb{P}_n,2}}{||\delta _{%
\mathcal{S}_{t,u}}||_{2}}\geq \underline{\kappa },
\end{align}
which is the key convertibility condition used in high-dimensional analysis.
We refer interested readers to \citet[Equation 6.4]{BV11}, the remarks after
that, and \citet[Section 6.13]{BV11} for more detailed discussions and
further references. Under Assumption \ref{ass:eigen} and some regularity
conditions assumed in the paper, Lemma \ref{lem:localRE} establishes a local
version of \eqref{eq:RE}. Based on Lemma \ref{lem:localRE}, we can establish
the following asymptotic probability bounds for the first stage estimators.

\begin{thm}
	\label{thm:rate2}
	Suppose Assumptions \ref{ass:unconfoundedness}--\ref{ass:approx}, \ref%
{ass:ker}.1--\ref{ass:ker}.3, and \ref{ass:eigen} hold. Then 
\begin{equation*}
\sup_{t\in \mathcal{T}}||(\widehat{\nu }_{t}(X)-\nu _{t}(X))||_{\mathbb{P}_n,2}=O_{p}(\ell _{n}(\log (p\vee n)s)^{1/2}(nh_1)^{-1/2}),
\end{equation*}%
\begin{equation*}
\sup_{t\in \mathcal{T}}||\widehat{\nu }_{t}(X)-\nu _{t}(X)||_{\mathbb{P}, \infty
}=O_{p}(\ell _{n}(\log (p\vee n)s^{2}\zeta _{n}^{2}/(nh_1))^{1/2}),
\end{equation*}%
\begin{equation*}
\sup_{t\in \mathcal{T}}||(\widetilde{\nu }_{t}(X)-\nu _{t}(X))||_{\mathbb{P}_n,2}=O_{p}(\ell _{n}(\log (p\vee n)s)^{1/2}(nh_1)^{-1/2}),
\end{equation*}%
\begin{equation*}
\sup_{t\in \mathcal{T}}||\widetilde{\nu }_{t}(X)-\nu _{t}(X)||_{\mathbb{P}, \infty
}=O_{p}(\ell _{n}(\log (p\vee n)s^{2}\zeta _{n}^{2}/(nh_1))^{1/2}),
\end{equation*}%
and $\sup_{t\in \mathcal{T}}||\hat{\gamma}_{t}||_{0}=O_{p}(s)$. If in
addition, Assumption \ref{ass:ker}.4 holds, then 
\begin{equation*}
\sup_{(t,u)\in \mathcal{T}\mathcal{U}}||(\widehat{\phi }_{t,u}(X)-\phi
_{t,u}(X))||_{\mathbb{P}_n,2}=O_{p}(\ell _{n}(\log (p\vee
n)s)^{1/2}(nh_1)^{-1/2}),
\end{equation*}%
\begin{equation*}
\sup_{(t,u)\in \mathcal{T}\mathcal{U}}||\widehat{\phi }_{t,u}(X)-\phi
_{t,u}(X)||_{\mathbb{P}, \infty }=O_{p}(\ell _{n}(\log (p\vee n)s^{2}\zeta
_{n}^{2}/(nh_1))^{1/2}),
\end{equation*}%
\begin{equation*}
\sup_{(t,u)\in \mathcal{T}\mathcal{U}}||(\widetilde{\phi }_{t,u}(X)-\phi
_{t,u}(X))||_{\mathbb{P}_n,2}=O_{p}(\ell _{n}(\log (p\vee
n)s)^{1/2}(nh_1)^{-1/2}),
\end{equation*}%
\begin{equation*}
\sup_{(t,u)\in \mathcal{T}\mathcal{U}}||\widetilde{\phi }_{t,u}(X)-\phi
_{t,u}(X)||_{\mathbb{P}, \infty }=O_{p}(\ell _{n}(\log (p\vee n)s^{2}\zeta
_{n}^{2}/(nh_1))^{1/2}),
\end{equation*}%
and $\sup_{(t,u)\in \mathcal{T}\mathcal{U}}||\hat{\theta}%
_{t,u}||_{0}=O_{p}(s).$ 
\end{thm}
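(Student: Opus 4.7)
The plan is to adapt the classical Lasso and post-Lasso rate analysis (as in \cite{BCFH13} and \cite{BV11}) to the kernel-localized setting, using Lemma \ref{lem:localRE} as the replacement for the standard compatibility condition and Lemma \ref{lem:E32} to pass from infeasible to feasible penalty loadings. Throughout I would work on the ``good event'' on which (i) the score is dominated by a constant fraction of the penalty loading uniformly in the index set, and (ii) the local compatibility and restricted eigenvalue bounds hold. Once the good event is in place, the argument has the usual two pieces: a prediction-norm bound from the KKT inequality, and an $L_\infty$ bound obtained by converting prediction error to $\ell_1$ error and paying a factor $\zeta_n$.

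For $\hat{\gamma}_t$, I would start from the basic inequality delivered by the minimization in \eqref{eq:e}, expand around the sparse target $\gamma_t$, and isolate the prediction-error term $\|b(X)'\delta_t K((T{-}t)/h)^{1/2}\|_{P_n,2}^2$ with $\delta_t = \hat{\gamma}_t-\gamma_t$. The key intermediate step is the uniform score bound
\begin{equation*}
\sup_{t\in\mathcal{T}}\max_{j\le p}\biggl|\frac{1}{n}\sum_{i=1}^n (Y_i-\nu_t(X_i))\,b_j(X_i)\,K\bigl((T_i{-}t)/h\bigr)\biggr| \le \frac{\lambda}{c_0 n h^{1/2}}\,\tilde l_{t,0,j},
\end{equation*}
which, together with Lemma \ref{lem:E32}, forces $\delta_t$ into the restricted cone $\Delta_{c,t}$. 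Then Lemma \ref{lem:localRE} gives $\|\delta_{t,S}\|_1 \le \underline{\kappa}^{-1}\sqrt{s}\,\|b(X)'\delta_t K((T{-}t)/h)^{1/2}\|_{P_n,2}/\sqrt{h}$; substituting back yields
\begin{equation*}
\sup_{t\in\mathcal{T}}\|b(X)'\delta_t K((T{-}t)/h)^{1/2}\|_{P_n,2} \lesssim \frac{\lambda\sqrt{s}}{n\sqrt{h}} \asymp \ell_n\sqrt{\log(p\vee n)\,s/n},
\end{equation*}
matching the first conclusion after absorbing the approximation error $r_t^\nu$ by Assumption \ref{ass:approx}.3. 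The $L_\infty$ bound follows from $\|\hat{\nu}_t-\nu_t\|_{P,\infty} \le \zeta_n\|\delta_t\|_1 + \|r_t^\nu\|_{P,\infty}$, using $\|\delta_t\|_1 \le (1+c)\|\delta_{t,S}\|_1$ and Assumption \ref{ass:approx}.4. The sparsity $\sup_t\|\hat{\gamma}_t\|_0 = O_p(s)$ is then extracted from the KKT conditions in the standard way, and the post-Lasso bound for $\tilde{\gamma}_t$ follows by running unpenalized local LS on a support of cardinality $O_p(s)$, using the upper sparse eigenvalue half of Assumption \ref{ass:eigen} together with Lemma \ref{lem:localRE} applied to the selected set.

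The penalized local MLE bounds for $\hat{\theta}_{t,u}$ are obtained by the same template with two modifications: a second-order Taylor expansion of the logistic loss $M(\cdot)$ replaces the quadratic expansion, and Assumption \ref{ass:ker}.4 provides a uniform lower bound on $\Lambda(\cdot)(1-\Lambda(\cdot))$ so that a self-concordance/quadratic-growth argument turns the expansion into a prediction-norm inequality analogous to the LS case; the rest of the argument (score bound, cone membership, local compatibility, $\ell_1$-to-$L_\infty$ conversion, sparsity, and post-Lasso step) then carries over, but now also uniformly in $u\in\mathcal{U}$. The main obstacle, and where most of the real work lies, is establishing the uniform score bounds over $(t,u)\in\mathcal{T}\mathcal{U}$ with the correct $\sqrt{nh}$ scaling: this requires a maximal inequality for kernel-weighted empirical processes indexed jointly by $j\le p$, $t\in\mathcal{T}$, and $u\in\mathcal{U}$, combining Bernstein-type concentration with a bracketing/covering argument for the classes $\{K((\cdot-t)/h):t\in\mathcal{T}\}$ and $\{1\{\cdot\le u\}:u\in\mathcal{U}\}$. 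The divergent factor $\ell_n$ in the penalty is exactly what supplies the logarithmic slack needed to absorb these uniform empirical-process fluctuations without invoking the self-normalized moderate deviation technique of \cite{BCFH13}.
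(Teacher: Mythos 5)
Your plan follows essentially the same route as the paper: the paper's proof of Theorem \ref{thm:rate2} runs through Lemma \ref{lem:main} (basic inequality from the penalized program, a cone/non-cone case split, and the local compatibility condition of Lemma \ref{lem:localRE}), Lemma \ref{lem:E2} (a CCK14-type maximal inequality for the kernel-weighted score, with the divergent $\ell_n$ supplying the slack that replaces self-normalization), Lemma \ref{lem:E32} (feasible versus infeasible loadings), a quadratic minoration of the local logistic loss (Lemma \ref{lem:minor}, where Assumption \ref{ass:ker}.4 keeps $\omega_{t,u}$ bounded away from zero), the $L_\infty$ bound via $\zeta_n\|\delta_{t,u}\|_1$ plus the sup-norm approximation error, and the sparsity bound via the first-order conditions combined with the sparse-eigenvalue bound of Lemma \ref{lem:localRE2}; the LS case is the same with $\omega_{t,u}\equiv 1$ and the post-Lasso bounds follow on the selected support. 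So at the level of ideas you have reproduced the paper's argument.

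There is, however, a concrete scaling error in the one display you wrote down, and it is exactly the step your cone argument rests on. With $\lambda=\ell_n(\log(p\vee n)\,nh)^{1/2}$ the kernel-weighted score $\mathbb{P}_n[(Y-\nu_t(X))b_j(X)K((T-t)/h)]$ has second moment of order $h$, hence uniform size $O_p((h\log(p\vee n)/n)^{1/2})$, and what the Lasso analysis needs is domination by $\lambda\,\tilde l_{t,0,j}/(c_0 n)$, i.e.\ by $\ell_n(\log(p\vee n)h/n)^{1/2}$ up to the loading (this is Lemmas \ref{lem:E2} and \ref{lem:E32}). Your bound instead allows the score to be as large as $\lambda\,\tilde l_{t,0,j}/(c_0 n h^{1/2})=\ell_n(\log(p\vee n)/n)^{1/2}\tilde l_{t,0,j}$, which the penalty $\frac{\lambda}{n}\|\widehat\Xi_t\cdot\|_1$ cannot dominate, so restricted-cone membership does not follow from the inequality as stated. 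The factor $h^{-1/2}$ belongs downstream: it enters through the $\sqrt{h}$ in the denominator of the local compatibility condition of Lemma \ref{lem:localRE}, which is what turns the usual $\lambda\sqrt{s}/n$ into $\lambda\sqrt{s}/(n\sqrt{h})\asymp\ell_n(s\log(p\vee n)/n)^{1/2}$ — your final rate is right, so this looks like a misplacement rather than a conceptual error, but the display should be corrected. Relatedly, because of the approximation errors $r_t^{\nu}$ and $r^{\phi}_{t,u}$ the score bound alone does not force $\delta_t$ into the cone: the paper's Lemma \ref{lem:main} splits into $\delta\in\Delta_{2\tilde c,t,u}$ and the complementary case $\|\delta\|_1\le I_{t,u}$ (controlled by $\|r^{\phi}_{t,u}K(\frac{T-t}{h})^{1/2}\|_{P_n,2}$), and your inequality $\|\delta_t\|_1\le(1+c)\|\delta_{t,S}\|_1$ is only valid in the first case, so your write-up needs that two-case treatment as well.
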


Several comments are in order. First, due to the nonlinearity of the
logistic link function, Assumption \ref{ass:ker}.4 is needed for deriving
the asymptotic properties of the penalized local MLE estimators $\widehat{%
\phi}_{t,u}(x)$ and $\widetilde{\phi}_{t,u}(x)$. Second, the $L_{\mathbb{P}_n,2}$
bounds in Theorem \ref{thm:rate2} are faster than $(nh_1)^{-1/4}$ by
Assumption \ref{ass:rate2} below. This implies the estimators are
sufficiently accurate so that in the second stage, their second and higher
order impacts are asymptotically negligible. Last, the numbers of nonzero
coordinates of $\hat{\gamma}_t$ and $\hat{\theta}_{t,u}$ determine the
complexity of our first stage estimators, which are uniformly controlled
with a high probability.

For the conditional density estimation, we have the following results.

\begin{thm}
	\label{thm:rate1}
	Suppose Assumptions \ref{ass:unconfoundedness}--\ref{ass:approx}, \ref%
	{ass:ker}.1--\ref{ass:ker}.3, and \ref{ass:eigen} hold. Then 
	\begin{equation*}
	\sup_{t\in \mathcal{T}}||\hat{f}_{t}(X)-f_{t}(X)||_{\mathbb{P}_n,2}=O_{p}((\log (p\vee n)s/n)^{1/2}h_1^{-1}),
	\end{equation*}%
	\begin{equation*}
	\sup_{t\in \mathcal{T}}||\hat{f}_{t}(X)-f_{t}(X)||_{\mathbb{P}, \infty }=O_{p}((\log (p\vee n)s^{2}\zeta _{n}^{2}/n)^{1/2}h_1^{-1}),
	\end{equation*}%
	\begin{equation*}
	\sup_{t\in \mathcal{T}}||\tilde{f}_{t}(X)-f_{t}(X)||_{\mathbb{P}_n,2}=O_{p}((\log (p\vee n)s/n)^{1/2}h_1^{-1}),
	\end{equation*}%
	\begin{equation*}
	\sup_{t\in \mathcal{T}}||\tilde{f}_{t}(X)-f_{t}(X)||_{\mathbb{P}, \infty }=O_{p}((\log (p\vee n)s^{2}\zeta _{n}^{2}/n)^{1/2}h_1^{-1}),
	\end{equation*}%
	and $\sup_{t\in \mathcal{T}}||\hat{\beta}_{t}||_{0}=O_{p}(s).$ 
\end{thm}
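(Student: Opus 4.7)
The plan is to adapt the standard Lasso / post-Lasso arguments of \citet[][Theorem 3]{BCFH13} and \cite{BRT09} to the penalized conditional density setup in (\ref{pcd}). The crucial simplification compared with Theorem \ref{thm:rate2} is that the kernel factor $h^{-1}K((T-t)/h)$ appears only on the response side of the squared loss and does not multiply the basis $b(X)$, so the usual compatibility condition \eqref{eq:RE} implied by Assumption \ref{ass:eigen} suffices; the local version in Lemma \ref{lem:localRE} is not needed. Write $W_{t,i}=h^{-1}K((T_{i}-t)/h)$ and $\eta_{t,i}=W_{t,i}-b(X_{i})'\beta_{t}$. From the optimality of $\hat\beta_{t}$, a routine manipulation gives the basic inequality
\begin{equation*}
\|b(X)'(\hat\beta_{t}-\beta_{t})\|_{P_{n},2}^{2}\le \frac{2}{n}\sum_{i}\eta_{t,i}\,b(X_{i})'(\hat\beta_{t}-\beta_{t}) + \frac{\lambda}{nh}\bigl(\|\widehat{\Psi}_{t}\beta_{t}\|_{1}-\|\widehat{\Psi}_{t}\hat\beta_{t}\|_{1}\bigr).
\end{equation*}
Let $S_{t}=\tfrac{2}{n}\sum_{i}\eta_{t,i}\,b(X_{i})$ be the score. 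The central probabilistic claim is that, uniformly in $t\in\mathcal{T}$, the data-driven loading $\widehat{\Psi}_{t}$ produced by Algorithm \ref{alg:psi2} (with the rates quantified in Lemma \ref{lem:E3} in the appendix) dominates $S_{t}$ coordinate-wise, so that $\max_{j\le p}|S_{t,j}|\lesssim \lambda/(nh)\cdot \widehat{\Psi}_{t,jj}$ w.p.a.1.

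Establishing that uniform score bound is the main obstacle. Decompose $\eta_{t,i}=(W_{t,i}-f_{t}(X_{i}))+r_{t}^{f}(X_{i})$; the $r_{t}^{f}$ part is absorbed using Assumption \ref{ass:approx}.3--\ref{ass:approx}.4, and the deterministic bias $\mathbb{E}(W_{t,i}\mid X_{i})-f_{t}(X_{i})=O(h^{2})$ from \cite{FYT96} is negligible once $h^{2}\ll\lambda/(nh)$. What remains is
\begin{equation*}
\sup_{t\in\mathcal{T}}\max_{j\le p}\left|\frac{1}{n}\sum_{i}\bigl[W_{t,i}b_{j}(X_{i})-\mathbb{E}(W_{t,i}b_{j}(X_{i}))\bigr]\right|.
\end{equation*}
I would combine a Bernstein-type inequality across $j$ (summands bounded by $\zeta_{n}/h$ with variance $O(\zeta_{n}^{2}/h)$) with a covering argument across $t\in\mathcal{T}$: because $K$ has compact support and $t\mapsto K((T-t)/h)$ is Lipschitz with constant $O(1/h)$, a $\delta$-net of cardinality $O(1/(h\delta))$ followed by a union bound over $p$ gives the rate $\ell_{n}\sqrt{\log(p\vee n)/(nh)}\cdot h^{-1/2}$ with probability tending to one, which after scaling by the loading $\widehat{\Psi}_{t,jj}=O_{p}(h^{-1/2})$ matches $\lambda/(nh)$.

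Once the score is dominated by the penalty, the usual argument shows that $\hat\delta_{t}=\hat\beta_{t}-\beta_{t}$ lies in the restricted cone $\Delta_{c,t}$ on the support $\mathcal{S}_{t}=\text{Supp}(\beta_{t})$, which has cardinality at most $s$ by Assumption \ref{ass:approx}.2. Applying the compatibility inequality \eqref{eq:RE} to the basic inequality yields
\begin{equation*}
\|b(X)'\hat\delta_{t}\|_{P_{n},2}^{2}\lesssim \frac{\lambda^{2}s}{(nh)^{2}\underline{\kappa}^{2}},
\end{equation*}
so $\sup_{t\in\mathcal{T}}\|\hat f_{t}(X)-f_{t}(X)\|_{P_{n},2}=O_{p}(\ell_{n}\sqrt{s\log(p\vee n)/(nh)})$, the first conclusion. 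The sparsity bound $\sup_{t}\|\hat\beta_{t}\|_{0}=O_{p}(s)$ follows from the KKT conditions combined with the upper restricted eigenvalue in Assumption \ref{ass:eigen}, exactly as in \citet[Theorem 3]{BCFH13}.

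For the $L_{P,\infty}$ bound, on the event $\|\hat\beta_{t}\|_{0}\le Cs$ we have $\|b(X)'\hat\delta_{t}\|_{P,\infty}\le \zeta_{n}\|\hat\delta_{t}\|_{1}\le \zeta_{n}\sqrt{2s}\,\|\hat\delta_{t}\|_{2}$, and the lower restricted eigenvalue in Assumption \ref{ass:eigen} converts the empirical $L^{2}$ rate into an $\ell_{2}$ rate on $\hat\delta_{t}$, producing the claimed $O_{p}(\ell_{n}\sqrt{s^{2}\zeta_{n}^{2}\log(p\vee n)/(nh)})$ rate uniformly in $t$. The post-Lasso statements for $\tilde f_{t}$ follow the standard oracle argument: on $\widetilde{\mathcal{S}}_{t}$ with $|\widetilde{\mathcal{S}}_{t}|=O_{p}(s)$, restricted least squares inherits the upper eigenvalue from Assumption \ref{ass:eigen}, and the analysis of $\tilde f_{t}-f_{t}$ reduces to bounding the same uniform score process over a union of $O(s)$-dimensional subspaces, which yields the identical rates by the argument in Steps 1--3. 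Combining these four ingredients completes the proof of Theorem \ref{thm:rate1}.
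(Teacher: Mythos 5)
Your proposal follows, in substance, the same route as the paper's proof: verify an approximation-error event (Assumption \ref{ass:approx}.3--\ref{ass:approx}.4), a uniform score-domination event, and upper/lower bounds on the data-driven loadings (Lemma \ref{lem:E3}); observe that because the kernel enters only through the response, the ordinary compatibility condition \eqref{eq:RE} suffices and Lemma \ref{lem:localRE} is not needed; then run the cone/compatibility algebra for the $L_{P_n,2}$ rate, get sparsity from the KKT conditions, pass to the sup-norm via $\zeta_n\|\hat{\delta}_t\|_1$, and treat the post-Lasso by the oracle argument. The paper does exactly this, except that it outsources the deterministic Lasso/post-Lasso algebra to Lemmas J.3--J.5 of \cite{BCFH13} with $\lambda$ and $c_r$ rescaled, and it adds the $\|r_t^f\|_{P,\infty}$ term from Assumption \ref{ass:approx}.4 in the sup-norm step, which you should do as well.

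Two points in your score step need repair. First, your uniformity-in-$t$ device (a $\delta$-net combined with Bernstein, using that $t\mapsto K((T-t)/h)$ is Lipschitz with constant $O(1/h)$) is not available under Assumption \ref{ass:ker}.1, which explicitly admits the uniform kernel, for which this map is discontinuous; the paper instead bounds $\sup_{t\in\mathcal{T}}\|\mathbb{P}_n\widehat{\Psi}_{t,0}^{-1}\xi_t(X)b(X)\|_\infty$ by treating $\{\xi_t(X)b_j(X): t\in\mathcal{T},\ j\le p\}$ as a VC-type class and applying Corollary 5.1 of \cite{CCK14} (Lemma \ref{lem:E11}), which accommodates non-smooth kernels. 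Second, your rate bookkeeping is off: the loading $l_{t,0,j}=h^{1/2}\|(h^{-1}K((T-t)/h)-f_t(X))b_j(X)\|_{P_n,2}$ already carries the $h^{1/2}$ normalization and is bounded above and away from zero w.p.a.1 (Lemma \ref{lem:E3}), not $O_p(h^{-1/2})$, while the centered score is $O_p((\log(p\vee n)/(nh))^{1/2})$ without the extra $h^{-1/2}$ or the $\ell_n$ you attach to it. As written, your score bound and the penalty level $\lambda/(nh)=\ell_n(\log(p\vee n)/(nh))^{1/2}$ have the same order, so the domination event has no slack and cannot be shown to hold w.p.a.1; with the correct orders the slack is exactly the diverging factor $\ell_n$, after which your displayed bound $\|b(X)'\hat{\delta}_t\|_{P_n,2}\lesssim \lambda\sqrt{s}/(nh)$ and the remaining steps deliver the stated rates.
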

The rates of convergence in Theorem \ref{thm:rate1} are the same as those derived in \citet[Section 8]{BCK18}. 


\subsection{The Second Stage Estimation}
\label{sec:2nd} 
Let $W=\{Y,T,X\}$ and $W_{u}=\{Y_{u},T,X\}$. For three generic functions $%
\breve{\nu}(\cdot )$, $\breve{\phi}(\cdot )$ and $\breve{f}(\cdot )$ of $X$,
denote 
\begin{equation*}
\Pi _{t}^{\prime }(W,\breve{\nu},\breve{f})=\frac{(Y-\breve{\nu}(X))}{\breve{%
		f}(X)h_2}K(\frac{T-t}{h_2})+\breve{\nu}(X)
\end{equation*}%
and 
\begin{equation*}
\Pi _{t,u}(W_{u},\breve{\phi},\breve{f})=\frac{(Y_{u}-\breve{\phi}(X))}{%
	\breve{f}(X)h_2}K(\frac{T-t}{h_2})+\breve{\phi}(X).
\end{equation*}%
Then the estimators $\hat{\mu}(t)$ and $\hat{\alpha}(t,u)$ can be written as 
\begin{equation*}
\hat{\mu}(t)=\mathbb{P}_{n}\Pi _{t}^{\prime }(W,\overline{\nu }_{t},%
\overline{f})\quad \text{and}\quad \hat{\alpha}(t,u)=\mathbb{P}_{n}\Pi
_{t,u}(W_{u},\overline{\phi }_{t,u},\overline{f}),
\end{equation*}%
where $\overline{\nu }_{t}(\cdot )$, $\overline{\phi }_{t,u}(\cdot )$, and $%
\overline{f}(\cdot )$ are either the Lasso estimators (i.e., $\widehat{%
	\nu }_{t}(\cdot )$, $\widehat{\phi }_{t,u}(\cdot )$, and $\hat{f}_{t}(\cdot
) $) or the post-Lasso estimators (i.e., $\widetilde{\nu }_{t}(\cdot )$, $%
\widetilde{\phi }_{t,u}(\cdot )$, and $\tilde{f}_{t}(\cdot )$) as defined in
Section \ref{sec:1st}.\medskip

\begin{ass}
	Let $h_2 = C_2 n^{-H_2}$ for some positive constant $C_2$.
	
	\begin{enumerate}
		\item $H_2 \in [1/5,1/3)$, $\log^2(n) s^2 \log^2(p \vee n)/(nh_2) \rightarrow 0$, and $\ell_n^2s^2\log^2(p \vee n)/(nh_1^2) \rightarrow 0$, and $\ell_n^2s^2\log^2(p \vee n)h_2/(nh_1^3) \rightarrow 0$.
		
		\item $H_2 \in (1/4,1/3) $, $\log^2(n) s^2 \log^2(p \vee n)/(nh_2^2) \rightarrow 0$, $\ell_n^2s^2\log^2(p \vee n)/(nh_1^2h_2) \rightarrow 0$, and $\ell_n^2s^2\log^2(p \vee n)/(nh_1^3) \rightarrow 0$.
	\end{enumerate}
	
	\label{ass:rate2}
\end{ass}

\begin{thm}
	Suppose Assumptions \ref{ass:unconfoundedness}--\ref{ass:eigen} and \ref%
	{ass:rate2}.1 hold. Then 
	\begin{equation*}
	\hat{\mu}(t)-\mu (t)=(\mathbb{P}_{n}-\mathbb{P})\Pi _{t}^{\prime }(W,\nu
	_{t},f_{t})+\mathcal{B}_\mu(t)h_2^2+R_{n}^{\prime }(t)
	\end{equation*}%
	and 
	\begin{equation*}
	\hat{\alpha}(t,u)-\alpha (t,u)=(\mathbb{P}_{n}-\mathbb{P})\Pi
	_{t,u}(W_{u},\phi _{t,u},f_{t})+\mathcal{B}_\alpha(t,u)h_2^2+R_{n}(t,u),
	\end{equation*}%
	where 
	$$\mathcal{B}_\mu(t) = \frac{\kappa_2}{2}\left[\mathbb{E}\left(\partial_t^2 \nu_{t}(X) + \frac{2\partial_t \nu_{t}(X)\partial_t f_{t}(X)}{f_{t}(X)}\right)\right],$$ 
	$$\mathcal{B}_\alpha(t,u) = \frac{\kappa_2}{2}\left[\mathbb{E}\left(\partial_t^2 \phi_{t,u}(X) + \frac{2\partial_t \phi_{t,u}(X)\partial_t f_{t}(X)}{f_{t}(X)}\right)\right],$$ 
	$\kappa_2 = \int u^2K(u)du$, $\sup_{t\in \mathcal{T}}|R_{n}^{\prime }(t)|=o_{p}((nh_2)^{-1/2})$ and $%
	\sup_{(t,u)\in \mathcal{T}\mathcal{U}}|R_{n}(t,u)|=o_{p}((nh_2)^{-1/2}).$ If
	Assumption \ref{ass:rate2}.1 is replaced by Assumption \ref{ass:rate2}.2,
	then 
	$$\sup_{t\in \mathcal{T}}|\mathcal{B}_\mu(t)h_2^2 +R_{n}^{\prime }(t)|=o_{p}(n^{-1/2})\quad \text{and} \quad%
	\sup_{(t,u)\in \mathcal{T}\mathcal{U}}|\mathcal{B}_\alpha(t,u)h_2^2+R_{n}(t,u)|=o_{p}(n^{-1/2}).$$\label%
	{thm:2nd}
\end{thm}

Theorem \ref{thm:2nd} presents the Bahadur representations of the
nonparametric estimators $\hat{\mu}(t)$ and $\hat{\alpha}(t,u)$ with a
uniform control on the remainder terms. For most purposes (e.g., to obtain
the asymptotic distributions of these intermediate estimators or to obtain
the results below), Assumption \ref{ass:rate2}.1 is sufficient.
Occasionally, one needs to impose Assumption \ref{ass:rate2}.2 to have a
better control on the remainder terms, say, when one conducts an $L_{2}$%
-type specification test. See the remark after Theorem \ref{thm:q} below. 

\subsection{The Third Stage Estimation}
\label{sec:3rd}
Recall that $q_{\tau }(t)$ denotes the $\tau $-th quantile of $Y(t)$, which
is the inverse of $\alpha (t,u)$ w.r.t. $u$. We propose to estimate $q_{\tau
}(t)$ by $\hat{q}_{\tau }(t)$ where $\hat{q}_{\tau }(t)=\inf \{u:\hat{\alpha}%
^{r}(t,u)\geq \tau \}$ and $\hat{\alpha}^{r}(t,u)$ is the rearrangement
of $\hat{\alpha}(t,u)$.

We rearrange $\hat{\alpha}(t,u)$ to make it monotonically increasing in $%
u\in \mathcal{U}$. Following \cite{CFG10}, for a generic function $Q(\cdot)$, we define $\overline{Q}=Q\circ
\psi ^{\leftarrow }$ where $\psi $ can be any increasing bijective mapping: $%
\mathcal{U}\mapsto \lbrack 0,1]$ and $\psi ^{\leftarrow }$ is the inverse of $\psi$. Then the rearrangement $\overline{Q}^{r}$
of $\overline{Q}$ is defined as 
\begin{equation*}
\overline{Q}^{r}(u)=F^{\leftarrow }(u)=\inf \{y:F(y)\geq u\},
\end{equation*}%
where $F(y)=\int_{0}^{1}1\{\overline{Q}(u)\leq y\}du$. Then the
rearrangement $Q^{r}$ for $Q$ is $Q^{r}=\overline{Q}^{r}\circ \psi (u).$

The rearrangement and inverse are two functionals operating on the process 
\begin{equation*}
\{\hat{\alpha}(t,u):(t,u)\in \mathcal{TU}\}
\end{equation*}%
and are shown to be Hadamard differentiable by \cite{CFG10} and \cite{VW96},
respectively. However, by Theorem \ref{thm:2nd}, 
\begin{equation*}
\sup_{(t,u)\in \mathcal{T}\mathcal{U}}(nh_2)^{1/2}(\hat{\alpha}(t,u)-\alpha
(t,u))=O_{p}(\log ^{1/2}(n)),
\end{equation*}%
which is not asymptotically tight. Therefore, the standard functional delta
method used in \cite{CFG10} and \cite{VW96} is not directly applicable. The
next theorem overcomes this difficulty and establishes the linear expansion
of the quantile estimator. Denote $\mathcal{TI}$, $\{q_{\tau }(t):\tau \in 
\mathcal{I}\}^{\varepsilon }$, $\overline{\{q_{\tau }(t):\tau \in \mathcal{I}%
\}^{\varepsilon }}$, and $\mathcal{U}_{t}$ as $\mathcal{T}\times \mathcal{I}$%
, the $\varepsilon $-enlarged set of $\{q_{\tau }(t):\tau \in \mathcal{I}\}$%
, the closure of $\{q_{\tau }(t):\tau \in \mathcal{I}\}^{\varepsilon }$, and
the projection of $\mathcal{TU}$ on $T=t$, respectively.

\begin{thm}
Suppose that Assumptions \ref{ass:unconfoundedness}--\ref{ass:eigen} and \ref%
{ass:rate2}.1 hold. If $\overline{\{q_{\tau }(t):\tau \in \mathcal{I}%
\}^{\varepsilon }}\subset \mathcal{U}_{t}$ for any $t\in \mathcal{T}$, then 
\begin{equation*}
\hat{q}_{\tau }(t)-q_{\tau }(t)=-(\mathbb{P}_{n}-\mathbb{P})\frac{\Pi
_{t,u}(W_{q_{\tau }(t)},\phi _{t,q_{\tau }(t)},f_{t})}{f_{Y(t)}(q_{\tau
}(t))} - \mathcal{\beta}_q(t,\tau)h_2^2 +R_{n}^{q}(t,\tau ),
\end{equation*}%
where $f_{Y(t)}$ is the density of $Y(t)$, $\mathcal{\beta}_q(t,\tau) = \frac{\mathcal{\beta}_\alpha(t,q_\tau(t))}{f_{Y(t)}(q_{\tau}(t))}$, and $\sup_{(t,\tau )\in \mathcal{T}%
\mathcal{I}}R_{n}^{q}(t,\tau )=o_{p}((nh_2)^{-1/2}).$ If Assumption \ref%
{ass:rate2}.1 is replaced by Assumption \ref{ass:rate2}.2, then 
$$\sup_{(t,\tau )\in \mathcal{T}\mathcal{I}}\left(|R_{n}^{q}(t,\tau
)| + \left|\mathcal{\beta}_q(t,\tau)\right|\right)h_2^2=o_{p}(n^{-1/2}). $$
\label{thm:q}
\end{thm}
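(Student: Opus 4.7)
My plan proceeds in three steps, with the main difficulty being a stochastic equicontinuity argument uniform in $(t,\tau)$.

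First, I would invoke the contraction property of the rearrangement operator from \cite{CFG10}: since $u\mapsto\alpha(t,u)$ is monotone,
\begin{equation*}
\sup_{(t,u)\in\mathcal{TU}}|\hat{\alpha}^{r}(t,u)-\alpha(t,u)|\leq\sup_{(t,u)\in\mathcal{TU}}|\hat{\alpha}(t,u)-\alpha(t,u)|,
\end{equation*}
and the right-hand side is $O_{p}(\sqrt{\log(n)/(nh)})$ by Theorem~\ref{thm:2nd} together with a maximal inequality applied to $(\mathbb{P}_{n}-\mathbb{P})\Pi_{t,u}$. Because $\alpha(t,\cdot)$ has derivative $f_{Y(t)}$ bounded away from zero on $\overline{\{q_{\tau}(t):\tau\in\mathcal{I}\}^{\varepsilon}}\subset\mathcal{U}_{t}$ (a standard regularity condition inherited from the smoothness assumptions), this converts into the uniform consistency $\sup_{(t,\tau)\in\mathcal{TI}}|\hat{q}_{\tau}(t)-q_{\tau}(t)|=O_{p}(\sqrt{\log(n)/(nh)})=o_{p}(1)$.

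Second, I would argue that the rearrangement is inactive near the quantile. On $\overline{\{q_{\tau}(t):\tau\in\mathcal{I}\}^{\varepsilon}}$, $\alpha(t,\cdot)$ is strictly increasing with derivative bounded below while $\|\hat{\alpha}(t,\cdot)-\alpha(t,\cdot)\|_{\infty}=o_{p}(1)$. Hence, w.p.a.1, $\hat{\alpha}(t,\cdot)$ is itself monotone on this interval, so $\hat{\alpha}^{r}(t,u)=\hat{\alpha}(t,u)$ throughout a neighborhood of $q_{\tau}(t)$, and in particular $\hat{\alpha}^{r}(t,\hat{q}_{\tau}(t))=\hat{\alpha}(t,\hat{q}_{\tau}(t))=\tau$ up to a negligible right-continuity correction. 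A second-order Taylor expansion of $\alpha(t,\cdot)$ around $q_{\tau}(t)$ combined with $\tau=\hat{\alpha}(t,\hat{q}_{\tau}(t))$ then gives
\begin{equation*}
f_{Y(t)}(q_{\tau}(t))\bigl(\hat{q}_{\tau}(t)-q_{\tau}(t)\bigr)=-\bigl[\hat{\alpha}(t,\hat{q}_{\tau}(t))-\alpha(t,\hat{q}_{\tau}(t))\bigr]+O_{p}(\log(n)/(nh)),
\end{equation*}
and the quadratic remainder is $o_{p}((nh)^{-1/2})$ under Assumption~\ref{ass:rate2}.

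Third, and this is where I expect the central difficulty, I would replace $\hat{q}_{\tau}(t)$ by $q_{\tau}(t)$ on the right-hand side. Using the Bahadur expansion from Theorem~\ref{thm:2nd}, whose $R_{n}$ is already $o_{p}((nh)^{-1/2})$ uniformly, this amounts to establishing the stochastic equicontinuity statement
\begin{equation*}
\sup_{(t,\tau)\in\mathcal{TI}}\bigl|(\mathbb{P}_{n}-\mathbb{P})\bigl\{\Pi_{t,\hat{q}_{\tau}(t)}(W_{\hat{q}_{\tau}(t)},\phi_{t,\hat{q}_{\tau}(t)},f_{t})-\Pi_{t,q_{\tau}(t)}(W_{q_{\tau}(t)},\phi_{t,q_{\tau}(t)},f_{t})\bigr\}\bigr|=o_{p}((nh)^{-1/2}).
\end{equation*}
This would follow from a maximal inequality for the class $\{\Pi_{t,u}-\Pi_{t,u'}:|u-u'|\leq\delta_{n},\ t\in\mathcal{T}\}$ with $\delta_{n}=O_{p}(\sqrt{\log(n)/(nh)})$, exploiting that $u\mapsto\phi_{t,u}(x)$ is Lipschitz (inherited from Assumption~\ref{ass:ker}) and $\mathbb{E}(Y_{u}-Y_{u'})^{2}=O(|u-u'|)$. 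These ingredients yield a variance bound of order $\delta_{n}/h$ for the kernel-weighted increments, which together with the bracketing entropy of this class delivers the desired $o_{p}((nh)^{-1/2})$ rate under Assumption~\ref{ass:rate2}.1. Propagating the sharper $o_{p}(n^{-1/2})$ bound on $R_{n}$ from Theorem~\ref{thm:2nd} under Assumption~\ref{ass:rate2}.2 through the same argument yields the $o_{p}(n^{-1/2})$ refinement.
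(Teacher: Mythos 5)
Your Steps 1 and 3 are sound, and Step 3 is essentially the same stochastic-equicontinuity ingredient the paper uses (via Corollary 5.1 of \cite{CCK14} inside Lemma \ref{lem:verify}). The genuine gap is in Step 2. From $\sup_{u}|\hat{\alpha}(t,u)-\alpha(t,u)|=o_{p}(1)$ and $\partial_{u}\alpha(t,u)=f_{Y(t)}(u)\geq c>0$ you conclude that, w.p.a.1, $\hat{\alpha}(t,\cdot)$ is itself monotone near $q_{\tau}(t)$, so the rearrangement is inactive and $\hat{\alpha}^{r}(t,\hat{q}_{\tau}(t))=\hat{\alpha}(t,\hat{q}_{\tau}(t))=\tau$. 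That inference is false: uniform closeness to a strictly increasing function does not rule out fine-scale non-monotonicity. Over $u$-increments of order smaller than $(nh)^{-1/2}$ the noise increments of $\hat{\alpha}(t,\cdot)$ (which contains the indicator terms $1\{Y_{i}\leq u\}$ weighted by $K((T_{i}-t)/h)/(\hat f_{t}(X_{i})h)$, so single observations move it by $O_{p}((nh)^{-1})$ in jumps and the process fluctuates at the $(nh)^{-1/2}$ scale) dominate the signal increment $f_{Y(t)}\cdot|u-u'|$, so $\hat{\alpha}(t,\cdot)$ will generically have local wiggles and the rearranged function differs from $\hat{\alpha}$ pointwise -- this is exactly why the rearrangement step is nontrivial here. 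Likewise, $\hat{\alpha}^{r}(t,\hat{q}_{\tau}(t))=\tau$ only up to the local jump size of $\hat{\alpha}^{r}$, which is again controlled by the local oscillation of $\hat{\alpha}$, not by the sup-norm error.

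What is true, and what must be proved, is the weaker statement that the rearrangement changes $\hat{\alpha}$ only by the local oscillation of $\hat{\alpha}-\alpha$ over a shrinking window of width $O_{p}(\sqrt{\log(n)/(nh)})$ around each $u$, and that this oscillation is $o_{p}((nh)^{-1/2})$ (resp.\ $o_{p}(n^{-1/2})$ under Assumption \ref{ass:rate2}.2). This is precisely the content of the paper's route: because the normalized process is not tight (it is $O_{p}(\log^{1/2}(n))$), the Hadamard-differentiability/functional-delta-method results of \cite{CFG10} cannot be invoked, and the paper instead proves quantitative expansions of the rearrangement and inverse maps for a perturbed monotone function (Propositions \ref{lem:cdf} and \ref{lem:inv}), whose key hypothesis (Assumption \ref{ass:cdf}.2) is exactly an increment/oscillation bound $\sup_{|v-v'|\leq\varepsilon_{n}}|h_{n}(t,v)-h_{n}(t,v')|=o_{p}(\delta_{n})$, verified in Lemma \ref{lem:verify} by the same maximal-inequality calculation you perform in your Step 3. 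So your plan can be repaired: apply your increment bound not only to swap $\hat{q}_{\tau}(t)$ for $q_{\tau}(t)$ in the Bahadur term, but also to show that $\sup_{(t,u)}|\hat{\alpha}^{r}(t,u)-\hat{\alpha}(t,u)|$ over the relevant neighborhood is $o_{p}((nh)^{-1/2})$ and that the crossing error $|\hat{\alpha}^{r}(t,\hat{q}_{\tau}(t))-\tau|$ is of the same order; as written, the "monotone w.p.a.1, hence rearrangement inactive" shortcut does not hold. (A minor additional point: your Taylor step also uses boundedness of $\partial_{u}f_{Y(t)}(u)$, which the paper likewise uses inside Lemma \ref{lem:verify}; it should be stated rather than attributed to Assumption \ref{ass:ker}.)
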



Under Assumption \ref{ass:rate2}.2, the remainder term $R_{n}^{q}(t,\tau )$
is $o_{p}(n^{-1/2})$ uniformly in $(t,\tau )\in \mathcal{T}\mathcal{I}$.
This result is needed if one wants to establish an $L_{2}$-type
specification test of $q_{\tau }(t)$. For example, one may be interested in
testing the null hypotheses of the quantile partial derivative being
homogeneous across treatment. In this case, the null hypothesis can be
written as 
\begin{equation*}
H_{0}:q_{\tau }(t)=\beta _{0}(\tau )+\beta _{1}(\tau )t\text{ \ for all }%
(t,\tau )\in \mathcal{T}\mathcal{I},
\end{equation*}%
and the alternative hypothesis is the negation of $H_{0}$. One way to
conduct a consistent test for the above hypothesis is to employ the
residuals of the linear regression of $\hat{q}_{\tau }(T_{i})$ on $T_{i}$ to
construct the test statistic $\Upsilon _{n}(\tau )$, i.e., 
\begin{equation*}
\Upsilon _{n}(\tau )=\frac{1}{n}\sum_{i=1}^{n}(\hat{q}_{\tau }(T_{i})-\hat{%
\beta}_{0}-\hat{\beta}_{1}T_{i})^{2}1\{T_{i}\in \mathcal{T}\},
\end{equation*}%
where $(\hat{\beta}_{0},\hat{\beta}_{1})$ are the linear coefficient
estimators. This type of specification test has been previously studied by 
\cite{SC13}, \cite{LLS15}, \cite{SJZ15}, \cite{HSWY16}, and \cite{SH16} in
various contexts. One can follow them and apply the results in Theorem \ref%
{thm:q} to study the asymptotic distribution of $\Upsilon _{n}(\tau )$ for
each $\tau .$ In addition, one can also consider either an integrated or a
sup-version of $\Upsilon _{n}(\tau )$ and then study its asymptotic
properties. For brevity we do not study such a specification test in this
paper.

\medskip Given the estimators $\hat{\mu}(t)$ and $\hat{q}_{\tau }(t)$, we
can run local linear regressions of $\hat{\mu}(T_i)$ and $\hat{q}_{\tau
}(T_{i})$ on $\left( 1,T_{i}-t\right) $ and obtain estimators $\breve{\beta}%
^{1}(t) $ and $\hat{\beta}_{\tau }^{1}(t)$ of $\partial \mu(t)$ and $%
\partial _{t}q_{\tau }(t)$, respectively, as estimators of the linear
coefficients in the local linear regression.\footnote{%
Alternatively, one can consider the local quadratic or cubic regression.}
Specifically, we define 
\begin{equation*}
(\breve{\beta}^{0}(t),\breve{\beta}^{1}(t))=\argmax_{\beta ^{0},\beta ^{1}}\sum_{i=1}^n(%
\hat{\mu}(T_{i})-\beta ^{0}-\beta ^{1}(T_{i}-t))^{2}K(\frac{T_{i}-t}{h_2})
\end{equation*}%
and 
\begin{equation*}
(\hat{\beta}_{\tau }^{0}(t),\hat{\beta}_{\tau }^{1}(t))=\argmax_{\beta
^{0},\beta ^{1}}\sum_{i=1}^n(\hat{q}_{\tau }(T_{i})-\beta ^{0}-\beta ^{1}(T_{i}-t))^{2}K(%
\frac{T_{i}-t}{h_2}),
\end{equation*}%
where $h_2$ is the second-stage bandwidth. It is possible to use a third bandwidth $h_3$ in this step. Results similar to Theorem \ref{thm:qprime} below still holds if $h_3/h_2 = O(1)$. Note that the usual optimal bandwidth for the kernel estimator of the derivative is $O(n^{-1/7})$. However, because $h_2 = O(n^{-1/5})$, the requirement that  $h_3/h_2 = O(1)$ implies the optimal bandwidth is not achievable. The key reason is that, unlike the usual local linear regression, we need to plug in the estimates of $\mu(\cdot)$ and $q_\tau(\cdot)$. For simplicity, we just take $h_3 = h_2.$

The following theorem shows the asymptotic properties of $\breve{\beta}%
^{1}(t)$ and $\hat{\beta}_{\tau }^{1}(t)$. 
\begin{thm}
Suppose Assumptions \ref{ass:unconfoundedness}--\ref{ass:eigen}, and \ref%
{ass:rate2}.1. If $\overline{\{q_{\tau }(t):\tau \in \mathcal{I}%
\}^{\varepsilon }}\subset \mathcal{U}_{t}$ for any $t\in \mathcal{T}$, then 
\begin{equation*}
\breve{\beta}^{1}(t)-\partial _{t}\mu(t)=(\mathbb{P}_n - \mathbb{P})(\kappa
_{2}f_{t}(X_{j})h_2^{2})^{-1}\biggl[Y_{j}- \nu _{t}(X_{j})\biggr]\overline{K}(%
\frac{T_{j}-t}{h_2})+\breve{R}_{n}^{1}(t )
\end{equation*}%
and 
\begin{equation*}
\hat{\beta}_{\tau }^{1}(t)-\partial _{t}q_{\tau }(t)=-(\mathbb{P}_n - \mathbb{P})(\kappa _{2}f_{Y(t)}(q_{\tau }(t))f_{t}(X_{j})h_2^{2})^{-1}%
\biggl[Y_{q_{\tau }(t),j}-\phi _{t,q_{\tau }(t)}(X_{j})\biggr]\overline{K}(%
\frac{T_{j}-t}{h_2})+R_{n}^{1}(t,\tau ),
\end{equation*}%
where $\sup_{t \in \mathcal{T}}|\breve{R}_{n}^{1}(t )| + \sup_{(t,\tau )\in 
\mathcal{T}\mathcal{I}}|R_{n}^{1}(t,\tau )|=o_{p}((nh_2^{3})^{-1/2})$ and $%
\overline{K}(v)=\int wK(v-w)K(w)dw$. \label{thm:qprime}
\end{thm}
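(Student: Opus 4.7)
The plan is to express $\breve\beta^1(t)$ as an explicit linear functional of $\hat\mu(T_i)$ via the closed-form local linear weights, and then plug in the Bahadur representation of $\hat\mu(t)$ from Theorem \ref{thm:2nd}. Let $Z_i=T_i-t$, $K_i=K(Z_i/h)$, $S_j(t)=\sum_i K_iZ_i^j$, and write
\begin{equation*}
\breve\beta^1(t)=\sum_{i=1}^n w_{i,1}(t)\hat\mu(T_i),\qquad w_{i,1}(t)=\frac{K_i(S_0(t)Z_i-S_1(t))}{S_0(t)S_2(t)-S_1(t)^2}.
\end{equation*}
Standard kernel calculations yield $S_0(t)\asymp nhf_T(t)$, $S_1(t)=O_p(nh^3)$, $S_2(t)\asymp nh^3 f_T(t)\kappa_2$, so $w_{i,1}(t)=K_iZ_i/(nh^3 f_T(t)\kappa_2)+\op((nh^3)^{-1/2})$ uniformly in $i,t$. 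I then split
\begin{equation*}
\breve\beta^1(t)-\partial_t\mu(t)=\underbrace{\Bigl[\sum_i w_{i,1}(t)\mu(T_i)-\partial_t\mu(t)\Bigr]}_{\text{(I)}}+\underbrace{\sum_i w_{i,1}(t)[\hat\mu(T_i)-\mu(T_i)]}_{\text{(II)}}.
\end{equation*}
Since $\mu(\cdot)$ inherits three-times differentiability from Assumption \ref{ass:ker}.3, (I) equals the standard local-linear bias $O_p(h^2)+\Op((nh)^{-1/2})$, and $h^2=o((nh^3)^{-1/2})$ holds under Assumption \ref{ass:rate2}.1 because $H>1/5\ge1/7$; similarly the stochastic piece of (I) is $\Op((nh)^{-1/2})=o_p((nh^3)^{-1/2})$.

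For (II), I substitute the Bahadur expansion from Theorem \ref{thm:2nd} at $T=T_i$,
\begin{equation*}
\hat\mu(T_i)-\mu(T_i)=\frac1n\sum_{j}\eta_{j,i}+R_n'(T_i),\quad \eta_{j,i}=\frac{Y_j-\nu_{T_i}(X_j)}{f_{T_i}(X_j)h}K\!\Bigl(\frac{T_j-T_i}{h}\Bigr)+\nu_{T_i}(X_j)-\mu(T_i).
\end{equation*}
The remainder contribution is $\sum_i w_{i,1}(t)R_n'(T_i)=O_p(\sup_s|R_n'(s)|/h)=o_p((nh^3)^{-1/2})$ since $\sup_s|R_n'(s)|=o_p((nh)^{-1/2})$. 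Swapping the order of summation in the main part and fixing $j$, I compute the inner sum
\begin{equation*}
\sum_i w_{i,1}(t)\cdot\frac{K((T_j-T_i)/h)}{h f_{T_i}(X_j)}\approx\frac{1}{nh^4 f_T(t)\kappa_2 f_t(X_j)}\sum_i K_iZ_iK\!\Bigl(\frac{T_j-T_i}{h}\Bigr).
\end{equation*}
A change of variables $v=(T_i-t)/h$ and smoothness of $f_T$ give
\begin{equation*}
\frac1n\sum_iK_iZ_iK\!\Bigl(\frac{T_j-T_i}{h}\Bigr)=h^2f_T(t)\!\int\! vK(v)K\!\bigl(\tfrac{T_j-t}{h}-v\bigr)dv+O_p(h^3)=h^2f_T(t)\overline K\!\Bigl(\frac{T_j-t}{h}\Bigr)+O_p(h^3),
\end{equation*}
which is precisely where the paper's nonstandard $\overline K(v)=\int wK(v-w)K(w)dw$ appears. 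Combining these, the $Y_j$-piece of (II) contributes $n^{-1}\sum_j(\kappa_2 f_t(X_j)h^2)^{-1}Y_j\overline K((T_j-t)/h)$, and an identical manipulation on the $\nu_{T_i}(X_j)$-piece gives the $\nu_t(X_j)$-term, using Assumption \ref{ass:ker}.3 to replace $\nu_{T_i}$, $f_{T_i}$, $\nu_{T_i}(X_j)$ by $\nu_t$, $f_t$, $\nu_t(X_j)$ at cost $O_p(h)$. The leftover smooth pieces $\sum_iw_{i,1}(t)\nu_{T_i}(X_j)\to\partial_t\nu_t(X_j)$ and $\sum_iw_{i,1}(t)\mu(T_i)\to\partial_t\mu(t)$ aggregate to $n^{-1}\sum_j\partial_t\nu_t(X_j)-\partial_t\mu(t)=O_p(n^{-1/2})=o_p((nh^3)^{-1/2})$ by $\mathbb E[\partial_t\nu_t(X)]=\partial_t\mu(t)$ under Assumption \ref{ass:unconfoundedness}.

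The quantile statement is proved by the identical argument using the Bahadur expansion in Theorem \ref{thm:q} in place of Theorem \ref{thm:2nd}: applying the weights $w_{i,1}(t)$ to $\hat q_\tau(T_i)$, the scalar $-1/f_{Y(T_i)}(q_\tau(T_i))$ is smooth in $T_i$ and gets replaced by $-1/f_{Y(t)}(q_\tau(t))$ at a negligible cost, while the influence function $\Pi_{T_i,q_\tau(T_i)}(W_{q_\tau(T_i)},\phi_{T_i,q_\tau(T_i)},f_{T_i})$ is handled by the same double-sum manipulation that produced $\overline K$ above. Uniformity over $t\in\mathcal T$ (respectively $(t,\tau)\in\mathcal{TI}$) is inherited from the uniform Bahadur remainders and the uniform behavior of $S_j(t)$. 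The main obstacle is the bookkeeping of (II): precisely controlling the deterministic approximation errors from replacing $\nu_{T_i},f_{T_i},\phi_{T_i,q_\tau(T_i)}$ by their values at $T_i=t$ inside a double sum that already mixes two kernel factors, and verifying that every such replacement residual is $o_p((nh^3)^{-1/2})$ uniformly; this is what pins down the sharp constants $\kappa_2$ and $f_t(X_j)$ appearing in the leading term.
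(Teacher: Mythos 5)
Your overall roadmap coincides with the paper's: write the local-linear slope as a weighted sum, plug in the uniform Bahadur representations of Theorems \ref{thm:2nd} and \ref{thm:q} at the random design points $T_i$, and extract the leading term through the convolution kernel $\overline{K}$, with the bias and Bahadur remainders disposed of exactly as you do. The gap is in how you treat the resulting double sum over $(i,j)$. After swapping the order of summation you replace the inner sum $\frac1n\sum_iK_iZ_iK((T_j-T_i)/h)$ by its conditional mean $h^2f_T(t)\overline K((T_j-t)/h)$ with an error you declare to be $O_p(h^3)$. That claim is not correct as stated (the stochastic fluctuation of this sum is of order $(h^3/n)^{1/2}$, which exceeds $h^3$ for $H>1/6$, i.e.\ throughout Assumption \ref{ass:rate2}.1), and, more importantly, even a correct uniform-in-$(j,t)$ bound on this error is useless: multiplying a uniform bound of order $(h^3/n)^{1/2}$ by the outer factor $h^{-4}$ and summing $|Y_j-\nu_t(X_j)|/f_t(X_j)$ over $j$ gives $O_p(n^{-1/2}h^{-5/2})$, which is larger than the target $o_p((nh^3)^{-1/2})$ by a factor $h^{-1}$. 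What saves the argument is that the leftover is a \emph{degenerate} second-order U-process in $(i,j)$, indexed by $(t,\tau)$, so it scales like $n^{-1}$ rather than $n^{-1/2}$; the paper makes this explicit via the Hoeffding-type decomposition in Lemma \ref{lem:43} and controls the degenerate part uniformly with the new maximal inequality for second-order degenerate U-processes (Proposition \ref{lem:U}), obtaining $O_p(\log(n)(nh^2)^{-1})=o_p((nh^3)^{-1/2})$. Your proposal contains no substitute for this step, and it is precisely the step the theorem's uniformity claim hinges on (it is also one of the paper's advertised technical contributions).

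Two smaller points. First, for the quantile statement you say the same manipulation goes through "using Assumption \ref{ass:ker}.3 to replace" the nuisance quantities at $T_i$ by their values at $t$; but the term $Y_{q_\tau(T_i),j}=1\{Y_j\le q_\tau(T_i)\}$ is an indicator and is not smooth in its threshold, so replacing $q_\tau(T_i)$ by $q_\tau(t)$ cannot be justified by a Taylor argument. It requires an $L^2$ bound of order $h$ together with a maximal inequality, which is how the paper handles it inside Lemma \ref{lem:43} (producing the $O_p^*(\log^{1/2}(n)(nh^2)^{-1/2})$ terms). Second, your statement $w_{i,1}(t)=K_iZ_i/(nh^3f_T(t)\kappa_2)+o_p((nh^3)^{-1/2})$ uniformly in $i$ is too weak to be used as written: an additive per-weight error of that size, summed over $n$ observations against $\hat\mu(T_i)=O_p(1)$, is not $o_p((nh^3)^{-1/2})$. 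The clean fix is the paper's matrix formulation, bounding $G\widehat\Sigma_2-\Sigma_2=O_p^*(\log^{1/2}(n)(nh^3)^{-1/2})$ and inverting, rather than linearizing each weight separately. These two issues are repairable bookkeeping; the missing control of the degenerate U-process is the substantive omission.
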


Theorem \ref{thm:qprime} presents the Bahadur representations for $\breve{%
\beta}^{1}(t)$ and $\hat{\beta}_{\tau }^{1}(t).$ Since they are estimators for
the first order derivatives $\partial_t \mu(t)$ and $\partial _{t}q_{\tau
}(t),$ respectively, we can show that they converge to the true values at
the $\left( nh_2^{3}\right) ^{1/2}$-rate. Such a rate is common for kernel estimations of the first-order derivative of the conditional expectation, i.e., \citet[Theorem 2.10]{LR07}. 

\section{Inference}

\label{sec:inf}

In this section, we study the inference for $\mu (t),$ $q_{\tau }(t),$ and $%
\partial _{t}q_{\tau }(t).$ We follow the lead of \cite{BCFH13} and consider
the weighted-bootstrap inference. Let $\{\eta _{i}\}_{i=1}^{n}$ be a
sequence of i.i.d. random variables generated from the distribution of $\eta 
$ such that it has sub-exponential tails and unit mean and variance.%
\footnote{
A random variable $\eta $ has sub-exponential tails if $P(|\eta |>x)\leq 
K\exp (-Cx)$ for every $x$ and some constants $K$ and $C$.} For example, $%
\eta $ can be a standard exponential random variable or a normal random
variable with unit mean and standard deviation. We conduct the bootstrap
inference based on the following procedure.

\begin{enumerate}
\item Obtain $\widehat{\nu}_{t}(x)$, $\widehat{\phi}_{t,u}(x)$, $\hat{f}_t(x)
$, $\widetilde{\nu}_{t}(x)$, $\widetilde{\phi}_{t,u}(x)$ and $\tilde{f}_t(x) 
$ from the first stage.

\item For the $b$-th bootstrap sample:

\begin{itemize}
\item Generate $\{\eta_i\}_{i=1}^n$ from the distribution of $\eta$.

\item Compute 
\begin{equation*}
\hat{\mu}^b(t) := \frac{1}{\sum_{i=1}^n \eta_i}\sum_{i=1}^n \eta_i
\Pi_{t}^{\prime }(W_i,\overline{\nu}_{t},\overline{f}_t) 
\end{equation*}
and 
\begin{equation*}
\hat{\alpha}^b(t,u) := \frac{1}{\sum_{i=1}^n \eta_i}\sum_{i=1}^n
\eta_i\Pi_{t,u}(W_{ui},\overline{\phi}_{t,u},\overline{f}_t),
\end{equation*}
where $(\overline{\phi}_{t,u}(\cdot),\overline{f}_t(\cdot))$ are either $(%
\widehat{\phi}_{t,u}(\cdot),\hat{f}_t(\cdot))$ or $(\widetilde{\phi}%
_{t,u}(\cdot),\tilde{f}_t(\cdot))$.

\item Rearrange $\hat{\alpha}^b(t,u)$ and obtain $\hat{\alpha}^{br}(t,u)$.

\item Invert $\hat{a}^{br}(t,u)$ w.r.t. $u$ and obtain $\hat{q}^b_\tau(t) =
\inf\{u:\hat{a}^{br}(t,u) \geq \tau \}$.

\item Compute $\breve{\beta}^{b1}(t)$ and $\hat{\beta}_\tau^{b1}(t)$ as the
slope coefficients of local linear regressions of $\eta_i\hat{\mu}^b(T_i)$
on $(\eta_i,\eta_i(T_i-t))$ and $\eta_i\hat{q}^b_\tau(T_i)$ on $%
(\eta_i,\eta_i(T_i-t))$, respectively.
\end{itemize}

\item We repeat the above step for $b=1,\cdots,B$ and obtain a bootstrap
sample of 
\begin{equation*}
\{\hat{\mu}^b(t),\hat{q}^b_\tau(t), \breve{\beta}^{b1}(t), \hat{\beta}%
_\tau^{b1}(t)\}_{b=1}^B.
\end{equation*}

\item Obtain $\widehat{Q}^{\mu }(\alpha )$, $\widehat{Q}^{0}(\alpha )$, $%
\widehat{Q}^{\mu1}(\alpha )$, and $\widehat{Q}^{1}(\alpha )$ as the $\alpha $%
-th quantile of the sequences $\{\hat{\mu}^{b}(t)-\hat{\mu}(t)\}_{b=1}^{B}$, 
$\{\hat{q}_{\tau }^{b}(t)-\hat{q}_{\tau }(t)\}_{b=1}^{B}$, $\{\breve{\beta}%
^{b1}(t) - \breve{\beta}^{1}(t)\}_{b=1}^B$, and $\{\hat{\beta}_{\tau
}^{b1}(t)-\hat{\beta}_{\tau }^{1}(t)\}_{b=1}^{B}$, respectively.
\end{enumerate}

The standard $100(1-\alpha)\%$ percentile bootstrap confidence interval for $%
q_\tau(t)$ is 
\begin{equation*}
(\hat{Q}^0(\alpha/2) + \hat{q}_\tau(t),\hat{Q}^0(1-\alpha/2) + \hat{q}%
_\tau(t)).
\end{equation*}
However, in our simulation study, we find that it slightly undercovers.
Instead, we use the fact that normal CDF is symmetric and propose to use the
modified percentile bootstrap confidence interval as follows: 
\begin{equation*}
(-\hat{Q}^{*0}(\alpha/2) + \hat{q}_\tau(t),\hat{Q}^{*0}(\alpha/2) + \hat{q}%
_\tau(t)),
\end{equation*}
where $\hat{Q}^{*0}(\alpha/2) = (-\hat{Q}^0(\alpha/2)) \vee \hat{Q}%
^0(1-\alpha/2)$. We define $\widehat{Q}^{*\mu }(\alpha/2)$, $\widehat{Q}%
^{*\mu1}(\alpha/2)$, and $\widehat{Q}^{*1}(\alpha/2)$ in the same manner.
The following theorem summarizes the main results in this section.

\begin{thm}
Suppose that Assumptions \ref{ass:unconfoundedness}--\ref{ass:eigen} and \ref%
{ass:rate2}.1 hold and $nh_2^5 \rightarrow 0$. Then 
\begin{equation*}
\mathbb{P}(-\widehat{Q}^{*\mu }(\alpha /2) + \hat{\mu}(t)\leq \mu (t)\leq \widehat{Q}%
^{*\mu }(\alpha /2)+\hat{\mu}(t))\rightarrow 1-\alpha ,
\end{equation*}%
\begin{equation*}
\mathbb{P}(-\widehat{Q}^{*0}(\alpha /2) + \hat{q}_\tau(t) \leq q_{\tau }(t)\leq 
\widehat{Q}^{*0}(\alpha /2) + \hat{q}_\tau(t))\rightarrow 1-\alpha ,
\end{equation*}%
\begin{equation*}
\mathbb{P}(-\widehat{Q}^{*\mu 1}(\alpha /2) + \breve{\beta}^1(t) \leq \partial
_{t}\mu(t)\leq \widehat{Q}^{*\mu 1}(\alpha /2) + \breve{\beta}%
^1(t))\rightarrow 1-\alpha,
\end{equation*}
and 
\begin{equation*}
\mathbb{P}(-\widehat{Q}^{*1}(\alpha /2) + \breve{\beta}^1_\tau(t) \leq \partial
_{t}q_{\tau }(t)\leq \widehat{Q}^{*1}(\alpha /2) + \breve{\beta}%
^1_\tau(t))\rightarrow 1-\alpha .
\end{equation*}%
\label{thm:infer}
\end{thm}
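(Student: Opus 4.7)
The plan is to establish bootstrap Bahadur representations that mirror Theorems \ref{thm:2nd}, \ref{thm:q}, and \ref{thm:qprime}, show weak convergence of the bootstrap processes to the same Gaussian limits as the estimators themselves (conditional on the data), and then conclude that the symmetrized percentile interval delivers the nominal coverage. Throughout, I will crucially exploit that the first-stage estimators $\widehat{\nu}_t,\widehat{\phi}_{t,u},\widehat{f}_t$ (and their post-Lasso counterparts) are \emph{not} recomputed inside the bootstrap loop, so the only randomness added is the multiplier weights $\{\eta_i\}$.

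First, let $\mathbb{P}_n^\eta$ denote the weighted empirical measure assigning mass $\eta_i/\sum_j \eta_j$ to observation $i$. I would show that
\[
\hat{\mu}^b(t)-\hat{\mu}(t)=(\mathbb{P}_n^\eta-\mathbb{P}_n)\Pi_t'(W,\nu_t,f_t)+\widetilde{R}_n^{\prime b}(t),
\]
and analogously for $\hat\alpha^b(t,u)-\hat\alpha(t,u)$, with remainders that are $o_p((nh)^{-1/2})$ uniformly in $t\in\mathcal T$ (respectively $(t,u)\in\mathcal{TU}$) conditional on the data with probability approaching one. The decomposition is obtained by plugging $\overline{\nu}_t,\overline{f}_t$ into the score and adding/subtracting the oracle versions; the cross terms involving $(\overline{\nu}_t-\nu_t)$, $(\overline{f}_t-f_t)$, weighted by $\eta_i K((T_i-t)/h)/h$, are controlled using Theorems \ref{thm:rate2}--\ref{thm:rate1}, Assumption \ref{ass:rate2}.1, and maximal inequalities for weighted empirical processes indexed by a VC-type class (same Bernstein/Rademacher arguments as for Theorem \ref{thm:2nd}, but with $\eta_i$ absorbed into the envelope; the sub-exponential tails of $\eta$ keep the Orlicz-norm bounds valid).

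Next, conditional on the data, the weighted empirical process $\sqrt{nh}(\mathbb{P}_n^\eta-\mathbb{P}_n)\Pi_{t,u}(W_u,\phi_{t,u},f_t)$ converges weakly to the same mean-zero Gaussian process $\mathbb{G}(t,u)$ as the unconditional process $\sqrt{nh}(\mathbb{P}_n-\mathbb{P})\Pi_{t,u}(W_u,\phi_{t,u},f_t)$. This is the standard multiplier-bootstrap CLT for empirical processes (Ma and Kosorok; Praestgaard and Wellner), which applies because the relevant class is Donsker and $\eta$ has unit mean, unit variance, and sub-exponential tails. Combining with the bootstrap Bahadur representation delivers the asymptotic equivalence in distribution for $\sqrt{nh}(\hat\mu^b(t)-\hat\mu(t))$ and $\sqrt{nh}(\hat\mu(t)-\mu(t))$, and likewise for the distribution-function process $\hat\alpha^b-\hat\alpha$.

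For $\hat q_\tau^b(t)$, I would reuse the argument behind Theorem \ref{thm:q}: the rearrangement-plus-inverse map is Hadamard differentiable, but since the process is not asymptotically tight at rate $\sqrt{nh}$ one has to apply the modified delta method developed there to both $\hat\alpha-\alpha$ and $\hat\alpha^b-\hat\alpha$ side by side, yielding
\[
\hat q_\tau^b(t)-\hat q_\tau(t)=-(\mathbb{P}_n^\eta-\mathbb{P}_n)\Pi_{t,q_\tau(t)}(W_{q_\tau(t)},\phi_{t,q_\tau(t)},f_t)/f_{Y(t)}(q_\tau(t))+o_p((nh)^{-1/2}).
\]
An analogous representation for $\breve{\beta}^{b1}(t)-\breve{\beta}^{1}(t)$ and $\hat\beta_\tau^{b1}(t)-\hat\beta_\tau^{1}(t)$ follows by repeating the local linear arithmetic of Theorem \ref{thm:qprime} with the weights $\eta_i$ in place of $1$. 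Conditional weak convergence to the same Gaussian limits as in Theorems \ref{thm:q}--\ref{thm:qprime} then gives, for each functional $\theta\in\{\mu(t),q_\tau(t),\partial_t\mu(t),\partial_t q_\tau(t)\}$ with estimator $\hat\theta$ and bootstrap draw $\hat\theta^b$, that the conditional law of $\hat\theta^b-\hat\theta$ and the law of $\hat\theta-\theta$ share a common symmetric Gaussian limit. Consequently, letting $Q^\ast$ be the symmetrized bootstrap quantile, $P(|\hat\theta-\theta|\le Q^\ast(\alpha/2))\to 1-\alpha$, which is exactly the four coverage statements in the theorem.

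The main obstacle will be the bootstrap Bahadur step for $\hat\alpha^b$ \emph{uniformly} in $(t,u)\in\mathcal{TU}$: the cross terms $(\mathbb{P}_n^\eta-\mathbb{P}_n)\{[(\overline\phi_{t,u}(X)-\phi_{t,u}(X))/\overline f_t(X)]\eta_i K((T_i-t)/h)/h\}$ must be shown to be $o_p((nh)^{-1/2})$ uniformly, using only the first-stage rates already established plus the sub-exponential tails of $\eta$, and then fed through the non-standard Hadamard differentiability of rearrangement-plus-inverse that is the technical heart of Theorem \ref{thm:q}. Once that step is in place, the rest is essentially bookkeeping combined with standard multiplier CLT arguments.
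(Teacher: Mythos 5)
Your proposal is correct and follows essentially the same route as the paper: the paper proves Theorems \ref{thm:2nd}, \ref{thm:q}, and \ref{thm:qprime} for general multiplier weights $\eta$ (with the first stage held fixed), so the bootstrap Bahadur representations you describe are obtained by subtraction, and the proof of Theorem \ref{thm:infer} then consists of exactly your remaining steps, namely the conditional multiplier CLT (in the sense of van der Vaart and Wellner, Section 2.9) applied to $(\mathbb{P}_n-\mathbb{P})(\eta-1)(\cdot)$ plus the symmetric-quantile coverage argument.
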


Theorem \ref{thm:infer} implies that, via under-smoothing, the $100(1-\alpha )\%$ bootstrap
confidence intervals for $\mu (t),$ $q_{\tau }(t),$ $\partial_t\mu(t)$, and $%
\partial _{t}q_{\tau }(t)$ have the correct asymptotic coverage probability $%
1-\alpha .$ We need to under-smooth because, regardless of under-smoothing, the bootstrap estimator is always center around the original estimator without the asymptotic bias. With more complicated notations and the arguments of strong approximation in \cite{CCK14} and \cite{CCK14-anti}, one can show that
the validity of bootstrap inference holds uniformly over $\left( t,\tau \right).$ One of the key ingredients to verify \citet[Condition H1]{CCK14-anti} is the linear expansions of the estimators with a uniform control of the reminder terms, which has already been established in Theorems \ref{thm:q} and \ref{thm:qprime}. 

\section{Monte Carlo Simulations}

\label{sec:sim}

This section presents the results of Monte Carlo simulations, which
demonstrate the finite sample performance of the estimation and inference
procedure. 
Let $Y$ be generated as 
\begin{eqnarray}
	\label{eq:YXT}
Y=\Lambda \left(\left(U+b(X)'\beta- \Phi^{-1}\left(0.5T+
0.25\right)\right) \exp\left(\left(T-0.5\right)^2\right)\right)
\end{eqnarray}
while $T$ be generated as 
\begin{equation}
\label{eq:T|X}
T = \Lambda(V-b(X)'\beta),
\end{equation}
where $U$ and $V$ are two standard logistic random variables such that $U \perp V$ and $(U,V) \perp X$, $%
\Lambda(\cdot)$ and $\Phi(\cdot)$ are the logistic and normal CDFs, respectively, $p=100$, $X$ is a $p$-dimensional
random variables whose distribution is the Gaussian copula with covariance
parameter $[{0.5^{|j-k|}}]_{jk}$, and $b(X)$ is a vector of basis functions constructed from $X$. Note that $T$ ranges from $0$ to $1$. The parameters of interest are $q_{\tau}(t)$ and $\partial_{t}q_{\tau}(t)$,
where $t = 0.25, 0.5, 0.75$ and $\tau \in (0.2,0.8)$. We consider the following three designs: 
\begin{enumerate}
	\item (Exact sparse) $\beta_j=\frac{\pi^2}{24}$ for $j=1,\cdots,4$ $\beta_j = 0$, $j \geq 5$, and $b(X_j) = X_j$, $j=1,\cdots,100$;
	\item (Approximate sparsity) $\beta_j=\frac{1}{j^2}$ for $j=1,\cdots,100$ and $b(X_j) = X_j$, $j=1,\cdots,100$;
	\item (Sieve basis) $\beta_1=\beta_2=\frac{\pi^2}{12}$ and $\beta_j = 0$, $j \geq 3$. We construct $b(X)$ as the cubic spline basis functions of $(X_1,X_2)$:
	\begin{align*}
	b(X) = & \left[1,X_1,X_1^2,X_1^3,\max(X_1-q^{(1)}(0.1),0)^3,\cdots,\max(X_1-q^{(1)}(0.9),0)^3\right] \\
	&\times \left[1,X_2,X_2^2,X_2^3,\max(X_2-q^{(2)}(0.1),0)^3,\cdots,\max(X_2-q^{(2)}(0.9),0)^3\right],
	\end{align*} 
	where $q^{(j)}(\tau)$ denotes the $\tau$-th empirical quantile of $X_j$, $j=1,2$. This results in 169 basis functions. We further remove the basis functions with variance less than $10^{-4}$. We end up with about 128 basis functions on average.\footnote{The number of basis functions slightly varies across simulations.} 
\end{enumerate}
Note that the sum of the coefficients are (approximately) $\pi^2/6$ for all three designs.  
We normalize the basis functions $b(X)$ by their sample means and standard errors.

We use Gaussian kernel function in all three stages. We have four tuning parameters: $\lambda$, $\tilde{\lambda}$, $h_1$, and $h_2$. As we discussed in Section \ref{sec:1st}, we use 
\begin{align*}
\lambda = \ell_n(\log(p \vee nh_1)nh_1)^{1/2} \quad \text{and} \quad \tilde{\lambda} = 1.1\Phi^{-1}(1-\gamma/\{p \vee nh_1\})n^{1/2},
\end{align*}
where $\ell_n = \sqrt{\log(\log(nh_1))}$ and $\gamma = 1/\log(n)$. We use the rule-of-thumb bandwidth for $h_1$, i.e., $h_1 = h^* = 1.06 \times sd(T) \times n^{-1/5}$. Last, we build
$h_2$ based on the rule-of-thumb
bandwidth for the local quantile regression suggested by \cite{yu1998local}.
In particular, \cite{yu1998local} propose the bandwidths $h_{RoT}(\tau)=C(\tau)\times h_{mean}$, where $C(\tau)$ is a constant dependent only on $\tau$, and $C(0.5) = 1.095$ and $C(0.25) = C(0.75) = 1.13$ and $h_{mean}$ is the bandwidth for the kernel estimation of $\mathbb{E}(Y|T)$.\footnote{We refer interested readers to \cite[Table 1]{yu1998local} for more details on $C(\tau)$.  In our simulation studies, as $C(\tau)$ is nearly constant over $\tau \in [0.25,0.75]$, we just choose $C(0.5) = 1.095$ for all the quantile index $\tau$.} We use the leave-one-out cross-validation to search for the optimal bandwidth of $h_{mean}$ over a grid in $(0.8h^*,1.2h^*)$. The resulting bandwidth is denoted as $h_{mean}^*$. In order to achieve under-smoothing, we define $h_2=n^{-1/10}\times C(\tau)\times h_{mean}^*$,
where our choice of the factor $n^{-1/10}$ follows %
\citet[][p.418]{Cai2012413}. 

\medskip We repeat the bootstrap inference 500 times and all the results are
based on 500 Monte Carlo simulations. The sample size is $n=500$.
Although the sample size is large compared to $p$, in this DGP, the first-stage
bandwidth is as small as $0.09$. The
effective sample size for the first-stage estimation is of order of magnitude of $nh_1\approx 45 < 100$. In fact, we obtained warning
signs of potential multi-collinearity and were unable to estimate the model
when implementing the traditional estimation procedures without variable
selection (i.e., without penalization).

\begin{figure}[H]
\centering
\includegraphics[scale = 0.565,angle=0]{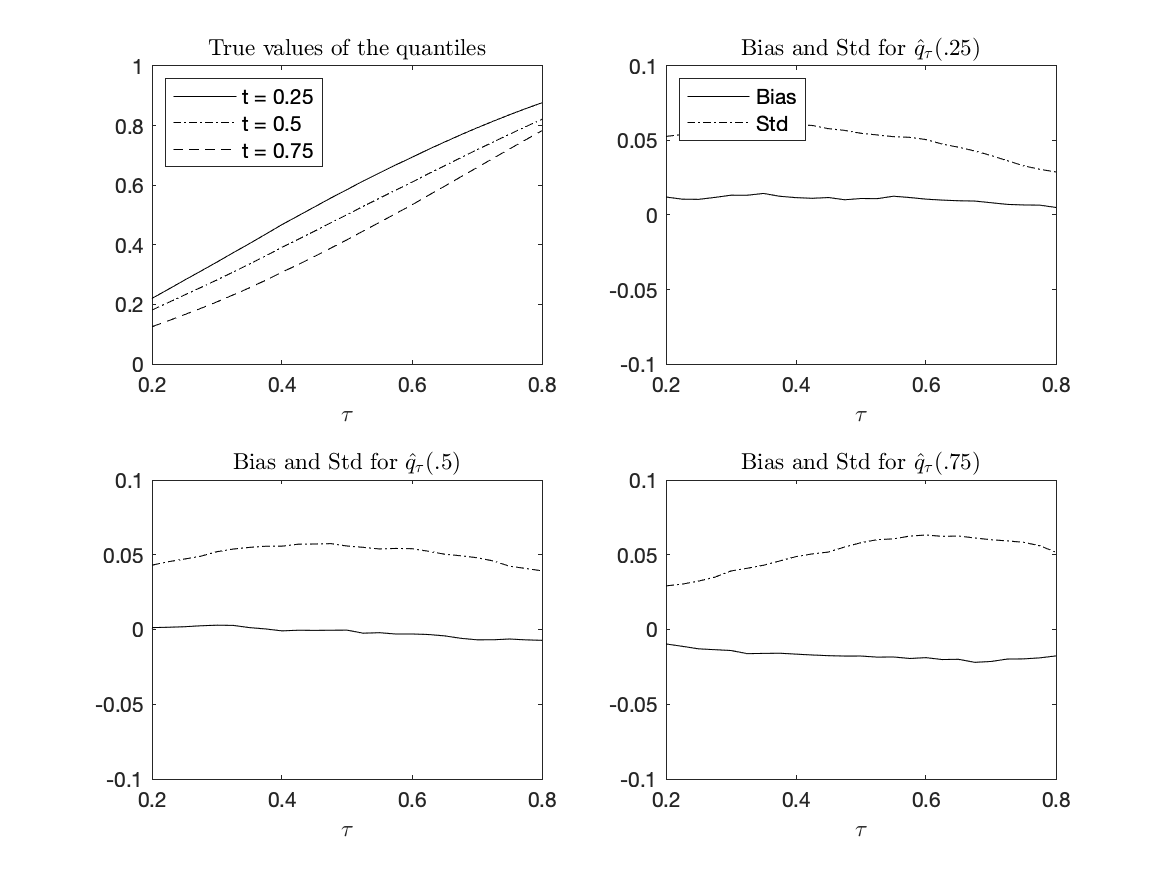}  
\caption{DGP1, finite sample performance of $\hat{q}_{\protect\tau }(t)$}
\label{fig:beta0_1}
\end{figure}

\begin{figure}[H]
\centering
\includegraphics[scale = 0.565,angle=0]{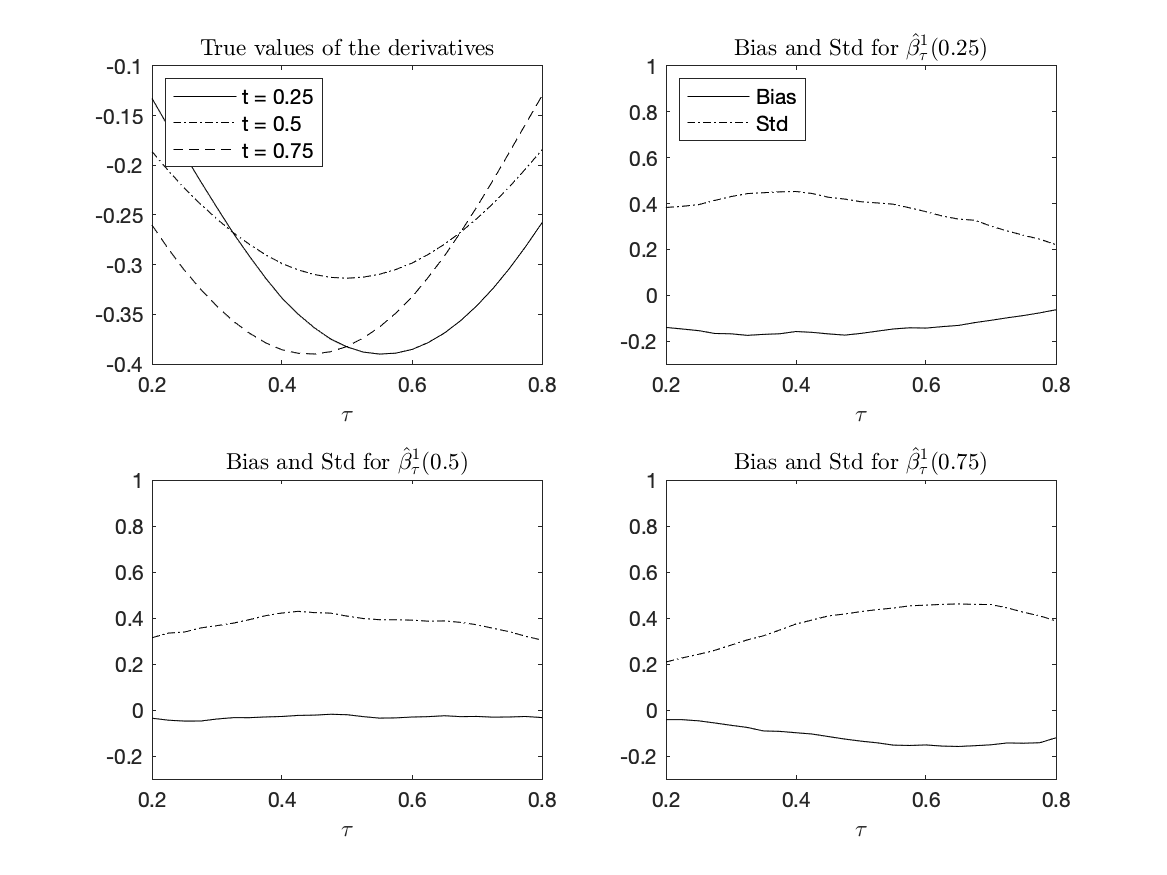}  
\caption{DGP1, finite sample performance of $\hat{\protect\beta}^1_\protect\tau(t)
$}
\label{fig:beta1_1}
\end{figure}

\begin{figure}[H]
\centering
\includegraphics[scale = 0.565,angle=0]{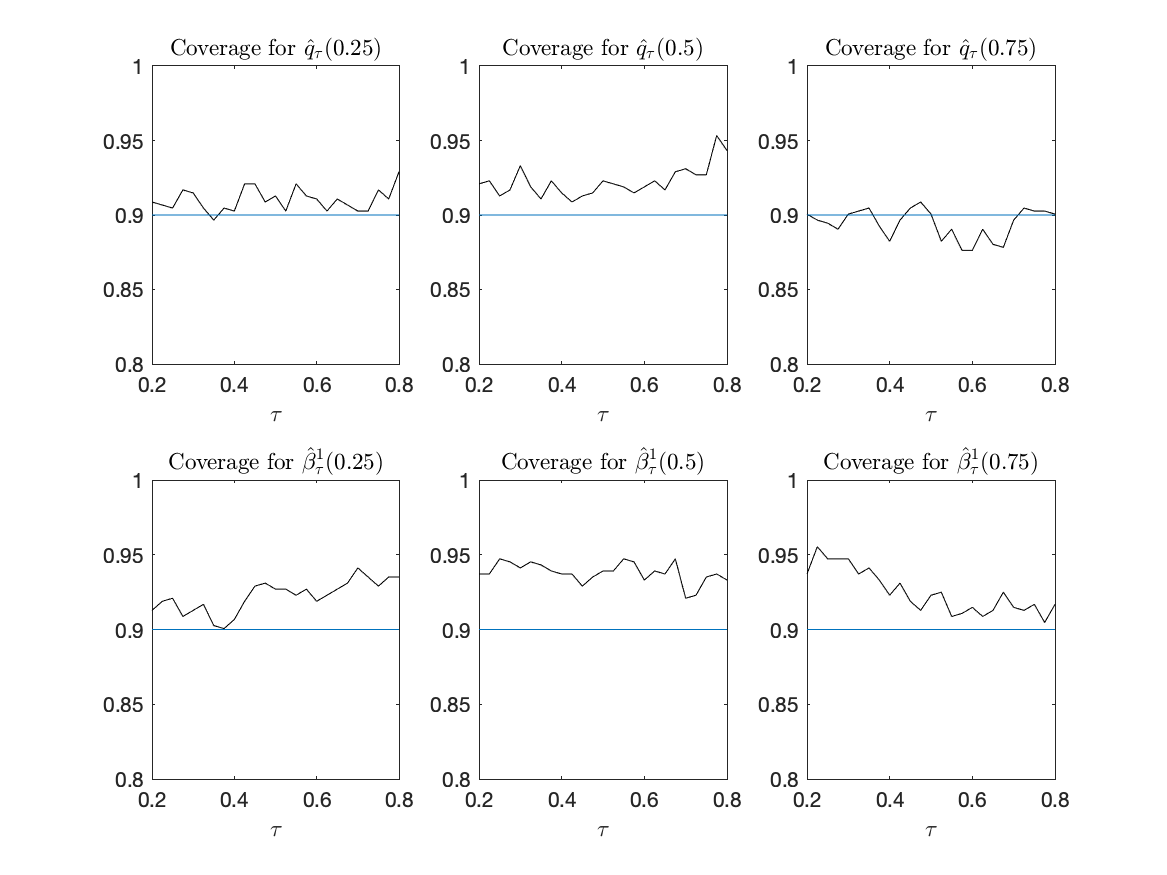}  
\caption{DGP1, coverage probability}
\label{fig:cov_1}
\end{figure}

\begin{figure}[H]
	\centering
	\includegraphics[scale = 0.565,angle=0]{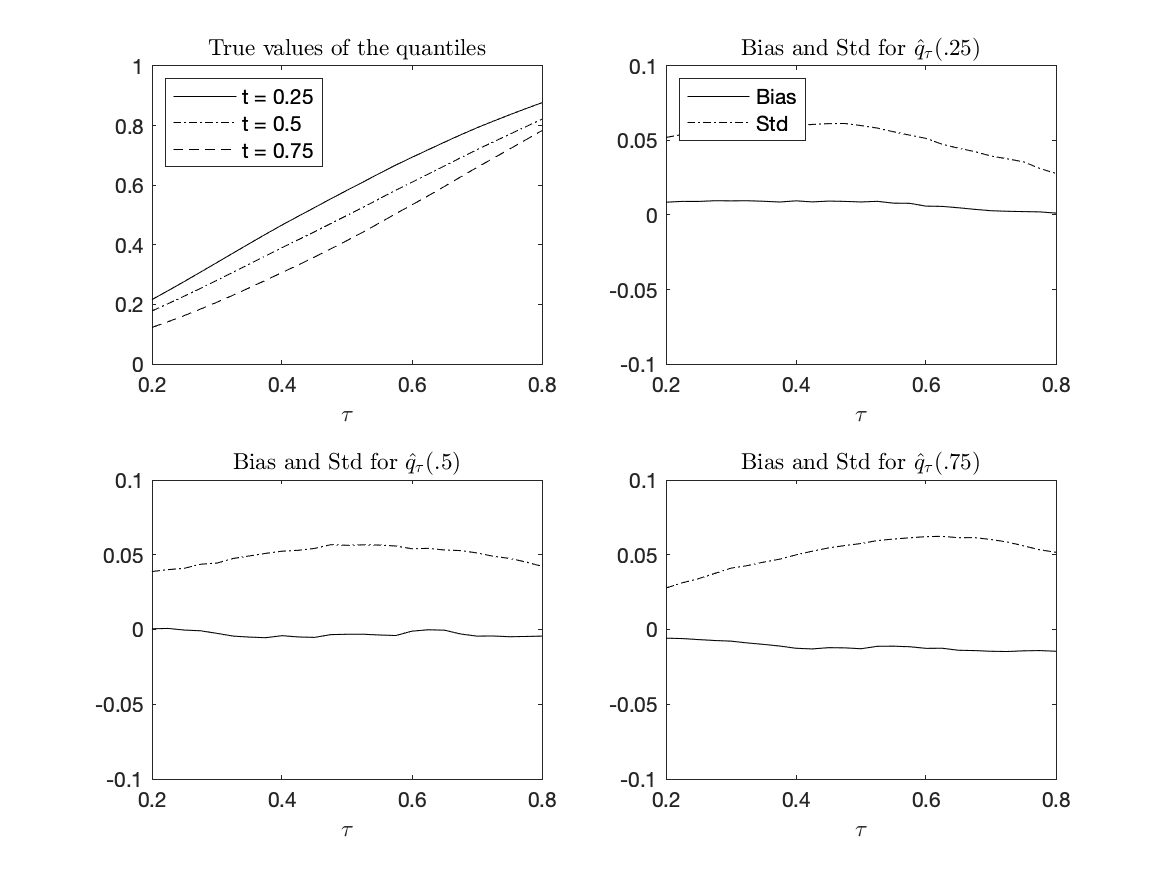}  
	\caption{DGP2, finite sample performance of $\hat{q}_{\protect\tau }(t)$}
	\label{fig:beta0_2}
\end{figure}

\begin{figure}[H]
	\centering
	\includegraphics[scale = 0.565,angle=0]{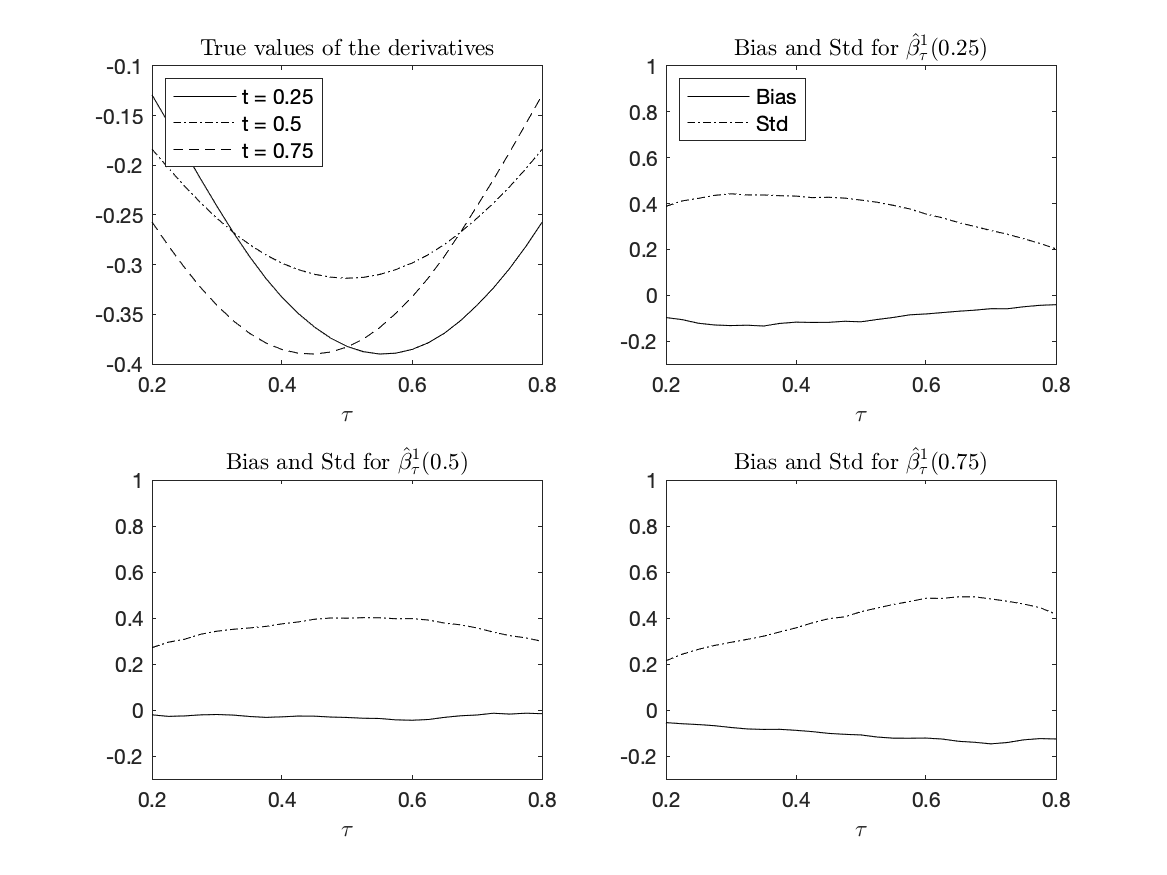}  
	\caption{DGP2, finite sample performance of $\hat{\protect\beta}^1_\protect\tau(t)
		$}
	\label{fig:beta1_2}
\end{figure}

\begin{figure}[H]
	\centering
	\includegraphics[scale = 0.565,angle=0]{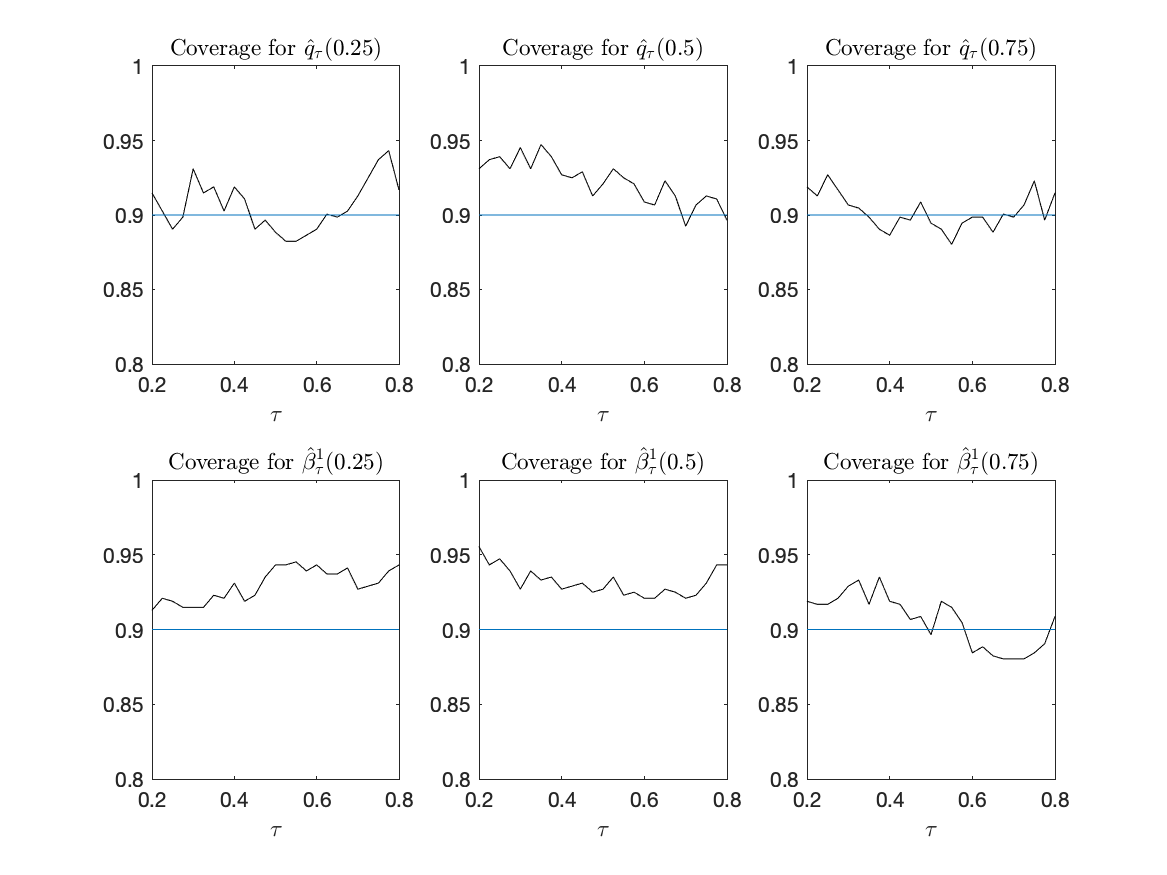}  
	\caption{DGP2, coverage probability}
	\label{fig:cov_2}
\end{figure}

\begin{figure}[H]
	\centering
	\includegraphics[scale = 0.565,angle=0]{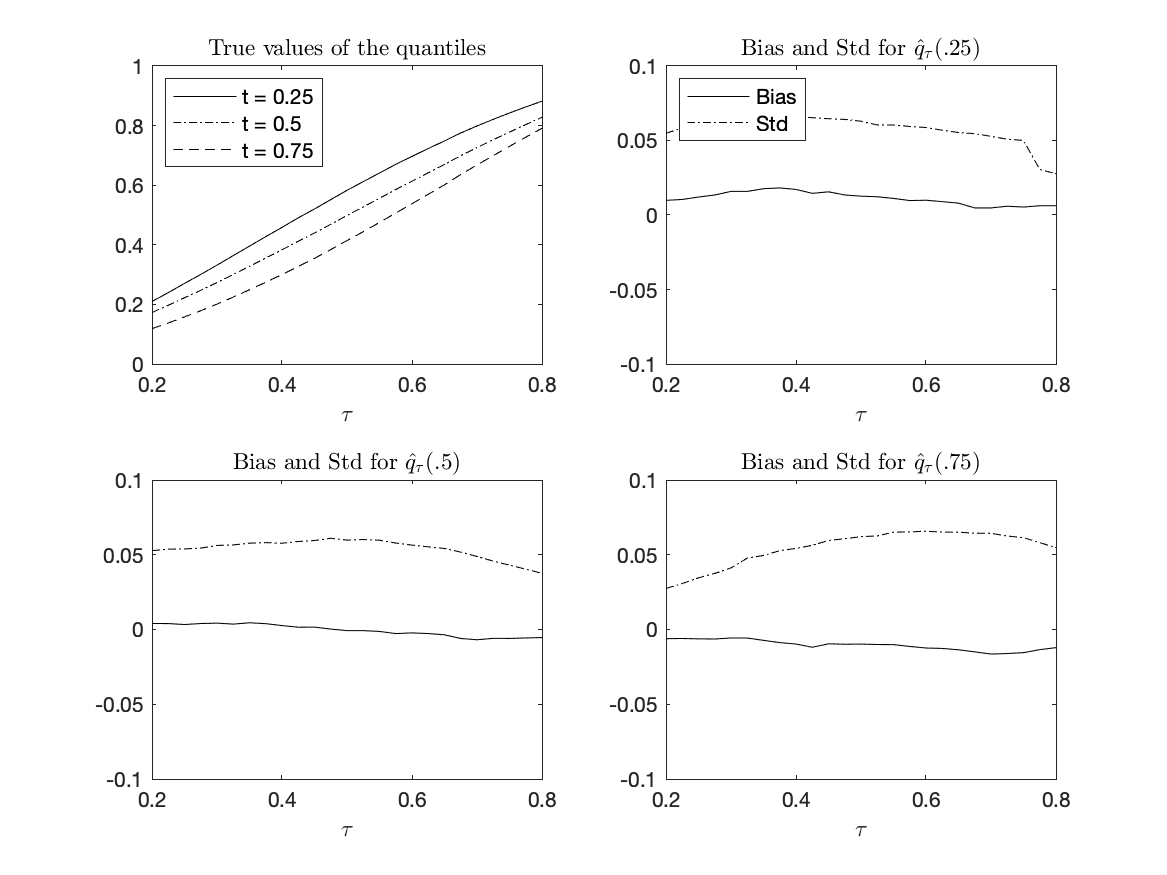}  
	\caption{DGP3, finite sample performance of $\hat{q}_{\protect\tau }(t)$}
	\label{fig:beta0_3}
\end{figure}

\begin{figure}[H]
	\centering
	\includegraphics[scale = 0.565,angle=0]{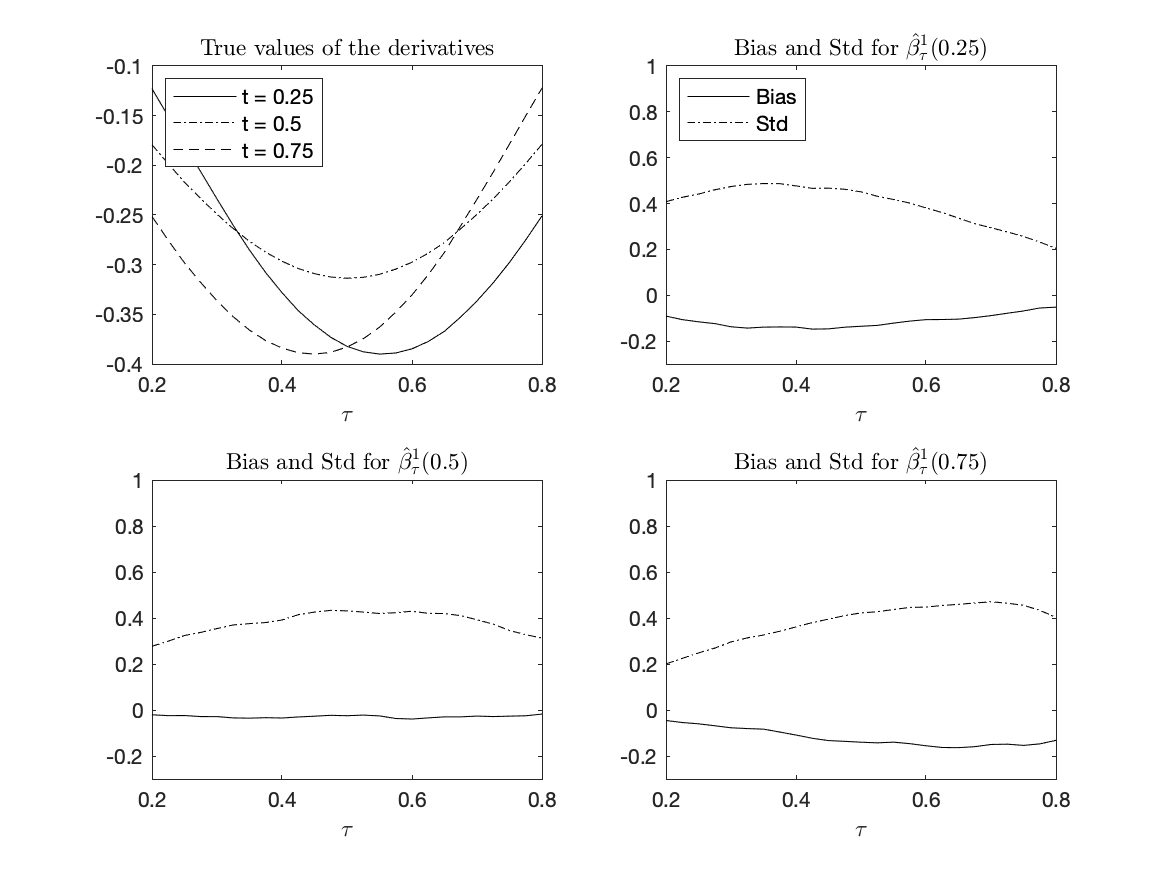}  
	\caption{DGP3, finite sample performance of $\hat{\protect\beta}^1_\protect\tau(t)
		$}
	\label{fig:beta1_3}
\end{figure}

\begin{figure}[H]
	\centering
	\includegraphics[scale = 0.565,angle=0]{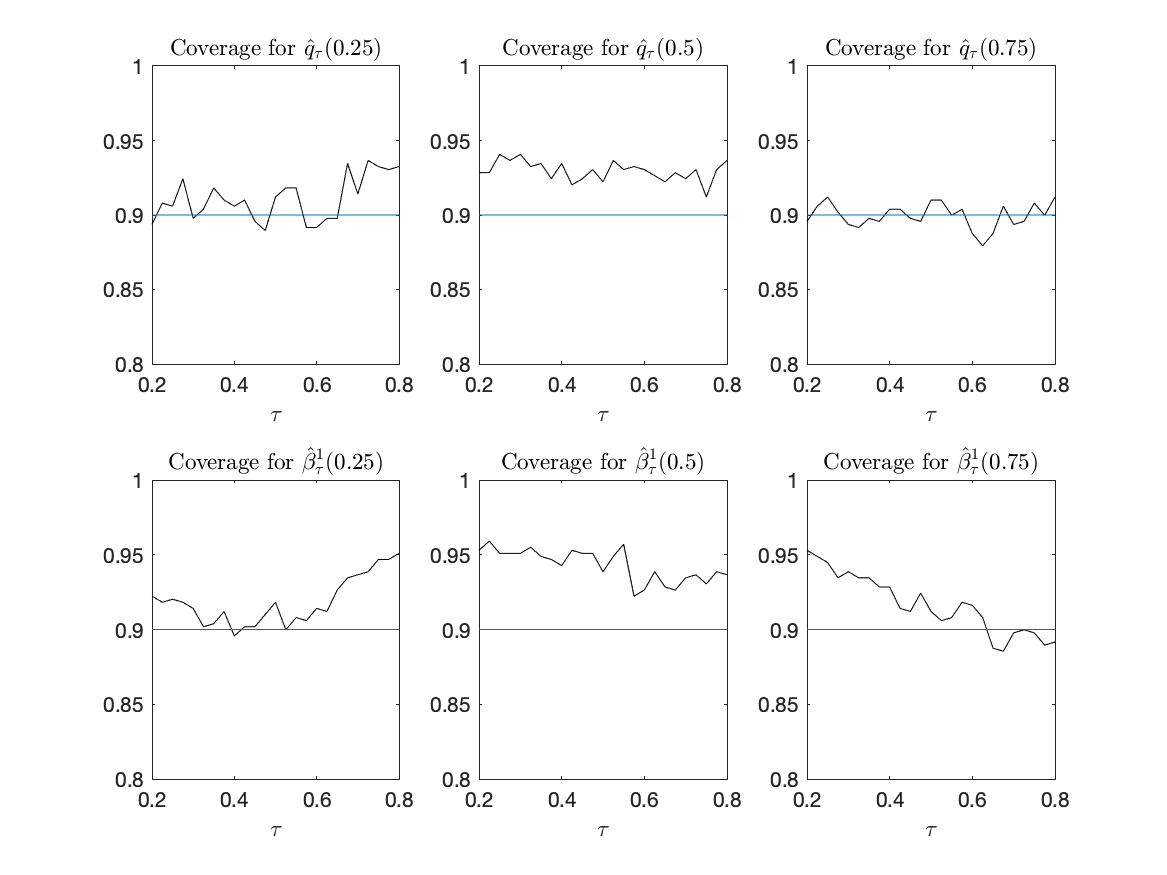}  
	\caption{DGP3, coverage probability}
	\label{fig:cov_3}
\end{figure}

The upper-left subplots in Figures \ref{fig:beta0_1}, \ref{fig:beta0_2}, \ref{fig:beta0_3} and \ref{fig:beta1_1}, \ref{fig:beta1_2}, \ref{fig:beta1_3}
report the true functions of $q_\tau(t)$ and $\partial_t q_\tau(t)$ for $%
t = 0.25,0.5,0.75$, $\tau \in (0.2,0.8)$ and DGP 1, 2, and 3, respectively. Both $q_\tau(t)$ and $%
\partial_t q_\tau(t)$ are heterogeneous across $\tau$ and $t$, which imposes
difficulties for estimation and inference. The rest of the subplots in
the above Figures show the estimation biases and standard errors. We observe that all the biases of our
estimators are of smaller order of magnitude than the standard error (std) and the root mean squared
error (rMSE), which indicates the doubly robust moments effectively remove
the selection bias induced by the Lasso method. The estimators of the
quantile functions are very accurate. The estimators of the quantile partial
derivatives are less so because they have slower convergence rates. Figures %
\ref{fig:cov_1}, \ref{fig:cov_2}, and \ref{fig:cov_3} show that the 90\% point-wise modified percentile bootstrap confidence intervals have reasonable
performance for both the quantile functions and their derivatives, across
all $\tau$ and $t$ values considered, with slight over-coverage for the
quantile derivative functions. The results of variable selections depend on the values
of $t$ and $(t,u)$ for conditional density estimation and
penalized local MLE, respectively, which are tedious to report, Thus, they are
omitted for brevity. Overall, 2 to 4 covariates are selected.

In Section \ref{sec:addsim} in the Appendix, we report the performance of oracle estimators for the three designs, in which oracle estimators are computed using the true conditional CDF and density functions. We also report the finite-sample performance of our mean potential outcome (i.e., $\mathbb{E}(Y(t))$) estimators, which is similar to that of the quantile effect estimates reported here. Last, we consider an extra design in which the approximate sparsity condition may be violated and show that our method breaks down. We use this design to illustrate the limitation of our method.

\section{Empirical Illustration}\label{sec:app} 

To investigate our proposed estimation and inference
procedures, we use the 1979 National Longitudinal Survey of Youth (NLSY79)
and consider the effect of father's income on son's income in the presence
of many control variables. Our analysis is based on \cite{BM11}. The data
consist of a nationally representative sample of individuals with age 14-22
years old as of 1979. We use only white and black males and discard the
individuals with missing values in the covariates we use. The resulting
sample size is 1,795, out of which 1,302 individuals are white and 493
individuals are black.

The treatment variable of interest is the logarithm of father's income, in
which father's income is computed as the average family income for 1978,
1979, and 1980. The outcome variable is the logarithm of son income, in
which son income is computed as the average family income for 1997, 1999,
2001 and 2003. We create control variables by interacting a list of
demographic variables with the cubic splines of the AFQT score and the years
of education.\footnote{The cubic splines for the AFQT score are constructed based on the normalized value by scaling the raw AFQT score into [0,1], where the knots are taken at the quantiles of the normalized AFQT score at $10\%, 20\%,\ldots, 90\%$. The cubic splines for the years of education are constructed in the same way. In this exercise, we do not interact the cubic splines for  the AFQT score and the years of education.} The list includes the age, the mother's education level, the
father's education level, the indicators of (i) living in urban areas at age
14, (ii) living in the south, (iii) speaking a foreign language at
childhood, and (iv) being born outside the U.S. 
We drop the variables whose variance is less than $10^{-4}$. 
The resulting numbers of control variables are 120 for whites and 145 for blacks.

\begin{figure}[H]
\centering
\includegraphics[scale = 0.65]{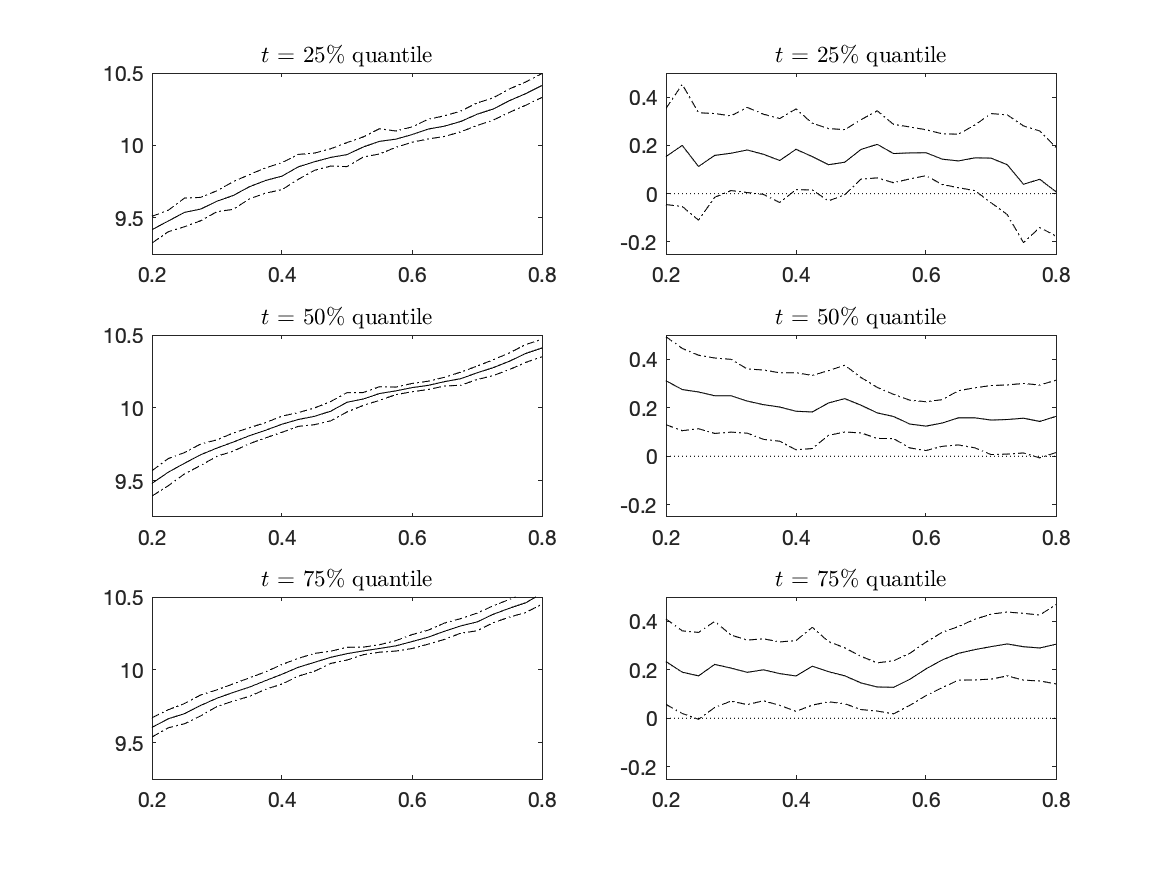}  
\caption{Whites. First column: the quantile index $\tau$ (X-axis), the son's log
income (Y-axis), the estimated unconditional quantile function at $\protect%
\tau$ (solid line), and its  (point-wise) $90\%$ confidence bands (dot-dash line).
Second column: the quantile index $\tau$ (X-axis), the intergenerational
elasticity (Y-axis), the estimated derivative of the unconditional quantile
function at $\protect\tau$ (solid line), and its (point-wise) $90\%$ confidence bands
(dot-dash line).}
\label{fig_white}
\end{figure}

We apply the proposed estimation and inference procedures for black and
white individuals separately. We use the same tuning parameter choices as in the
previous section.\footnote{In Section \ref{sec:add_emp} in the Appendix, we investigate the sensitivity of our estimation method with respect to the tuning parameters.}
As a result, our effective sample sizes are of orders of magnitude $nh_1\approx 462$ and $175$ for whites
and blacks, respectively. Figures \ref{fig_white} and \ref{fig_black} show the
estimated unconditional quantile functions and the estimated derivative, as
well as the point-wise 90\% confidence bands for $\tau \in [0.2,0.8]$ and $t$ taking values at the $25\%$, $50\%$, and $%
75\%$ quantiles of the empirical distribution of $T_i$. Under the context of intergenerational income mobility, the
unconditional quantile and its derivative represent the quantile of son's
potential log income indexed by father's log income and the
intergenerational elasticity, respectively. The unconditional quantile
functions have a slight upward trend and the estimated derivative is
positive in most parts of father's log income. The confidence bands for
the unconditional quantile functions are quite narrow for both black and
white individuals. For white individuals with the values of father's log income at the $50\%$ or $75\%$ quantile, we can reject the (locally) zero intergenerational elasticity for most of the values of $\tau\in [0.2,0.8]$.  
For the other cases, we cannot reject the (locally) zero intergenerational elasticity for almost all $\tau$'s. This is considered as the cost
of our fully nonparametric specification.

\begin{figure}[H]
\centering
\includegraphics[scale = 0.65]{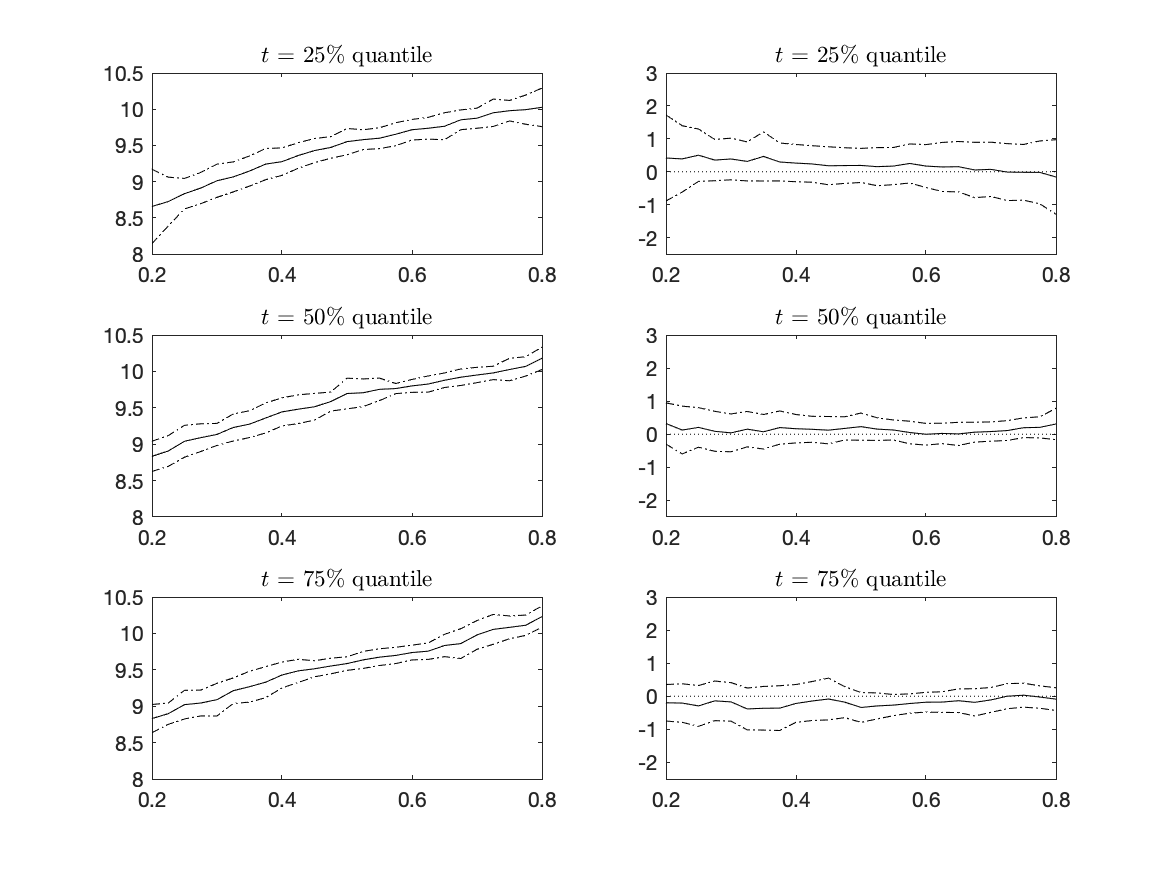}  
\caption{Blacks. First column: the quantile index $\tau$ (X-axis), the son's log
income (Y-axis), the estimated unconditional quantile function at $\protect%
\tau$ (solid line), and its  (point-wise) $90\%$ confidence bands (dot-dash line).
Second column: the quantile index $\tau$ (X-axis), the intergenerational
elasticity (Y-axis), the estimated derivative of the unconditional quantile
function at $\protect\tau$ (solid line), and its (point-wise) $90\%$ confidence bands
(dot-dash line).}
\label{fig_black}
\end{figure}

It is worthwhile to mention the variable selection in this application.  the years of education, the AFQT score,  the age,  the father's education
level, and the mother's education level are the leading
control variables selected.\footnote{More precisely, for whites, $dad\_educ*afqt$ and $mom\_educ$ are the two
most selected control variables for the density estimations. $age*educ$ and $age*afqt$ are the two most selected control variables for the penalized local MLE.
For blacks, $mom\_educ$ and 
$dad\_educ*educ$ are the two most selected control variables for the density estimations. $educ$ and $age*afqt$ are the two most selected control variables for the penalized local MLE.} 

\section{Conclusion}

\label{sec:concl} This paper studies non-separable models with a continuous
treatment and high-dimensional control variables. It extends the existing
results on the causal inference in non-separable models to the case with
both continuous treatment and high-dimensional covariates. It develops a
method based on localized $L_1$-penalization to select covariates at each
value of the continuous treatment. It then proposes a multi-stage estimation
and inference procedure for average, quantile, and marginal treatment
effects. The simulation and empirical exercises support the theoretical
findings in finite samples.

\newpage \appendix\label{sec:appendix}

\begin{center}
{\LARGE Appendix}
\end{center}

\section{Proof of the Main Results in the Paper}

Before proving the theorem, we first introduce some additional notation and
Assumption \ref{ass:reg_sasaki}, which is a restatement of %
\citet[][Assumptions 1 and 2]{S15} in our framework. Denote by $\dim_X$
(resp. $\dim_A$) the dimensionality of $X$ (resp. $A$). We define $\partial
V(y,t)=\{(x,a):\Gamma(t,x,a)=y\}$ and $\partial V(y,t)$ can be parametrized
as a mapping from a $(\dim_X+\dim_A-1)$-dimensional rectangle, denoted by $%
\Sigma$, to $\partial V(y,t)$. $H^{\dim_X+\dim_A-1}$ is the $%
(\dim_X+\dim_A-1)$-dimensional Hausdorff measure restricted from $\mathbb{R}%
^{\dim_X+\dim_A}$ to $(\partial V(y,t),\mathcal{B}(y,t))$, where $\mathcal{B}%
(y,t)$ is the set of the interactions between $\partial V(y,t)$ and a Borel
set in $\mathbb{R}^{\dim_X+\dim_A}$. $\partial v(y,\cdot;u)/\partial y$
(resp. $\partial v(\cdot,t;u)/\partial t$) is the velocity of $\partial
V(y,t)$ at $u$ with respect to $y$ (resp. $t$).

\begin{ass}
\begin{enumerate}

\item $\Gamma$ is continuously differentiable. 

\item $\|\nabla_{(x,a)}\Gamma(t,\cdot,\cdot)\|\ne 0$ on $\partial V(y,t)$. 

\item The conditional distribution of $(X,A)$ given $T$ is absolutely
continuous with respect to the Lebesgue measure, and $f_{(X,A)\mid T}$ is a
continuously differentiable function of $\mathcal{T}$ to $L^1(\mathbb{R}%
^{\dim_X+\dim_A})$. 

\item $\int_{\partial V(y,t)}f_{(X,A)\mid T}(x,a\mid t)dH^{\dim_X+\dim_A-1}(x,a)>0$. 

\item $t\mapsto\partial V(y,t)$ is a continuously differentiable function of 
$\Sigma\times\mathcal{T}$ to $\mathbb{R}^{\dim_X+\dim_A}$ for every $y$ and $%
y\mapsto\partial V(y,t)$ is a continuously differentiable function of $%
\Sigma\times\mathcal{Y}$ to $\mathbb{R}^{\dim_X+\dim_A}$ for every $t$. 

\item The mapping $\partial v(y,\cdot;\cdot)/\partial t$ is a continuously
differentiable function of $\mathcal{T}$ to $\mathbb{R}^{\dim_X+\dim_A}$ and 
$\partial v(\cdot,t;\cdot)/\partial y$ is a continuously differentiable
function of $\mathcal{Y}$ to $\mathbb{R}^{\dim_X+\dim_A}$. 

\item There is $p,q\geq 1$ with $\frac{1}{p}+\frac{1}{q}=1$ such that the
mapping $(x,a)\mapsto\|\nabla_{(x,a)}\Gamma(t,x,a)\|^{-1}$ is bounded in $%
L^p(\partial V(y,t),H^{\dim_X+\dim_A-1})$ and that the mapping $(x,a)\mapsto
f_{(X,A)}(x,a)$ is bounded in $L^q(\partial V(y,t),H^{\dim_X+\dim_A-1})$. 
\end{enumerate}

\label{ass:reg_sasaki}
\end{ass}

Assumption \ref{ass:reg_sasaki} is a combination of Assumptions 1 and 2 in 
\cite{S15}. We refer the readers to the paper for detailed
explanation. 

\begin{proof}[Proof of Theorem \protect\ref{thm:uasf}]
For the marginal distribution of $Y(t)$, we note that, by Assumption
\ref{ass:unconfoundedness}, $\mathbb{P}(Y(t)\leq u)=\mathbb{E}[\mathbb{E}(1\{Y(t)\leq
u\}|X)]=\mathbb{E}[\mathbb{E}(1\{Y(t)\leq
u\}|X,T=t)]=\mathbb{E}[\mathbb{E}(1\{Y\leq
u\}|X,T=t)].$ The first result follows as $\mathbb{E}(1\{Y\leq
u\}|X,T=t)$ is identified.

For the second result, consider a random variable $T^\ast$ which has
the same marginal distribution as $T$ and is independent of $(X,A)$.  Define
\begin{equation}
Y^\ast = \Gamma(T^\ast,X,A).  \label{exog_eq}
\end{equation}
Note that the (i) $(X,A)$ and $T^\ast$ are independent, and (ii) the $\tau$%
-th quantile of $Y^\ast$ given $T^\ast=t$ is $q_\tau(t)$ for all $t$,
because $\mathbb{P}(Y^\ast\leq q_\tau(t)\mid T^\ast=t)=\mathbb{P}(\Gamma(t,X,A)\leq
q_\tau(t))=\tau$.  Assumption \ref{ass:reg_sasaki} implies Assumptions 1 and
2 in \cite{S15} for $(Y^\ast,T^\ast,U^\ast)$ with $U^\ast=(X,A)$, and then
his Theorem 1 implies that the derivative of the $\tau$-th quantile of $%
Y^\ast$ given $T^\ast=t$ is equal to $\mathbb{E}_{\mu_{\tau,t}}[\partial_t%
\Gamma(t,X,A)]$. 
Therefore, $\partial_tq_{\tau}(t)=\mathbb{E}_{\mu_{\tau,t}}[\partial_t\Gamma(t,X,A)]$.
Note that Theorem 1 in \cite{S15} does not apply directly to $(Y,T,U^\ast)$, because our assumptions do not imply that $T$ and $U^\ast$ are independent. 
\end{proof}

Lemma \ref{lem:localRE} is the local version of the compatibility
condition, which is one of the key building blocks for Lemma \ref{lem:main}.
Then, Lemma \ref{lem:main} is used to prove Theorem \ref{thm:rate2}.
\medskip 

\begin{proof}[Proof of Lemma \protect\ref{lem:localRE}]
By Assumption \ref{ass:eigen}, we can work on the set 
\begin{equation*}
\biggl\{\{X_i\}_{i=1}^{n}:\sup_{|\delta |_{0}\leq s\ell _{n}}\frac{%
||b(X)^{\prime }\delta ||_{\mathbb{P}_n,2}}{||\delta ||_{2}}\leq \kappa ^{^{\prime
\prime }}<\infty\biggr\}.
\end{equation*}%
We use the same partition as in \cite{BRT09}. Let $\mathcal{S}_{0}=\mathcal{S%
}_{t,u}$ and $m\geq s$ be an integer which will be specified later.
Partition $\mathcal{S}_{t,u}^{c}$, the complement of $\mathcal{S}_{t,u}$, as 
$\sum_{l=1}^{L}\mathcal{S}_{l}$ such that $|\mathcal{S}_{l}|=m$ for $1\leq
l<L$, $|\mathcal{S}_{L}|\leq m$, where $\mathcal{S}_{l}$, for $l<L$,
contains the indexes corresponding to $m$ largest coordinates (in absolute
value) of $\delta $ outside $\cup _{j=0}^{l-1}\mathcal{S}_{j}$, and $\mathcal{%
S}_{L}$ collects the remaining indexes. Further denote $\delta _{j}=\delta _{%
\mathcal{S}_{j}}$ and $\delta _{01}=\delta _{\mathcal{S}_{0}\cup \mathcal{S}%
_{1}}$. Then 
\begin{equation}
||b(X)^{\prime }\delta K(\frac{T-t}{h_1})^{1/2}||_{\mathbb{P}_n,2}\geq ||b(X)^{\prime
}\delta _{01}K(\frac{T-t}{h_1})^{1/2}||_{\mathbb{P}_n,2}-\sum_{l=2}^{L}||b(X)^{\prime
}\delta _{l}K(\frac{T-t}{h_1})^{1/2}||_{\mathbb{P}_n,2}.  \label{eq:delta}
\end{equation}%

For the first term on the right hand side (r.h.s.) of (\ref{eq:delta}%
), we have 
\begin{equation}
\begin{split}
& ||b(X)^{\prime }\delta _{01}K(\frac{T-t}{h_1})^{1/2}||_{\mathbb{P}_n,2}^{2} \\
\geq & ||b(X)^{\prime }\delta _{01}K(\frac{T-t}{h_1})^{1/2}||_{P,2}^{2}-|(%
\mathbb{P}_{n}-\mathbb{P})(b(X)^{\prime }\delta _{01})^{2}K(\frac{T-t}{h_1})|
\\
\geq & \underline{C}h_1||b(X)^{\prime }\delta _{01}||_{P,2}^{2}-|(\mathbb{P}%
_{n}-\mathbb{P})(b(X)^{\prime }\delta _{01})^{2}K(\frac{T-t}{h_1})| \\
\geq & \underline{C}h_1||b(X)^{\prime }\delta _{01}||_{\mathbb{P}_n,2}^{2}-\underline{%
C}h|(\mathbb{P}_{n}-\mathbb{P})(b(X)^{\prime }\delta _{01})^{2}|-|(\mathbb{P}%
_{n}-\mathbb{P})(b(X)^{\prime }\delta _{01})^{2}K(\frac{T-t}{h_1})| \\
\geq & \underline{C}h_1||\delta _{01}||_{2}^{2}(\kappa ^{\prime })^{2}-%
\underline{C}h_1|(\mathbb{P}_{n}-\mathbb{P})(b(X)^{\prime }\delta
_{01})^{2}|-|(\mathbb{P}_{n}-\mathbb{P})(b(X)^{\prime }\delta _{01})^{2}K(%
\frac{T-t}{h_1})|
\end{split}
\label{A.3}
\end{equation}%
where the second inequality holds because 
\begin{equation*}
\mathbb{E}(b(X)^{\prime }\delta _{01})^{2}K(\frac{T-t}{h_1})=h_1\mathbb{E}%
(b(X)^{\prime }\delta _{01})^{2}\int f_{t+h_1v}(X)K(v)dv\geq \underline{C}h_1%
\mathbb{E}(b(X)^{\prime }\delta _{01})^{2}.
\end{equation*}%

We next bound the last term on the r.h.s. of (\ref{eq:delta}). The
second term can be bounded in the same manner. Let $\tilde{\delta}%
_{01}=\delta _{01}/||\delta _{01}||_{2}$. Then we have 
\begin{equation*}
|(\mathbb{P}_{n}-\mathbb{P})(b(X)^{\prime }\delta _{01})^{2}K(\frac{T-t}{h_1}%
)|=||\delta _{01}||_{2}^{2}|(\mathbb{P}_{n}-\mathbb{P})(b(X)^{\prime }\tilde{%
\delta}_{01})^{2}K(\frac{T-t}{h_1})|.
\end{equation*}%
Let $\{\eta _{i}\}_{i=1}^{n}$ be a sequence of Rademacher random variables
which is independent of the data and $\mathcal{F}=\{b(X)^{\prime
}\delta K(\frac{T-t}{h_1})^{1/2}:||\delta ||_{0}=m+s,||\delta ||_{2}=1,t\in \mathcal{%
T}\}$ with envelope $F=\overline{C}_{K}\zeta _{n}(m+s)^{1/2}$. Denote $\pi
_{1n}$ as $(\frac{\log (p\vee n)(s+m)^{2}\zeta _{n}^{2}}{nh_1})^{1/2}$ with $%
m=s\ell _{n}^{1/2}$. Then, 
\begin{align*}
& \mathbb{E}\sup_{||\tilde{\delta}_{01}||_{0}\leq m+s,||\tilde{\delta}%
_{01}||_{2}=1,t\in \mathcal{T}}|(\mathbb{P}_{n}-\mathbb{P})(b(X)^{\prime }%
\tilde{\delta}_{01})^{2}K(\frac{T-t}{h_1})| \\
\leq & 2\mathbb{E}\sup_{||\tilde{\delta}_{01}||_{0}\leq m+s,||\tilde{\delta}%
_{01}||_{2}=1,t\in \mathcal{T}}|\mathbb{P}_{n}\eta (b(X)^{\prime }\tilde{%
\delta}_{01})^{2}K(\frac{T-t}{h_1})| \\
\leq & 8\zeta _{n}\biggl(\sup_{||\tilde{\delta}_{01}||_{0}\leq m+s,||\tilde{%
\delta}_{01}||_{2}=1}||\tilde{\delta}_{01}||_{1}\biggr)\biggl(\mathbb{E}%
\sup_{f\in \mathcal{F}}|\mathbb{P}_{n}\eta f|\biggr) \\
\lesssim & 8\zeta _{n}(m+s)^{1/2}\biggl[\left(\frac{\log (p\vee n)(s+m)h_1}{n}%
\right)^{1/2}+\frac{\overline{C}_{K}\zeta _{n}(m+s)^{1/2}\log (p\vee n)(s+m)}{n}%
\biggr] \\
\lesssim & \biggl(\frac{\log (p\vee n)(s+m)^{2}h_1\zeta _{n}^{2}}{n}\biggr)%
^{1/2}=h_1\pi _{1n},
\end{align*}%
where the first inequality is by \citet[Lemma 2.3.1]{VW96}, the second
inequality is by \citet[Theorem 4.12]{LT13} and the remark thereafter, and the third one is by
applying Corollary 5.1 of \cite{CCK14} with $\sigma ^{2}=\sup_{f\in \mathcal{%
\ F}}\mathbb{E}f^{2}\lesssim h_1$ and, for some $A\geq e$, 
\begin{equation*}
\sup_{Q}N(\mathcal{F},e_{Q},\varepsilon ||F||_{Q,2})\leq \binom{p}{s+m}\left(\frac{A%
}{\varepsilon }\right)^{s+m}\lesssim \left(\frac{Ap}{\varepsilon }\right)^{s+m}.
\end{equation*}%

By Assumption \ref{ass:approx}, $\pi _{1n}\rightarrow 0.$ Then we
have, w.p.a.1., 
\begin{equation}
|(\mathbb{P}_{n}-\mathbb{P})(b(X)^{\prime }\delta _{01})^{2}K(\frac{T-t}{h_1}%
)|\leq 3h_1\underline{C}(\kappa ^{\prime })^{2}||\delta _{01}||_{2}^{2}/8.
\label{eq:delta011}
\end{equation}%

By the same token we can show that 
\begin{equation*}
\mathbb{E}\sup_{||\tilde{\delta}_{01}||_{0}\leq m+s,||\tilde{\delta}%
_{01}||_{2}=1,t\in \mathcal{T}}|(\mathbb{P}_{n}-\mathbb{P})(b(X)^{\prime }%
\tilde{\delta}_{01})^{2}|\lesssim \sqrt{h_1}\pi _{1n}\rightarrow 0.
\end{equation*}%
Therefore, we have, w.p.a.1., 
\begin{equation}
|(\mathbb{P}_{n}-\mathbb{P})(b(X)^{\prime }\delta _{01})^{2}|\leq 3(\kappa
^{\prime })^{2}||\delta _{01}||_{2}^{2}/8.  \label{eq:delta012}
\end{equation}%
Combining (\ref{A.3}), \eqref{eq:delta011}, and \eqref{eq:delta012} yields
that w.p.a.1., 
\begin{equation*}
||b(X)^{\prime }\delta _{01}K(\frac{T-t}{h_1})^{1/2}||_{\mathbb{P}_n,2}^{2}\geq
||\delta _{01}||_{2}^{2}h_1(\kappa ^{\prime })^{2}\underline{C}/4.
\end{equation*}%
Analogously, we can show that, w.p.a.1, 
\begin{equation*}
||b(X)^{\prime }\delta _{l}K(\frac{T-t}{h_1})^{1/2}||_{\mathbb{P}_n,2}^{2}\leq
4||\delta _{l}||_{2}^{2}\underline{C}^{-1}h_1(\kappa ^{\prime \prime })^{2}.
\end{equation*}%
Following \eqref{eq:delta}, we have, w.p.a.1, \ 
\begin{align*}
||b(X)^{\prime }\delta K(\frac{T-t}{h_1})^{1/2}||_{\mathbb{P}_n,2}\geq &
h_1^{1/2}||\delta _{01}||_{2}\kappa ^{\prime }\underline{C}^{1/2}/2-h_1^{1/2}%
\sum_{l=2}^{L}2||\delta _{l}||_{2}\kappa ^{^{\prime \prime }}\underline{C}%
^{-1/2} \\
\geq & h_1^{1/2}||\delta _{01}||_{2}\kappa ^{\prime }\underline{C}%
^{1/2}/2-h_1^{1/2}\sum_{l=2}^{L}2\kappa ^{^{\prime \prime }}\underline{C}%
^{-1/2}(||\delta _{l-1}||_{1}||\delta _{l}||_{1})^{1/2}/\sqrt{m} \\
\geq & h_1^{1/2}||\delta _{01}||_{2}\kappa ^{\prime }\underline{C}%
^{1/2}/2-2h_1^{1/2}\kappa ^{^{\prime \prime }}\underline{C}^{-1/2}||\delta
_{T^{c}}||_{1}/\sqrt{m} \\
\geq & h_1^{1/2}||\delta _{01}||_{2}\kappa ^{\prime }\underline{C}%
^{1/2}/2-2h_1^{1/2}\kappa ^{^{\prime \prime }}\underline{C}^{-1/2}c^{1/2}||%
\delta _{0}||_{1}/\sqrt{m} \\
\geq & h_1^{1/2}||\delta _{01}||_{2}\kappa ^{\prime }\underline{C}%
^{1/2}/2-2h_1^{1/2}\kappa ^{^{\prime \prime }}\underline{C}^{-1/2}c^{1/2}||%
\delta _{0}||_{2}\sqrt{s}/\sqrt{m} \\
\geq & h_1^{1/2}||\delta _{0}||_{2}\biggl[\kappa ^{\prime }\underline{C}%
^{1/2}/2-2\kappa ^{^{\prime \prime }}\underline{C}^{-1/2}c^{1/2}\sqrt{s}/%
\sqrt{m}\biggr],
\end{align*}%
where the second inequality holds because, by construction, $||\delta
_{l}||_{2}^{2}\leq ||\delta _{l-1}||_{1}||\delta _{l}||_{1}/\sqrt{m}.$ Since 
$m=s\ell _{n}^{1/2}$, $s/m=\ell _{n}^{-1/2}\rightarrow 0$, and thus, for $n$
large enough, the constant inside the brackets is greater than $\kappa
^{\prime }\underline{C}^{1/2}/4$ which is independent of $(t,u,n)$.
Therefore, we can conclude that, for $n$ large enough, 
\begin{equation*}
\inf_{(t,u)\in \mathcal{T}\mathcal{U}}\inf_{\delta \in \Delta _{2\tilde{c}%
,t,u}}\frac{||b(X)^{\prime }\delta K(\frac{T-t}{h_1})^{1/2}||_{\mathbb{P}_n,2}}{%
||\delta _{\mathcal{S}_{t,u}}||_{2}\sqrt{h_1}}\geq \kappa ^{\prime }\underline{%
C}^{1/2}/4:= \underline{\kappa }.
\end{equation*}%
This completes the proof of the lemma. \medskip {\ }
\end{proof}

We aim to prove the results with regard to $\widehat{\phi }%
_{t,u}(X) $ and $\hat{\theta}_{t,u}$ in Theorem \ref{thm:rate2}. The
derivations for the results regarding $\widetilde{\phi }_{t,u}(X)$ and $%
\widetilde{\theta }_{t,u}$ are exactly the same. We do not need to deal with
the nonlinear logistic link function when deriving the results regarding $%
\widehat{\nu }_{t}(X)$, $\widetilde{\nu }_{t}(X)$, $\hat{\gamma}_{t}$, and $%
\tilde{\gamma}_{t}$. Therefore, the corresponding results can be shown by
following the same proving strategy as below and treating $\omega _{t,u}$
defined below as $1$. The proofs for results regarding $\widehat{\nu }%
_{t}(X) $, $\widetilde{\nu }_{t}(X)$, $\hat{\gamma}_{t}$, and $\tilde{\gamma}%
_{t}$ are omitted for brevity.

\medskip 
Let $\tilde{r}_{t,u}^{\phi }=\Lambda ^{-1}(\mathbb{E}%
(Y_{u}|X,T=t))-b(X)^{\prime }\theta _{t,u}$, $\delta _{t,u}=\hat{\theta}%
_{t,u}-\theta _{t,u}$, $\hat{s}_{t,u}=||\hat{\theta}_{t,u}||_{0}$, $\omega
_{t,u}=\mathbb{E}(Y_{u}(t)|X)(1-\mathbb{E}(Y_{u}(t)|X))$, and $\widehat{%
\mathcal{S}}_{t,u}$ be the support of $\widehat{\theta }_{t,u}$. We need the
following four lemmas, whose proofs are relegated to the online supplement. 

\begin{lem}
If Assumptions \ref{ass:unconfoundedness}--\ref{ass:eigen} hold,
then 
\begin{equation*}
\sup_{(t,u)\in \mathcal{T}\mathcal{U}}||\omega _{t,u}^{1/2}b(X)^{\prime
}\delta _{t,u}K(\frac{T-t}{h_1})^{1/2}||_{\mathbb{P}_n,2}=O_{p}(\ell_n(\log (p\vee
n)s)^{1/2}n^{-1/2})
\end{equation*}%
and 
\begin{equation*}
\sup_{(t,u)\in \mathcal{T}\mathcal{U}}||\delta _{t,u}||_{1}=O_{p}(\ell_n(\log
(p\vee n)s^{2})^{1/2}(nh_1)^{-1/2}).
\end{equation*}%
\label{lem:main} 
\end{lem}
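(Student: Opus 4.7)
The plan is to follow the standard Lasso M-estimation template, but carefully adapted to the kernel-localized logistic loss and made uniform over $(t,u)\in\mathcal{TU}$. Throughout, write $L_{t,u,n}(\theta)=\mathbb{P}_n M(Y_u,X;\theta)K\bigl((T-t)/h\bigr)$ so that $\hat{\theta}_{t,u}$ is the penalized minimizer of $L_{t,u,n}$. The score is $\nabla L_{t,u,n}(\theta_{t,u})=-\mathbb{P}_n[(Y_u-\Lambda(b(X)'\theta_{t,u}))b(X)K((T-t)/h)]$. By Assumption \ref{ass:approx} the approximation bias $\tilde r^{\phi}_{t,u}$ is controlled, so up to lower-order terms the score behaves as a centered kernel-smoothed empirical process.

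First I would establish a uniform score bound
\[
\sup_{(t,u)\in\mathcal{TU}}\bigl\|\widehat{\Psi}_{t,u,0}^{-1}\nabla L_{t,u,n}(\theta_{t,u})\bigr\|_{\infty}=O_p\!\bigl(\sqrt{\log(p\vee n)/(nh)}\,\bigr),
\]
which together with Lemma E32 (comparability of the feasible and infeasible loadings) gives that $\lambda/n$ dominates the score by a factor $\ell_n$ with high probability. The bound follows from a standard symmetrization plus a Nemirovski/Bernstein-type maximal inequality applied to the class $\{(y_u-\phi_{t,u}(x))b_j(x)K((T-t)/h):j\le p,(t,u)\in\mathcal{TU}\}$; the envelope and variance (which is $O(h)$ thanks to the kernel) produce the $\sqrt{\log(p\vee n)/(nh)}$ rate, and the loading $\widehat{\Psi}_{t,u,0,j}$ absorbs the coordinate-wise standard deviation.

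Second, the basic inequality $L_{t,u,n}(\hat{\theta}_{t,u})-L_{t,u,n}(\theta_{t,u})\le (\lambda/n)(\|\widehat{\Psi}_{t,u}\theta_{t,u}\|_1-\|\widehat{\Psi}_{t,u}\hat{\theta}_{t,u}\|_1)$, combined with the uniform score bound and the triangle inequality on the support $\mathcal{S}_{t,u}$ of $\theta_{t,u}$, forces $\delta_{t,u}\in\Delta_{c,t,u}$ for a fixed constant $c$. Next I would Taylor-expand the logistic loss to second order:
\[
L_{t,u,n}(\hat{\theta}_{t,u})-L_{t,u,n}(\theta_{t,u})-\nabla L_{t,u,n}(\theta_{t,u})'\delta_{t,u}
=\tfrac{1}{2}\,\mathbb{P}_n\bigl[\Lambda'(b(X)'\bar\theta)(b(X)'\delta_{t,u})^2 K((T-t)/h)\bigr]
\]
for some intermediate $\bar\theta$. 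Assumption \ref{ass:ker}.4 and the uniform sup-norm control in Assumption \ref{ass:approx}.4 imply $\Lambda'(b(X)'\bar\theta)$ is bounded below by a constant multiple of $\omega_{t,u}$ on the relevant set, yielding the quadratic margin
\[
L_{t,u,n}(\hat{\theta}_{t,u})-L_{t,u,n}(\theta_{t,u})-\nabla L_{t,u,n}(\theta_{t,u})'\delta_{t,u}\gtrsim \bigl\|\omega_{t,u}^{1/2}b(X)'\delta_{t,u}K((T-t)/h)^{1/2}\bigr\|_{P_n,2}^{2}.
\]

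Third, combining the basic inequality, the score bound, the cone condition $\delta_{t,u}\in\Delta_{c,t,u}$, and the local compatibility condition (Lemma \ref{lem:localRE}) applied to $\omega_{t,u}^{1/2}b(X)$ (whose compatibility constant remains uniformly positive by Assumption \ref{ass:ker}.4), one gets an inequality of the form $A^2\lesssim (\lambda/n)\cdot s^{1/2}A/\underline\kappa + (\lambda/n)\cdot\|\delta_{\mathcal{S}_{t,u}}\|_1$ where $A=\|\omega_{t,u}^{1/2}b(X)'\delta_{t,u}K((T-t)/h)^{1/2}\|_{P_n,2}$. Solving this quadratic inequality gives $A=O_p((\log(p\vee n)s)^{1/2}n^{-1/2})$, and then re-applying local compatibility to $\|\delta_{t,u}\|_1\le (1+c)\|\delta_{\mathcal{S}_{t,u}}\|_1\lesssim s^{1/2}\|\delta_{\mathcal{S}_{t,u}}\|_2$ together with $\|\delta_{\mathcal{S}_{t,u}}\|_2\sqrt{h}\lesssim A/\underline\kappa$ yields the $\ell_1$-rate $O_p((\log(p\vee n)s^2)^{1/2}(nh)^{-1/2})$.

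The main obstacle will be the uniformity in $(t,u)$ for the score bound and, relatedly, ensuring the quadratic margin step remains valid on a single high-probability event over the whole index set. The kernel localization means the class is not Donsker in a standard way and the variance shrinks with $h$, so I need a maximal inequality (à la Chernozhukov--Chetverikov--Kato) whose leading term scales like $\sqrt{h\log(p\vee n)}$ rather than $\sqrt{\log p}$, which is what forces the appearance of $nh$ in the rates and must be carefully tracked together with the logistic curvature control $\Lambda'(b(X)'\bar\theta)\gtrsim\omega_{t,u}$ under Assumption \ref{ass:ker}.4.
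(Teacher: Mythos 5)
Your high-level template (uniform score bound dominated by $\lambda/n$, basic inequality, cone-type restriction, local compatibility from Lemma \ref{lem:localRE}, then solving a quadratic inequality) matches the paper's proof, and your score-bound step is essentially Lemma \ref{lem:E2} combined with Lemma \ref{lem:E32}. However, the curvature step is where your argument has a genuine gap. You Taylor-expand to second order and claim that $\Lambda'(b(X)'\bar\theta)$ at the intermediate point $\bar\theta$ is bounded below by a multiple of $\omega_{t,u}$ ``by Assumption \ref{ass:ker}.4 and Assumption \ref{ass:approx}.4.'' Those assumptions control $\mathbb{E}(Y_u|X,T=t)$ and the approximation error $r^{\phi}_{t,u}$, not $b(X)'\delta_{t,u}$; to keep $\Lambda'(b(X)'\bar\theta)$ away from zero you would need $\sup_x|b(x)'\delta_{t,u}|\lesssim \zeta_n\|\delta_{t,u}\|_1$ to be small, which is precisely the quantity the lemma is trying to bound --- the argument as written is circular (or requires an additional localization/convexity device you do not supply). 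The paper avoids this by not using a mean-value Hessian bound at all: it invokes the self-concordance-type minoration for the logistic loss (Lemma \ref{lem:minor}, building on Lemmas O.3--O.4 of \cite{BCFH13}), which gives $F_{t,u}(\delta)\geq \min\bigl(\tfrac13\Gamma_{t,u}^2,\tfrac13\overline q_{A_{t,u}}\Gamma_{t,u}\bigr)$ with the restricted nonlinear impact coefficient $\overline q_{A_{t,u}}$, and then separately verifies via inequality \eqref{eq:q} and the rate condition $\ell_n^2\log(p\vee n)s^2\zeta_n^2/(nh)\to 0$ (Assumption \ref{ass:approx}.5) that the quadratic branch is the binding one. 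Without this step (or an equivalent localization argument), your quadratic margin is unjustified.

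A secondary inaccuracy: you assert the basic inequality ``forces $\delta_{t,u}\in\Delta_{c,t,u}$ for a fixed constant $c$.'' With approximate (rather than exact) sparsity this is not true: the derivation yields $\|(\delta_{t,u})_{\mathcal{S}^{0c}_{t,u}}\|_1\leq \tilde c\,\|(\delta_{t,u})_{\mathcal{S}^{0}_{t,u}}\|_1$ plus a term proportional to $\|r^{\phi}_{t,u}K(\tfrac{T-t}{h})^{1/2}\|_{P_n,2}\Gamma_{t,u}$, so the paper must split into the two cases $\delta_{t,u}\in\Delta_{2\tilde c,t,u}$ and $\|\delta_{t,u}\|_1\leq I_{t,u}$ and work on the union $A_{t,u}$ throughout (this union also enters the definition of $\overline q_{A_{t,u}}$). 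Your proposal would be fine if $r^{\phi}_{t,u}\equiv 0$, but for the stated lemma you need to carry the bias case as the paper does.
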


\begin{lem}
Suppose Assumptions \ref{ass:unconfoundedness}--\ref{ass:eigen}
hold. Let $\xi _{t,u}=Y_{u}-\phi _{t,u}(X).$ Then 
\begin{equation*}
\sup_{(t,u)\in \mathcal{T}\mathcal{U}}\biggl|\biggl|\widehat{\Psi }%
_{t,u}^{-1}\mathbb{P}_{n}\biggl[\xi _{t,u}K(\frac{T-t}{h_1})b(X)\biggr]\biggr|%
\biggr|_{\infty }=O_{p}((\log (p\vee n)h_1/n)^{1/2}).
\end{equation*}%
\label{lem:E2} 
\end{lem}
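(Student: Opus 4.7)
The plan is to prove the bound in three stages: reduce the feasible loading $\widehat{\Psi}_{t,u}$ to an oracle whose diagonal entries are bounded above and below, show that the bias of the summand is negligible, and control the uniform empirical process via a maximal inequality for a VC-type class.

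First I would invoke Lemma \ref{lem:E32} to replace $\widehat{\Psi}_{t,u}$ by the oracle loading $\widehat{\Psi}_{t,u,0}$, whose diagonal entries $l_{t,u,0,j}$ agree with the feasible ones up to factors bounded away from $0$ and $\infty$, uniformly in $(t,u,j)$, w.p.a.1. A short separate argument then shows that the oracle entries themselves satisfy $\underline{c}\le l_{t,u,0,j}\le\overline{c}$ uniformly w.p.a.1: the upper bound follows from $|\xi_{t,u}|\le 1$ and the change-of-variables estimate $\mathbb{E}[K^2((T-t)/h)b_j^2(X)/h]\lesssim 1$; the lower bound uses $\phi_{t,u}(X)=\mathbb{E}(Y_u(t)\mid X)\in[\underline C,1-\underline C]$ from Assumption \ref{ass:ker}.4, which gives $\mathbb{E}[\xi_{t,u}^2\mid X,T=s]\ge\underline{C}(1-\underline{C})$ for $s$ close to $t$, together with Assumption \ref{ass:approx}.1. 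Uniformity in $(t,u,j)$ for these empirical averages is obtained by a simpler variant of the maximal-inequality argument described below. It then suffices to bound $\sup_{(t,u,j)}|\mathbb{P}_n[\xi_{t,u}K((T-t)/h)b_j(X)]|$ at the rate $\sqrt{\log(p\vee n)\,h/n}$.

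Next I would split into expectation and centered part. Using unconfoundedness and iterated expectations, $\mathbb{E}[\xi_{t,u}\mid X,T]=\phi_{T,u}(X)-\phi_{t,u}(X)$, so the mean equals
\[
\mathbb{E}\!\left[b_j(X)\!\int(\phi_{s,u}(X)-\phi_{t,u}(X))K\!\left(\tfrac{s-t}{h}\right)\!f_{T|X}(s|X)\,ds\right].
\]
Substituting $v=(s-t)/h$ and Taylor-expanding both $\phi_{t+hv,u}(X)$ and $f_{T|X}(t+hv|X)$ to second order (valid by Assumption \ref{ass:ker}.3), the $O(hv)$ term integrates to zero by the symmetry $\int vK(v)\,dv=0$ (Assumption \ref{ass:ker}.1), leaving an inner integral of order $h^2$; combined with the outer factor $h$ from the change of variables, the bias is uniformly $O(h^3)$, which is $o(\sqrt{h\log(p\vee n)/n})$ under Assumption \ref{ass:rate2}.1.

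For the centered part I would apply Corollary 5.1 of \cite{CCK14}, as in the proof of Lemma \ref{lem:localRE}, to the class $\mathcal F=\{W\mapsto\xi_{t,u}K((T-t)/h)b_j(X):(t,u,j)\in\mathcal T\times\mathcal U\times\{1,\ldots,p\}\}$, with envelope $\|F\|_\infty\lesssim\zeta_n$ and variance parameter $\sigma^2\lesssim h$ (obtained from the same change-of-variables calculation). Since $u\mapsto 1\{Y\le u\}$ and $u\mapsto\phi_{t,u}(x)$ are both monotone in $u$, $t\mapsto\phi_{t,u}(x)$ is Lipschitz by Assumption \ref{ass:ker}.3, $t\mapsto K((T-t)/h)$ has uniformly bounded variation, and $j$ ranges over $p$ choices, $\mathcal F$ is a VC-type class with $\sup_QN(\mathcal F,e_Q,\varepsilon\|F\|_{Q,2})\lesssim(Ap/\varepsilon)^v$ for constants $A,v$. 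The maximal inequality then yields $\mathbb{E}\sup_{f\in\mathcal F}|(\mathbb{P}_n-\mathbb{P})f|\lesssim\sqrt{h\log(p\vee n)/n}+\zeta_n\log(p\vee n)/n$; the second term is dominated by the first under Assumption \ref{ass:approx}.5, and Markov's inequality delivers the claimed $O_p$ bound. The main obstacle will be establishing the VC-type covering-number bound, specifically verifying that the two-parameter family $(t,u)\mapsto\phi_{t,u}$ combined with the monotone indicator class $\{1\{Y\le u\}\}$ and the bounded-variation kernel family produces a product class with polynomial covering numbers uniformly in $(t,u,j)$.
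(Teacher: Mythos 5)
Your proposal is correct and follows essentially the same route as the paper's proof: reduce to the unweighted sample average via Lemma \ref{lem:E32}, bound the bias term by $O(h^{3})$ using the kernel's symmetry (negligible because $nh^{5}/\log (p\vee n)\rightarrow 0$), and control the centered part by Corollary 5.1 of \cite{CCK14} applied to a VC-type class with envelope $\lesssim \zeta _{n}$ and variance parameter $\lesssim h$, with the $\zeta _{n}\log (p\vee n)/n$ term absorbed under Assumption \ref{ass:approx}.5. The only differences are cosmetic: you explicitly invoke the bandwidth condition (Assumption \ref{ass:rate2}.1) for the bias step, which the paper's proof uses implicitly, and you spell out the oracle-loading bounds and the covering-number verification that the paper simply asserts or delegates to Lemma \ref{lem:E32}.
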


\begin{lem}
If the assumptions in Theorem \ref{thm:rate2} hold, then there
exists a constant $C_{\psi }\in (0,1)$ such that w.p.a.1, 
\begin{equation}
C_{\psi }/2\leq \inf_{(t,u)\in \mathcal{TU}, j=1,\cdots,p}l_{t,u,j}^0 \leq \sup_{(t,u)\in \mathcal{TU}, j=1,\cdots,p}l_{t,u,j}^0\leq 2/C_{\psi }.  \label{eq:psi20}
\end{equation}%
For any $k=0,1,\cdots ,K$ and $\widehat{\Psi }_{t,u}^{k}$ defined in
Algorithm \ref{alg:psi1}, there exists a constant $C_{k}\in (0,1)$ such
that, w.p.a.1, 
\begin{equation}
C_{k}/2\leq \inf_{(t,u)\in \mathcal{TU}, j=1,\cdots,p}l_{t,u,j}^k \leq \sup_{(t,u)\in \mathcal{TU}, j=1,\cdots,p}l_{t,u,j}^k \leq 2C_{k}.  \label{eq:psi2k}
\end{equation}%
In addition, for any $k=0,1,\cdots ,K$ and $\widehat{\Psi }_{t,u}^{k}$
defined in Algorithm \ref{alg:psi1}, there exist constants $l<1<L$
independent of $n$, $(t,u)$, and $k$ such that, element-wise and w.p.a.1, 
\begin{equation}
l\widehat{\Psi }_{t,u,0}\leq \widehat{\Psi }_{t,u}^{k}\leq L\widehat{\Psi }%
_{t,u,0}.  \label{eq:E32}
\end{equation}%
\label{lem:E32}
\end{lem}

\begin{lem}
If the assumptions in Theorem \ref{thm:rate2} hold, then w.p.a.1, 
\begin{equation*}
\sup_{t\in \mathcal{T},||\delta ||_{2}=1,||\delta ||_{0}\leq s\ell
_{n}}||b(X)^{\prime }\delta K(\frac{T-t}{h_1})^{1/2}||_{\mathbb{P}_n,2}h_1^{-1/2}\leq 2%
\underline{C}^{-1/2}\kappa ^{^{\prime \prime }}.
\end{equation*}%
\label{lem:localRE2}
\end{lem}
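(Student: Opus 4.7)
The plan is to bound the kernel-weighted empirical $L_{2}$ norm of $b(X)^\prime\delta$ uniformly over $t\in\mathcal{T}$ and over $s\ell_{n}$-sparse unit vectors $\delta$ by decomposing it into a population mean plus an empirical fluctuation, in direct analogy with (and slightly simpler than) the argument in the proof of Lemma \ref{lem:localRE}. The inequality to establish is essentially the upper-bound companion of the restricted-eigenvalue chain used there.

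First I would write
$$\|b(X)^\prime\delta K((T-t)/h)^{1/2}\|_{P_{n},2}^{2}\leq \mathbb{E}\bigl[(b(X)^\prime\delta)^{2}K((T-t)/h)\bigr]+\bigl|(\mathbb{P}_{n}-\mathbb{P})(b(X)^\prime\delta)^{2}K((T-t)/h)\bigr|.$$
For the population piece, the change of variable $T\mapsto t+hv$ combined with the density upper bound $f_{t}(x)\leq 1/\underline{C}$ from Assumption \ref{ass:ker}.2 gives
$$\mathbb{E}\bigl[(b(X)^\prime\delta)^{2}K((T-t)/h)\bigr]=h\int \mathbb{E}\bigl[(b(X)^\prime\delta)^{2}f_{t+hv}(X)\bigr]K(v)\,dv\leq \underline{C}^{-1}h\,\mathbb{E}[(b(X)^\prime\delta)^{2}].$$
To promote $\mathbb{E}[(b(X)^\prime\delta)^{2}]$ to a deterministic bound in terms of $\kappa''$, I would use Assumption \ref{ass:eigen}, which w.p.a.1 gives $\|b(X)^\prime\delta\|_{P_{n},2}\leq \kappa''$ for every $\|\delta\|_{0}\leq s\ell_{n}$, $\|\delta\|_{2}=1$, and then pay a small concentration cost to move from $\|\cdot\|_{P_{n},2}^{2}$ to $\|\cdot\|_{P,2}^{2}$ uniformly over the same sparse class, exactly as done via \eqref{eq:delta012} in the proof of Lemma \ref{lem:localRE}.

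The fluctuation term I would handle by symmetrization followed by a maximal inequality, applied to the class
$$\mathcal{F}=\bigl\{(b(X)^\prime\delta)^{2}K((T-t)/h):\|\delta\|_{0}\leq s\ell_{n},\|\delta\|_{2}=1,t\in\mathcal{T}\bigr\}$$
with envelope of order $\overline{C}_{K}\zeta_{n}^{2}s\ell_{n}$, variance proxy $\sigma^{2}\lesssim h$, and covering number bound $\sup_{Q}N(\mathcal{F},e_{Q},\varepsilon\|F\|_{Q,2})\lesssim (Ap/\varepsilon)^{s\ell_{n}}$. Repeating the computation from the proof of Lemma \ref{lem:localRE} yields
$$\sup_{t\in\mathcal{T},\,\|\delta\|_{0}\leq s\ell_{n},\,\|\delta\|_{2}=1}\bigl|(\mathbb{P}_{n}-\mathbb{P})(b(X)^\prime\delta)^{2}K((T-t)/h)\bigr|=O_{p}(h\pi_{1n}),$$
where $\pi_{1n}=(\log(p\vee n)s^{2}\ell_{n}^{2}\zeta_{n}^{2}/(nh))^{1/2}\to 0$ directly under Assumption \ref{ass:approx}.5. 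Combining the two pieces gives, w.p.a.1,
$$\|b(X)^\prime\delta K((T-t)/h)^{1/2}\|_{P_{n},2}^{2}\leq \underline{C}^{-1}h\bigl((\kappa'')^{2}+o_{p}(1)\bigr)+o_{p}(h)\leq 4\underline{C}^{-1}(\kappa'')^{2}h,$$
uniformly in $(t,\delta)$, and dividing by $h$ and taking square roots yields the claimed $2\underline{C}^{-1/2}\kappa''$ bound.

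The only step requiring real care is the book-keeping in the maximal inequality, because the sparsity level here, $s\ell_{n}$, is larger than the $s+s\ell_{n}^{1/2}$ used in the proof of Lemma \ref{lem:localRE}. This inflates both the covering-number exponent and the envelope, but the argument survives because Assumption \ref{ass:approx}.5 controls precisely the product $s^{2}\ell_{n}^{2}\zeta_{n}^{2}\log(p\vee n)/(nh)$, so the fluctuation term remains $o_{p}(h)$ and does not spoil the deterministic upper bound.
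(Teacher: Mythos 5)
Your proposal is correct and follows essentially the same route as the paper's proof: decompose the kernel-weighted empirical norm into the centered fluctuation plus the population term, bound the population term by $\underline{C}^{-1}h\,\mathbb{E}[(b(X)'\delta)^{2}]$ via the density upper bound in Assumption \ref{ass:ker}.2, transfer back to the empirical Gram bound $\kappa''$ from Assumption \ref{ass:eigen} at a concentration cost, and control the fluctuation with the same maximal-inequality computation as in Lemma \ref{lem:localRE}, with Assumption \ref{ass:approx}.5 absorbing the larger sparsity level $s\ell_{n}$. Your slightly looser constant ($4\underline{C}^{-1}(\kappa'')^{2}h$ versus the paper's $2\underline{C}^{-1}(\kappa'')^{2}h$) still delivers the stated bound $2\underline{C}^{-1/2}\kappa''$ after taking square roots.
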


\begin{proof}[Proof of Theorem \protect\ref{thm:rate2}]
By the mean value theorem, there exist $\underline{\theta }_{t,u}\in
(\theta _{t,u},\hat{\theta}_{t,u})$ and $\overline{r}_{t,u}^{\phi }\in (0,%
\tilde{r}_{t,u}^{\phi })$ such that 
\begin{equation*}
|\phi _{t,u}(X)-\widehat{\phi }_{t,u}(X)|\leq \Lambda (b(X)^{\prime }%
\underline{\theta }_{t,u}+\overline{r}_{t,u}^{\phi })(1-\Lambda
(b(X)^{\prime }\underline{\theta }_{t,u}+\overline{r}_{t,u}^{\phi
}))(b(X)^{\prime }\delta _{t,u}+\tilde{r}_{t,u}^{\phi }),
\end{equation*}%
where $\delta_{t,u} =\hat{\theta}_{t,u} -\theta %
_{t,u}.$By the proof of Lemma \ref{lem:main}, we have, w.p.a.1, 
\begin{equation*}
|\tilde{r}_{t,u}^{\phi }|\leq \lbrack \underline{C}/2(1-\underline{C}%
/2)]^{-1}|r_{t,u}^{\phi }|.
\end{equation*}%
Therefore, by Lemma \ref{lem:main} and Assumptions \ref{ass:eigen} and \ref{ass:rate2}, we have%
\begin{equation*}
\begin{split}
& \sup_{(t,u)\in \mathcal{T}\mathcal{U}}|b(X)^{\prime }\underline{\theta }%
_{t,u}+\overline{r}_{t,u}^{\phi }-b(X)^{\prime }\theta _{t,u}-\tilde{r}%
_{t,u}^{\phi }| \\
\lesssim & \sup_{(t,u)\in \mathcal{T}\mathcal{U}}|b(X)^{\prime }\delta
_{t,u}|+\sup_{(t,u)\in \mathcal{T}\mathcal{U}}|r_{t,u}^{\phi }| \\
\lesssim & \zeta _{n}\sup_{(t,u)\in \mathcal{T}\mathcal{U}}||\delta
_{t,u}||_{1}+O((\log (p\vee n)s^{2}\zeta _{n}^{2}/(nh_1))^{-1/2})=o_{p}(1),
\end{split}%
\end{equation*}%
where the last equality is because $\sup_{(t,u)\in \mathcal{T}\mathcal{U}%
}||\delta _{t,u}||_{1}=O_{p}((\log (p\vee n)s^{2})^{1/2}(nh_1)^{-1/2})$ by
Lemma \ref{lem:main} and $\log (p\vee n)s^{2}\zeta _{n}^{2}/(nh_1)\rightarrow
0 $ by Assumption \ref{ass:rate2}. In addition, under Assumption \ref{ass:ker}.4 we have 
\begin{equation*}
\Lambda (b(X)^{\prime }\theta _{t,u}+\tilde{r}_{t,u}^{\phi })=\mathbb{E}%
(Y_{u}|X,T=t)\in \lbrack \underline{C},1-\underline{C}].
\end{equation*}%
Hence, there exist some positive constants $c$ and $c^{\prime }$ only
depending on $\underline{C}$ such that, w.p.a.1, 
\begin{equation*}
\Lambda (b(X)^{\prime }\underline{\theta }_{t,u}+\overline{r}_{t,u}^{\phi
})(1-\Lambda (b(X)^{\prime }\underline{\theta }_{t,u}+\overline{r}%
_{t,u}^{\phi }))\leq c
\end{equation*}%
and uniformly over $(t,u)\in \mathcal{T}\mathcal{U}$, 
\begin{equation}
|\phi _{t,u}(X)-\widehat{\phi }_{t,u}(X)|\leq c(b(X)^{\prime }\delta _{t,u}+%
\tilde{r}_{t,u}^{\phi })\leq c^{\prime }(b(X)^{\prime }\delta
_{t,u}+r_{t,u}^{\phi }).  \label{eq:phibound}
\end{equation}%
By Assumptions \ref{ass:ker}.3, \ref{ass:ker}.4, Lemma \ref{lem:main}, and the fact that $\omega
_{t,u} $ is bounded and bounded away from zero uniformly over $\mathcal{TU}$%
, we have, w.p.a.1, 
\begin{align}
\label{eq:311}
& \sup_{(t,u)\in \mathcal{T}\mathcal{U}}||(\phi _{t,u}(X)-\widehat{\phi }%
_{t,u}(X))K(\frac{T-t}{h_1})^{1/2}||_{\mathbb{P}_n,2} \notag \\
\leq & \sup_{(t,u)\in \mathcal{T}\mathcal{U}}c\biggl[||b(X)^{\prime }\delta
_{t,u}K(\frac{T-t}{h_1})^{1/2}||_{\mathbb{P}_n,2}+||r_{t,u}^{\phi }K(\frac{T-t}{h_1}%
)^{1/2}||_{\mathbb{P}_n,2}\biggr] \notag \\
=& O_{p}(\ell_n(\log (p\vee n)s/n)^{1/2})
\end{align}%
and 
\begin{align}
\label{eq:312}
\sup_{(t,u)\in \mathcal{T}\mathcal{U}}||\phi _{t,u}(X)-\widehat{\phi }%
_{t,u}(X)||_{\mathbb{P}, \infty }\lesssim & \zeta _{n}\sup_{(t,u)\in \mathcal{T}%
\mathcal{U}}||\delta _{t,u}||_{1}+O((\log (p\vee n)s^{2}\zeta
_{n}^{2}/(nh_1))^{1/2}) \notag \\
=& O_{p}(\ell_n(\log (p\vee n)s^{2}\zeta _{n}^{2}/(nh_1))^{1/2}).
\end{align}%

Next, recall that $\lambda =\ell _{n}(\log (p\vee n)nh)^{1/2}$. By
the first order conditions (FOC), for any $j\in \widehat{\mathcal{S}}_{t,u}$%
, we have 
\begin{equation*}
\biggl|\mathbb{P}_{n}\biggl[(Y_{u}-\Lambda (b(X)^{\prime }\hat{\theta}%
_{t,u}))b_{j}(X)K(\frac{T-t}{h_1})\biggr]\biggr|=\widehat{\Psi }_{t,u,jj}\frac{%
\lambda }{n}.
\end{equation*}%
Denote $\xi _{t,u}=Y_{u}-\phi _{t,u}(X)$. By Lemmas \ref{lem:main}, \ref%
{lem:E2} and \ref{lem:E32}, for any $\varepsilon >0$, with probability
greater than $1-\varepsilon $, there exist positive constants $C_{\lambda }$
and $C$, which only depend on $\varepsilon $ and are independent of $(t,u,n)$%
, such that 
\begin{align*}
\frac{\lambda \hat{s}_{t,u}^{1/2}}{n}=& \biggl|\biggl|\widehat{\Psi }%
_{t,u}^{-1}\biggl\{\mathbb{P}_{n}\biggl[(Y_{u}-\Lambda (b(X)^{\prime }\hat{%
\theta}_{t,u}))b(X)K(\frac{T-t}{h_1})\biggr]\biggr\}_{\widehat{\mathcal{S}}%
_{t,u}}\biggr|\biggr|_{2} \\
\leq & \sup_{||\theta ||_{0}\leq \hat{s}_{t,u},||\theta ||_{2}=1}||\theta
||_{1}\sup_{(t,u)\in \mathcal{TU}}||\widehat{\Psi }_{t,u}^{-1}(\mathbb{P}%
_{n}\xi _{t,u}b(X)K(\frac{T-t}{h_1}))||_{\infty } \\
& +\frac{||\widehat{\Psi }_{t,u,0}^{-1}||_{\infty }}{l}\sup_{||\theta
||_{0}\leq \hat{s}_{t,u},||\theta ||_{2}=1}\biggl|\biggl\{\mathbb{P}_{n}%
\biggl(\Lambda (b(X)^{\prime }\hat{\theta}_{t,u})-\Lambda (b(X)^{\prime
}\theta _{t,u})-r_{t,u}^{\phi }\biggr)b(X)^{\prime }\theta K(\frac{T-t}{h_1})%
\biggr\}\biggr| \\
\leq & \frac{C_{\lambda }\lambda \hat{s}_{t,u}^{1/2}}{n\ell _{n}}+\frac{%
c^{\prime }||\widehat{\Psi }_{t,u,0}^{-1}||_{\infty }}{l}||(b(X)^{\prime
}\delta _{t,u}+r_{t,u}^{\phi })K(\frac{T-t}{h_1})^{1/2}||_{\mathbb{P}_n,2} \\
& \times \sup_{||\theta ||_{0}\leq \hat{s}_{t,u},||\theta
||_{2}=1}||b(X)^{\prime }\theta K(\frac{T-t}{h_1})^{1/2}||_{\mathbb{P}_n,2} \\
\leq & \frac{\lambda \hat{s}_{t,u}^{1/2}}{2n}+C(\log (p\vee n)s/n)^{1/2}\phi
_{max}^{1/2}(\hat{s}_{t,u}) \\
\leq & \frac{\lambda \hat{s}_{t,u}^{1/2}}{2n}+\frac{C\lambda s^{1/2}}{%
nh_1^{1/2}}\phi _{max}^{1/2}(\hat{s}_{t,u})
\end{align*}%
where $\phi _{max}(s)=\sup_{||\theta ||_{0}\leq s,||\theta
||_{2}=1}||b(X)^{\prime }\theta K(\frac{T-t}{h_1})^{1/2}||_{\mathbb{P}_n,2}^{2}$ and
$r_{t,u}^{\phi }=r_{t,u}^{\phi }(X)$ $.$ This implies that
there exists a constant $C$ only depending on $\varepsilon $, such that,
with probability greater than $1-\varepsilon $, 
\begin{equation}
\hat{s}_{t,u}\leq Cs\phi _{max}(\hat{s}_{t,u})/h_1.  \label{eq:sphi}
\end{equation}%
Let $\mathcal{M}=\{m\in \mathbb{Z}:m>2Cs\phi _{max}(m)/h_1\}$. We claim that,
for any $m\in \mathcal{M}$, $\hat{s}_{t,u}\leq m$. Suppose not and there
exists $m_0 \in \mathcal{M}$ such that $m_0 < \hat{s}_{t,u}$. Then, 
\begin{align*}
\hat{s}_{t,u} \leq Cs\phi _{max}(\frac{\hat{s}_{t,u}}{m_0} m_0)/h_1 \leq
\lceil \frac{\hat{s}_{t,u}}{m_0} \rceil Cs\phi _{max}(m_0)/h_1 \leq \frac{\hat{%
s}_{t,u}}{m_0} \biggl[2Cs\phi _{max}(m_0)/h_1 \biggr] < \hat{s}_{t,u},
\end{align*}
where the second inequality holds because of \citet[Lemma 23]{BC11}, the
third inequality holds because $\lceil a \rceil \leq 2a$ for any $a>1$, and
the last inequality holds because $m_0 \in \mathcal{M}$. Therefore we reach
a contradiction. In addition, by Lemma \ref{lem:localRE2}, we can choose $%
C_{s}>4C\underline{C}^{-1}(\kappa ^{\prime \prime })^{2}$, which is
independent of $(t,u,n)$, such that

\begin{equation}
\label{eq:313}
2Cs\phi_{max}(C_s s)/h_1\leq 4C \underline{C}^{-1} (\kappa^{\prime \prime })^2
s< C_s s.
\end{equation}
This implies $C_s s \in \mathcal{M}$ and thus with probability greater than $%
1-\varepsilon$, $\hat{s}_{t,u} \leq C_s s$. This result holds uniformly over 
$(t,u) \in \mathcal{T} \mathcal{U}$. 

Last, we show that
\begin{equation*}
\sup_{(t,u)\in \mathcal{T}\mathcal{U}}||(\widehat{\phi }_{t,u}(X)-\phi
_{t,u}(X))||_{\mathbb{P}_n,2}=O_{p}(\ell _{n}(\log (p\vee
n)s)^{1/2}(nh_1)^{-1/2}).
\end{equation*}
Let $\varepsilon _{n}=(\log (p\vee n)s/(nh_1))^{1/2}$, $%
\delta _{n}=(\log (p\vee n)s^{2}\zeta _{n}^{2}/(nh_1))^{1/2}$, and
\begin{equation*}
\mathcal{J}_{t,u}=%
\begin{Bmatrix}
\Lambda (b(x)^{\prime }\theta ): & ||\theta ||_{0}\leq Ms,||(\Lambda
(b(X)^{\prime }\theta )-\phi _{t,u}(X))K(\frac{T-t}{h})^{1/2}||_{\mathbb{P}_n,2}%
\leq M\ell _{n}\varepsilon _{n}h_1^{1/2}, \\ 
& ||\Lambda (b(X)^{\prime }\theta )-\phi _{t,u}(X)||_{\mathbb{P}, \infty }\leq M\ell _{n}\delta
_{n}.%
\end{Bmatrix}%
\end{equation*}%

By \eqref{eq:311}, \eqref{eq:312}, and \eqref{eq:313}, for any $\varepsilon >0$,
there exists a constant $M$ such that, with probability greater than $%
1-\varepsilon $, $\widehat{\phi }_{t,u}(\cdot )\in \mathcal{J}_{t,u}$
uniformly in $(t,u)\in \mathcal{T}\mathcal{U}$.
Therefore, with probability greater than $1-\varepsilon$, 

\begin{align*}
& \left|\mathbb{P}_n(\hat{\phi}_{t,u}(X) - \phi_{t,u}(X))^2\biggl[K(\frac{T - t}{h_1}) - \mathbb{E}(K(\frac{T - t}{h_1})|X)\biggr]\right| \\
\leq & \sup_{(t,u)\in \mathcal{T}\mathcal{U}}\sup_{J \in \mathcal{J}_{t,u}}\left|\mathbb{P}_n(J(X) - \phi_{t,u}(X))^2\biggl[K(\frac{T - t}{h_1}) - \mathbb{E}(K(\frac{T_i - t}{h_1})|X)\biggr]\right| = ||\mathbb{P}_n - \mathbb{P}||_{\mathcal{F}}, 
\end{align*}
where $\mathcal{F} = \left\{(J(X) - \phi_{t,u}(X))^2\biggl[K(\frac{T - t}{h_1}) - \mathbb{E}(K(\frac{T_i - t}{h_1})|X)\biggr]: J \in \mathcal{J}_{t,u}, (t,u) \in \mathcal{TU}\right\}$ with bounded envelope. Note that, 
\begin{align*}
\sigma^2 \equiv \sup_{f \in \mathcal{F}}\mathbb{E}f^2 \leq  & \sup_{(t,u)\in \mathcal{T}\mathcal{U}}\sup_{J \in \mathcal{J}_{t,u}}\mathbb{E}(J(X) - \phi_{t,u}(X))^4K^2(\frac{T-t}{h_1}) \\
\lesssim & \ell _{n}^2\delta_n^2 \sup_{(t,u)\in \mathcal{T}\mathcal{U}}\sup_{J \in \mathcal{J}_{t,u}}\mathbb{E}(J(X) - \phi_{t,u}(X))^2K(\frac{T-t}{h_1}) \\
= & \ell _{n}^2\delta_n^2\sup_{(t,u)\in \mathcal{T}\mathcal{U}}\sup_{J \in \mathcal{J}_{t,u}}\mathbb{E} ||(J(X) - \phi_{t,u}(X))K^{1/2}(\frac{T-t}{h_1})||_{\mathbb{P}_n,2}^2\\
\lesssim & \ell _{n}^4\delta_n^2 \eps_n^2 h_1,
\end{align*}

In addition, we note that $\mathcal{F}$ is nested by 
\begin{equation*}
\overline{F}=%
\left\{\Lambda(b(X)'\theta) - \phi_{t,u}(X))^2\biggl[K(\frac{T - t}{h_1}) - \mathbb{E}(K(\frac{T_i - t}{h_1})|X)\biggr],  ||\theta ||_{0}\leq Ms,(t,u)\in \mathcal{T}\mathcal{U}  \right\},
\end{equation*}%
where 
\begin{equation*}
\sup_{Q}\log N(\mathcal{F},e_{Q},\varepsilon ||\overline{F}||_{Q,2})\lesssim s\log
(p\vee n)+s\log (\frac{1}{\varepsilon })\vee 0.
\end{equation*}%
Therefore, by \citet[Corollary 5.1]{CCK14}, we have
\begin{equation}
\label{eq:314}
\mathbb{E}||\mathbb{P}_{n}-\mathbb{P}||_{\mathcal{F}}\lesssim \ell _{n}^2\varepsilon
_{n}h_1^{1/2}\delta_n s^{1/2}\log ^{1/2}(p\vee n)n^{-1/2}+s \log(p \vee n) n^{-1} = o_p(\ell _{n}^2 \varepsilon _{n}^{2}h_1).
\end{equation}%
Therefore, 
\begin{align*}
& h_1 \mathbb{P}_n(\hat{\phi}_{t,u}(X) - \phi_{t,u}(X))^2 \\
\lesssim & \mathbb{P}_n h_1 \int f_{t+h_1v}(X)K(v)dv(\hat{\phi}_{t,u}(X) - \phi_{t,u}(X))^2 \\
= & \mathbb{P}_n (\hat{\phi}_{t,u}(X) - \phi_{t,u}(X))^2\mathbb{E}\left(K\left(\frac{T-t}{h_1}\right)|X\right) \\
\leq & \mathbb{P}_n (\hat{\phi}_{t,u}(X) - \phi_{t,u}(X))^2K(\frac{T-t}{h_1}) + \left|\mathbb{P}_n(\hat{\phi}_{t,u}(X) - \phi_{t,u}(X))^2\biggl[K(\frac{T - t}{h_1}) - \mathbb{E}(K(\frac{T - t}{h_1})|X)\biggr]\right| \\
= & O_p(\ell _{n}^2\eps_n^2 h_1),
\end{align*}
where the last equality holds due to \eqref{eq:311} and \eqref{eq:314}. Canceling the $h_1$'s on both sides, we obtain the desired the result. 
\end{proof}

\begin{proof}[Proof of Theorem \protect\ref{thm:rate1}]
By \citet[Theorem 6.2]{BCFH13}, we have 
\begin{align*}
\sup_t ||F_t(X) - \Lambda(b(X)'\hat{\beta}_t)||_{\mathbb{P}_n,2} \lesssim_p \sqrt{\frac{s\log(p \vee n)}{n}}
\end{align*}
and 
\begin{align*}
\sup_t ||F_t(X) - \Lambda(b(X)'\hat{\beta}_t)||_{\mathbb{P}, \infty} \lesssim_p \sqrt{\frac{\zeta_n^2 s^2\log(p \vee n)}{n}}
\end{align*}

Then, we have 
\begin{align*}
& ||\hat{f}_t(X) - f_t(X)||_{\mathbb{P}_n,2} \\
\leq & \left\Vert \frac{\Lambda(b(X)'\hat{\beta}_{t+h_1}) - F_{t+h_1}(X)}{2h_1}\right\Vert_{\mathbb{P}_n,2} + \left\Vert \frac{\Lambda(b(X)'\hat{\beta}_{t-h_1}) - F_{t-h_1}(X)}{2h_1}\right\Vert_{\mathbb{P}_n,2} \\
& + \left\Vert \frac{ F_{t+h_1}(X) - F_{t-h_1}(X)}{2h_1} -f_t(X)\right\Vert_{\mathbb{P}_n,2} \\
\lesssim_p & \frac{1}{h_1}\sqrt{\frac{s \log(p \vee n)}{n}} + h_1^2
\end{align*}
and similarly, 
\begin{align*}
||\hat{f}_t(X) - f_t(X)||_{\mathbb{P}, \infty} \lesssim_p \frac{1}{h_1}\sqrt{\frac{\zeta_n^2 s^2\log(p \vee n)}{n}} + h_1^2.
\end{align*} 
\end{proof}

\begin{proof}[Proof of Theorem \protect\ref{thm:2nd}]
Let $\hat{\alpha}^{\dagger}(t,u)=\mathbb{P}_{n}\eta \Pi _{t,u}(W_{u},%
	\widehat{\phi }_{t,u},\hat{f}_{t})$ where either $\eta =1$ or $\eta $ is a
	random variable that has sub-exponential tails with unit mean and variance.
	When $\eta =1$, $\hat{\alpha}^{\dagger }(t,u)=\hat{\alpha}(t,u)$, which is
	our original estimator. When $\eta $ is random, for $\bar{\eta} =
	\sum_{i=1}^n \eta_i/n$, 
	\begin{equation*}
	\hat{\alpha}^{b}(t,u) = \hat{\alpha}^{\dagger}(t,u)/\bar{\eta}
	\end{equation*}
	is the bootstrap estimator. In the following, we establish the linear
	expansion of $\hat{\alpha}^{\dagger}(t,u)$. 

Recall $\varepsilon _{n}=(\log (p\vee n)s/(nh_1))^{1/2}$ and $%
	\delta _{n}=(\log (p\vee n)s^{2}\zeta _{n}^{2}/(nh_1))^{1/2}.$ 	By Theorem \ref{thm:rate2} and \ref{thm:rate1}, for any $\varepsilon >0$,
	there exists a constant $M$ such that, with probability greater than $%
	1-\varepsilon $, $\hat{f}_{t}(\cdot )\in \mathcal{G}_{t}$ uniformly in $t\in 
	\mathcal{T}$ and $\widehat{\phi }_{t,u}(\cdot )\in \mathcal{J}_{t,u}$
	uniformly in $(t,u)\in \mathcal{T}\mathcal{U}$. Here, we denote 
	\begin{equation*}
	\mathcal{G}_{t}=%
	\begin{Bmatrix}
	& \tilde{f}_t(X) \equiv (\Lambda(b(X)^{\prime }\beta_{t+h_1}) - \Lambda(b(X)^{\prime }\beta_{t-h_1}))/(2h_1):\\ & ||\beta_{t+h_1} ||_{0} + ||\beta_{t-h_1} ||_{0}\leq Ms,\quad ||\tilde{f}_t(X)
	-f_{t}(X)||_{\mathbb{P}, \infty }\leq M\delta _{n}h_1^{-1/2}, \\ 
	& ||\tilde{f}_t(X) -f_{t}(X)||_{\mathbb{P}_n,2}\leq M\varepsilon _{n}h_1^{-1/2}%
	\end{Bmatrix}%
	\end{equation*}%
	and 
	\begin{equation*}
	\mathcal{J}_{t,u}=%
	\begin{Bmatrix}
	\Lambda (b(x)^{\prime }\theta ): & ||\theta ||_{0}\leq Ms,||(\Lambda
	(b(X)^{\prime }\theta )-\phi _{t,u}(X))||_{\mathbb{P}_n,2}%
	\leq M \ell_n\varepsilon _{n}, \\ 
	& ||\Lambda (b(X)^{\prime }\theta )-\phi _{t,u}(X)||_{\mathbb{P}, \infty }\leq M\ell_n\delta
	_{n}.%
	\end{Bmatrix}%
	\end{equation*}%
 We focus on the case in
	which $(\widehat{\phi }_{t,u},\hat{f}_{t})\in \mathcal{J}_{t,u} \times 
	\mathcal{G}_{t}$. Then 
	\begin{align*}
	\hat{\alpha}^{\dagger }(t,u)-\alpha (t,u)=& (\mathbb{P}_{n}-\mathbb{P})\eta
	\Pi _{t,u}(W_{u},\phi _{t,u},f_{t})+(\mathbb{P}_{n}-\mathbb{P})\biggl[\eta
	\Pi _{t,u}(W_{u},\overline{\phi },\overline{f})-\eta \Pi _{t,u}(W_{u},\phi
	_{t,u},f_{t})\biggr] \\
	+& \mathbb{P}\biggl[\eta \Pi _{t,u}(W_{u},\overline{\phi },\overline{f}%
	)-\eta \Pi _{t,u}(W_{u},\phi _{t,u},f_{t})\biggr]+\biggl[\mathbb{P}\eta \Pi
	_{t,u}(W_{u},\phi _{t,u},f_{t})-\alpha (t,u)\biggr] \\
	:=& I+II+III+IV,
	\end{align*}%
	where $(\overline{\phi },\overline{f})=(\widehat{\phi }_{t,u},\hat{f}_{t}).$

Below we fix $(\overline{\phi },\overline{f})\in \mathcal{J}%
_{t,u} \times \mathcal{G}_{t}.$ First,
	\begin{align*}
	\text{Term } IV = \frac{\kappa_2h_2^2}{2}\left[\mathbb{E}\left(\partial_t^2 \phi_{t,u}(X) + \frac{2\partial_t \phi_{t,u}(X)\partial_t f_{t}(X)}{f_{t}(X)}\right)\right] + o(h_2^2) = \mathcal{\beta}_\alpha(t,u)h_2^2 + o(h_2^2).
	\end{align*}	
where the $o(h_2^{2})$ term holds uniformly
	in $(t,u)\in \mathcal{TU}$. For term $III$, uniformly over $(t,u)\in 
	\mathcal{T}\mathcal{U}$, we have 
	\begin{align}
	& \mathbb{P}\eta \biggl[\Pi _{t,u}(W_{u},\overline{\phi },\overline{f})-\Pi
	_{t,u}(W_{u},\phi _{t,u},f_{t})\biggr] \notag \\
	=& \mathbb{E}\biggl(\overline{\phi }(X)-\phi _{t,u}(X)\biggr)\biggl(1-\frac{%
		\mathbb{E}(K(\frac{T-t}{h_2})|X)}{h_2f_{t}(X)}\biggr)+\mathbb{E}\biggl(\frac{%
		Y_{u}-\overline{\phi }(X)}{\overline{f}(X)f_{t}(X)}\biggr)\biggl(\frac{%
		f_{t}(X)-\overline{f}(X)}{h_2}\biggr)K(\frac{T-t}{h_2}) \notag \\
	=& O(\ell_n\delta _{n}h^{2}_2)+\mathbb{E}\biggl(\frac{Y_{u}-\overline{\phi }\left(
		X\right) }{\overline{f}(X)f_{t}(X)}\biggr)\biggl(\frac{f_{t}(X)-%
		\overline{f}(X)}{h_2}\biggr)K(\frac{T-t}{h_2}) \notag \\
	= & O(\ell_n\delta _{n}h^{2}_2) + \mathbb{E}\biggl[\frac{f_t(X) - \bar{f}_t(X)}{\bar{f}_t(X)f_t(X)h_2}\mathbb{E}\left((\phi_{T,u}(X) - \phi_{t,u}(X))K(\frac{T-t}{h_2})\biggl|X\right) \biggr]\notag \\
	& + \mathbb{E}\biggl[\frac{(f_t(X) - \bar{f}_t(X))( \phi_{t,u}(X) - \bar{\phi}(X))}{\bar{f}_t(X)f_t(X)h_2}\mathbb{E}K(\frac{T-t}{h_2}|X)\biggr] \notag \\
	=& O(\ell_n\delta _{n}h_1^{-1/2}h_2^{2})+O(||(\phi _{t,u}(X)-\overline{\phi }(X))||_{P,2}||(f_{t}(X)-\overline{f}(X))||_{P,2}) \notag \\
	=& O(\ell_n\delta_nh_1^{-1/2}h_2^2+\ell_n\varepsilon _{n}^{2}h_1^{-1/2}).
	\label{A.15}
	\end{align}%
	The second equality of (\ref{A.15}) follows because there exists a
	constant $c$ independent of $n$ such that 
	\begin{equation*}
	\sup_{(t,u)\in \mathcal{T}\mathcal{U}}\biggl|1-\frac{\mathbb{E}(K(\frac{T-t}{%
			h_2})|X)}{h_2f_{t}(X)}\biggr|\leq ch_2^{2}
	\end{equation*}%
	and then 
	\begin{align*}
	& \mathbb{E}\biggl(\overline{\phi }(X)-\phi _{t,u}(X)\biggr)\biggl(1-\frac{%
		\mathbb{E}(K(\frac{T-t}{h_2})|X)}{h_2f_{t}(X)}\biggr) \leq ch_2^{2}\mathbb{E}||%
	\overline{\phi }(X)-\phi _{t,u}(X)||_{\mathbb{P}, \infty}=O(\ell_n\delta _{n}h_2^{2}).
	\end{align*}%
	The third equality of (\ref{A.15}) holds because $\mathbb{E}(Y_u|X,T) = \phi_{T,u}(X)$. The fourth equality of \eqref{A.15} holds by the fact that $||\overline{f}%
	_{t}(X)-f_{t}(X)||_{\mathbb{P}, \infty }=O(\delta _{n}h_1^{-1/2}) = o(1)$, $f_{t}(x)$ is assumed
	to be bounded away from zero uniformly over $t,\tau $ and the Cauchy
	inequality. The fifth inequality of (\ref{A.15}) holds because 
	\begin{equation*}
	||(\phi _{t,u}(X)-\overline{\phi }(X))||_{P,2}=[%
	\mathbb{E}||(\phi _{t,u}(X)-\overline{\phi }(X))||_{\mathbb{P}_n,2}^{2}]^{1/2}=O(\ell_n\varepsilon _{n})
	\end{equation*}%
	and for some constant $c>0$ independent of $(t,u,n)$, 
	\begin{equation*}
	||(f_{t}(X)-\overline{f}(X))||_{P,2} = O(%
	\varepsilon _{n}h_1^{-1/2}).
	\end{equation*}%

For the term $II$, we have 
	\begin{equation*}
	\mathbb{E}(\mathbb{P}_{n}-\mathbb{P})\eta \biggl[\Pi _{t,u}(W_{u},\overline{%
		\phi },\overline{f})-\Pi _{t,u}(W_{u},\phi _{t,u},f_{t})\biggr]\leq \mathbb{E%
	}||\mathbb{P}_{n}-\mathbb{P}||_{\mathcal{F}}
	\end{equation*}%
	where 
	\begin{equation*}
	\mathcal{F}=\cup _{(t,u)\in \mathcal{T}\mathcal{U}}\mathcal{F}_{t,u}\quad 
	\text{and}\quad \mathcal{F}_{t,u}=%
	\begin{Bmatrix}
	\eta \biggl[\Pi _{t,u}(W_{u},\overline{\phi },\overline{f})-\Pi
	_{t,u}(W_{u},\phi _{t,u},f_{t})\biggr]:\overline{\phi }\in \mathcal{J}_{t,u},%
	\overline{f}\in \mathcal{G}_{t}%
	\end{Bmatrix}%
	.
	\end{equation*}%
	Note $\mathcal{F}$ has envelope $|\frac{\eta }{h_2}|$, 
	\begin{align*}
	\sigma ^{2}:= & \sup_{f\in \mathcal{F}}\mathbb{E}f^{2} \\
	\lesssim & \sup_{(t,u)\in \mathcal{T}\mathcal{U}, (\bar{\phi},\bar{f}) \in \mathcal{J}_{t,u} \times \mathcal{G}_t}\mathbb{E}\biggl[(\overline{%
		\phi }(X)-\phi_{t,u} (X))^{2}\biggl(1-\frac{K(\frac{T-t}{h_2})}{f_{t}(X)h_2}%
	\biggr)^{2}\biggr] \\
	& +\sup_{(t,u)\in \mathcal{T}\mathcal{U}, (\bar{\phi},\bar{f}) \in \mathcal{J}_{t,u} \times \mathcal{G}_t}\mathbb{E}\biggl[\frac{Y_{u}-%
		\overline{\phi }(X)}{\overline{f}(X)f_{t}(X)h_2}K(\frac{T-t}{h_2})\biggl(%
	f_{t}(X)-\overline{f}(X)\biggr)\biggr]^{2} \\
	\lesssim & \sup_{(t,u)\in \mathcal{T}\mathcal{U}, (\bar{\phi},\bar{f}) \in \mathcal{J}_{t,u} \times \mathcal{G}_t}\mathbb{E}\biggl[(\overline{%
		\phi }(X)-\phi_{t,u} (X))^{2}\biggr]\biggl[1+\frac{K^2(\frac{T-t}{h_2})}{h_2^2}%
	\biggr] \\
	& +\sup_{(t,u)\in \mathcal{T}\mathcal{U}, (\bar{\phi},\bar{f}) \in \mathcal{J}_{t,u} \times \mathcal{G}_t}\mathbb{E}\biggl[f_{t}(X)-%
	\overline{f}(X)\biggr]^{2}\frac{K^2(\frac{T-t}{h_2})}{h_2^2} \\
	\lesssim & \sup_{(t,u)\in \mathcal{T}\mathcal{U}, (\bar{\phi},\bar{f}) \in \mathcal{J}_{t,u} \times \mathcal{G}_t}h_2^{-1}\mathbb{E}\biggl[(%
	\overline{ \phi }(X)-\phi_{t,u} (X))^{2}\biggr] +
	h_2^{-1}\sup_{(t,u)\in \mathcal{T}\mathcal{U}, (\bar{\phi},\bar{f}) \in \mathcal{J}_{t,u} \times \mathcal{G}_t}\mathbb{E}\biggl[(\overline{
		f }(X)-f_{t} (X))^{2}\biggr] \\
	\lesssim & 
	h_2^{-1}\varepsilon_n^2h_1^{-1}.
	\end{align*}

The second last inequality in the above display holds because $%
	f_{t}(x)$ is bounded away from zero uniformly in $(t,x) $,
	where $t = T+h_2v$ belongs to some compact enlargement of $\mathcal{T}$. Furthermore, $\mathcal{F}$ is nested by 
	\begin{equation*}
	\overline{F}=%
	\begin{Bmatrix}
	& \Pi _{t,u}(W_{u},\Lambda (b(X)^{\prime }\theta ),b(X)^{\prime }\beta )-\Pi
	_{t,u}(W_{u},\phi _{t,u},f_{t}):(t,u)\in \mathcal{T}\mathcal{U}, \\ 
	& ||\theta ||_{0}\leq Ms,||\beta ||_{0}\leq Ms%
	\end{Bmatrix}%
	,
	\end{equation*}%
	where 
	\begin{equation*}
	\sup_{Q}\log N(\mathcal{F},e_{Q},\varepsilon ||\overline{F}||_{Q,2})\lesssim s\log
	(p\vee n)+s\log (\frac{1}{\varepsilon })\vee 0.
	\end{equation*}%
	In addition, we claim $||\max_{1\leq i\leq n}|\eta _{i}/h_2|||_{p,2}\lesssim
	\log (n)h_2^{-1}$. When $\eta =1$, the above claim holds trivially. When $\eta 
	$ has sub-exponential tail, and the claim holds by \citet[Lemma 2.2.2]{VW96}%
	. Therefore, by \citet[Corollary 5.1]{CCK14}, we have 
	\begin{equation*}
	\mathbb{E}||\mathbb{P}_{n}-\mathbb{P}||_{\mathcal{F}}\lesssim \varepsilon
	_{n}(nh_1h_2)^{-1/2}s^{1/2}\log ^{1/2}(p\vee n)+\log (n)(nh_2)^{-1}s\log (p\vee
	n).
	\end{equation*}%
	Combining the bounds for $II$, $III$, and $IV$, we have 
	\begin{equation*}
	\hat{\alpha}^{\dagger }(t,u)-\alpha (t,u)=(\mathbb{P}_{n}-\mathbb{P})\eta
	\Pi _{t,u}(W_{u},\phi _{t,u},f_{t})+\mathcal{\beta}_\alpha(t,u)h_2^2+R_{n}(t,u)
	\end{equation*}
	and 
	\begin{align*}
	\sup_{(t,u)\in \mathcal{TU}}|R_{n}(t,u)|=O_{p}( \varepsilon_{n}^2(h_2^{-1/2} + \ell_n h_1^{-1/2}) + \log(n)s\log(p \vee n )(nh_2)^{-1}) + o_p(h_2^2).
	\end{align*}
	
	Then, when $\eta = 1$, 
	\begin{align*}
	\hat{\alpha}(t,u)-\alpha (t,u) = &
	(\mathbb{P}_{n}-\mathbb{P}) \Pi _{t,u}(W_{u},\phi _{t,u},f_{t}) + \mathcal{B}_\alpha(t,u)h_2^2 + R_{n}(t,u)
	\\ = & (\mathbb{P}_{n}-\mathbb{P}) (\Pi _{t,u}(W_{u},\phi
	_{t,u},f_{t})-\alpha(t,u))+\mathcal{B}_\alpha(t,u)h_2^2+R_{n}(t,u).
	\end{align*}
	
	Then, Assumption \ref{ass:rate2} implies that $\sup_{(t,u) \in \mathcal{T} \mathcal{U}}|R_n(t,u)| = o_p((nh_2)^{-1/2})$. For the
	bootstrap estimator, we have 
	\begin{align}
	\label{eq:aboot} \hat{\alpha}^b(t,u)-\alpha (t,u) = &
	\hat{\alpha}^\dagger(t,u)/\bar{\eta}-\alpha (t,u) \notag \\ 
	= &
	(\hat{\alpha}^\dagger(t,u)-\alpha(t,u))/\bar{\eta} + \alpha
	(t,u)(1/\bar{\eta}-1) \notag \\ 
	= & (\mathbb{P}_{n}-\mathbb{P}) \eta \Pi
	_{t,u}(W_{u},\phi _{t,u},f_{t}) /\bar{\eta} + \alpha (t,u)(1/\bar{\eta}-1) +
	\mathcal{B}_\alpha(t,u)h_2^2/\bar{\eta}+R_n(t,u)/\bar{\eta} \notag \\ 
	= & (\mathbb{P}_{n}-\mathbb{P}) \eta \Pi
	_{t,u}(W_{u},\phi _{t,u},f_{t} - \alpha(t,u)) /\bar{\eta} + \mathcal{B}_\alpha(t,u)h_2^2/\bar{\eta}+
	R_n(t,u)/\bar{\eta} \notag \\ 
	= & (\mathbb{P}_{n}-\mathbb{P}) \eta (\Pi
	_{t,u}(W_{u},\phi _{t,u},f_{t}) - \alpha(t,u)) +\mathcal{B}_\alpha(t,u)h_2^2+R^b_{n}(t,u), \end{align}
	where $\sup_{(t,u)\in \mathcal{TU}}|R^b_{n}(t,u)|=O_{p}(\varepsilon_{n}^2(h_2^{-1/2} + \ell_nh_1^{-1/2}) + \log(n)s\log(p \vee n )(nh_2)^{-1}) + o_p(h_2^{2}).$ This is because of the fact that 
	\begin{equation*}
	\bar{\eta}- \mathbb{E}\eta = \bar{\eta}-1 = O_p(n^{-1/2}), 
	\end{equation*}
	\begin{align*}
	\sup_{(t,u)\in \mathcal{TU}}|R_{n}(t,u)|=O_{p}(\varepsilon_{n}^2(h_2^{-1/2} + \ell_nh_1^{-1/2}) + \log(n)s\log(p \vee n )(nh_2)^{-1}) + o_p(h_2^{2}),
	\end{align*}
	and the collection of functions 
	\begin{equation*}
	\{ \eta (\Pi _{t,u}(W_{u},\phi _{t,u},f_{t}) - \alpha(t,u)): (t,u) \in 
	\mathcal{TU} \}
	\end{equation*}
	satisfies 
	\begin{equation*}
	\sup_{(t,u)\in \mathcal{T}\mathcal{U}}|(\mathbb{P}_n - \mathbb{P})(\eta (\Pi
	_{t,u}(W_{u},\phi _{t,u},f_{t}) - \alpha(t,u)))| =
	O_p(\log^{1/2}(n)(nh_2)^{-1/2}).
	\end{equation*}

Therefore, 
	\begin{equation*}
	\begin{aligned} \hat{\alpha}^b(t,u)-\hat{\alpha} (t,u) = &
	(\mathbb{P}_{n}-\mathbb{P}) (\eta-1) (\Pi _{t,u}(W_{u},\phi _{t,u},f_{t}) -
	\alpha(t,u)) +R^b_{n}(t,u) - R_{n}(t,u), \end{aligned}
	\end{equation*}
	where 
	$$\sup_{(t,u) \in \mathcal{T} \mathcal{U}}|R^b_{n}(t,u) - R_{n}(t,u)| =
	O_{p}(\varepsilon_{n}^2(h_2^{-1/2} + \ell_n h_1^{-1/2}) + \log(n)s\log(p \vee n )(nh_2)^{-1})+ o_p(h_2^{2}) = o_p((nh_2)^{-1/2}).$$ 
\end{proof}

\bigskip 

\begin{proof}[Proof of Theorem \protect\ref{thm:q}]
Let $\hat{ \alpha}^{\ast }(t,u)$ be either the original or the
bootstrap estimator of $\alpha(t,u)$. We first derive the linear expansion
of the rearrangement of $\hat{\alpha}^{\ast }(t,u)$ defined in the proof of
Theorem \ref{thm:2nd}. For $z\in (0,1)$, let 
\begin{equation*}
F(t,z)=\int_{0}^{1}1\{\alpha (t,\psi ^{\leftarrow }(v))\leq z\}dv,\quad
F(t,z|d_{n})=\int_{0}^{1}1\{\hat{\alpha}^{\ast }(t,\psi ^{\leftarrow
}(v))\leq y\}dv,
\end{equation*}%
where $\psi(\cdot)$ is defined in Section \ref{sec:3rd}. Then, by Lemma \ref%
{lem:verify} in the online supplement, we have 
\begin{equation}
\frac{F(t,z|d_{n})-F(t,z)}{s_{n}}+\frac{d_{n}(t,\psi (q_{z}(t)))\psi
^{\prime }(q_{z}(t))}{f_{Y(t)}(q_{z}(t))}=o_{p}(\delta _{n})  \label{eq:cdf4}
\end{equation}%
and 
\begin{equation}
\frac{\hat{\alpha}^{\ast r}(t,u)-\alpha (t,u)}{s_{n}}+\frac{F(t,\alpha
(t,u)|d_{n})-F(t,\alpha (t,u))f_{Y(t)}(u)}{s_{n}\psi ^{\prime }(u)}%
=o_{p}(\delta _{n}).  \label{eq:inv1}
\end{equation}%
where $s_{n}=(nh_2)^{-1/2}$, $d_{n}(t,v)=(nh_2)^{1/2}(\hat{\alpha}^{\ast
}(t,\psi ^{\leftarrow }(v))-\alpha (t,\psi ^{\leftarrow }(v))),$ $%
f_{Y(t)}(\cdot )$ is the density of $Y(t)$, $q_{z}(t)$ is the $z$-th
quantile of $Y(t)$, and $\delta _{n}$ equals to either $1$ or $h_2^{1/2}$,
depending on either Assumption \ref{ass:rate2}.1 or \ref{ass:rate2}.2 is in
place. 

Combining \eqref{eq:cdf4} and \eqref{eq:inv1}, we have 
\begin{equation}
(nh_2)^{1/2}(\hat{\alpha}^{*r}(t,u) - \alpha(t,u)) = d_{n}(t,\psi(u)) +
o_p(\delta_n) = (nh_2)^{1/2}( \hat{\alpha}^*(t,u)-\alpha(t,u)) + o_p(\delta_n)
\label{eq:rearrange}
\end{equation}
uniformly over $(t,u) \in \mathcal{T} \mathcal{U}$. 

\medskip 
We can apply Lemma \ref{lem:verify} on $\hat{\alpha}%
^{*r}(t,u)$ again with $J_n(t,u) = (nh_2)^{1/2}(\hat{\alpha}^{*r}(t,u) -
\alpha(t,u))$, $F(t,u) = P(Y(t) \leq u) = \alpha(t,u)$, $f(t,u) =
f_{Y(t)}(u) $, and $F^\leftarrow(t,\tau) = q_\tau(t)$. Then, for $\delta_n$
equals $1$ or $h_2^{1/2}$ under either Assumption \ref{ass:rate2}.1 or \ref%
{ass:rate2}.2, respectively, we have, 
\begin{equation}
\frac{\hat{q}^*_\tau(t) - q_\tau(t)}{s_n} = - \frac{J_n(t,q_\tau(t))}{%
f_{Y(t)}(q_\tau(t))} + o_p(\delta_n) = - \frac{(nh_2)^{1/2}( \hat{\alpha}%
^{*r}(t,q_\tau(t))-\tau)}{f_{Y(t)}(q_\tau(t))} + o_p(\delta_n)
\label{eq:quantile}
\end{equation}
uniformly over $(t,\tau) \in \mathcal{T} \mathcal{I}.$ 

\medskip 
When $\eta = 1$, combining \eqref{eq:rearrange}, %
\eqref{eq:quantile}, and Theorem \ref{thm:2nd}, we have 
\begin{equation*}
\hat{q}_\tau(t) - q_\tau(t) = -(\mathbb{P}_{n}-\mathbb{P})\frac{\Pi
	_{t,u}(W_{q_{\tau }(t)},\phi _{t,q_{\tau }(t)},f_{t})}{f_{Y(t)}(q_{\tau
	}(t))} - \frac{\mathcal{\beta}_\alpha(t,q_\tau(t))h_2^2}{f_{Y(t)}(q_{\tau
	}(t))} + R_n(t,\tau) + o_p(\delta_n (nh_2)^{-1/2}).
\end{equation*}
By taking $\delta_n = 1$ and $\delta_n = h_2^{1/2}$ under Assumptions \ref%
{ass:rate2}.1 and \ref{ass:rate2}.2, respectively, we have establish the
desired results. For the bootstrap estimator, by \eqref{eq:aboot}, we have 
\begin{equation*}
\begin{aligned} \hat{q}^b_\tau(t) - q_\tau(t) = -(\mathbb{P}_{n}-\mathbb{P})\eta\frac{\Pi
	_{t,u}(W_{q_{\tau }(t)},\phi _{t,q_{\tau }(t)},f_{t})}{f_{Y(t)}(q_{\tau
	}(t))} - \frac{\mathcal{\beta}_\alpha(t,q_\tau(t))h_2^2}{f_{Y(t)}(q_{\tau
	}(t))} + R^b_n(t,\tau) + o_p(\delta_n (nh_2)^{-1/2}).
\end{aligned}
\end{equation*}
Then, 
\begin{equation*}
\begin{aligned} & \hat{q}^b_\tau(t) - \hat{q}_\tau(t) \\ = & -(\mathbb{P}_n
- \mathbb{P})(\eta-1)\Pi_{t,u}(W_{q_\tau(t)},\phi_{t,q_\tau(t)},f_t)
/f_{Y(t)}(q_\tau(t)) + R^b_n(t,\tau) - R_n(t,\tau) + o_p(\delta_n
(nh_2)^{-1/2}). \end{aligned}
\end{equation*}
By taking $\delta_n = 1$ and $\delta_n = h_2^{1/2}$ under Assumptions \ref%
{ass:rate2}.1 and \ref{ass:rate2}.2, respectively, we have establish the
linear expansion of the bootstrap estimator too. Last, note that the bootstrap estimator cannot preserve the asymptotic bias term. For the validity of bootstrap inference, we need to under-smooth and require $nh^5_2 \rightarrow 0$. This condition is assumed in Theorem \ref{thm:infer}.
\end{proof}

\bigskip 

\begin{proof}[Proof of Theorem \protect\ref{thm:qprime}]
We consider the general case in which the observations are weighted
by $\{\eta _{i}\}_{i=1}^{n}$ as above. For brevity, denote $\hat{\delta}:= (%
\hat{\delta}_{0},\hat{\delta}_{1})^{\prime }=(\hat{\beta}_{\tau }^{\ast
0}(t),\hat{\beta}_{\tau }^{\ast 1}(t))^{\prime }$ and $\delta := (\delta
_{0},\delta _{1})^{\prime }=(\beta _{\tau }^{0}(t),\beta _{\tau }^{1}(t)).$
For any variable $R_n := R_n(\tau,t)$ and some deterministic sequence $r_n$,
we write $R_n = O_p^*(r_n)$ (resp. $o_p^*(r_n)$) if $\sup_{(t,\tau) \in \mathcal{TI%
}}|R_n(\tau,t)| = O_p(r_n)$ (resp. $o_p(r_n)$). Then $\hat{\delta}=\widehat{\Sigma 
}_{2}^{-1}\widehat{\Sigma }_{1},$ where 
\begin{equation*}
\widehat{\Sigma }_{1}=%
\begin{pmatrix}
\frac{1}{n}\sum_{i=1}^{n}K(\frac{T_{i}-t}{h_2})\eta _{i}\hat{q}_{\tau }^{\ast
}(T_{i}) \\ 
\frac{1}{n}\sum_{i=1}^{n}K(\frac{T_{i}-t}{h_2})(T_{i}-t)\eta _{i}\hat{q}_{\tau
}^{\ast }(T_{i})%
\end{pmatrix}%
\end{equation*}%
and 
\begin{equation*}
\widehat{\Sigma }_{2}=%
\begin{pmatrix}
\frac{1}{n}\sum_{i=1}^{n}K(\frac{T_{i}-t}{h_2})\eta _{i} & \frac{1}{n}%
\sum_{i=1}^{n}K(\frac{T_{i}-t}{h_2})(T_{i}-t)\eta _{i} \\ 
\frac{1}{n}\sum_{i=1}^{n}K(\frac{T_{i}-t}{h_2})(T_{i}-t)\eta _{i} & \frac{1}{n}%
\sum_{i=1}^{n}K(\frac{T_{i}-t}{h_2})(T_{i}-t)^{2}\eta _{i}%
\end{pmatrix}%
.
\end{equation*}%

Let $\Sigma _{2}=%
\begin{pmatrix}
f(t) & 0 \\ 
\kappa _{2}f^{(1)}(t) & \kappa _{2}f(t)%
\end{pmatrix}%
$ and $G=%
\begin{pmatrix}
h_2^{-1} & 0 \\ 
0 & h_2^{-3}%
\end{pmatrix}%
$. Then we have 
\begin{equation*}
G\hat{\Sigma}_{2}-\Sigma _{2}=O_{p}^{\ast }(\log ^{1/2}(n)(nh_2^{3})^{-1/2}).
\end{equation*}%
In addition, note 
\begin{equation*}
\hat{q}_{\tau }^{\ast }(T_{i})=\delta _{0}+\delta _{1}(T_{i}-t)+(q_{\tau
}(T_{i})-\delta _{0}-\delta _{1}(T_{i}-t))+(\hat{q}_{\tau }^{\ast
}(T_{i})-q_{\tau }(T_{i}))
\end{equation*}%
and 
\begin{align*}
& \begin{pmatrix}
\frac{1}{nh_2}\sum_{i=1}^{n}K(\frac{T_{i}-t}{h_2})\eta _{i}\biggl(q_{\tau
}(T_{i})-\delta _{0}-\delta _{1}(T_{i}-t)\biggr) \\ 
\frac{1}{nh_2^{3}}\sum_{i=1}^{n}K(\frac{T_{i}-t}{h_2})(T_{i}-t)\eta _{i}\biggl(%
q_{\tau }(T_{i})-\delta _{0}-\delta _{1}(T_{i}-t)\biggr)%
\end{pmatrix} \\
=& \begin{pmatrix}
\frac{1}{2}q^{\prime \prime}_\tau(t)f_T(t)\kappa_2 h_2^2 \\
\frac{1}{6}\partial_t(q^{\prime \prime}_\tau(t)f_T(t))\kappa_4 h_2^2 \\
\end{pmatrix}+O_{p}^{\ast }(\sqrt{\frac{\log (n)h_2}{n}}) +o^*(h_2^{2}).
\end{align*}%
Therefore, 
\begin{align}
G\widehat{\Sigma }_{1}= & G\widehat{\Sigma }_{2}\delta +%
\begin{pmatrix}
\frac{1}{nh_2}\sum_{i=1}^{n}K(\frac{T_{i}-t}{h_2})\biggl(\hat{q}_{\tau
}(T_{i})-q_{\tau }(T_{i})\biggr)\eta _{i} \\ 
\frac{1}{nh_2^{3}}\sum_{i=1}^{n}K(\frac{T_{i}-t}{h_2})(T_{i}-t)\biggl(\hat{q}%
_{\tau }(T_{i})-q_{\tau }(T_{i})\biggr)\eta _{i}%
\end{pmatrix}%
+\begin{pmatrix}
\frac{1}{2}q^{\prime \prime}_\tau(t)f_T(t)\kappa_2 h_2^2 \\
\frac{1}{6}\partial_t(q^{\prime \prime}_\tau(t)f_T(t))\kappa_4 h_2^2 \\
\end{pmatrix} \notag \\
& +O_{p}^{\ast }(\sqrt{\frac{\log (n)h_2}{n}}) +o^*(h_2^{2}).  \label{eq:sigma1delta}
\end{align}%
Let $E(t,\tau) = \mathbb{E}\frac{Y_{q_{\tau }(t),j}-\phi
	_{t,q_{\tau }(t)}(X_{j})}{f_{t}(X_{j})h_2}K(\frac{T_{j}-t}{h_2}) + \tau$. By Theorem \ref{thm:q}, we have 
\begin{align}
& \hat{q}_{\tau }^{\ast }(t)-q_{\tau }(t) \notag \\
=& \frac{-1}{f_{Y_{t}}(q_{\tau }(t))}%
\frac{1}{n}\sum_{j=1}^{n}\eta _{j}\biggl(\frac{Y_{q_{\tau }(t),j}-\phi
_{t,q_{\tau }(t)}(X_{j})}{f_{t}(X_{j})h_2}K(\frac{T_{j}-t}{h_2})+\phi
_{t,q_{\tau }(t)}(X_{j})-E(t,\tau) \biggr) \notag \\
& - \mathcal{\beta}_q(t,\tau)h_2^2 +o_{p}^{\ast }((nh_2)^{-1/2}).
\label{eq:q2}
\end{align}%
Let $\Upsilon _{i}=(Y_{i},T_{i},X_{i},\eta _{i})$. Then, by plugging %
\eqref{eq:q2} in \eqref{eq:sigma1delta} and noticing that
\begin{equation*}
\begin{Bmatrix}
\sup_{t\in \mathcal{T}}\frac{1}{nh_2}\sum_{i=1}^{n}K(\frac{T_{i}-t}{h_2})\eta
_{i} \\ 
\sup_{t\in \mathcal{T}}\frac{1}{nh_2^{3}}\sum_{i=1}^{n}K(\frac{T_{i}-t}{h_2}%
)|T_{i}-t|\eta _{i}%
\end{Bmatrix}%
=%
\begin{Bmatrix}
O_{p}(1) \\ 
O_{p}(h_2^{-1})%
\end{Bmatrix}%
,
\end{equation*}%
we have 
\begin{align*}
G\widehat{\Sigma }_{1}= & G\widehat{\Sigma }_{2}\delta -\frac{1}{n(n-1)}%
\sum_{i\neq j}\eta _{i}\eta _{j}\Gamma (\Upsilon _{i},\Upsilon _{j};t,\tau ) - \begin{pmatrix}
f_T(t)\mathcal{\beta}_q(t,\tau)h_2^2 \\
f'_T(t)\mathcal{\beta}_q(t,\tau)h_2^2 \\
\end{pmatrix} \\
& +\begin{pmatrix}
\frac{1}{2}q^{\prime \prime}_\tau(t)f_T(t)\kappa_2 h_2^2 \\
\frac{1}{6}\partial_t(q^{\prime \prime}_\tau(t)f_T(t))\kappa_4 h_2^2 \\
\end{pmatrix}+%
\begin{Bmatrix}
o_{p}^{\ast }((nh_2)^{-1/2}) \\ 
o_{p}^{\ast }((nh_2^{3})^{-1/2}) \\ 
\end{Bmatrix}%
\end{align*}%
where $\Gamma (\Upsilon _{i},\Upsilon _{j};t,\tau )=(\Gamma _{0}(\Upsilon
_{i},\Upsilon _{j};t,\tau ),\Gamma _{1}(\Upsilon _{i},\Upsilon _{j};t,\tau
))^{\prime }$, and 
\begin{align*}
& \Gamma _{\ell }(\Upsilon _{i},\Upsilon _{j};t,\tau )\\
= & \frac{(T_{i}-t)^{\ell }%
}{h_2^{1+2\ell }f_{Y_{T_{i}}}(q_{\tau }(T_{i}))}K(\frac{T_{i}-t}{h_2})\biggl(%
\frac{Y_{q_{\tau }(T_{i}),j}-\phi _{T_{i},q_{\tau }(T_{i})}(X_{j})}{%
f_{T_{i}}(X_{j})h_2}K(\frac{T_{j}-T_{i}}{h_2})+\phi _{T_{i},q_{\tau
}(T_{i})}(X_{j})-E(T_i,\tau) \biggr)
\end{align*}%
for $\ell =0,1.$ Let $\Gamma ^{s}(\Upsilon _{i},\Upsilon
_{j};t,\tau )=(\Gamma (\Upsilon _{i},\Upsilon _{j};t,\tau )+\Gamma (\Upsilon
_{j},\Upsilon _{i};t,\tau ))/2$. Because $nh_2^7 \rightarrow 0$, we have 
\begin{equation}
\hat{\beta}_{\tau }^{1\ast }(t)-\beta _{\tau }^{1}(t)=-e_{2}^{\prime }(G%
\widehat{\Sigma }_{2})^{-1}U_{n}(t,\tau )+o_{p}^{\ast }((nh_2^{3})^{-1/2}),
\label{eq:delta2}
\end{equation}%
where $e_{2}=(0,1)^{\prime }$ and $U_{n}(t,\tau
)=(C_{n}^{2})^{-1}\sum_{1\leq i<j\leq n}\eta _{i}\eta _{j}\Gamma ^{s}(\cdot
,\cdot ;t,\tau )$ is a U-process indexed by $(t,\tau )$. By Lemma \ref%
{lem:43} in the online supplement, 
\begin{equation}
e_{2}^{\prime }(G\widehat{\Sigma }_{2})^{-1}U_{n}(t,\tau )=\frac{-1}{n}%
\sum_{j=1}^{n}\eta _{j}(\kappa _{2}f_{Y(t)}(q_{\tau
}(t))f_{t}(X_{j})h_2^{2})^{-1}\biggl[Y_{q_{\tau }(t),j}-\phi _{t,q_{\tau
}(t)}(X_{j})\biggr]\overline{K}(\frac{T_{j}-t}{h_2})+o_{p}^{\ast
}((nh_2^{3})^{-1/2}).  \label{eq:43}
\end{equation}%
Combining \eqref{eq:delta2} and \eqref{eq:43}, we have 
\begin{equation*}
\hat{\beta}_{\tau }^{1\ast }(t)-\beta _{\tau }^{1}(t)=\frac{-1}{n}%
\sum_{j=1}^{n}\eta _{j}(\kappa _{2}f_{Y_{t}}(q_{\tau
}(t))f_{t}(X_{j})h_2^{2})^{-1}\biggl[Y_{q_{\tau }(t),j}-\phi _{t,q_{\tau
}(t)}(X_{j})\biggr]\overline{K}(\frac{T_{j}-t}{h_2})+o_{p}^{\ast
}((nh_2^{3})^{-1/2}).
\end{equation*}%

\end{proof}

\bigskip

\begin{proof}[Proof of Theorem \protect\ref{thm:infer}]
By the proofs of Theorems \ref{thm:q} and \ref{thm:qprime}, we have 
\begin{equation*}
\hat{q}_{\tau }^{b}(t)-\hat{q}_{\tau }(t)=-(\mathbb{P}_{n}-\mathbb{P})(\eta
-1)\biggl(\Pi _{t,u}(W_{q_{\tau }(t)},\phi _{t,q_{\tau }(t)},f_{t})-\tau %
\biggr)/f_{t}(q_{\tau }(t))+o_{p}((nh_2)^{-1/2})
\end{equation*}%
and 
\begin{equation*}
\hat{\beta}_{\tau }^{1b}(t)-\hat{\beta}_{\tau }^{1}(t)=\frac{1}{n}%
\sum_{j=1}^{n}(\eta _{j}-1)(\kappa _{2}f_{Y(t)}(q_{\tau
}(t))f_{t}(X_{j})h^{2})^{-1}\biggl[Y_{q_{\tau }(t),j}-\phi _{t,q_{\tau
}(t)}(X_{j})\biggr]\overline{K}(\frac{T_{j}-t}{h_2})+o_{p}^{\ast
}((nh_2^{3})^{-1/2}).
\end{equation*}%
Then, it is straightforward to show that $\sqrt{nh_2}(\hat{q}_{\tau }^{b}(t)-%
\hat{q}_{\tau }(t))$ and $(nh_2^{3})^{1/2}(\hat{\beta}_{\tau }^{1b}(t)-\hat{%
\beta}_{\tau }^{1}(t))$ converge weakly to the limiting distribution of $%
\sqrt{nh_2}(\hat{q}_{\tau }(t)-q_{\tau }(t))$ and $(nh_2^{3})^{1/2}(\hat{\beta}%
_{\tau }^{1}(t)-\beta _{\tau }^{1}(t))$, respectively, conditional on data
in the sense of \citeauthor{VW96} (\citeyear{VW96}, Section 2.9). The
desired results then follow. 
\end{proof}

{\small
\bibliographystyle{elsarticle-harv}
\bibliography{QDR}

\begin{thebibliography}{78}
\expandafter\ifx\csname natexlab\endcsname\relax\def\natexlab#1{#1}\fi
\expandafter\ifx\csname url\endcsname\relax
  \def\url#1{\texttt{#1}}\fi
\expandafter\ifx\csname urlprefix\endcsname\relax\def\urlprefix{URL }\fi

\bibitem[{Altonji and Matzkin(2005)}]{AM05}
Altonji, J.~G., Matzkin, R.~L., 2005. Cross section and panel data estimators
  for nonseparable models with endogenous regressors. Econometrica 73~(4),
  1053--1102.

\bibitem[{Athey and Imbens(2016)}]{AI15}
Athey, S., Imbens, G., 2016. Recursive partitioning for heterogeneous causal
  effects. Proceedings of the National Academy of Sciences 113~(27),
  7353--7360.

\bibitem[{Begun et~al.(1983)Begun, Hall, Huang, and Wellner}]{B83}
Begun, J.~M., Hall, W., Huang, W.-M., Wellner, J.~A., 1983. Information and
  asymptotic efficiency in parametric-nonparametric models. The Annals of
  Statistics 11~(2), 432--452.

\bibitem[{Belloni et~al.(2012)Belloni, Chen, Chernozhukov, and Hansen}]{BCCH12}
Belloni, A., Chen, D., Chernozhukov, V., Hansen, C., 2012. Sparse models and
  methods for optimal instruments with an application to eminent domain.
  Econometrica 80~(6), 2369--2429.

\bibitem[{Belloni et~al.(2016)Belloni, Chen, and Chernozhukov}]{BCC16}
Belloni, A., Chen, M., Chernozhukov, V., 2016. Quantile graphical models:
  prediction and conditional independence with applications to financial risk
  management. arXiv:1607.00286.

\bibitem[{Belloni and Chernozhukov(2011)}]{BC11}
Belloni, A., Chernozhukov, V., 2011. {$\ell_1$}{-}penalized quantile regression
  in high-dimensional sparse models. The Annals of Statistics 39~(1), 82--130.

\bibitem[{Belloni et~al.(2018{\natexlab{a}})Belloni, Chernozhukov, Chetverikov,
  and Wei}]{BCCW15}
Belloni, A., Chernozhukov, V., Chetverikov, D., Wei, Y., 2018{\natexlab{a}}.
  Uniformly valid post-regularization confidence regions for many functional
  parameters in z-estimation framework. The Annal of Statistics 46~(6B),
  3643--3675.

\bibitem[{Belloni et~al.(2017{\natexlab{a}})Belloni, Chernozhukov,
  Fern{\'a}ndez-Val, and Hansen}]{BCFH13}
Belloni, A., Chernozhukov, V., Fern{\'a}ndez-Val, I., Hansen, C.,
  2017{\natexlab{a}}. Program evaluation with high-dimensional data.
  Econometrica 85~(1), 233--298.

\bibitem[{Belloni et~al.(2014{\natexlab{a}})Belloni, Chernozhukov, and
  Hansen}]{BCH14jep}
Belloni, A., Chernozhukov, V., Hansen, C., 2014{\natexlab{a}}. High-dimensional
  methods and inference on structural and treatment effects. The Journal of
  Economic Perspectives 28~(2), 29--50.

\bibitem[{Belloni et~al.(2014{\natexlab{b}})Belloni, Chernozhukov, and
  Hansen}]{BCH14}
Belloni, A., Chernozhukov, V., Hansen, C., 2014{\natexlab{b}}. Inference on
  treatment effects after selection among high-dimensional controls. The Review
  of Economic Studies 81~(2), 608--650.

\bibitem[{Belloni et~al.(2018{\natexlab{b}})Belloni, Chernozhukov, and
  Kato}]{BCK18}
Belloni, A., Chernozhukov, V., Kato, K., 2018{\natexlab{b}}. Valid
  post-selection inference in high-dimensional approximately sparse quantile
  regression models. Journal of the American Statistical Association.

\bibitem[{Belloni et~al.(2017{\natexlab{b}})Belloni, Chernozhukov, and
  Wei}]{BCW17}
Belloni, A., Chernozhukov, V., Wei, Y., 2017{\natexlab{b}}. Honest confidence
  regions for a regression parameter in logistic regression with a large number
  of controls. Journal of Business and Economic Statistics.

\bibitem[{Bhattacharya and Mazumder(2011)}]{BM11}
Bhattacharya, D., Mazumder, B., 2011. A nonparametric analysis of black--white
  differences in intergenerational income mobility in the united states.
  Quantitative Economics 2~(3), 335--379.

\bibitem[{Bickel et~al.(1993)Bickel, Klaassen, Ritov, and Wellner}]{B93}
Bickel, P., Klaassen, C., Ritov, Y., Wellner, J., 1993. Efficient and Adaptive
  Estimation for Semiparametric Models. Springer-Verlag New York.

\bibitem[{Bickel et~al.(2009)Bickel, Ritov, and Tsybakov}]{BRT09}
Bickel, P.~J., Ritov, Y., Tsybakov, A.~B., 2009. Simultaneous analysis of lasso
  and dantzig selector. The Annals of Statistics 37~(4), 1705--1732.

\bibitem[{Browning and Carro(2007)}]{BC07}
Browning, M., Carro, J., 2007. Heterogeneity and microeconometrics modeling.
  Econometric Society Monographs 43, 47.

\bibitem[{B{\"u}hlmann and van~de Geer(2011)}]{BV11}
B{\"u}hlmann, P., van~de Geer, S., 2011. Statistics for high-dimensional data:
  methods, theory and applications. Springer Science \& Business Media.

\bibitem[{Cai and Xiao(2012)}]{Cai2012413}
Cai, Z., Xiao, Z., 2012. Semiparametric quantile regression estimation in
  dynamic models with partially varying coefficients. Journal of Econometrics
  167~(2), 413 -- 425.

\bibitem[{Carneiro et~al.(2003)Carneiro, Hansen, and Heckman}]{CHH03}
Carneiro, P., Hansen, K.~T., Heckman, J.~J., 2003. Estimating distributions of
  treatment effects with an application to the returns to schooling and
  measurement of the effects of uncertainty on college. International Economic
  Review 71~(44), 361--422.

\bibitem[{Cattaneo(2010)}]{M10}
Cattaneo, M.~D., 2010. Efficient semiparametric estimation of multi-valued
  treatment effects under ignorability. Journal of Econometrics 155~(2),
  138--154.

\bibitem[{Cattaneo and Farrell(2011)}]{MM11}
Cattaneo, M.~D., Farrell, M.~H., 2011. Efficient estimation of the
  dose--response function under ignorability using subclassification on the
  covariates. In: Missing Data Methods: Cross-Sectional Methods and
  Applications. Emerald Group Publishing Limited, pp. 93--127.

\bibitem[{Cattaneo et~al.(2018{\natexlab{a}})Cattaneo, Jansson, and Ma}]{CMM17}
Cattaneo, M.~D., Jansson, M., Ma, X., 2018{\natexlab{a}}. Two-step estimation
  and inference with possibly many included covariates. The Review of Economic
  Studies forthcoming.

\bibitem[{Cattaneo et~al.(2016)Cattaneo, Jansson, and Newey}]{CMN16}
Cattaneo, M.~D., Jansson, M., Newey, W.~K., 2016. Alternative asymptotics and
  the partially linear model with many regressors. Econometric Theory 34~(2),
  277--301.

\bibitem[{Cattaneo et~al.(2018{\natexlab{b}})Cattaneo, Jansson, and
  Newey}]{CJN17}
Cattaneo, M.~D., Jansson, M., Newey, W.~K., 2018{\natexlab{b}}. Inference in
  linear regression models with many covariates and heteroskedasticity. Journal
  of the American Statistical Association 113, 1350--1361.

\bibitem[{Chen(2007)}]{C07}
Chen, X., 2007. Large sample sieve estimation of semi-nonparametric models. In:
  Heckman, J.~J., Leamer, E.~E. (Eds.), Handbook of Econometrics. Vol.~6.
  Elsevier, Ch.~76, pp. 5549 -- 5632.

\bibitem[{Chen and Kato(2017)}]{CK18}
Chen, X., Kato, K., 2017. Jackknife multiplier bootstrap: finite sample
  approximations to the ${U}$-process supremum with applications. arXiv
  preprint arXiv:1708.02705.

\bibitem[{Chernozhukov et~al.(2017)Chernozhukov, Chetverikov, Demirer, Duflo,
  Hansen, and Newey}]{CCDDHN16}
Chernozhukov, V., Chetverikov, D., Demirer, M., Duflo, E., Hansen, C., Newey,
  W., 2017. Double/debiased/neyman machine learning of treatment effects.
  American Economic Review 107~(5), 261--65.

\bibitem[{Chernozhukov et~al.(2018)Chernozhukov, Chetverikov, Demirer, Duflo,
  Hansen, Newey, and Robins}]{CC18}
Chernozhukov, V., Chetverikov, D., Demirer, M., Duflo, E., Hansen, C., Newey,
  W., Robins, J., 2018. Double/debiased machine learning for treatment and
  structural parameters. The Econometrics Journal 21~(1), C1--C68.

\bibitem[{Chernozhukov et~al.(2014{\natexlab{a}})Chernozhukov, Chetverikov, and
  Kato}]{CCK14-anti}
Chernozhukov, V., Chetverikov, D., Kato, K., 2014{\natexlab{a}}.
  Anti-concentration and honest, adaptive confidence bands. The Annals of
  Statistics 42~(5), 1787--1818.

\bibitem[{Chernozhukov et~al.(2014{\natexlab{b}})Chernozhukov, Chetverikov, and
  Kato}]{CCK14}
Chernozhukov, V., Chetverikov, D., Kato, K., 2014{\natexlab{b}}. Gaussian
  approximation of suprema of empirical processes. The Annals of Statistics
  42~(4), 1564--1597.

\bibitem[{Chernozhukov et~al.(2010)Chernozhukov, Fern{\'a}ndez-Val, and
  Galichon}]{CFG10}
Chernozhukov, V., Fern{\'a}ndez-Val, I., Galichon, A., 2010. Quantile and
  probability curves without crossing. Econometrica 78~(3), 1093--1125.

\bibitem[{Chernozhukov et~al.(2013)Chernozhukov, Fern{\'a}ndez-Val, and
  Melly}]{CFM13}
Chernozhukov, V., Fern{\'a}ndez-Val, I., Melly, B., 2013. Inference on
  counterfactual distributions. Econometrica 81~(6), 2205--2268.

\bibitem[{Chernozhukov et~al.(2015{\natexlab{a}})Chernozhukov, Hansen, and
  Spindler}]{CHS15}
Chernozhukov, V., Hansen, C., Spindler, M., 2015{\natexlab{a}}. Post-selection
  and post-regularization inference in linear models with many controls and
  instruments. The American Economic Review 105~(5), 486--490.

\bibitem[{Chernozhukov et~al.(2015{\natexlab{b}})Chernozhukov, Hansen, and
  Spindler}]{CHSAnnual15}
Chernozhukov, V., Hansen, C., Spindler, M., 2015{\natexlab{b}}. Valid
  post-selection and post-regularization inference: An elementary, general
  approach. Annu. Rev. Econ. 7~(1), 649--688.

\bibitem[{Chernozhukov et~al.(2007)Chernozhukov, Imbens, and Newey}]{CIN07}
Chernozhukov, V., Imbens, G.~W., Newey, W.~K., 2007. Instrumental variable
  estimation of nonseparable models. Journal of Econometrics 139~(1), 4--14.

\bibitem[{Chesher(2003)}]{C03}
Chesher, A., 2003. Identification in nonseparable models. Econometrica 71~(5),
  1405--1441.

\bibitem[{Chu et~al.(2011)Chu, Zhu, and Wang}]{CZW11}
Chu, T., Zhu, J., Wang, H., 2011. Penalized maximum likelihood estimation and
  variable selection in geostatistics. The Annals of Statistics 39~(5),
  2607--2625.

\bibitem[{Cunha et~al.(2010)Cunha, Heckman, and Schennach}]{CHS10}
Cunha, F., Heckman, J.~J., Schennach, S.~M., 2010. Estimating the technology of
  cognitive and noncognitive skill formation. Econometrica 78~(3), 883--931.

\bibitem[{Farrell(2015)}]{F15}
Farrell, M.~H., 2015. Robust inference on average treatment effects with
  possibly more covariates than observations. Journal of Econometrics 189~(1),
  1--23.

\bibitem[{Firpo(2007)}]{F07}
Firpo, S., 2007. Efficient semiparametric estimation of quantile treatment
  effects. Econometrica 75~(1), 259--276.

\bibitem[{Fr{\"o}lich and Melly(2013)}]{FM13}
Fr{\"o}lich, M., Melly, B., 2013. Unconditional quantile treatment effects
  under endogeneity. Journal of Business \& Economic Statistics 31~(3),
  346--357.

\bibitem[{Galvao and Wang(2015)}]{GW14}
Galvao, A.~F., Wang, L., 2015. Uniformly semiparametric efficient estimation of
  treatment effects with a continuous treatment. Journal of the American
  Statistical Association 110, 1528--1542.

\bibitem[{Graham et~al.(2014)Graham, Imbens, and Ridder}]{GIG14}
Graham, B.~S., Imbens, G.~W., Ridder, G., 2014. Complementarity and aggregate
  implications of assortative matching: A nonparametric analysis. Quantitative
  Economics 5~(1), 29--66.

\bibitem[{Hahn(1998)}]{H98}
Hahn, J., 1998. On the role of the propensity score in efficient semiparametric
  estimation of average treatment effects. Econometrica, 315--331.

\bibitem[{Hirano and Imbens(2004)}]{HI04}
Hirano, K., Imbens, G.~W., 2004. The propensity score with continuous
  treatments. in Applied Bayesian Modeling and Causal Inference from
  Incomplete-Data Perspectives, ed. A. Gelman and X.-L. Meng 226164, 73--84.

\bibitem[{Hirano et~al.(2003)Hirano, Imbens, and Ridder}]{HIR03}
Hirano, K., Imbens, G.~W., Ridder, G., 2003. Efficient estimation of average
  treatment effects using the estimated propensity score. Econometrica 71~(4),
  1161--1189.

\bibitem[{Hoderlein and Mammen(2007)}]{HM07}
Hoderlein, S., Mammen, E., 2007. Identification of marginal effects in
  nonseparable models without monotonicity. Econometrica 75~(5), 1513--1518.

\bibitem[{Hoderlein et~al.(2016)Hoderlein, Su, White, and Yang}]{HSWY16}
Hoderlein, S., Su, L., White, H., Yang, T.~T., 2016. Testing for monotonicity
  in unobservables under unconfoundedness. Journal of Econometrics 193~(1),
  183--202.

\bibitem[{Imbens and Newey(2009)}]{IN09}
Imbens, G.~W., Newey, W.~K., 2009. Identification and estimation of triangular
  simultaneous equations models without additivity. Econometrica 77~(5),
  1481--1512.

\bibitem[{Kennedy et~al.(2017)Kennedy, Ma, McHugh, and Small}]{K17}
Kennedy, E.~H., Ma, Z., McHugh, M.~D., Small, D.~S., 2017. Non-parametric
  methods for doubly robust estimation of continuous treatment effects. Journal
  of the Royal Statistical Society: Series B (Statistical Methodology) 79~(4),
  1229--1245.

\bibitem[{Khan and Tamer(2010)}]{KT10}
Khan, S., Tamer, E., 2010. Irregular identification, support conditions, and
  inverse weight estimation. Econometrica 78~(6), 2021--2042.

\bibitem[{Kong et~al.(2015)Kong, Bondell, and Wu}]{K15}
Kong, D., Bondell, H.~D., Wu, Y., 2015. Domain selection for the varying
  coefficient model via local polynomial regression. Computational statistics
  \& data analysis 83, 236--250.

\bibitem[{Ledoux and Talagrand(2013)}]{LT13}
Ledoux, M., Talagrand, M., 2013. Probability in Banach Spaces: Isoperimetry and
  Processes. Springer Science \& Business Media.

\bibitem[{Lee and Mammen(2016)}]{LM16}
Lee, E.~R., Mammen, E., 2016. Local linear smoothing for sparse high
  dimensional varying coefficient models. Electronic Journal of Statistics
  10~(1), 855--894.

\bibitem[{Lewbel et~al.(2015)Lewbel, Lu, and Su}]{LLS15}
Lewbel, A., Lu, X., Su, L., 2015. Specification testing for transformation
  models with an application to generalized accelerated failure-time models.
  Journal of Econometrics 184~(1), 81--96.

\bibitem[{Li and Racine(2007)}]{LR07}
Li, Q., Racine, J.~S., 2007. Nonparametric econometrics: theory and practice.
  Princeton University Press.

\bibitem[{Matzkin(1994)}]{M94}
Matzkin, R.~L., 1994. Chapter restrictions of economic theory in nonparametric
  methods. In: Handbook of Econometrics. Vol.~4. Elsevier, Ch.~42, pp. 2523 --
  2558.

\bibitem[{Matzkin(2003)}]{RM03}
Matzkin, R.~L., 2003. Nonparametric estimation of nonadditive random functions.
  Econometrica 71~(5), 1339--1375.

\bibitem[{Newey(1994)}]{N94}
Newey, W.~K., 1994. The asymptotic variance of semiparametric estimators.
  Econometrica, 1349--1382.

\bibitem[{Ning and Liu(2017)}]{NL14}
Ning, Y., Liu, H., 2017. A general theory of hypothesis tests and confidence
  regions for sparse high dimensional models. The Annal of Statistics 45~(1),
  158--195.

\bibitem[{Pfanzagl(1990)}]{P90}
Pfanzagl, J., 1990. Estimation in semiparametric models. Springer.

\bibitem[{Powell(2010)}]{P10}
Powell, D., 2010. Unconditional quantile regression for panel data with
  exogenous or endogenous regressors. Working paper.

\bibitem[{Robins et~al.(2017)Robins, Li, Mukherjee, Tchetgen, and van~der
  Vaart}]{R17}
Robins, J.~M., Li, L., Mukherjee, R., Tchetgen, E.~T., van~der Vaart, A., 2017.
  Minimax estimation of a functional on a structured high-dimensional model.
  The Annals of Statistics 45~(5), 1951--1987.

\bibitem[{Robins and Rotnitzky(1995)}]{RR95}
Robins, J.~M., Rotnitzky, A., 1995. Semiparametric efficiency in multivariate
  regression models with missing data. Journal of the American Statistical
  Association 90~(429), 122--129.

\bibitem[{Robins and Rotnitzky(2001)}]{R01}
Robins, J.~M., Rotnitzky, A., 2001. Comments on inference for semiparametric
  models: some questions and an answer. Statistica Sinica 11, 920--936.

\bibitem[{Rosenbaum and Rubin(1983)}]{RR83}
Rosenbaum, P.~R., Rubin, D.~B., 1983. The central role of the propensity score
  in observational studies for causal effects. Biometrika 70~(1), 41--55.

\bibitem[{Sasaki(2015)}]{S15}
Sasaki, Y., 2015. What do quantile regressions identify for general structural
  functions? Econometric Theory 31~(05), 1102--1116.

\bibitem[{Su and Chen(2013)}]{SC13}
Su, L., Chen, Q., 2013. Testing homogeneity in panel data models with
  interactive fixed effects. Econometric Theory 29~(06), 1079--1135.

\bibitem[{Su and Hoshino(2016)}]{SH16}
Su, L., Hoshino, T., 2016. Sieve instrumental variable quantile regression
  estimation of functional coefficient models. Journal of Econometrics 191~(1),
  231--254.

\bibitem[{Su et~al.(2015)Su, Jin, and Zhang}]{SJZ15}
Su, L., Jin, S., Zhang, Y., 2015. Specification test for panel data models with
  interactive fixed effects. Journal of Econometrics 186~(1), 222--244.

\bibitem[{Tsiatis(2007)}]{t07}
Tsiatis, A., 2007. Semiparametric theory and missing data. Springer Science \&
  Business Media.

\bibitem[{{van der Laan} and Dudoit(2003)}]{VD03}
{van der Laan}, M., Dudoit, S., 2003. Unified cross-validation methodology for
  selection among estimators and a general cross-validated adaptive epsilon-net
  estimator: Finite sample oracle inequalities and examples, university of
  California, Berkeley.

\bibitem[{{van der Laan} and Robins(2003)}]{VR03}
{van der Laan}, M., Robins, J.~M., 2003. Unified methods for censored
  longitudinal data and causality. Springer Science \& Business Media.

\bibitem[{{van der Laan} and Rose(2011)}]{VR11}
{van der Laan}, M., Rose, S., 2011. Targeted learning: causal inference for
  observational and experimental data. Springer Science \& Business Media.

\bibitem[{{van der Laan} and Rubin(2006)}]{V06}
{van der Laan}, M., Rubin, D., 2006. Targeted maximum likelihood learning. The
  International Journal of Biostatistics 2~(1).

\bibitem[{van~der Vaart and Wellner(1996)}]{VW96}
van~der Vaart, A.~W., Wellner, J.~A., 1996. Weak Convergence and Empirical
  Processes: With Applications to Statistics. Springer.

\bibitem[{Wager and Athey(2018)}]{AW16}
Wager, S., Athey, S., 2018. Estimation and inference of heterogeneous treatment
  effects using random forests. Journal of the American Statistical Association
  113~(523), 1228--1242.

\bibitem[{Yu and Jones(1998)}]{yu1998local}
Yu, K., Jones, M., 1998. Local linear quantile regression. Journal of the
  American statistical Association 93~(441), 228--237.

\end{thebibliography}
\newpage }

\setcounter{page}{1} \renewcommand\thesection{\Alph{section}} %
\setcounter{section}{1} \linespread{1.2}

\begin{center}
Supplementary Material for

\textbf{\textquotedblleft Non-separable Models with High-dimensional
Data\textquotedblright }

[NOT\ INTENDED FOR\ PUBLICATION]

\bigskip

Liangjun Su$^{a},$ Takuya Ura$^{b}$ and Yichong
Zhang$^{a}$

$^{a}${School of Economics, Singapore Management University}

$^{b}${Department of Economics, University of California,
Davis\medskip }
\end{center}

\noindent This supplement is composed of four parts. Appendix \ref%
{sec:lem} provides the proofs of some technical lemmas used in the proofs of
the main results in the paper. Appendix \ref{sec:rearrange} studies the
rearrangement operator on a local process. Appendix \ref{sec:addsim} and \ref{sec:add_emp} report some additional simulation and application results, respectively. 

\section{Proofs of the Technical Lemmas}

\label{sec:lem} Lemma \ref{lem:main} and Lemma \ref{lem:minor} below
are closely related to Lemmas J.6 and O.2 in \cite{BCFH13} with one major
difference: we have an additional kernel function which affects the rate of
convergence. We follow the proof strategies in \cite{BCFH13} in general, but
use the local compatibility condition established in Lemma \ref{lem:localRE}
when needed. We include these proofs mainly for completeness. Lemma \ref%
{lem:E2} is proved without referring to the theory of
moderate deviations for self-normalized sums, in contrast to the proof of
Lemma J.1 in \cite{BCFH13}. Consequently, we have the additional $\ell _{n}$
term but avoid one constraint on the rates of $p$, $s$, and $n$, as
well.

\medskip

\begin{proof}[Proof of Lemma \protect\ref{lem:main}]
We define the following three events:
\begin{equation*}
E_{1}=\{C_{r}(\log (p\vee n)s/n)^{1/2}\geq \sup_{(t,u)\in \mathcal{T}%
\mathcal{U}}||\frac{r_{t,u}^{\phi }}{\omega _{t,u}^{1/2}}K(\frac{T-t}{h_1}%
)^{1/2}||_{\mathbb{P}_n,2}\},
\end{equation*}%
\begin{equation*}
E_{2}=\biggl\{\frac{\lambda }{n}\geq \sup_{(t,u)\in \mathcal{T}\mathcal{U}%
}C_{\lambda }\biggl|\biggl|\widehat{\Psi }_{t,u,0}^{-1}\mathbb{P}_{n}\biggl[%
\xi _{t,u}K(\frac{T-t}{h_1})b(X)\biggr]\biggr|\biggr|_{\infty }\biggr\},
\end{equation*}%
and 
\begin{equation*}
E_{3}=\{l\widehat{\Psi }_{t,u,0}\leq \widehat{\Psi }_{t,u}\leq L\widehat{%
\Psi }_{t,u,0}\quad \text{and}\quad C_{\psi }/2\leq \inf_{(t,u)\in \mathcal{%
TU}}||\widehat{\Psi }_{t,u,0}||_{\infty }\leq \sup_{(t,u)\in \mathcal{TU}}||%
\widehat{\Psi }_{t,u,0}||_{\infty }\leq 2/C_{\psi }\},
\end{equation*}%
where $l$, $L$, and $C_{\psi }$ are defined in the statement of Lemma \ref%
{lem:E32} and the generic penalty loading matrix is $\widehat{\Psi }_{t,u}=%
\widehat{\Phi }_{t,u}^{k}$ for $k=0,\cdots ,K$.

By Assumption \ref{ass:approx}.4, for an arbitrary $\varepsilon >0$,
we can choose $C_{r}$ and $n$ sufficiently large so that $\mathbb{P}(E_{1})\geq
1-\varepsilon .$ By Lemma \ref{lem:E2} below and the fact that $\ell
_{n}\rightarrow \infty $, for any $\varepsilon >0$ and any $C_{\lambda }>0$,
for $n$ sufficiently large, we have $\mathbb{P}(E_{2})\geq 1-\varepsilon .$ In
particular, we choose $C_{\lambda }$ such that $C_{\lambda }l>1$. Last, by
Lemma \ref{lem:E32} below, $\mathbb{P}(E_{3})>1-\varepsilon _{n}$ for some
deterministic sequence $\varepsilon _{n}\downarrow 0$. 

From now on we assume $E_{1}$, $E_{2}$, and $E_{3}$ hold with
constants $C_{r}$, $C_{\lambda }$, $l$, and $L$, which occurs with
probability greater than $1-2\varepsilon -\varepsilon _{n}$. Let $\delta
_{t,u}=\hat{\theta}_{t,u}-\theta _{t,u}$ and $\mathcal{S}^0_{t,u}=\text{Supp}%
(\theta _{t,u})$. Let%
\begin{equation*}
\Gamma _{t,u}=||\omega _{t,u}^{1/2}b(X)^{\prime }\delta _{t,u}K(\frac{T-t}{h_1}%
)^{1/2}||_{\mathbb{P}_n,2},
\end{equation*}%
and 
\begin{equation*}
\tilde{c}=\max (4(LC_{\lambda }+1)(lC_{\lambda }-1)^{-1}C_{\psi }^{-2},1).
\end{equation*}%
Then, under $E_{3}$, 
\begin{equation*}
\tilde{c}\geq \max ((LC_{\lambda }+1)/(lC_{\lambda }-1)\sup_{(t,u)\in 
\mathcal{T}\mathcal{U}}||\widehat{\Psi }_{t,u,0}||_{\infty }||\widehat{\Psi }%
_{t,u,0}^{-1}||_{\infty },1)\geq 1.
\end{equation*}%
Let $Q_{t,u}(\theta )=\mathbb{P}_{n}M(Y_{u},X;\theta )K(\frac{T-t}{h_1})$. By
the fact that $\hat{\theta}_{t,u}$ solves the minimization problem
in \eqref{eq:m}, we have 
\begin{equation}
\begin{split}
Q_{t,u}(\hat{\theta}_{t,u})-Q_{t,u}(\theta _{t,u})\leq & \frac{\lambda }{n}||%
\widehat{\Psi }_{t,u}\theta _{t,u}||_{1}-\frac{\lambda }{n}||\widehat{\Psi }%
_{t,u}\hat{\theta}_{t,u}||_{1} \\
\leq & \frac{\lambda }{n}||\widehat{\Psi }_{t,u}(\delta _{t,u})_{\mathcal{S}%
^0_{t,u}}||_{1}-\frac{\lambda }{n}||\widehat{\Psi }_{t,u}(\hat{\theta}%
_{t,u})_{\mathcal{S}_{t,u}^{0c}}||_{1} \\
=& \frac{\lambda }{n}||\widehat{\Psi }_{t,u}(\delta _{t,u})_{\mathcal{S}%
^0_{t,u}}||_{1}-\frac{\lambda }{n}||\widehat{\Psi }_{t,u}(\delta _{t,u})_{%
\mathcal{S}_{t,u}^{0c}}||_{1} \\
\leq & \frac{\lambda L}{n}||\widehat{\Psi }_{t,u,0}(\delta _{t,u})_{\mathcal{%
S}^0_{t,u}}||_{1}-\frac{\lambda l}{n}||\widehat{\Psi }_{t,u,0}(\delta
_{t,u})_{\mathcal{S}_{t,u}^{0c}}||_{1}.
\end{split}
\label{B.1}
\end{equation}%
Because the kernel function $K(\cdot )$ is nonnegative, $Q_{t,u}(\theta )$
is convex in $\theta $. It follows that $Q_{t,u}(\hat{\theta}%
_{t,u})-Q_{t,u}(\theta _{t,u})\geq \partial _{\theta }Q_{t,u}(\theta
_{t,u})^{\prime }\delta _{t,u}.$

Let $D_{t,u}=-\mathbb{P}_{n}b(X)\xi _{t,u}K(\frac{T-t}{h_1})$ and $\xi
_{t,u}=Y_{u}-\phi _{t,u}(X)$. Then, 
\begin{equation}
\begin{split}
& |\partial _{\theta }Q_{t,u}(\theta _{t,u})^{\prime }\delta _{t,u}| \\
=& |\mathbb{P}_{n}(\Lambda (b(X)^{\prime }\theta _{t,u})-Y_{u})K(\frac{T-t}{h_1%
})b(X)^{\prime }\delta _{t,u}| \\
=& |\mathbb{P}_{n}r_{t,u}^{\phi }b(X)^{\prime }\delta _{t,u}K(\frac{T-t}{h_1}%
)+D_{t,u}^{\prime }\delta _{t,u}| \\
\leq & ||\widehat{\Psi }_{t,u,0}^{-1}D_{t,u}||_{\infty }||\widehat{\Psi }%
_{t,u,0}\delta _{t,u}||_{1}+||\frac{r_{t,u}^{\phi }K(\frac{T-t}{h_1})^{1/2}}{%
\omega _{t,u}^{1/2}}||_{\mathbb{P}_n,2}\Gamma _{t,u} \\
\leq & \frac{\lambda }{nC_{\lambda }}||\widehat{\Psi }_{t,u,0}\delta
_{t,u}||_{1}+||\frac{r_{t,u}^{\phi }K(\frac{T-t}{h_1})^{1/2}}{\omega
_{t,u}^{1/2}}||_{\mathbb{P}_n,2}\Gamma _{t,u} \\
\leq & \frac{\lambda }{nC_{\lambda }}||\widehat{\Psi }_{t,u,0}(\delta
_{t,u})_{\mathcal{S}^0_{t,u}}||_{1}+\frac{\lambda }{nC_{\lambda }}||\widehat{%
\Psi }_{t,u,0}(\delta _{t,u})_{\mathcal{S}_{t,u}^{0c}}||_{1}+||\frac{%
r_{t,u}^{\phi }K(\frac{T-t}{h_1})^{1/2}}{\omega _{t,u}^{1/2}}%
||_{\mathbb{P}_n,2}\Gamma _{t,u},
\end{split}
\label{B.2}
\end{equation}%
where $r_{t,u}^{\phi }=r_{t,u}^{\phi }(X)$. Combining \eqref%
{B.1} and \eqref{B.2}, we have 
\begin{equation*}
\frac{\lambda (lC_{\lambda }-1)}{nC_{\lambda }}||\widehat{\Psi }%
_{t,u,0}(\delta _{t,u})_{\mathcal{S}_{t,u}^{0c}}||_{1}\leq \frac{\lambda
(LC_{\lambda }+1)}{nC_{\lambda }}||\widehat{\Psi }_{t,u,0}(\delta _{t,u})_{%
\mathcal{S}^0_{t,u}}||_{1}+||\frac{r_{t,u}^{\phi }K(\frac{T-t}{h_1})^{1/2}}{%
\omega _{t,u}^{1/2}}||_{\mathbb{P}_n,2}\Gamma _{t,u}.
\end{equation*}%
Then 
\begin{align*}
||(\delta _{t,u})_{\mathcal{S}_{t,u}^{0c}}||_{1}\leq & \frac{LC_{\lambda }+1%
}{lC_{\lambda }-1}||\widehat{\Psi }_{t,u,0}^{-1}||_{\infty }||\widehat{\Psi }%
_{t,u,0}(\delta _{t,u})_{\mathcal{S}^0_{t,u}}||_{1}+\frac{nC_{\lambda }||%
\widehat{\Psi }_{t,u,0}^{-1}||_{\infty }}{\lambda (C_{\lambda }l-1)}||\frac{%
r_{t,u}^{\phi }K(\frac{T-t}{h_1})^{1/2}}{\omega _{t,u}^{1/2}}%
||_{\mathbb{P}_n,2}\Gamma _{t,u} \\
\leq & \tilde{c}||(\delta _{t,u})_{\mathcal{S}^0_{t,u}}||_{1}+\frac{%
nC_{\lambda }||\widehat{\Psi }_{t,u,0}^{-1}||_{\infty }}{\lambda (C_{\lambda
}l-1)}||\frac{r_{t,u}^{\phi }K(\frac{T-t}{h_1})^{1/2}}{\omega _{t,u}^{1/2}}%
||_{\mathbb{P}_n,2}\Gamma _{t,u}.
\end{align*}%
We will consider two cases: $\delta _{t,u}\notin \Delta _{2\tilde{c},t,u}$
and $\delta _{t,u}\in \Delta _{2\tilde{c},t,u}.$

First, if $\delta _{t,u}\notin \Delta _{2\tilde{c},t,u}$, i.e., $%
||(\delta _{t,u})_{\mathcal{S}_{t,u}^{0c}}||_{1}\geq 2\tilde{c}||(\delta
_{t,u})_{\mathcal{S}^0_{t,u}}||_{1}$, then 
\begin{align*}
||\delta _{t,u}||_{1}\leq & (1+\frac{1}{\tilde{2c}})||(\delta _{t,u})_{%
\mathcal{S}_{t,u}^{0c}}||_{1} \\
\leq & (\tilde{c}+\frac{1}{2})||(\delta _{t,u})_{\mathcal{S}%
^0_{t,u}}||_{1}+(1+\frac{1}{2\tilde{c}})\frac{nC_{\lambda }||\widehat{\Psi }%
_{t,u,0}^{-1}||_{\infty }}{\lambda (C_{\lambda }l-1)}||\frac{r_{t,u}^{\phi
}K(\frac{T-t}{h_1})^{1/2}}{\omega _{t,u}^{1/2}}||_{\mathbb{P}_n,2}\Gamma _{t,u} \\
\leq & (\frac{1}{2}+\frac{1}{4\tilde{c}})||(\delta _{t,u})_{\mathcal{S}%
_{t,u}^{0c}}||_{1}+(1+\frac{1}{2\tilde{c}})\frac{nC_{\lambda }||\widehat{%
\Psi }_{t,u,0}^{-1}||_{\infty }}{\lambda (C_{\lambda }l-1)}||\frac{%
r_{t,u}^{\phi }K(\frac{T-t}{h_1})^{1/2}}{\omega _{t,u}^{1/2}}%
||_{\mathbb{P}_n,2}\Gamma _{t,u} \\
\leq & (\frac{1}{2}+\frac{1}{4\tilde{c}})||\delta _{t,u}||_{1}+(1+\frac{1}{2%
\tilde{c}})\frac{nC_{\lambda }||\widehat{\Psi }_{t,u,0}^{-1}||_{\infty }}{%
\lambda (C_{\lambda }l-1)}||\frac{r_{t,u}^{\phi }K(\frac{T-t}{h_1})^{1/2}}{%
\omega _{t,u}^{1/2}}||_{\mathbb{P}_n,2}\Gamma _{t,u}.
\end{align*}%
Noting that $\tilde{c}\geq 1$, we have 
\begin{align*}
||\delta _{t,u}||_{1}\leq & \biggl[\frac{4\tilde{c}+2}{2\tilde{c}-1}\biggr]%
\frac{nC_{\lambda }||\widehat{\Psi }_{t,u,0}^{-1}||_{\infty }}{\lambda
(C_{\lambda }l-1)}||\frac{r_{t,u}^{\phi }K(\frac{T-t}{h_1})^{1/2}}{\omega
_{t,u}^{1/2}}||_{\mathbb{P}_n,2}\Gamma _{t,u} \\
\leq & 6\frac{nC_{\lambda }||\widehat{\Psi }_{t,u,0}^{-1}||_{\infty }}{%
\lambda (C_{\lambda }l-1)}||\frac{r_{t,u}^{\phi }K(\frac{T-t}{h_1})^{1/2}}{%
\omega _{t,u}^{1/2}}||_{\mathbb{P}_n,2}\Gamma _{t,u}:=I_{t,u}.
\end{align*}%

Now, we consider the case where $\delta _{t,u}\in \Delta _{2\tilde{c}%
,t,u}$. By Lemma \ref{lem:localRE}, we have 
\begin{equation*}
\underline{\kappa }\leq \inf_{(t,u)\in \mathcal{T}\mathcal{U}}\min_{\delta
\in \Delta _{2\tilde{c},t,u}}\frac{||b(X)^{\prime }\delta K(\frac{T-t}{h_1}%
)^{1/2}||_{\mathbb{P}_n,2}}{\sqrt{h_1}||\delta _{\mathcal{S}^0_{t,u}}||_{2}}.
\end{equation*}%
In addition, $\omega _{t,u}\in (\underline{C}(1-\underline{C}),1/4)$. If $%
\delta _{t,u}\in \Delta _{2\tilde{c},t,u}$, then 
\begin{equation*}
||(\delta _{t,u})_{\mathcal{S}^0_{t,u}}||_{1}\leq \frac{\sqrt{s}}{\underline{%
\kappa }\sqrt{h_1}\omega _{t,u}^{1/2}}\Gamma _{t,u}:=II_{t,u}.
\end{equation*}%
In this case, $||\delta _{t,u}||_{1}\leq (1+2\tilde{c})II_{t,u}$. 

In sum, we have 
\begin{equation}
||\delta _{t,u}||_{1}\leq I_{t,u}+(1+2\tilde{c})II_{t,u}\leq \biggl(6\frac{%
nC_{\lambda }||\widehat{\Psi }_{t,u,0}^{-1}||_{\infty }}{\lambda (C_{\lambda
}l-1)}||\frac{r_{t,u}^{\phi }K(\frac{T-t}{h_1})^{1/2}}{\omega _{t,u}^{1/2}}%
||_{\mathbb{P}_n,2}+\frac{(1+2\tilde{c})\sqrt{s}}{\underline{\kappa }\sqrt{h_1}\omega
_{t,u}^{1/2}}\biggr)\Gamma _{t,u}  \label{eq:delta1}
\end{equation}%
and $\delta _{t,u}\in A_{t,u}:= \Delta _{2\tilde{c},t,u}\cup \{\delta
:||\delta ||_{1}\leq I_{t,u}\}.$ 

Recall $\tilde{r}_{t,u}^{\phi }=\Lambda ^{-1}(\Lambda (b(X)^{\prime
}\theta _{t,u})+r_{t,u}^{\phi })-b(X)^{\prime }\theta _{t,u}$ and denote 
\begin{equation*}
\overline{q}_{A_{t,u}}=\inf_{\delta \in A_{t,u}}\frac{[\mathbb{P}_{n}\omega
_{t,u}|b(X)^{\prime }\delta |^{2}K(\frac{T-t}{h_1})]^{3/2}}{\mathbb{P}%
_{n}[\omega _{t,u}|b(X)^{\prime }\delta |^{3}K(\frac{T-t}{h_1})]}.
\end{equation*}%
Then, w.p.a.1., for some $\overline{r}_{t,u}^{\phi }$ between $0$ and $%
r_{t,u}^{\phi }$,

\begin{align*}
|\tilde{r}_{t,u}^{\phi }|=& \{[\Lambda (b(X)^{\prime }\theta _{t,u})+%
\overline{r}_{t,u}^{\phi }][1-\Lambda (b(X)^{\prime }\theta _{t,u})-%
\overline{r}_{t,u}^{\phi }]\}^{-1}|r_{t,u}^{\phi }| \\
\in & [4|r_{t,u}^{\phi }|,\{(\underline{C}/2)(1-\underline{C}%
/2)\}^{-1}|r_{t,u}^{\phi }|],
\end{align*}%
where the second line holds because $\sup_{(t,u)\in \mathcal{T}\mathcal{U}%
}||r_{t,u}^{\phi }||_{\mathbb{P}, \infty }\overset{p}{\longrightarrow }0$. In
addition, by Lemma \ref{lem:minor} below and equations (\ref{B.1})--%
\eqref{eq:delta1}, we have 
\begin{align*}
& \min (\frac{1}{3}\Gamma _{t,u}^{2},\frac{\overline{q}_{A_{t,u}}}{3}\Gamma
_{t,u}) \\
\leq & Q_{t,u}(\theta _{t,u}+\delta _{t,u})-Q_{t,u}(\theta _{t,u})-\partial
_{\theta }Q_{t,u}(\theta _{t,u})^{\prime }\delta _{t,u}+2||\frac{\tilde{r}%
_{t,u}^{\phi }K(\frac{T-t}{h_1})^{1/2}}{\omega _{t,u}^{1/2}}||_{\mathbb{P}_n,2}\Gamma
_{t,u} \\
\leq & \frac{\lambda }{n}(L+\frac{1}{C_{\lambda }})||\widehat{\Psi }%
_{t,u,0}(\delta _{t,u})_{\mathcal{S}^0_{t,u}}||_{1}-\frac{\lambda }{n}(l-%
\frac{1}{C_{\lambda }})||\widehat{\Psi }_{t,u,0}(\delta _{t,u})_{\mathcal{S}%
_{t,u}^{0c}}||_{1}+3||\frac{\tilde{r}_{t,u}^{\phi }K(\frac{T-t}{h_1})^{1/2}}{%
\omega _{t,u}^{1/2}}||_{\mathbb{P}_n,2}\Gamma _{t,u} \\
\leq & \frac{\lambda }{n}(L+\frac{1}{C_{\lambda }})||\widehat{\Psi }%
_{t,u,0}||_{\infty }||\delta _{t,u}||_{1}+3||\frac{\tilde{r}_{t,u}^{\phi }K(%
\frac{T-t}{h_1})^{1/2}}{\omega _{t,u}^{1/2}}||_{\mathbb{P}_n,2}\Gamma _{t,u} \\
\leq & \biggl(9\tilde{c}||\frac{\tilde{r}_{t,u}^{\phi }K(\frac{T-t}{h_1})^{1/2}%
}{\omega _{t,u}^{1/2}}||_{\mathbb{P}_n,2}+\frac{\lambda }{n}(L+\frac{1}{C_{\lambda }%
})||\widehat{\Psi }_{t,u,0}||_{\infty }\frac{(1+2\tilde{c})\sqrt{s}}{%
\underline{\kappa }\sqrt{h_1}}\biggr)\Gamma _{t,u},
\end{align*}%
where the last inequality holds because $|r_{t,u}^{\phi }|\leq |\tilde{r}%
_{t,u}^{\phi }|$. If 
\begin{equation}
\overline{q}_{A_{u,r}}>3\biggl\{9\tilde{c}||\frac{\tilde{r}_{t,u}^{\phi }K(%
\frac{T-t}{h_1})^{1/2}}{\omega _{t,u}^{1/2}}||_{\mathbb{P}_n,2}+\frac{\lambda }{n}(L+%
\frac{1}{C_{\lambda }})||\widehat{\Psi }_{t,u,0}||_{\infty }\frac{(1+2\tilde{%
c})\sqrt{s}}{\underline{\kappa }\sqrt{h_1}}\biggr\},  \label{eq:q}
\end{equation}%
then

\begin{equation}
\Gamma_{t,u} \leq 3\biggl\{9 \tilde{c}||\frac{\tilde{r}^\phi_{t,u}K(\frac{T-t%
}{h_1})^{1/2}}{\omega_{t,u}^{1/2}}||_{\mathbb{P}_n,2} + \frac{\lambda}{n}(L + \frac{1}{%
C_\lambda})||\widehat{\Psi}_{t,u,0}||_\infty \frac{(1+2\tilde{c})\sqrt{s}}{%
\underline{\kappa}\sqrt{h_1}} \biggr\}  \label{eq:gammarate1}
\end{equation}
and 
\begin{align}
\begin{split}
||\delta_{t,u}||_1 \leq & \biggl(6\frac{nC_\lambda ||\widehat{\Psi}%
^{-1}_{t,u,0}||_\infty}{\lambda(C_\lambda l-1)}||\frac{r^\phi_{t,u}K(\frac{%
T-t}{h_1})^{1/2}}{\omega_{t,u}^{1/2}}||_{\mathbb{P}_n,2} + \frac{(1+2\tilde{c})\sqrt{s}%
}{\underline{\kappa}\sqrt{h_1}} \biggr) \\
& \times 3\biggl\{9 \tilde{c}||\frac{\tilde{r}^\phi_{t,u}K(\frac{T-t}{h_1}%
)^{1/2}}{\omega_{t,u}^{1/2}}||_{\mathbb{P}_n,2} + \frac{\lambda}{n}(L + \frac{1}{%
C_\lambda})||\widehat{\Psi}_{t,u,0}||_\infty \frac{(1+2\tilde{c})\sqrt{s}}{%
\underline{\kappa}\sqrt{h_1}} \biggr\}.  \label{eq:gammarate2}
\end{split}%
\end{align}
Since $E_1$ holds, 
\begin{equation*}
\sup_{(t,u) \in \mathcal{T} \mathcal{U}}||\frac{\tilde{r}^\phi_{t,u}K(\frac{%
T-t}{h_1})^{1/2}}{\omega_{t,u}^{1/2}}||_{\mathbb{P}_n,2} \leq [\underline{C}/2(1-%
\underline{C}/2)]^{-1}C_r(\sqrt{\frac{\log(p \vee n)s}{n}}).
\end{equation*}
Further note that $\lambda = \ell_n(\log(p \vee n)nh_1)^{1/2}$. Hence, if %
\eqref{eq:q} holds, then \eqref{eq:gammarate1} and \eqref{eq:gammarate2}
imply that 
\begin{equation*}
\sup_{(t,u) \in \mathcal{T} \mathcal{U}}\Gamma_{t,u} \leq C_\Gamma \ell_n
(\log(p \vee n)s)^{1/2}n^{-1/2}
\end{equation*}
with $C_{\Gamma} = 3(9 \tilde{c}[\underline{C}/2(1-\underline{C}/2)]^{-1}C_r
+(LC_\lambda + 1)2C_\psi(1+2 \tilde{c})/\underline{\kappa})$ and 
\begin{equation*}
\sup_{(t,u) \in \mathcal{T} \mathcal{U}}||\delta_{t,u}||_1 \leq C_1 \ell_n
(\log(p \vee n)s^2)^{1/2}(nh_1)^{-1/2}
\end{equation*}
with $C_1 = \frac{2(1+2\tilde{c})}{\underline{\kappa}}C_\Gamma$, which are
the desired results.

Last, we verify \eqref{eq:q}. By Lemma \ref{lem:minor}, since $\ell
_{n}^{2}\log (p\vee n)s^{2}\zeta _{n}^{2}/(nh_1)\rightarrow 0$, 
\begin{equation*}
\frac{\overline{q}_{A_{u,r}}}{3\biggl\{9\tilde{c}||\frac{\tilde{r}%
_{t,u}^{\phi }K(\frac{T-t}{h_1})^{1/2}}{\omega _{t,u}^{1/2}}||_{\mathbb{P}_n,2}+\frac{%
\lambda }{n}(L+\frac{1}{C_{\lambda }})||\widehat{\Psi }_{t,u,0}||_{\infty }%
\frac{(1+2\tilde{c})\sqrt{s}}{\underline{\kappa }\sqrt{h_1}}\biggr\}}\geq c%
\sqrt{\frac{nh_1}{\log (p\vee n)s^{2}\zeta _{n}^{2}\ell _{n}^{2}}}\rightarrow
\infty .
\end{equation*}%
This concludes the proof.  
\end{proof}

\bigskip

\begin{proof}[Proof of Lemma \protect\ref{lem:E2}]
By Lemma \ref{lem:E32} below, $\widehat{\Psi }_{t,u}^{-1}$ is
bounded away from zero w.p.a.1, uniformly over $(t,u)$. Therefore, we can
just focus on bounding 
\begin{equation*}
\sup_{(t,u)\in \mathcal{T}\mathcal{U}}\biggl|\biggl|\mathbb{P}_{n}\biggl[\xi
_{t,u}K(\frac{T-t}{h_1})b(X)\biggr]\biggr|\biggr|_{\infty }.
\end{equation*}%
For $j$-th element, $1\leq j\leq p$, 
\begin{equation*}
|\mathbb{E}[\xi _{t,u}K(\frac{T-t}{h_1})b_{j}(X)]|\leq c\mathbb{E}%
|b_{j}(X)|h_1^{3}\leq c||b_{j}(X)||_{P,2}h_1^{3}\leq ch_1^{3}.
\end{equation*}%
where $c$ is a universal constant independent of $(j,t,u,n)$. In addition, 
\begin{equation*}
nh_1^{3}/(\log (p\vee n)h_1n)^{1/2}=(nh_1^{5}/\log (p\vee n))^{1/2}\rightarrow 0.
\end{equation*}%
Therefore, 
\begin{equation*}
\sup_{(t,u)\in \mathcal{T}\mathcal{U}}||\mathbb{E}[\xi _{t,u}K(\frac{T-t}{h_1}%
)b(X)]||_{\infty } = o((\log (p\vee n)h_1/n)^{1/2} ).
\end{equation*}
Next, We turn to the centered term: $\sup_{g\in \mathcal{G}}|(\mathbb{P}_{n}-%
\mathbb{P})g|,$ where $\mathcal{G}=\{\xi _{t,u}b_{j}(X)K(\frac{T-t}{h_1}%
):(t,u)\in \mathcal{T}\mathcal{U},1\leq j\leq p\}$ with envelope $G=%
\overline{C}_{K}\zeta _{n}$. Note that $\sup_{g\in \mathcal{G}}\mathbb{E}%
g^{2}\lesssim h_1$ and $\sup_{Q}N(\mathcal{G},e_{Q},\varepsilon ||G||)\leq p%
\biggl(\frac{A}{\varepsilon }\biggr)^{v}$ for some $A>e$ and $v>0$. So by
Corollary 5.1 of \cite{CCK14}, we have 
\begin{equation*}
\mathbb{E}\sup_{g\in \mathcal{G}}|(\mathbb{P}_{n}-\mathbb{P})g|\leq (\log
(p\vee n)h_1/n)^{1/2}+\log (p\vee n)\zeta _{n}/n\lesssim (\log (p\vee
n)h_1/n)^{1/2}
\end{equation*}%
because $\log (p\vee n)\zeta _{n}^{2}/(nh_1)\rightarrow 0$.  
\end{proof}

\bigskip

\begin{proof}[Proof of Lemma \protect\ref{lem:E32}]
For the first result, we have
\begin{equation*}
\mathbb{E}(Y_{u}-\phi _{t,u}(X))^{2}b_{j}^{2}(X)K(\frac{T-t}{h_1})^{2}h_1^{-1}.
\end{equation*}%
Let $\kappa _{1}=\int K(u)^{2}du$. Then, 
\begin{align*}
& \mathbb{E}(Y_{u}-\phi _{t,u}(X))^{2}b_{j}^{2}(X)K(\frac{T-t}{h_1})^{2}h_1^{-1}
\\
=& \mathbb{E}\int \biggl[\phi _{t+h_1v,u}(X)-2\phi _{t+h_1v,u}(X)\phi
_{t,u}(X)+\phi _{t,u}^{2}(X)\biggr]f_{t+h_1v}(X)K(v)^{2}dvb_{j}^{2}(X) \\
\geq & \underline{C}\mathbb{E}\int \biggl[\phi _{t,u}(X)(1-\phi
_{t,u}(X))-h_1|\partial _{t}\phi _{\tilde{t},u}(X)v|\biggr]%
K(v)^{2}dvb_{j}^{2}(X) \\
\geq & \kappa _{1}\underline{C}^{2}(1-\underline{C})\mathbb{E}%
b_{j}^{2}(X)/2\geq C_{\psi }.
\end{align*}%
Similarly, 
\begin{align*}
& \mathbb{E}(Y_{u}-\phi _{t,u}(X))^{2}b_{j}^{2}(X)K(\frac{T-t}{h_1})^{2}h_1^{-1}
\\
=& \mathbb{E}\int \biggl[\phi _{t+h_1v,u}(X)-2\phi _{t+h_1v,u}(X)\phi
_{t,u}(X)+\phi _{t,u}^{2}(X)\biggr]f_{t+h_1v}(X)K(v)^{2}dvb_{j}^{2}(X) \\
\leq & \underline{C}\mathbb{E}\int \biggl[\phi _{t,u}(X)(1-\phi
_{t,u}(X))+h_1|\partial _{t}\phi _{\tilde{t},u}(X)v|\biggr]%
K(v)^{2}dvb_{j}^{2}(X) \\
\leq & 2\kappa _{1}\underline{C}\mathbb{E}b_{j}^{2}(X)\leq 1/C_{\psi }.
\end{align*}

In addition, denote $\mathcal{F}=\{\frac{1}{h_1}K(\frac{%
		T-t}{h_1})^2(Y_u - \phi_{t,u}(X))^2b_{j}^{2}(X):(t,u)\in \mathcal{TU},j=1,\cdots ,p\}$ with
	envelope $C\zeta _{n}^{2}/h_1$. The entropy of $\mathcal{F}$ is bounded by $p(\frac{A}{%
		\varepsilon })^{v}$. In addition, $\sup_{f\in \mathcal{F}}\mathbb{E}%
	f^{2}\lesssim \zeta _{n}^{2}/h_1$. Therefore, 
	\begin{equation*}
	||\mathbb{P}_{n}-\mathbb{P}||_{\mathcal{F}}^{2}\lesssim O_{p}(\log (p\vee
	n)\zeta _{n}^{2}/(nh_1)) = o_p(1).
	\end{equation*}%
	Therefore, w.p.a.1, 
	\begin{align*}
	C_{\psi}/2\leq & \inf_{(t,u)\in \mathcal{TU}, j=1,\cdots,p}\mathbb{P}_n(Y_{u}-\phi _{t,u}(X))^{2}b_{j}^{2}(X)K(\frac{T-t}{h_1})^{2}h_1^{-1}\\
	\leq & \sup_{(t,u)\in \mathcal{TU}, j=1,\cdots,p}\mathbb{P}_n(Y_{u}-\phi _{t,u}(X))^{2}b_{j}^{2}(X)K(\frac{T-t}{h_1})^{2}h_1^{-1}\leq 2/C_{\psi}.
	\end{align*}%

For $k=0$, we let $\mathcal{F}=\{\frac{1}{h_1}K(\frac{%
	T-t}{h_1})^2Y^2_ub_{j}^{2}(X):(t,u)\in \mathcal{TU},j=1,\cdots ,p\}$ with envelope $C\zeta_n^2/h_1$. By the same argument as
	above, we can show that, w.p.a.1, 
	\begin{align*}
C_{0}/2\leq & \inf_{(t,u)\in \mathcal{TU}, j=1,\cdots,p}\mathbb{P}_nY^2_{u}b_{j}^{2}(X)K(\frac{T-t}{h_1})^{2}h_1^{-1}\\
\leq & \sup_{(t,u)\in \mathcal{TU}, j=1,\cdots,p}\mathbb{P}_nY^2_{u}b_{j}^{2}(X)K(\frac{T-t}{h_1})^{2}h_1^{-1}\leq 2/C_{0}.
\end{align*}%
	For $k\geq 1$, we have, w.p.a.1, 
	\begin{align*}
	& \sup_{{t,u}\in \mathcal{TU},j=1,\cdots ,p}\mathbb{P}_{n}(Y_{u}-\hat{\phi}^{k-1} _{t,u}(X))^{2}b_{j}^{2}(X)K(\frac{T-t}{h_1})^{2}h_1^{-1} \\
	\leq & 1.5 \sup_{{t,u}\in \mathcal{TU},j=1,\cdots ,p}\mathbb{P}_{n}(Y_{u}-\phi _{t,u}(X))^{2}b_{j}^{2}(X)K(\frac{T-t}{h_1})^{2}h_1^{-1}\\
	& +3\sup_{{t,u}\in \mathcal{TU},j=1,\cdots
		,p}\mathbb{P}_{n}(\hat{\phi}^{k-1} _{t,u}(X)-\phi _{t,u}(X))^{2}b_{j}^{2}(X)K(\frac{T-t}{h_1})^{2}h_1^{-1} \\
	\leq & 1.5\sup_{{t,u}\in \mathcal{TU},j=1,\cdots ,p}l_{t,u,0,j}^{2}+o_p(1) \\
	\leq & 2/C_{\psi}.
	\end{align*}%
	Similarly, we can show that w.p.a.1. 
	$$\inf_{{t,u}\in \mathcal{TU},j=1,\cdots ,p}\mathbb{P}_{n}(Y_{u}-\hat{\phi}^{k-1} _{t,u}(X))^{2}b_{j}^{2}(X)K(\frac{T-t}{h_1})^{2}h_1^{-1} \geq C_{\psi}/2.$$ 
	This concludes the second result with $C_{k}=C_{\psi}$ for $k=1,\cdots ,K$. The last result holds with $l=\min (C_{0}C_{\psi}/4,\cdots ,C_{k}C_{\psi}/4,1)$ and $L=\max
	(4/(C_{0}C_{\psi}),\cdots ,4/(C_{k}C_{\psi}),1).$ 
\end{proof}

\bigskip

\begin{proof}[Proof of Lemma \protect\ref{lem:localRE2}]
Following the same arguments as used in the proof of Lemma \ref%
{lem:localRE} and by Assumption \ref{ass:rate2}, we have, w.p.a.1, 
\begin{align*}
& \sup_{t\in \mathcal{T},||\delta ||_{2}=1,||\delta ||_{0}\leq s\ell
_{n}}||b(X)^{\prime }\delta K(\frac{T-t}{h_1})^{1/2}||_{\mathbb{P}_n,2}^{2} \\
\leq & \sup_{t\in \mathcal{T},||\delta ||_{2}=1,||\delta ||_{0}\leq s\ell
_{n}}|(\mathbb{P}_{n}-\mathbb{P})(b(X)^{\prime }\delta )^{2}K(\frac{T-t}{h_1}%
)| +\sup_{t\in \mathcal{T},||\delta ||_{2}=1,||\delta ||_{0}\leq s\ell _{n}}|%
\mathbb{P}(b(X)^{\prime }\delta )^{2}K(\frac{T-t}{h_1})| \\
\leq & O_{p}(h_1\pi _{n1})+\underline{C}^{-1}h_1\sup_{t\in \mathcal{T},||\delta
||_{2}=1,||\delta ||_{0}\leq s\ell _{n}}|\mathbb{P}(b(X)^{\prime }\delta
)^{2}| \\
\leq & o_{p}(h_1)+\underline{C}^{-1}h_1(\sup_{t\in \mathcal{T},||\delta
||_{2}=1,||\delta ||_{0}\leq s\ell _{n}}|(\mathbb{P}_{n}-\mathbb{P}%
)(b(X)^{\prime }\delta )^{2}|+\sup_{t\in \mathcal{T},||\delta
||_{2}=1,||\delta ||_{0}\leq s\ell _{n}}|\mathbb{P}_{n}(b(X)^{\prime }\delta
)^{2}|) \\
\leq & o_{p}(h_1)+\underline{C}^{-1}h_1(O_{p}(\pi _{n2})+\kappa ^{^{\prime
\prime }2}) \\
\leq & 2\underline{C}^{-1}\kappa ^{^{\prime \prime }2}h_1,
\end{align*}%
where the second inequality holds because 
\begin{equation*}
\mathbb{E}(b(X)^{\prime }\delta )^{2}K(\frac{T-t}{h_1})=\mathbb{E}%
(b(X)^{\prime }\delta )^{2}\int f_{t+h_1u}(X)K(u)du\leq \frac{\mathbb{E}%
(b(X)^{\prime }\delta )^{2}}{\underline{C}}.
\end{equation*}%
\end{proof}

\bigskip

\begin{lem}
Recall that $Q_{t,u}(\theta )=\mathbb{P}_{n}M(Y_{u},X;\theta )K(%
\frac{T-t}{h_1})$. Let $\overline{q}_{A_{t,u}}=\inf_{\delta \in A_{t,u}}\frac{[%
\mathbb{P}_{n}\omega _{t,u}|b(X)^{\prime }\delta |^{2}K(\frac{T-t}{h_1})]^{3/2}%
}{\mathbb{P}_{n}\omega _{t,u}|b(X)^{\prime }\delta |^{3}K(\frac{T-t}{h_1})},$ $%
\Gamma _{t,u}^{\delta }=||\omega _{t,u}^{1/2}b(X)^{\prime }\delta K(\frac{T-t%
}{h_1})^{1/2}||_{\mathbb{P}_n,2}$, and $s_{t,u}=||\theta _{t,u}||_{0}$. Let events $%
E_{1}$, $E_{2}$, and $E_{3}$ defined in the proof of Lemma \ref{lem:main}
hold. Then, for any $(t,u)\in \mathcal{T}\mathcal{U}$ and $\delta \in A_{t,u}
$, we have 
\begin{align*}
F_{t,u}(\delta ):=Q_{t,u}(\theta _{t,u}+\delta )& -Q_{t,u}(\theta
_{t,u})-\partial _{\theta }Q_{t,u}(\theta _{t,u})^{\prime }\delta +2||\frac{%
\tilde{r}_{t,u}^{\phi }K(\frac{T-t}{h_1})^{1/2}}{\omega _{t,u}^{1/2}}%
||_{\mathbb{P}_n,2}\Gamma _{t,u}^{\delta } \\
\geq & \min (\frac{1}{3}||\omega _{t,u}^{1/2}b(X)^{\prime }\delta K(\frac{T-t%
}{h_1})^{1/2}||_{\mathbb{P}_n,2}^{2},\frac{1}{3}\overline{q}_{A_{t,u}}\Gamma
_{t,u}^{\delta })
\end{align*}%
and w.p.a.1, 
\begin{equation*}
\overline{q}_{A_{t,u}}\geq \frac{1}{\zeta _{n}}\min \biggl(\frac{\underline{%
\kappa }\sqrt{h_1}}{\sqrt{s_{t,u}}(1+2\tilde{c})},\frac{(\lambda
/n)(lC_{\lambda }-1)}{6c||\widehat{\Psi }_{t,u,0}^{-1}||_{\infty }||\frac{%
r_{t,u}^{\phi }K(\frac{T-t}{h_1})^{1/2}}{\omega _{t,u}^{1/2}}||_{\mathbb{P}_n,2}}%
\biggr).
\end{equation*}%
\label{lem:minor} 
\end{lem}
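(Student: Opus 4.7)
The statement has two assertions, which I would establish in sequence. For the functional lower bound on $F_{t,u}(\delta)$, set $\phi(s) = Q_{t,u}(\theta_{t,u}+s\delta)$; then the first three terms of $F_{t,u}(\delta)$ equal $\phi(1)-\phi(0)-\phi'(0) = \int_0^1 (1-s)\phi''(s)\,ds$ with $\phi''(s) = \mathbb{P}_n \sigma_s(1-\sigma_s)(b(X)'\delta)^2 K(\frac{T-t}{h})$ and $\sigma_s = \Lambda(b(X)'(\theta_{t,u}+s\delta))$. The self-concordance of the logistic link, $\bigl|\tfrac{d}{ds}\log[\sigma_s(1-\sigma_s)]\bigr| = |(1-2\sigma_s)b(X)'\delta| \leq |b(X)'\delta|$, yields $\sigma_s(1-\sigma_s) \geq \sigma_0(1-\sigma_0)e^{-s|b(X)'\delta|}$; integrating gives Bach's inequality $\phi(1)-\phi(0)-\phi'(0) \geq \mathbb{P}_n \sigma_0(1-\sigma_0)(e^{-|b(X)'\delta|}+|b(X)'\delta|-1)K(\frac{T-t}{h})$, which via the elementary bound $e^{-x}+x-1\geq x^2/(2+x)$ becomes $\geq \mathbb{P}_n \sigma_0(1-\sigma_0)\tfrac{(b(X)'\delta)^2}{2+|b(X)'\delta|} K(\frac{T-t}{h})$. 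Swapping the Hessian weight $\sigma_0(1-\sigma_0)$ for $\omega_{t,u}$ costs at most $\lesssim \|\tilde r_{t,u}^\phi K^{1/2}/\omega_{t,u}^{1/2}\|_{P_n,2}\,\Gamma^\delta_{t,u}$ (from the mean-value bound $|\sigma_0(1-\sigma_0)-\omega_{t,u}|\lesssim |\tilde r_{t,u}^\phi|$ guaranteed by Assumption~\ref{ass:ker}.4 together with one Cauchy--Schwarz step), and this is precisely what the additive $2\|\cdot\|_{P_n,2}\Gamma^\delta_{t,u}$ term in $F_{t,u}$ absorbs. One further Cauchy--Schwarz, $(\Gamma^\delta_{t,u})^4 \leq \bigl(\mathbb{P}_n \omega_{t,u}\tfrac{(b'\delta)^2}{2+|b'\delta|}K\bigr)\bigl(2(\Gamma^\delta_{t,u})^2 + \mathbb{P}_n \omega_{t,u}|b'\delta|^3 K\bigr)$, combined with the definition of $\bar q_{A_{t,u}}$, gives $\mathbb{P}_n \omega_{t,u}\tfrac{(b'\delta)^2}{2+|b'\delta|}K \geq (\Gamma^\delta_{t,u})^2/(2+\Gamma^\delta_{t,u}/\bar q_{A_{t,u}}) \geq \min\bigl(\tfrac{1}{3}(\Gamma^\delta_{t,u})^2,\tfrac{1}{3}\bar q_{A_{t,u}}\Gamma^\delta_{t,u}\bigr)$, proving the first claim.

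For the bound on $\bar q_{A_{t,u}}$, I would use $\mathbb{P}_n \omega_{t,u}|b(X)'\delta|^3 K(\frac{T-t}{h}) \leq \sup_i |b(X_i)'\delta|(\Gamma^\delta_{t,u})^2 \leq \zeta_n\|\delta\|_1(\Gamma^\delta_{t,u})^2$ to reduce the claim to lower-bounding $\inf_{\delta\in A_{t,u}}\Gamma^\delta_{t,u}/\|\delta\|_1$, and then split $A_{t,u} = \Delta_{2\tilde c,t,u}\cup\{\delta:\|\delta\|_1\leq I_{t,u}\}$. On the cone $\Delta_{2\tilde c,t,u}$, the cone constraint together with sparsity gives $\|\delta\|_1\leq(1+2\tilde c)\sqrt{s_{t,u}}\|\delta_{\mathcal{S}^0_{t,u}}\|_2$, and the local compatibility condition (Lemma~\ref{lem:localRE}) together with $\omega_{t,u}$ uniformly bounded away from zero yields $\Gamma^\delta_{t,u}\gtrsim \underline\kappa\sqrt{h}\|\delta_{\mathcal{S}^0_{t,u}}\|_2$, producing the first branch $\tfrac{1}{\zeta_n}\cdot \tfrac{\underline\kappa\sqrt{h}}{(1+2\tilde c)\sqrt{s_{t,u}}}$ of the min. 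On the complementary piece, the definition of $I_{t,u}$ directly gives $\Gamma^\delta_{t,u}/\|\delta\|_1 \geq (\lambda/n)(lC_\lambda-1)/(6C_\lambda\|\widehat\Psi_{t,u,0}^{-1}\|_\infty\,\|r_{t,u}^\phi K^{1/2}/\omega_{t,u}^{1/2}\|_{P_n,2})$, i.e., the second branch of the min after dividing by $\zeta_n$.

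The main obstacle is the weight-swap in the first part: carefully moving from the Hessian weight $\sigma_0(1-\sigma_0) = \Lambda(b'\theta_{t,u})(1-\Lambda(b'\theta_{t,u}))$ (the natural object of the self-concordance analysis) to the target weight $\omega_{t,u} = \Lambda(b'\theta_{t,u}+\tilde r_{t,u}^\phi)(1-\Lambda(b'\theta_{t,u}+\tilde r_{t,u}^\phi))$ used in the stated $\bar q_{A_{t,u}}$, so that the discrepancy is absorbed \emph{exactly} into the $2\|\tilde r_{t,u}^\phi K^{1/2}/\omega_{t,u}^{1/2}\|_{P_n,2}\Gamma^\delta_{t,u}$ correction---this is what forces the specific factor $2$ in the definition of $F_{t,u}$. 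Once the elementary inequality $e^{-x}+x-1\geq x^2/(2+x)$ is in place, the clean $\tfrac{1}{3}$-constant in the final min bound drops out of the second Cauchy--Schwarz step. The second assertion is then mostly bookkeeping on top of the local compatibility condition already delivered by Lemma~\ref{lem:localRE}.
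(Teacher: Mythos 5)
Your proposal is correct, and for the first assertion it takes a genuinely different route from the paper. The paper (following Lemma O.2 of BCFH13) introduces the shifted function $g_{t,u}(s)=\log[1+\exp(b(X)'(\theta_{t,u}+s\delta)+\tilde r^{\phi}_{t,u})]K(\tfrac{T-t}{h})$, whose Hessian weight at $s=0$ is exactly $\omega_{t,u}$, applies the quadratic-minus-cubic self-concordance bound ($\tfrac{x^{2}}{2}-\tfrac{|x|^{3}}{6}$, via BCFH13's Lemmas O.3--O.4), controls the discrepancy $\Upsilon_{t,u}=\tilde g_{t,u}-g_{t,u}$ through a derivative bound (this is where the paper's factor $2$ comes from), and then finishes with a two-case argument: the case $\Gamma^{\delta}_{t,u}\le\overline q_{A_{t,u}}$ directly, and the case $\Gamma^{\delta}_{t,u}>\overline q_{A_{t,u}}$ via convexity of $F_{t,u}$ and the rescaling $\tilde\delta=\delta\,\overline q_{A_{t,u}}/\Gamma^{\delta}_{t,u}$. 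You instead work with $Q_{t,u}$ directly: integral remainder plus logistic self-concordance gives the $e^{-x}+x-1\ge x^{2}/(2+x)$ form, a Lipschitz weight swap moves $\Lambda(b'\theta_{t,u})(1-\Lambda(b'\theta_{t,u}))$ to $\omega_{t,u}$, and a single Cauchy--Schwarz step combined with the definition of $\overline q_{A_{t,u}}$ yields $\mathbb{P}_{n}\omega_{t,u}\tfrac{(b'\delta)^{2}}{2+|b'\delta|}K\ge (\Gamma^{\delta}_{t,u})^{2}/(2+\Gamma^{\delta}_{t,u}/\overline q_{A_{t,u}})\ge\min(\tfrac13(\Gamma^{\delta}_{t,u})^{2},\tfrac13\overline q_{A_{t,u}}\Gamma^{\delta}_{t,u})$, which eliminates both the auxiliary function and the convexity/rescaling case split; your verification inequalities all check out. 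The one place you should be explicit rather than write ``$\lesssim$'' is the swap: since $z\mapsto\Lambda(z)(1-\Lambda(z))$ is globally $\tfrac14$-Lipschitz, $|\Lambda(b'\theta_{t,u})(1-\Lambda(b'\theta_{t,u}))-\omega_{t,u}|\le\tfrac14|\tilde r^{\phi}_{t,u}|$, so (using $\tfrac{(b'\delta)^{2}}{2+|b'\delta|}\le|b'\delta|$ and one Cauchy--Schwarz with the $\omega^{1/2}$ split) the swap costs at most $\tfrac14\|\tilde r^{\phi}_{t,u}K^{1/2}/\omega_{t,u}^{1/2}\|_{P_{n},2}\Gamma^{\delta}_{t,u}$, which is indeed absorbed by the additive term with factor $2$; a hidden constant larger than $2$ would break the absorption, so this needs to be stated (note Assumption on $\mathbb{E}(Y_u(t)|X)$ is not even needed for this step). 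For the second assertion your argument coincides with the paper's (reduce to $\inf_{\delta\in A_{t,u}}\Gamma^{\delta}_{t,u}/(\zeta_{n}\|\delta\|_{1})$, split $A_{t,u}$ into the cone, handled by Lemma \ref{lem:localRE}, and the $\ell_{1}$-ball, handled by the definition of $I_{t,u}$), including the same slight looseness the paper itself has on the ball branch, where the stated bound uses $\Gamma^{\delta}_{t,u}/I_{t,u}$ as if $\Gamma^{\delta}_{t,u}\ge\Gamma_{t,u}$, which is what holds for the $\delta=\delta_{t,u}$ relevant in the application within Lemma \ref{lem:main}.
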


\bigskip

\begin{proof}
The proof follows closely from that of Lemma O.2 in \cite{BCFH13}.
Note that 
\begin{equation*}
Q_{t,u}(\theta _{t,u}+\delta )-Q_{t,u}(\theta _{t,u})-\partial _{\theta
}Q_{t,u}(\theta _{t,u})^{\prime }\delta =\mathbb{P}_{n}[\tilde{g}_{t,u}(1)-%
\tilde{g}_{t,u}(0)-\tilde{g}_{t,u}^{\prime }(0)],
\end{equation*}%
where $\tilde{g}_{t,u}(s)=\log [1+\exp (b(X)^{\prime }(\theta _{t,u}+s\delta
))]K(\frac{T-t}{h_1})$. Let $g_{t,u}(s)=\log [1+\exp (b(X)^{\prime }(\theta
_{t,u}+s\delta )+\tilde{r}_{t,u}^{\phi })]K(\frac{T-t}{h_1})$. Then 
\begin{equation*}
g_{t,u}^{\prime }(0)=(b(X)^{\prime }\delta )\mathbb{E}(Y_{u}|X,T=t)K(\frac{%
T-t}{h_1}),
\end{equation*}%
\begin{equation*}
g_{t,u}^{^{\prime \prime }}(0)=(b(X)^{\prime }\delta )^{2}\mathbb{E}%
(Y_{u}|X,T=t)(1-\mathbb{E}(Y_{u}|X,T=t))K(\frac{T-t}{h_1}),
\end{equation*}%
and 
\begin{equation*}
g_{t,u}^{^{\prime \prime \prime }}(0)=(b(X)^{\prime }\delta )^{3}\mathbb{E}%
(Y_{u}|X,T=t)(1-\mathbb{E}(Y_{u}|X,T=t))(1-2\mathbb{E}(Y_{u}|X,T=t))K(\frac{%
T-t}{h_1}).
\end{equation*}%
By Lemmas O.3 and O.4 in \cite{BCFH13}, 
\begin{equation*}
g_{t,u}(1)-g_{t,u}(0)-g_{t,u}^{\prime }(0)\geq \omega _{t,u}K(\frac{T-t}{h_1})%
\biggl[\frac{(b(X)^{\prime }\delta )^{2}}{2}-\frac{|b(X)^{\prime }\delta
|^{3}}{6}\biggr].
\end{equation*}%
Let $\Upsilon _{t,u}(s)=\tilde{g}_{t,u}(s)-g_{t,u}(s).$ Then 
\begin{equation*}
|\Upsilon _{t,u}^{\prime }(s)|\leq |\omega _{t,u}^{1/2}b(X)^{\prime }\delta
K(\frac{T-t}{h_1})^{1/2}|\biggl|\frac{\tilde{r}_{t,u}^{\phi }K(\frac{T-t}{h_1}%
)^{1/2}}{\omega _{t,u}^{1/2}}\biggr|.
\end{equation*}%
It follows that 
\begin{align*}
& \mathbb{P}_{n}|\tilde{g}_{t,u}(1)-g_{t,u}(1)-(\tilde{g}%
_{t,u}(0)-g_{t,u}(0))-(\tilde{g}_{t,u}^{\prime }(0)-g_{t,u}^{\prime }(0))| \\
=& \mathbb{P}_{n}|\Upsilon _{t,u}(1)-\Upsilon _{t,u}(0)-\Upsilon
_{t,u}^{\prime }(0)| \\
\leq & 2\mathbb{P}_{n}|\omega _{t,u}^{1/2}b(X)^{\prime }\delta K(\frac{T-t}{h_1%
})^{1/2}|\biggl|\frac{\tilde{r}_{t,u}^{\phi }K(\frac{T-t}{h_1})^{1/2}}{\omega
_{t,u}^{1/2}}\biggr| \\
\leq & 2\Gamma _{t,u}^{\delta }\biggl|\biggl|\frac{\tilde{r}_{t,u}^{\phi }K(%
\frac{T-t}{h_1})^{1/2}}{\omega _{t,u}^{1/2}}\biggr|\biggr|_{\mathbb{P}_n,2},
\end{align*}%
and 
\begin{equation*}
F_{t,u}(\delta )\geq \frac{1}{2}\mathbb{P}_{n}\omega _{t,u}(b(X)^{\prime
}\delta )^{2}K(\frac{T-t}{h_1})-\frac{1}{6}\mathbb{P}_{n}\omega
_{t,u}|b(X)^{\prime }\delta |^{3}K(\frac{T-t}{h_1}).
\end{equation*}%
We consider two cases: $\Gamma _{t,u}^{\delta }\leq \overline{q}_{A_{t,u}}$
and $\Gamma _{t,u}^{\delta }>\overline{q}_{A_{t,u}}.$

First, if $\Gamma _{t,u}^{\delta }\leq \overline{q}_{A_{t,u}}$, we
have 
\begin{equation*}
\mathbb{P}_{n}\omega _{t,u}|b(X)^{\prime }\delta |^{3}K(\frac{T-t}{h_1})\leq
||\omega _{t,u}^{1/2}b(X)^{\prime }\delta K(\frac{T-t}{h_1}%
)^{1/2}||_{\mathbb{P}_n,2}^{2}
\end{equation*}%
and 
\begin{equation*}
F_{t,u}(\delta )\geq \frac{1}{3}(\Gamma _{t,u}^{\delta })^{2}.
\end{equation*}%
When $\Gamma _{t,u}^{\delta }>\overline{q}_{A_{t,u}}$, we let $\tilde{\delta}%
=\delta \overline{q}_{A_{t,u}}/\Gamma _{t,u}^{\delta }\in A_{t,u}$. Then by
the convexity of $F_{t,u}(\delta )$ and the fact that $F_{t,u}(0)=0$, we
have 
\begin{equation*}
F_{t,u}(\delta )\geq \frac{\Gamma _{t,u}^{\delta }}{\overline{q}_{A_{t,u}}}%
F_{t,u}(\tilde{\delta})\geq \frac{\Gamma _{t,u}^{\delta }}{\overline{q}%
_{A_{t,u}}}\biggl(\frac{1}{3}||\omega _{t,u}^{1/2}b(X)^{\prime }\tilde{\delta%
}K(\frac{T-t}{h_1})^{1/2}||_{\mathbb{P}_n,2}^{2}\biggr)=\frac{1}{3}\overline{q}%
_{A_{t,u}}\Gamma _{t,u}^{\delta }.
\end{equation*}%
Consequently, we have $F_{t,u}(\delta )\geq \min (\frac{1}{3}(\Gamma
_{t,u}^{\delta })^{2},\frac{\overline{q}_{A_{t,u}}}{3}\Gamma _{t,u}^{\delta
}).$

For the second result, note that 
\begin{equation*}
\overline{q}_{A_{t,u}}\geq \inf_{\delta \in A_{t,u}}\frac{||\omega
_{t,u}^{1/2}b(X)^{\prime }\delta K(\frac{T-t}{h_1})^{1/2}||_{\mathbb{P}_n,2}}{\zeta
_{n}||\delta ||_{1}}.
\end{equation*}%
If $\delta \in \Delta _{2\tilde{c},t,u}$, then by Lemma 3.1 
\begin{equation*}
\frac{\Gamma _{t,u}^{\delta }}{\zeta _{n}||\delta ||_{1}}\geq \frac{||\omega
_{t,u}^{1/2}b(X)^{\prime }\delta K(\frac{T-t}{h_1})^{1/2}||_{\mathbb{P}_n,2}}{\zeta
_{n}||\delta _{\mathcal{S}^0_{t,u}}||_{2}(1+2\tilde{c})s_{t,u}^{1/2}}\geq 
\frac{1}{\zeta _{n}}\frac{\underline{\kappa }\sqrt{h_1}}{\sqrt{s_{t,u}}(1+2%
\tilde{c})}.
\end{equation*}%
If $||\delta ||_{1}\leq I_{t,u}$, where $I_{t,u}$ is defined in the proof of
Lemma \ref{lem:main}, then \ 
\begin{equation*}
\frac{\Gamma _{t,u}^{\delta }}{\zeta _{n}||\delta ||_{1}}\geq \frac{||\omega
_{t,u}^{1/2}b(X)^{\prime }\delta K(\frac{T-t}{h_1})^{1/2}||_{\mathbb{P}_n,2}}{\zeta
_{n}I_{t,u}}\geq \frac{1}{\zeta _{n}}\frac{(\lambda /n)(lC_{\lambda }-1)}{%
6c||\widehat{\Psi }_{t,u,0}^{-1}||_{\infty }||\frac{r_{t,u}^{\phi }K(\frac{%
T-t}{h_1})^{1/2}}{\omega _{t,u}^{1/2}}||_{\mathbb{P}_n,2}}.
\end{equation*}%
Combining the above two results, we obtain that 
\begin{equation*}
\overline{q}_{A_{t,u}}\geq \frac{1}{\zeta _{n}}\min \biggl(\frac{\underline{%
\kappa }\sqrt{h_1}}{\sqrt{s_{t,u}}(1+2\tilde{c})},\frac{(\lambda
/n)(lC_{\lambda }-1)}{6c||\widehat{\Psi }_{t,u,0}^{-1}||_{\infty }||\frac{%
r_{t,u}^{\phi }K(\frac{T-t}{h_1})^{1/2}}{\omega _{t,u}^{1/2}}||_{\mathbb{P}_n,2}}%
\biggr).
\end{equation*}%

\end{proof}

\bigskip

\begin{lem}
Let $q_{y}(t)$ be the $y$-th quantile of $Y(t)$, $f_{Y(t)}(\cdot )$
the unconditional density of $Y(t)$, 
\begin{equation*}
F(t,y)=\int_{0}^{1}1\{\alpha (t,\psi ^{\leftarrow }(v))\leq y\}dv,\quad
F(t,y|d_{n})=\int_{0}^{1}1\{\hat{\alpha}^{\ast }(t,\psi ^{\leftarrow
}(v))\leq y\}dv,
\end{equation*}%
$s_{n}=(nh_2)^{-1/2}$, $d_{n}(t,v)=(nh_2)^{1/2}(\hat{\alpha}^{\ast }(t,\psi
^{\leftarrow }(v))-\alpha (t,\psi ^{\leftarrow }(v))),$ and $J_{n}(t,y)=%
\frac{F(t,y|d_{n})-F(t,y)}{s_{n}}.$ Then, for $\delta _{n}$ being either $1$
or $h_2^{1/2}$, depending on either Assumption \ref{ass:rate2}.1 or \ref%
{ass:rate2}.2 is in place, 
\begin{equation}
\frac{F(t,y|d_{n})-F(t,y)}{s_{n}}+\frac{d_{n}(t,\psi (q_{y}(t)))\psi
^{\prime }(q_{y}(t))}{f_{Y(t)}(q_{y}(t))}=o_{p}(\delta _{n})  \label{eq:cdf3}
\end{equation}%
and 
\begin{equation}
\frac{\hat{\alpha}^{\ast r}(t,u)-\alpha (t,u)}{s_{n}}+\frac{F(t,\alpha
(t,u)|d_{n})-F(t,\alpha (t,u))f_{Y(t)}(u)}{s_{n}\psi ^{\prime }(u)}%
=o_{p}(\delta _{n}).  \label{eq:inv2}
\end{equation}%
uniformly over $(t,y)\in \{(t,y):y=\alpha (t,\psi ^{\leftarrow
}(v)),(t,v)\in \mathcal{T}\times \lbrack 0,1]\}.$ \label{lem:verify}
\end{lem}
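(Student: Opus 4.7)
The plan is to establish the two displays as localized analogs of the Hadamard differentiability of the rearrangement and quantile (inverse) operators. A useful preliminary is the change of variables $u=\psi^{\leftarrow}(v)$, which rewrites $F(t,y)=\int_{\mathcal{U}}1\{\alpha(t,u)\le y\}\psi'(u)du$; since $\alpha(t,u)=P(Y(t)\le u)$ is strictly increasing in $u$, this equals $\psi(q_y(t))-\psi(\inf\mathcal{U})$, so $\partial_y F(t,y)=\psi'(q_y(t))/f_{Y(t)}(q_y(t))$. This identifies the leading term appearing on the right-hand side of both displays, and also gives $F(t,\alpha(t,u))=\psi(u)$, which is the identity behind the second display's $f_{Y(t)}(u)/\psi'(u)$ factor.

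For \eqref{eq:cdf3}, I would write $F(t,y|h_n)-F(t,y)=\int_{\mathcal{U}}(1\{\hat\alpha^{*}(t,u)\le y\}-1\{\alpha(t,u)\le y\})\psi'(u)du$ and exploit monotonicity of $\alpha(t,\cdot)$ to localize the integrand to a $O_p(s_n\log^{1/2}(n))$-neighborhood of $u=q_y(t)$, using the uniform bound $\sup_{(t,u)}|\hat\alpha^{*}(t,u)-\alpha(t,u)|=O_p(\log^{1/2}(n)(nh)^{-1/2})$ from Theorem \ref{thm:2nd}. A one-term Taylor expansion of the level-crossing boundary $u\mapsto \alpha(t,u)-y$ around $q_y(t)$, combined with smoothness of $\alpha(t,\cdot)$ and $\psi$ (the latter is at our disposal since $\psi$ may be chosen smooth), reduces the integral to $-s_n h_n(t,\psi(q_y(t)))\psi'(q_y(t))/f_{Y(t)}(q_y(t))$. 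The residual is bounded by the local oscillation of $h_n(t,\cdot)$ on an interval of length $O(s_n\log^{1/2}(n))$, which I would control with a maximal inequality for the underlying empirical process indexed by the VC-type class $\{1\{\alpha(t,u)\le y\}:(t,y)\}$, giving $o_p(s_n)$ under Assumption \ref{ass:rate2}.1 and the sharper $o_p(s_n h^{1/2})$ under Assumption \ref{ass:rate2}.2.

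For \eqref{eq:inv2}, the identity $\hat\alpha^{*r}(t,u)=F(t,\cdot|h_n)^{\leftarrow}(\psi(u))$ and $\alpha(t,u)=F(t,\cdot)^{\leftarrow}(\psi(u))$ (the latter by the computation above) reduce the display to a Bahadur expansion of a one-dimensional inverse. Using monotonicity of both $F(t,\cdot)$ and $F(t,\cdot|h_n)$ (the latter being monotone by construction since it is the CDF of a real-valued transformation of $V\sim U[0,1]$), and the fact that $\partial_y F(t,\alpha(t,u))=\psi'(u)/f_{Y(t)}(u)$ is bounded away from zero on the relevant range by Assumption \ref{ass:ker} and the containment $\overline{\{q_\tau(t):\tau\in\mathcal{I}\}^{\varepsilon}}\subset\mathcal{U}_t$, the standard scalar inversion argument yields
\[
\hat\alpha^{*r}(t,u)-\alpha(t,u)=-\frac{F(t,\alpha(t,u)|h_n)-F(t,\alpha(t,u))}{\psi'(u)/f_{Y(t)}(u)}+\text{quadratic remainder},
\]
and the quadratic remainder is $o_p(s_n\delta_n)$ by feeding in the rate from \eqref{eq:cdf3}. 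Dividing by $s_n$ gives the claim.

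The main obstacle is that $h_n(t,v)=(nh)^{1/2}(\hat\alpha^{*}(t,\psi^{\leftarrow}(v))-\alpha(t,\psi^{\leftarrow}(v)))$ is \emph{not} asymptotically tight: the kernel-induced localization forces its modulus of continuity in $v$ to blow up with $n$, so the standard functional delta-method arguments of \cite{CFG10} and \cite{VW96} for the rearrangement and inverse operators do not apply directly. My workaround is to avoid weak convergence of $h_n$ altogether and instead bound the linearization error pointwise in $(t,y)$ by empirical-process maximal inequalities for the VC class of indicators $\{1\{\alpha(t,u)\le y\}\}$, using only the sup-norm rate of $\hat\alpha^{*}-\alpha$ from Theorem \ref{thm:2nd}; this is exactly the localized analog of Hadamard differentiability advertised in the introduction as a novel contribution.
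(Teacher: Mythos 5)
Your overall architecture coincides with the paper's: change variables so that $F(t,\cdot)$ is the distribution function of the monotone map $v\mapsto \alpha(t,\psi^{\leftarrow}(v))$, localize the indicator difference to an $O_p(s_n\log^{1/2}n)$ window around $v^{y}=\psi(q_y(t))$, recenter $h_n(t,\cdot)$ at $v^{y}$, and handle \eqref{eq:inv2} by a scalar Bahadur-type inversion; the paper packages exactly these two deterministic steps as Propositions \ref{lem:cdf} and \ref{lem:inv}, whose hypotheses are oscillation conditions on $h_n$ and $J_n$ that must then be verified probabilistically.

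The gap is in how you propose to verify those oscillation conditions. The residual after recentering is the modulus of continuity $\sup_{|v-v'|\le\varepsilon_n}|h_n(t,v)-h_n(t,v')|$ with $\varepsilon_n=(nh)^{-1/2}\log n$, and this cannot be obtained ``using only the sup-norm rate of $\hat{\alpha}^{\ast}-\alpha$'' (that bound only gives $O_p(\log^{1/2}n)$, which is not even $o_p(1)$), nor from a maximal inequality over the class $\{1\{\alpha(t,u)\le y\}\}$: those indicators are deterministic functions of $u$ sitting inside the Lebesgue integral over $v$ and do not index any empirical process over the sample. What is needed---and what the paper does---is to pass through the Bahadur representation of Theorem \ref{thm:2nd} and apply a maximal inequality (Corollary 5.1 of \cite{CCK14}) to the class of differences $\Pi_{t,u}-\Pi_{t,u'}$ with $|u-u'|\lesssim\varepsilon_n$, whose envelope is of order $h^{-1}$ and whose second moment is of order $\varepsilon_n h^{-1}$; this yields an oscillation of order $(\varepsilon_n\log n)^{1/2}$, and it is precisely here that $h=C_h n^{-H}$ with $H<1/3$ is invoked to conclude $o_p(h^{1/2})$. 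Two further quantitative pieces are missing from your sketch: the quadratic term $\sup_{t,v}|h_n(t,v)|^{2}s_n=O_p(\log n\,(nh)^{-1/2})$ must be shown to be $o_p(\delta_n)$ (this uses $nh^{2}/\log^{2}n\to\infty$); and for \eqref{eq:inv2} the inversion remainder is not controlled by ``the rate from \eqref{eq:cdf3}'' alone---it requires an oscillation bound on $J_n$ over $|y-y'|\le\max(\varepsilon_n,s_n\delta_n)$ (deduced from \eqref{eq:cdf3} together with the oscillation of $h_n$ and the Lipschitz continuity of $y\mapsto\psi(q_y(t))$), plus boundedness of $\partial^{2}_{yy}F$ and $\sup_{t,y}|J_n(t,y)|^{2}s_n=o_p(\delta_n)$. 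Until the equicontinuity step is carried out on the correct class of influence-function differences, the argument does not deliver the claimed $o_p(\delta_n)$ rates.
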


\bigskip

\begin{proof}
Let $Q(t,v)=\alpha (t,\psi ^{\leftarrow }(v))$ for $v\in \lbrack
0,1]$. Then, we have 
\begin{equation*}
F(t,y)=\int_{0}^{1}1\{Q(t,v)\leq y\}dv\quad \text{and}\quad
F(t,y|d_{n})=\int_{0}^{1}1\{Q(t,v)+s_{n}d_{n}\leq y\}dv.
\end{equation*}%
We prove the lemma by applying Propositions \ref{lem:cdf} and \ref{lem:inv}
in Appendix C??.

First, we verify Assumption \ref{ass:cdf} with $(\delta
_{n},\varepsilon _{n})=(1,(nh_2)^{-1/2}\log (n))$ and $(\delta
_{n},\varepsilon _{n})=(h_2^{1/2},(nh_2)^{-1/2}\log (n))$ under Assumptions \ref%
{ass:rate2}.1 and \ref{ass:rate2}.2, respectively, in order to apply
Proposition \ref{lem:cdf} to prove (\ref{eq:cdf3}). We only consider the
case in which $\delta _{n}=h_2^{1/2}$ as the $\delta _{n}=1$ case can be
studied similarly. Note that $Q(t,v)=\alpha (t,\psi ^{\leftarrow }(v))$, $%
\partial _{u}\alpha (t,u)=f_{Y(t)}(u)>0$ uniformly over $(t,u)\in \mathcal{T}%
\mathcal{U}$, and $\psi (\cdot )$ can be chosen such that $\partial _{v}\psi
^{\leftarrow }(v)>0$ uniformly over $v\in \lbrack 0,1]$. This verifies
Assumption \ref{ass:cdf}.1. 

For Assumption \ref{ass:cdf}.2, by Theorem \ref{thm:2nd}, $%
\sup_{(t,v)\in \mathcal{T}\times \lbrack 0,1]}|d_{n}(t,v)|=O_{p}(\log
^{1/2}(n))$. So we can take $\varepsilon _{n}=(nh_2)^{-1/2}\log (n)$. In
addition, $\sup_{(t,v)\in \mathcal{T}\times \lbrack
0,1]}|d_{n}^{2}(t,v)|s_{n}=O_{p}(\log (n)(nh_2)^{-1/2})=o_{p}(h_2^{1/2})$
because $nh_2^{2}/\log ^{2}(n)$ $\rightarrow \infty $. So we only need to show 
\begin{equation}
\sup_{(t,v,v^{\prime })\in \mathcal{T}\times \lbrack 0,1]^{2},|v-v^{\prime
}|\leq \varepsilon _{n}}|d_{n}(t,v)-d_{n}(t,v^{\prime })|=o_{p}(h_2^{1/2}).
\label{eq:h}
\end{equation}%
Let 
\begin{equation*}
\mathcal{G}=%
\begin{Bmatrix}
& \eta \Pi _{t,u}(W_{u},\phi _{t,u},f_{t})-\Pi _{t,u^{\prime
}}(W_{u}^{\prime },\phi _{t,u^{\prime }},f_{t}):u=\psi ^{\leftarrow
}(v),u^{\prime }=\psi ^{\leftarrow }(v^{\prime }), \\ 
& (t,v,v^{\prime })\in \mathcal{T}\times \lbrack 0,1]^{2},|v-v^{\prime
}|\leq \varepsilon _{n}%
\end{Bmatrix}%
\end{equation*}%
with envelope $c\eta h_2^{-1}$. By Theorem \ref{thm:2nd}, we have 
\begin{equation*}
d_{n}(t,v)-d_{n}(t,v^{\prime })=(\mathbb{P}_{n}-\mathbb{P})g+R_{n}(t,\psi
^{\leftarrow }(v))-R_{n}(t,\psi ^{\leftarrow }(v^{\prime })).
\end{equation*}%
$\sup_{(t,v)\in \mathcal{T}\times \lbrack 0,1]}R_{n}(t,\psi ^{\leftarrow
}(v))=o_{p}(\delta _{n})$. So we only have to show that 
\begin{equation*}
\sup_{g\in \mathcal{G}}|(\mathbb{P}_{n}-\mathbb{P})g|=o_{p}(h_2^{1/2}).
\end{equation*}%
We know that $\mathcal{G}$ is VC-type with fixed VC index and that $%
\sup_{g\in \mathcal{G}}\mathbb{E}g^{2}\leq \varepsilon _{n}h_2^{-1}.$ In
addition, as shown in the proof of Theorem \ref{thm:q}, $||\max_{1\leq i\leq
n}|\eta _{i}h_2^{-1}|||_{P,2}\leq \log (n)/h_2$. Therefore, by Corollary 5.1 of 
\cite{CCK14}, we have 
\begin{equation*}
(nh_2)^{1/2}||\mathbb{P}_{n}-\mathbb{P}||_{\mathcal{G}}=O_{p}((\log
(n)\varepsilon _{n})^{1/2}).
\end{equation*}%
Given $\varepsilon _{n}=(nh_2)^{-1/2}\log (n)$, $(\log (n)\varepsilon
_{n})^{1/2}=o(h_2^{1/2})$ because $h_2=C_{2}n^{-H_2}$ for some $H_2<1/3$. This
establishes \eqref{eq:h}. Then \eqref{eq:cdf3} follows by Proposition \ref%
{lem:cdf}.

To prove\eqref{eq:inv2}, we apply Proposition \ref{lem:inv} by
verifying Assumption \ref{ass:inv}. We note that $\hat{\alpha}^{\ast
r}(t,u)=F^{\leftarrow }(t,\psi (u)|d_{n})$ and $J_{n}(t,y)=\frac{%
F(t,y|d_{n})-F(t,y)}{s_{n}}$. Furthermore, notice that $\alpha ^{\ast
r}(t,u)=\alpha (t,u)=F^{\leftarrow }(t,\psi (u))$, $F^{\leftarrow
}(t,v)=\alpha (t,\psi ^{\leftarrow }(v))$, 
\begin{equation*}
F(t,y)=\int_{0}^{1}1\{Q(t,v)\leq y\}dv=\int_{0}^{1}1\{v\leq \psi
(q_{y}(t))\}dv=\psi (q_{y}(t)),
\end{equation*}%
and 
\begin{equation*}
\partial _{y}F(t,y)=-\psi ^{\prime }(q_{y}(t))/f_{Y(t)}(q_{y}(t)).
\end{equation*}%
Because $f_{Y(t)}(q_{y}(t))$ is bounded and bounded away from zero uniformly
over $(t,y)\in \mathcal{TY}$, so be $\partial _{y}F(t,y)$. In addition, 
\begin{equation*}
\partial _{yy}^{2}F(t,y)=-f^{\prime
\prime}(q_{y}(t))/f_{Y(t)}^{2}(q_{y}(t))+\phi ^{\prime
}(q_{y}(t))f_{Y(t)}^{\prime }(q_{y}(t))/f_{Y(t)}^{3}(q_{y}(t)),
\end{equation*}%
which is bounded because $f_{Y(t)}^{\prime }(q_{y}(t))$ is bounded. This
verifies Assumption \ref{ass:inv}.2.

For Assumption \ref{ass:inv}.3, we note that 
\begin{equation*}
J_{n}(t,y)=\frac{F(t,y|d_{n})-F(t,y|d_{n})}{s_{n}}=-\frac{d_{n}(t,\psi
(q_{y}(t)))\psi ^{\prime }(q_{y}(t))}{f_{Y(t)}(q_{y}(t))}+o_{p}(\delta _{n}),
\end{equation*}%
where the $o_{p}(\delta _{n})$ is uniform over $(t,y)\in \mathcal{TY}$. In
addition, by definition, $(t,q_{y}(t))\in \mathcal{T}\mathcal{U}$, $%
f_{Y(t)}(q_{y}(t))$ is bounded away from zero, and we can choose $\psi $
such that $\psi ^{\prime }(q_{y}(t))$ is bounded. Therefore, by Theorem \ref%
{thm:2nd} , 
\begin{equation*}
\sup_{(t,y)\in \mathcal{TY}}|J_{n}(t,y)|=O_{p}(\sup_{(t,u)\in \mathcal{T}%
\mathcal{U}}|d_{n}(t,\psi (u))|)+o_{p}(\delta _{n})=O_{p}(\log ^{1/2}(n)).
\end{equation*}%
We can choose $\varepsilon _{n}=s_{n}\log (n)$. In addition, $\sup_{(t,y)\in 
\mathcal{TY}}|J_{n}(t,y)|^{2}s_{n}=o_{p}(h_2^{1/2})$ because $%
nh_2^{3}\rightarrow \infty $. So we only need to show that 
\begin{equation*}
\sup_{(t,y,y^{\prime })\in \mathcal{TYY},|y-y^{\prime }|\leq \max
(\varepsilon _{n},s_{n}\delta _{n})}|J_{n}(t,y)-J_{n}(t,y^{\prime
})|=o_{p}(\delta _{n}).
\end{equation*}%

Note that, for $v=\psi (Q_{Y_{t}}(y))$ and $v^{\prime }=\psi
(Q_{Y_{t}}(y^{\prime }))$ 
\begin{equation*}
|J_{n}(t,y)-J_{n}(t,y^{\prime })|\lesssim |d_{n}(t,v)-d_{n}(t,v^{\prime
})|+o_{p}(\delta _{n}).
\end{equation*}%
In addition, $\phi (Q_{Y_{t}}(y))$ is Lipschitz uniformly over $(t,y)\in 
\mathcal{TY}$. Thus, 
\begin{align*}
& \sup_{(t,y,y^{\prime })\in \mathcal{TYY},|y-y^{\prime }|\leq \max
(\varepsilon _{n},s_{n}\delta _{n})}|J_{n}(t,y)-J_{n}(t,y^{\prime })| \\
\leq & \sup_{(t,v,v^{\prime })\in \mathcal{T}\times \lbrack
0,1]^{2},|v-v^{\prime }|\leq C\varepsilon _{n}}|d_{n}(t,v)-d_{n}(t,v^{\prime
})|=o_{p}(\delta _{n}),
\end{align*}%
given that $h_2=C_{2}n^{-H_2}$ for some $H<1/3$. This completes the verification
of Assumption \ref{ass:inv}.2.

Last, it is essentially the same as above to verify Assumption \ref%
{ass:inv} for $J_{n}(t,u)=(nh_2)^{1/2}(\hat{\alpha}^{\ast r}(t,u)-\alpha
(t,u)) $. The proof is omitted. 
\end{proof}

\bigskip 

\begin{lem}
Suppose the conditions in Theorem \ref{thm:qprime} hold. Then 
\begin{align*}
& e_{2}^{\prime }(G\widehat{\Sigma }_{2})^{-1}U_{n}(t,\tau ) \\
=& \frac{1}{n}\sum_{j=1}^{n}\eta _{j}(\kappa _{2}f_{Y(t)}(q_{\tau
}(t))f_{t}(X_{j})h_2^{2})^{-1}\biggl[Y_{q_{\tau }(t),j}-\phi _{t,q_{\tau
}(t)}(X_{j})\biggr]\overline{K}(\frac{T_{j}-t}{h_2})+o_{p}^{\ast
}((nh_2^{3})^{-1/2}).
\end{align*}%
\label{lem:43}
\end{lem}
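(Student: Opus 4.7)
My plan is to proceed in three main steps: (i) replace the random normalizing matrix $(G\widehat\Sigma_2)^{-1}$ by its deterministic limit $\Sigma_2^{-1}$ up to a negligible remainder; (ii) apply a Hoeffding decomposition componentwise to the weighted U-process $U_n(t,\tau)=(U_{n,0},U_{n,1})^\prime$ and dispose of the constant bias term, the $\eta$-residual, and the fully degenerate part; and (iii) identify the dominant linear projection from $U_{n,1}$ as the claimed $\overline{K}$-representation, while showing that the $U_{n,0}$-contribution is negligible. For step (i), since $\Sigma_2$ is lower-triangular with $\det\Sigma_2=\kappa_2 f_T^2(t)$, its second row is $e_2^\prime\Sigma_2^{-1}=\bigl(-f_T^\prime(t)/f_T^2(t),\,1/(\kappa_2 f_T(t))\bigr)$; combined with the displayed bound $G\widehat\Sigma_2-\Sigma_2=O_p^\ast(\log^{1/2}(n)(nh^3)^{-1/2})$ and the bound $\|U_n(t,\tau)\|=O_p^\ast(\log^{1/2}(n)(nh^3)^{-1/2})$ (which will follow from the variance computation of the linear projection below), the replacement error is $O_p^\ast(\log(n)/(nh^3))=o_p^\ast((nh^3)^{-1/2})$ under Assumption \ref{ass:rate2}.1.

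For step (ii), treating the weights $\eta_i$ as part of $\Upsilon_i$, I would Hoeffding-decompose
\begin{equation*}
U_{n,\ell}(t,\tau)=\mathbb{E}\Gamma_\ell^s(\Upsilon_1,\Upsilon_2;t,\tau)+\frac{2}{n}\sum_{i=1}^n\eta_i\,g_1^\ell(\Upsilon_i;t,\tau)+D_{n,\ell}(t,\tau)+r_n,
\end{equation*}
with $g_1^\ell(\upsilon;t,\tau)=\mathbb{E}_{\Upsilon_j}[\Gamma_\ell^s(\upsilon,\Upsilon_j;t,\tau)]-\mathbb{E}\Gamma_\ell^s$, $D_{n,\ell}$ the second-order fully degenerate U-process, and $r_n=\tfrac{2}{n}\sum_i(\eta_i-1)\mathbb{E}\Gamma_\ell^s$ a mean-zero residual of variance $O(h^4/n)=o((nh^3)^{-1})$. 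Using $\int uK(u)\,du=0$, the smoothness of $\phi_{t,u}(x),\,f_t(x),\,q_\tau(t)$ in $(t,u)$, and the identity $\mathbb{E}[Y_{q_\tau(t)}-\phi_{t,q_\tau(t)}(X)]=0$, standard bias calculations yield $\sup_{t,\tau}|\mathbb{E}\Gamma_\ell^s|=O(h^2)$, which together with $nh^7\to0$ (implied by Assumption \ref{ass:rate2}.1 since $H>1/5$) gives $\mathbb{E}\Gamma_\ell^s+r_n=o_p^\ast((nh^3)^{-1/2})$. The degenerate term $D_{n,\ell}$ will be $o_p^\ast((nh^3)^{-1/2})$ uniformly in $(t,\tau)$ by the new maximal inequality for second-order degenerate U-processes developed earlier in the paper (extending \cite{NP87}); this is the most delicate ingredient.

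For step (iii), I identify the dominant linear projection. Splitting the symmetric kernel into the two pieces $\Gamma_\ell(\upsilon_i,\Upsilon_j)$ and $\Gamma_\ell(\Upsilon_j,\upsilon_i)$ and conditioning on $\upsilon_i$, the first piece vanishes to leading order because $\mathbb{E}[Y_{q_\tau(T_i),j}-\phi_{T_i,q_\tau(T_i)}(X_j)\mid X_j,T_j=T_i]=0$ together with $\int uK(u)\,du=0$ leaves only an $O(h^2)$ bias, negligible at the target rate. For the second piece with $\ell=1$, the change of variable $v=(T_j-t)/h$ in the outer integral transforms $(T_j-t)K((T_j-t)/h)$ into $hv K(v)$ and $K((T_i-T_j)/h)$ into $K((T_i-t)/h-v)$, producing the identity $\int vK(v)K((T_i-t)/h-v)\,dv=\overline{K}((T_i-t)/h)$ and giving
\begin{equation*}
\mathbb{E}_{\Upsilon_j}[\Gamma_1(\Upsilon_j,\upsilon_i)\mid\upsilon_i]=-\frac{f_T(t)\,[Y_{q_\tau(t),i}-\phi_{t,q_\tau(t)}(X_i)]}{h^2\,f_{Y(t)}(q_\tau(t))\,f_t(X_i)}\,\overline{K}\!\left(\frac{T_i-t}{h}\right)+\text{lower order},
\end{equation*}
uniformly in $(t,\tau)\in\mathcal{T}\mathcal{I}$. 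Multiplying by the factor $1/(\kappa_2 f_T(t))$ from $e_2^\prime\Sigma_2^{-1}$ and summing over $i$ reproduces the stated expression exactly. The companion first-coordinate contribution is negligible because the analogous computation for $\ell=0$ gives a linear piece of order $h^{-1}(K\ast K)((T_i-t)/h)$, yielding $L_{n,0}=O_p^\ast((nh)^{-1/2}\log^{1/2}n)$; since $(nh)^{-1/2}=h\cdot(nh^3)^{-1/2}$, multiplication by the bounded factor $-f_T^\prime(t)/f_T^2(t)$ still produces $o_p^\ast((nh^3)^{-1/2})$. The hard part will be the uniform control of $D_{n,\ell}(t,\tau)$ on $\mathcal{T}\mathcal{I}$, since the kernel $\Gamma_\ell$ nests two localization kernels $K((T_i-t)/h)$ and $K((T_j-T_i)/h)$ and therefore cannot be handled by classical Nolan–Pollard-type bounds — precisely the motivation for the paper's new maximal inequality.
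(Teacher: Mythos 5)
Your proposal is correct and follows essentially the same route as the paper's own proof: a Hoeffding decomposition of the weighted U-process, control of the fully degenerate part via the paper's new maximal inequality for second-order degenerate U-processes (Proposition \ref{lem:U}), computation of the dominant linear projection through the change of variables that produces $\overline{K}(\cdot)$, and replacement of $(G\widehat{\Sigma}_2)^{-1}$ by $\Sigma_2^{-1}$ at cost $O_p^*(\log(n)/(nh^3))=o_p^*((nh^3)^{-1/2})$. The only details you compress that the paper makes explicit are the truncation $\tilde{\eta}_i=\eta_i 1\{|\eta_i|\leq C'\log(n)\}$ needed before the maximal inequality can be invoked (its envelopes must be bounded, which fails with untruncated sub-exponential weights) and the control of the nonsmooth replacement of $Y_{q_{\tau}(t+hv),j}$ by $Y_{q_{\tau}(t),j}$ inside the linear projection, which the paper handles with a second-moment/maximal-inequality bound of order $\log^{1/2}(n)(nh^2)^{-1/2}$ rather than a Taylor step; both are routine patches that fit inside your plan.
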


\bigskip 

\begin{proof}
Note that 
\begin{equation}
U_{n}(t,\tau )=\frac{2}{n}\sum_{j=1}^{n}\eta _{j}\mathbb{P}\Gamma ^{s}(\cdot
,\Upsilon _{j};t,\tau )+\mathcal{U}_{n}H(\cdot ,\cdot ;t,\tau ),
\label{eq:U}
\end{equation}%
where $\mathcal{U}_{n}$ assigns probability $\frac{1}{n(n-1)}$ to each pair
of observations and 
\begin{equation*}
H(\Upsilon _{i},\Upsilon _{j};t,\tau )=\eta _{i}\eta _{j}\Gamma
^{s}(\Upsilon _{i},\Upsilon _{j};t,\tau )-\eta _{i}\mathbb{P}\Gamma
^{s}(\cdot ,\Upsilon _{j};t,\tau )-\eta _{j}\mathbb{P}\Gamma ^{s}(\Upsilon
_{i},\cdot ;t,\tau )+\mathbb{P}\Gamma ^{s}(\cdot ,\cdot ;t,\tau ).
\end{equation*}%

Let $\mathcal{H}=\{H(\cdot ,\cdot
;t,\tau ),(t,$ $\tau )\in \mathcal{T}\mathcal{I}\}.$ Note that $\mathcal{H}$ is nested by a VC-class with fixed VC-index and has envelop $(C\sup_{i \neq j} |\eta_i\eta_j|h_2^{-2},C\sup_{i \neq j} |\eta_i\eta_j|h_2^{-3})^{\prime
}$ for some large constant $C$. Then, by \citet[Corollary 5.6]{CK18}, there exist some constants $A \geq e$ and $v \geq 1$ such that  
\begin{equation*}
\sup_{(t,\tau )\in \mathcal{T}\mathcal{I}}\mathbb{E}|\mathcal{U}_{n}H(\cdot ,\cdot
;t,\tau )| \leq  (\frac{Cv\log(A)}{nh_2^2},\frac{Cv\log(A)}{nh_2^3})^{\prime},
\end{equation*}%
which implies that 

\begin{equation}
\sup_{(t,\tau )\in \mathcal{T}\mathcal{I}}\mathcal{U}_{n}H(\cdot ,\cdot
;t,\tau )=(O_{p}(\frac{1}{nh_2^2}),O_{p}(\frac{1}{nh_2^{3}}))^{\prime }.
\label{eq:H}
\end{equation}%

Now we compute $\frac{2}{n}\sum_{j=1}^n \eta_j\mathbb{P}%
\Gamma^s(\cdot,\Upsilon_j;t,\tau)$, whose first and second elements are
\begin{align*}
\int \frac{f_T(t+h_2v)}{f_{Y_{t+h_2v}}(q_\tau(t+h_2v))}&\biggl(\frac{%
	Y_{q_\tau(t+h_2v),j} - \phi_{t+h_2v,q_\tau(t+h_2v)}(X_j)}{f_{t+h_2v}(X_j)h_2}K(\frac{%
	T_j-t-h_2v}{h}) \\
& + \phi_{t+h_2v,q_\tau(t+h_2v)}(X_j) -E(t+h_2v,\tau) \biggr)K(v)dv 
\end{align*}
and 
\begin{align*}
\int \frac{vf_T(t+h_2v)}{h_2f_{Y_{t+h_2v}}(q_\tau(t+h_2v))}& \biggl(\frac{%
	Y_{q_\tau(t+h_2v),j} - \phi_{t+h_2v,q_\tau(t+h_2v)}(X_j)}{f_{t+h_2v}(X_j)h_2}K(\frac{%
	T_j-t-h_2v}{h_2}) \\
&+ \phi_{t+h_2v,q_\tau(t+h_2v)}(X_j) -E(t+h_2v,\tau) \biggr)K(v)dv,
\end{align*}
respectively. By the usual maximal inequality, 
\begin{align*}
\sup_{(t,\tau) \in \mathcal{T} \mathcal{I}}&\biggl|\frac{2}{n}\sum_{j=1}^n
\eta_j \int \frac{f_T(t+h_2v)}{f_{Y_{t+h_2v}}(q_\tau(t+h_2v))}\biggl(\frac{%
Y_{q_\tau(t+h_2v),j} - \phi_{t+h_2v,q_\tau(t+h_2v)}(X_j)}{f_{t+h_2v}(X_j)h_2}K(\frac{%
T_j-t-h_2v}{h_2}) \\
& + \phi_{t+h_2v,q_\tau(t+h_2v)}(X_j) -E(t+h_2v,\tau) \biggr)K(v)dv \biggr| =
O_p(\log^{1/2}(n)(nh_2)^{-1/2}).
\end{align*}

For the second element in $\mathbb{P}\Gamma ^{s}(\cdot ,\Upsilon
_{j};t,\tau )$, we first note that 
\begin{equation*}
\mathbb{E}\int \frac{vf_{T}(t+h_2v)}{%
h_2f_{Y_{t+h_2v}}(q_{\tau }(t+h_2v))}\biggl(\phi _{t+h_2v,q_{\tau
}(t+h_2v)}(X_{j})-\tau \biggr)K(v)dv=0
\end{equation*}%
and 
\begin{equation*}
\sup_{(t,\tau )\in \mathcal{T}\mathcal{I}}\mathbb{E}\biggl[\int \frac{vf_{T}(t+h_2v)}{%
	h_2f_{Y_{t+h_2v}}(q_{\tau }(t+h_2v))}\biggl(\phi _{t+h_2v,q_{\tau
	}(t+h_2v)}(X_{j})-\tau \biggr)K(v)dv\biggr]^2 \lesssim h_2^{-2}.
\end{equation*}%
Therefore, by the usual maximal inequality, 
\begin{align*}
\sup_{(t,\tau )\in \mathcal{T}\mathcal{I}}\frac{1}{n}\sum_{i=1}^n\biggl[\int \frac{vf_{T}(t+h_2v)}{%
	h_2f_{Y_{t+h_2v}}(q_{\tau }(t+h_2v))}\biggl(\phi _{t+h_2v,q_{\tau
	}(t+h_2v)}(X_{j})-\tau \biggr)K(v)dv\biggr] = O_p(\log^{1/2}(n)(nh_2^2)^{-1/2}). 
\end{align*}

Next, we turn to 
\begin{align*}
& \int v\overline{f}(t+h_2v,\tau)\biggl[\frac{Y_{q_{\tau }(t+h_2v),j}-\phi
_{t+h_2v,q_{\tau }(t+h_2v)}(X_{j})}{h_2^{2}f_{t+h_2v}(X_{j})}K(\frac{T_{j}-t-h_2v}{h_2}) - \frac{E(t+h_2v,\tau)-\tau}{h_2}\biggr]K(v)dv,
\end{align*}%
which has zero mean. Note that  
\begin{align*}
& \sup_{(t,\tau )\in \mathcal{T}\mathcal{I}}\mathbb{E}\biggl[\int v\biggl[%
\frac{\overline{f}(t+h_2v,\tau)}{f_{t+h_2v}(X_j)}-\frac{\overline{f}(t,\tau)}{f_{t}(X_j)}\biggr]\frac{Y_{q_{\tau
}(t+h_2v),j}-\phi _{t+h_2v,q_{\tau }(t+h_2v)}(X_{j})}{h_2^{2}}K(\frac{T_{j}-t-h_2v}{h_2}%
)K(v)dv\biggr]^{2} \\
\lesssim & \sup_{t\in \mathcal{T}}\int h_2^{-2}v^{2}\mathbb{E}K^{2}(\frac{%
T_{j}-t-h_2v}{h_2})K(v)dv\lesssim h_2^{-1}.
\end{align*}%
Therefore, by \citet[Corollary 5.1]{CCK14}, we have 
\begin{align*}
& \sup_{(t,\tau )\in \mathcal{T}\mathcal{I}}\biggl|(\mathbb{P}_n - \mathbb{P})\eta
_{j}\int v[%
\frac{\overline{f}(t+h_2v,\tau)}{f_{t+h_2v}(X_j)}-\frac{\overline{f}(t,\tau)}{f_{t}(X_j)}]\frac{Y_{q_{\tau
}(t+h_2v),j}-\phi _{t+h_2v,q_{\tau }(t+h_2v)}(X_{j})}{h_2^{2}}K(\frac{T_{j}-t-h_2v}{h_2}%
)K(v)dv\biggr| \\
=& O_{p}(\log ^{1/2}(n)(nh_2)^{-1/2}).
\end{align*}%
and 
\begin{align*}
& \int v\overline{f}(t+h_2v,\tau)\biggl[\frac{Y_{q_{\tau }(t+h_2v),j}-\phi
	_{t+h_2v,q_{\tau }(t+h_2v)}(X_{j})}{h_2^{2}f_{t+h_2v}(X_{j})}K(\frac{T_{j}-t-h_2v}{h_2}) - \frac{E(t+h_2v,\tau)-\tau}{h_2}\biggr]K(v)dv \\
= & (\mathbb{P}_n - \mathbb{P})\eta
_{j}\int v\frac{\overline{f}(t,\tau)}{f_{t}(X_j)}\frac{Y_{q_{\tau
		}(t+h_2v),j}-\phi _{t+h_2v,q_{\tau }(t+h_2v)}(X_{j})}{h_2^{2}}K(\frac{T_{j}-t-h_2v}{h_2}%
)K(v)dv + O_p^*(\log ^{1/2}(n)(nh_2)^{-1/2}).
\end{align*}
In addition, note that 
\begin{align*}
& \mathbb{E}\biggl\{\int \frac{v\overline{f}(X_{j};t)}{h_2^{2}}\biggl[%
Y_{q_{\tau }(t+h_2v),j}-\phi _{t+h_2v,q_{\tau }(t+h_2v)}(X_{j})-(Y_{q_{\tau
}(t),j}-\phi _{t,q_{\tau }(t)}(X_{j}))\biggr]K(\frac{T_{j}-t-h_2v}{h_2})K(v)dv%
\biggr\}^{2} \\
\lesssim & \int \mathbb{E}v^{2}h_2^{-4}\biggl(|\phi _{T_{j},q_{\tau
}(t+h_2v)}(X_{j})-\phi _{T_{j},q_{\tau }(t)}(X_{j})|+(\phi _{t+h_2v,q_{\tau
}(t+h_2v)}(X_{j})-\phi _{t,q_{\tau }(t)}(X_{j}))^{2}\biggr) \\
& \times K^{2}(\frac{T_{j}-t-h_2v}{h_2})K(v)dv\lesssim h_2^{-2}.
\end{align*}%
Therefore, by \citet[Corollary 5.1]{CCK14}, 
\begin{align*}
& (\mathbb{P}_n - \mathbb{P})\eta _{j}\int \frac{v\overline{f}(X_{j};t)}{h_2^{2}}%
[Y_{q_{\tau }(t+h_2v),j}-\phi _{t+h_2v,q_{\tau }(t+h_2v)}(X_{j})-(Y_{q_{\tau
}(t),j}-\phi _{t,q_{\tau }(t)}(X_{j}))]K(\frac{T_{j}-t-h_2v}{h_2})K(v)dv \\
=& O_{p}^{\ast }(\log ^{1/2}(n)(nh_2^{2})^{-1/2}).
\end{align*}
and 
\begin{align*}
& (\mathbb{P}_n - \mathbb{P})\eta
_{j}\int v\frac{\overline{f}(t,\tau)}{f_{t}(X_j)}\frac{Y_{q_{\tau
		}(t+h_2v),j}-\phi _{t+h_2v,q_{\tau }(t+h_2v)}(X_{j})}{h_2^{2}}K(\frac{T_{j}-t-h_2v}{h_2}%
)K(v)dv \\
= & (\mathbb{P}_n - \mathbb{P})\eta _{j}\int \frac{v\overline{f}(X_{j};t)}{h_2^{2}}%
[Y_{q_{\tau
	}(t),j}-\phi _{t,q_{\tau }(t)}(X_{j})]K(\frac{T_{j}-t-h_2v}{h_2})K(v)dv + O_{p}^{\ast }(\log ^{1/2}(n)(nh_2^{2})^{-1/2})
\end{align*}

Combining the above results and denoting $\overline{K}(u)=\int vK(u-v)K(v)dv$%
, we have 
\begin{align*}
& \int v\overline{f}(t+h_2v,\tau)\biggl[\frac{Y_{q_{\tau }(t+h_2v),j}-\phi
	_{t+h_2v,q_{\tau }(t+h_2v)}(X_{j})}{h_2^{2}f_{t+h_2v}(X_{j})}K(\frac{T_{j}-t-h_2v}{h_2}) - \frac{E(t+h_2v,\tau)-\tau}{h_2}\biggr]K(v)dv \\
=& (\mathbb{P}_n - \mathbb{P})\eta _{j}\frac{\overline{f}(t,\tau)}{f_t(X_j)h_2^{2}}%
[Y_{q_{\tau }(t),j}-\phi _{t,q_{\tau }(t)}(X_{j})]\overline{K}(\frac{T_{j}-t%
}{h_2})+O_{p}^{\ast }(\log ^{1/2}(n)(nh_2^{2})^{-1/2})
\end{align*}%
and%
\begin{equation}
\frac{2}{n}\sum_{j=1}^{n}\eta _{j}\mathbb{P}\Gamma ^{s}(\cdot ,\Upsilon
_{j};t,\tau )=%
\begin{Bmatrix}
O_{p}^{\ast }(\log ^{1/2}(n)(nh_2)^{-1/2}) \\ 
(\mathbb{P}_n - \mathbb{P})\biggl[\eta _{j}\frac{\overline{f}(t,\tau)}{f_t(X_j)h_2^{2}}%
[Y_{q_{\tau }(t),j}-\phi _{t,q_{\tau }(t)}(X_{j})]\overline{K}(\frac{T_{j}-t%
}{h_2})\biggr]+o_{p}^{\ast }((nh_2^{3})^{-1/2})%
\end{Bmatrix}
\label{B.13}
\end{equation}%
Combining \eqref{eq:U}, \eqref{eq:H}, and (\ref{B.13}), we have the desired
results.
\end{proof}

\section{Rearrangement Operator on A Local Process}

\label{sec:rearrange} 
The rearrangement operator has been previously
studied by \cite{CFG10}, in which they required the underlying process to be
tight to apply the continuous mapping theorem. However, the local processes
encountered in our paper are not tight due to the presence of the kernel
function. Therefore, the original results on the rearrangement operate
cannot directly apply to our case. Instead, in this section, we extend the
results in \cite{CFG10} to the case that the
underlying process is not tight.

Let $Q(t,v)$ be a generic monotonic function in $v\in \lbrack 0,1]$.
The functional $\Psi$ maps $Q(t,v)$ to $F(t,y)$ as
follows: 
\begin{equation*}
\Psi (Q)(t,y):= F(t,y)=\int_{0}^{1}1\{Q(t,v)\leq y\}dv.
\end{equation*}%
We want to derive a linear expansion of $\Psi (Q+s_{n}d_{n})-\Psi (Q)$ where 
$s_{n}\downarrow 0$ as the sample size $n\rightarrow \infty $ and $%
d_{n}(t,v) $ is some perturbation function. 

\begin{ass}
\begin{enumerate}
\item $Q(t,v)$ is twice differentiable w.r.t. $v$ with both
derivatives bounded. In addition, $\partial _{v}Q(t,v)>c$ for some positive
constant $c$, uniformly over $(t,v)\in \mathcal{T}\times \lbrack 0,1]$. 

\item There exist two vanishing sequences $\varepsilon _{n}$ and $%
\delta _{n}$ such that 
\begin{equation*}
\sup_{(t,v,v^{\prime })\in \mathcal{T}\times \lbrack 0,1]^{2},|v-v^{\prime
}|\leq \varepsilon _{n}}|d_{n}(t,v)-d_{n}(t,v^{\prime })|=o(\delta _{n}),
\end{equation*}%
\begin{equation*}
\sup_{(t,v)\in \mathcal{T}\times \lbrack
0,1]}|d_{n}(t,v)|s_{n}=o(\varepsilon _{n}),\quad \text{and}\quad
\sup_{(t,v)\in \mathcal{T}\times \lbrack 0,1]}|d_{n}(t,v)|^{2}s_{n}=o(\delta
_{n}).
\end{equation*}
\end{enumerate}

\label{ass:cdf}
\end{ass}
The following proposition extends the first part of Proposition 2 in 
\cite{CFG10}.

\begin{prop}
{\small Let $(t,y)\in \mathcal{TY}:= \{(t,y):y=Q(t,v),(t,v)\in \mathcal{T}%
\times \lbrack 0,1]\}$, $F(t,y|d_{n})=\int_{0}^{1}1\{Q(t,v)+s_{n}d_{n}(t,v)%
\leq y\}dv$, and $y=Q(t,v^{y})$. If Assumption \ref{ass:cdf} holds, then 
\begin{equation*}
\frac{F(t,y|d_{n})-F(t,y)}{s_{n}}-(\frac{-d_{n}(t,v^{y})}{\partial
_{v}Q(t,v^{y})})=o(\delta _{n})
\end{equation*}%
uniformly over $(t,y)\in \mathcal{TY}$. \label{lem:cdf} }
\end{prop}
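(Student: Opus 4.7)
The plan is to carry out a direct level-set comparison, avoiding inversion of the possibly non-monotone perturbed function $G_n(t,v) := Q(t,v) + s_n h_n(t,v)$. By strict monotonicity of $Q(t,\cdot)$ (Assumption \ref{ass:cdf}.1), $F(t,y) = v^y$, and the natural first-order candidate for $F(t,y|h_n)$ is
\begin{equation*}
\tilde v(t,y) := v^y - \frac{s_n h_n(t,v^y)}{\partial_v Q(t,v^y)}.
\end{equation*}
I would prove that $F(t,y|h_n) = \tilde v(t,y) + o(s_n \delta_n)$ uniformly in $(t,y) \in \mathcal{TY}$. Subtracting $v^y$ and dividing by $s_n$ then yields the claimed linear expansion, since $(\tilde v - v^y)/s_n = -h_n(t,v^y)/\partial_v Q(t,v^y)$.

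The heart of the argument is showing that the level set $\{v \in [0,1] : G_n(t,v) \le y\}$ and the interval $[0, \tilde v(t,y)] \cap [0,1]$ differ only in a strip around $\tilde v$ of Lebesgue measure $o(s_n \delta_n)$. First, a second-order Taylor expansion of $Q(t,\cdot)$ at $v^y$, combined with the oscillation bound $|h_n(t,\tilde v) - h_n(t,v^y)| = o(\delta_n)$ (valid because $|\tilde v - v^y| \le s_n \sup|h_n|/c = o(\varepsilon_n)$ by Assumption \ref{ass:cdf}.2) and the remainder estimate $O((\tilde v - v^y)^2) = O(s_n^2 \sup|h_n|^2) = s_n \cdot o(\delta_n)$, gives $G_n(t,\tilde v) - y = o(s_n \delta_n)$. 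Second, for $v = \tilde v + r$ with $0 < r \le \varepsilon_n$, strict monotonicity yields $Q(t,v) - Q(t,\tilde v) \ge c r$, while the oscillation bound gives $|s_n(h_n(t,v) - h_n(t,\tilde v))| = s_n \cdot o(\delta_n)$; combining with the first step, $G_n(t,v) - y > 0$ once $r$ exceeds a suitable constant multiple of $s_n \delta_n$. For $r > \varepsilon_n$ one instead uses the crude bound $|s_n(h_n(t,v) - h_n(t,\tilde v))| \le 2 s_n \sup|h_n| = o(\varepsilon_n)$, which is dominated by $c r \ge c \varepsilon_n$. The symmetric argument treats $r < 0$.

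The principal obstacle, and the reason the functional delta method of \citeauthor{CFG10} does not apply verbatim, is that $h_n$ is not assumed tight: in our application $\sup|h_n| = O_p(\log^{1/2} n)$ diverges, so one cannot pass to a limit in $h_n$. This is overcome by the quantitative oscillation condition in Assumption \ref{ass:cdf}.2, which plays the role that equicontinuity would play if $h_n$ were tight; the side conditions $s_n \sup|h_n| = o(\varepsilon_n)$ and $s_n \sup|h_n|^2 = o(\delta_n)$ are calibrated precisely so that the displacement $\tilde v - v^y$ fits inside the oscillation scale $\varepsilon_n$ and the Taylor remainder at $\tilde v$ fits inside $s_n \delta_n$. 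Uniformity in $(t,y)$ is automatic since every control in Assumption \ref{ass:cdf} is stated uniformly in $(t,v)$.
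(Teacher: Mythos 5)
Your proposal is correct and takes essentially the same route as the paper's proof: both split $[0,1]$ into an $\varepsilon_n$-window around the relevant point, where the oscillation condition freezes $h_n$ up to $o(\delta_n)$, and a far region where $s_n\sup|h_n|=o(\varepsilon_n)$ cannot flip the indicators, with $s_n\sup|h_n|^2=o(\delta_n)$ absorbing the second-order/derivative-variation term. The differences are cosmetic — you compare the level set directly with $[0,\tilde v(t,y)]$ and read off uniformity from the uniform assumptions, whereas the paper sandwiches with the frozen perturbation $h_n(t,v^y)\pm\delta\delta_n$, changes variables $y=Q(t,v)$, and routes uniformity through Lemma 1 of \cite{CFG10} — just state your positivity threshold as $\delta\, s_n\delta_n$ for arbitrary $\delta>0$ (equivalently $o(s_n\delta_n)$), which your own estimates already deliver, so the exceptional strip has measure $o(s_n\delta_n)$ rather than $O(s_n\delta_n)$.
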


\begin{proof}
Consider $(t_{n},y_{n})\rightarrow (t_{0},y_{0})$ and denote $v_{n}$
as $y_{n}=Q(t_{n},v_{n})$. Note that \ 
\begin{align*}
F(t_{n},y_{n}|d_{n})=& \int_{0}^{1}1\{Q(t_{n},v)+s_{n}d_{n}(t_{n},v)\leq
y_{n}\}dv \\
=&
\int_{0}^{1}1%
\{Q(t_{n},v)+s_{n}(d_{n}(t_{n},v_{n})+d_{n}(t_{n},v)-d_{n}(t_{n},v_{n}))\leq
y_{n}\}dv.
\end{align*}%
Let $\mathbb{B}_{\varepsilon }(v)=\{v^{\prime }:\left\vert v-v^{\prime
}\right\vert \leq \varepsilon \}$. For fixed $n$, if $v\in \mathbb{B}%
_{\varepsilon _{n}}(v_{n})\cap \lbrack 0,1]$, by Assumption \ref{ass:cdf}, 
\begin{equation*}
d_{n}(t_{n},v)-d_{n}(t_{n},v_{n})=o(\delta _{n}).
\end{equation*}%
Then for any $\delta >0$, there exists $n_{1}$ such that if $n\geq n_{1}$ , $%
|d_{n}(t_{n},v)-d_{n}(t_{n},v_{n})|\leq \delta \delta _{n}$ and 
\begin{equation*}
F(t_{n},y_{n}|d_{n})\leq
\int_{0}^{1}1\{Q(t_{n},v)+s_{n}(d_{n}(t_{n},v_{n})-\delta \delta _{n})\leq
y_{n}\}dv.
\end{equation*}%
If $v\notin \mathbb{B}_{\varepsilon _{n}}(v_{n})$, then there exists $n_{2}$
such that for $n\geq n_{2}$, 
\begin{equation}
|Q(t_{n},v)-y_{n}|\geq c\varepsilon _{n}.  \label{eq:cdf1}
\end{equation}%
Furthermore, by Assumption \ref{ass:cdf}, 
\begin{equation*}
s_{n}d_{n}(t_{n},v)\leq \sup_{(t,v)\in \mathcal{T}\times \lbrack
0,1]}|d_{n}(t,v)|s_{n}=o(\varepsilon _{n}).
\end{equation*}%
Therefore, 
\begin{equation*}
F(t_{n},y_{n}|d_{n})=\int_{0}^{1}1\{Q(t_{n},v) + s_{n}(d_{n}(t_{n},v_{n})-\delta \delta _{n})\leq y_{n}\}dv
\end{equation*}%
and 
\begin{equation}
\begin{split}
& \frac{F(t_{n},y_{n}|d_{n})-F(t_{n},y_{n})}{s_{n}}-(\frac{%
-d_{n}(t_{n},v_{n})}{\partial _{v}Q(t_{n},v_{n})}) \\
\leq & \int_{\mathbb{B}_{\varepsilon _{n}}(v_{n})}\frac{1}{s_{n}}\biggl(%
1\{Q(t_{n},v)+s_{n}(d_{n}(t_{n},v_{n})-\delta \delta _{n})\leq
y_{n}\}-1\{Q(t_{n},v)\leq y_{n}\}\biggr)dv+(\frac{d_{n}(t_{n},v_{n})}{%
\partial _{v}Q(t_{n},v_{n})}) \\
=& \int_{\mathbb{J}_{n}\cap \lbrack
y_{n},y_{n}-s_{n}(d_{n}(t_{n},v_{n})-\delta \delta _{n})]}\frac{dy}{%
s_{n}\partial _{v}Q(t_{n},v_{n}(y))}+\frac{d_{n}(t_{n},v_{n})}{\partial
_{v}Q(t_{n},v_{n})},
\end{split}
\label{C.2}
\end{equation}%
where the equality follows by the change of variables: $y=Q(t_{n},v)$, $%
v_{n}(y)=Q^{\leftarrow }(t_{n},\cdot )(y)$, and $\mathbb{J}_{n}$ is the
image of $\mathbb{B}_{\varepsilon _{n}}(v_{n})$. By \eqref{eq:cdf1} and
Assumption \ref{ass:cdf}.2, $[y_{n},y_{n}-s_{n}(d_{n}(t_{n},v_{n})-\delta
\delta _{n})]$ is nested by $\mathbb{J}_{n}$ for $n$ sufficiently large. In
addition, since $\partial _{v}Q(t,v)>c$ uniformly over $\mathcal{T}\times
\lbrack 0,1]$, for $y\in \lbrack y_{n},y_{n}-s_{n}(d_{n}(t_{n},v_{n}))]$, 
\begin{equation*}
|v_{n}(y)-v_{n}|=|Q^{\leftarrow }(t_{n},\cdot )(y)-Q^{\leftarrow
}(t_{n},\cdot )(y_{n})|\leq Cs_{n}(\sup_{(t,v)\in \mathcal{T}\times \lbrack
0,1]}|d_{n}(t,v)|).
\end{equation*}%
Then the r.h.s. of (\ref{C.2}) is bounded from above by \ 
\begin{align*}
& \frac{\delta \delta _{n}}{\partial _{v}Q(t_{n},\tilde{v}_{n}(\tilde{y}))}%
+\int_{[y_{n},y_{n}-s_{n}d_{n}(t_{n},v_{n})]}(\frac{1}{\partial
_{v}Q(t_{n},v_{n}(y))}-\frac{1}{\partial _{v}Q(t_{n},v_{n})})\frac{dy}{s_{n}}
\\
\leq & C\delta \delta _{n}+Cs_{n}(\sup_{(t,v)\in \mathcal{T}\times \lbrack
0,1]}|d_{n}^{2}(t,v)|)\leq C^{\prime }\delta \delta _{n},
\end{align*}
where $\tilde{y} \in (y_n - s_{n}d_{n}(t_{n},v_{n}), y_n -
s_{n}(d_{n}(t_{n},v_{n})-\delta\delta_n))$. Since $\delta $ is arbitrary, by
letting $\delta \rightarrow 0$, we obtain that 
\begin{equation*}
\frac{F(t_{n},y_{n}|d_{n})-F(t_{n},y_{n})}{s_{n}}-(\frac{-d_{n}(t_{n},v_{n})%
}{\partial _{v}Q(t_{n},v_{n})})\leq o(\delta _{n}).
\end{equation*}%
Similarly, we can show that 
\begin{equation*}
\frac{F(t_{n},y_{n}|d_{n})-F(t_{n},y_{n})}{s_{n}}-(\frac{-d_{n}(t_{n},v_{n})%
}{\partial _{v}Q(t_{n},v_{n})})\geq o(\delta _{n}).
\end{equation*}%
Therefore, we have proved that 
\begin{equation*}
\frac{F(t_{n},y_{n}|d_{n})-F(t_{n},y_{n})}{s_{n}}-(\frac{-d_{n}(t_{n},v_{n})%
}{\partial _{v}Q(t_{n},v_{n})})=o(\delta _{n}).
\end{equation*}%
Since the above result holds for any sequence of $(t_{n},y_{n})$, then by
Lemma 1 \cite{CFG10}, we have that uniformly over $(t,y)\in \mathcal{TY}$, 
\begin{equation*}
\frac{F(t,y|d_{n})-F(t,y)}{s_{n}}-(\frac{-d_{n}(t,v^{y})}{\partial
_{v}Q(t,v^{y})})=o(\delta _{n}).
\end{equation*}%
This completes the proof of the proposition.
\end{proof}

\bigskip

Let $F(t,y)$ and $F^{\leftarrow }(t,u)$ be a monotonic function and
its inverse w.r.t. $y$, respectively. Next, we consider the linear expansion
of the inverse functional: 
\begin{equation*}
(F+s_{n}J_{n})^{\leftarrow }-F^{\leftarrow }
\end{equation*}%
where $s_{n}\downarrow 0$ as the sample size $n\rightarrow \infty $ and $%
J_{n}(t,y)$ is some perturbation function.

\begin{ass}
\begin{enumerate}
\item $F(t,y)$ has a compact support $\mathcal{TY} = \{(t,y): y =
Q(t,v), (t,v) \in \mathcal{TV} := \mathcal{T} \times \mathcal{V}\}$. Denote $%
\mathcal{V}_\varepsilon$, $\mathcal{TY}_\varepsilon$, $\mathcal{Y}
_{t\varepsilon}$, and $\underline{y}_t$ as a compact subset of $\mathcal{V}$
, $\{(t,y): y = Q(t,v), (t,v) \in \mathcal{T} \times \mathcal{V}
_\varepsilon\}$, the projection of $\mathcal{TY}_\varepsilon$ on $T=t$, and
the lower bound of $\overline{(\mathcal{Y}_{\varepsilon t})^\varepsilon}$,
respectively. Then for any $t \in \mathcal{T}$, $\underline{y}_t> -\infty$
and $\overline{(\mathcal{Y}_{\varepsilon t})^\varepsilon} \subset \mathcal{Y}
_t $.

\item $F(t,y)$ is monotonic and twice continuously differentiable
w.r.t. $y$. The first and second derivatives are denoted as $f(t,y)$ and $%
f^{\prime }(t,y)$ respectively. Then both $f(t,y)$ and $f^{\prime }(t,y)$
are bounded and $f(t,y)$ is also bounded away from zero, uniformly over $%
\mathcal{TY}$.

\item Let $\mathcal{T}\mathcal{Y}\mathcal{Y}=\{(t,y,y^{\prime
}):y=Q(t,v),y^{\prime }=Q(t,v^{\prime }),(t,v,v^{\prime })\in \mathcal{T}%
\times \mathcal{V}\times \mathcal{V}\}$. Then, there exist two vanishing
sequences $\varepsilon _{n}$ and $\delta _{n}$ such that 
\begin{equation*}
\sup_{(t,y,y^{\prime })\in \mathcal{T}\mathcal{Y}\mathcal{Y},|y-y^{\prime
}|\leq \max (\varepsilon _{n},s_{n}\delta
_{n})}|J_{n}(t,y)-J_{n}(t,y^{\prime })|=o(\delta _{n}),
\end{equation*}%
\begin{equation*}
\sup_{(t,y)\in \mathcal{TY}}|J_{n}(t,y)|s_{n}=o(\varepsilon _{n}),\quad 
\text{ and}\quad \sup_{(t,y)\in \mathcal{TY}}|J_{n}(t,y)|^{2}s_{n}=o(\delta
_{n}).
\end{equation*}%

\end{enumerate}

\label{ass:inv}
\end{ass}

\begin{prop}
If Assumption \ref{ass:inv} holds, then 
\begin{equation*}
\frac{(F + s_n J_n)^{\leftarrow}(t,v) - F^{\leftarrow}(t,v)}{s_n} + \frac{
J_n(t,F^{\leftarrow}(t,v))}{f(t,F^{\leftarrow}(t,v))} = o(\delta_n)
\end{equation*}
uniformly over $(t,v) \in \mathcal{TV}_\varepsilon$. \label{lem:inv}
\end{prop}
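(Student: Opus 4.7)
The plan is to mimic the proof of Proposition~\ref{lem:cdf}: work along an arbitrary sequence $(t_n,v_n)\to(t_0,v_0)$ in $\mathcal{TV}_\varepsilon$, obtain the pointwise expansion via the mean value theorem applied to the defining equation of the inverse, and then invoke Lemma~1 of \cite{CFG10} to upgrade the sequential statement to uniform convergence over $\mathcal{TV}_\varepsilon$. The core algebraic identity is as follows: set $y=F^{\leftarrow}(t,v)$ and $\tilde y=(F+s_nJ_n)^{\leftarrow}(t,v)$. Both sides of $F(t,\tilde y)+s_nJ_n(t,\tilde y)=v=F(t,y)$ imply $F(t,\tilde y)-F(t,y)=-s_nJ_n(t,\tilde y)$, so by the mean value theorem there is $\bar y$ between $y$ and $\tilde y$ with $f(t,\bar y)(\tilde y-y)=-s_nJ_n(t,\tilde y)$.

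First I would derive a crude bound on $\tilde y-y$. Because $f$ is bounded away from zero (Assumption~\ref{ass:inv}.2) and $s_n\sup|J_n|=o(\varepsilon_n)$ (Assumption~\ref{ass:inv}.3), the MVT identity gives $|\tilde y-y|=o(\varepsilon_n)\le\max(\varepsilon_n,s_n\delta_n)$ for $n$ large. The equicontinuity part of Assumption~\ref{ass:inv}.3 then yields $|J_n(t,\tilde y)-J_n(t,y)|=o(\delta_n)$. Next, the Lipschitz continuity of $f$ in $y$ (from boundedness of $f'$) gives $|f(t,\bar y)-f(t,y)|=O(|\tilde y-y|)=O(s_n\sup|J_n|)$, so expanding $1/f(t,\bar y)$ around $1/f(t,y)$ in the MVT identity and collecting terms produces
\begin{equation*}
\tilde y-y=-\frac{s_nJ_n(t,y)}{f(t,y)}+O(s_n\delta_n)+O\bigl(s_n^2\sup|J_n|^2\bigr).
\end{equation*}
The final term is $s_n\cdot o(\delta_n)$ by $s_n\sup|J_n|^2=o(\delta_n)$ from Assumption~\ref{ass:inv}.3, so dividing by $s_n$ gives the pointwise claim along the sequence. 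Since the sequence $(t_n,v_n)\subset\mathcal{TV}_\varepsilon$ was arbitrary, Lemma~1 of \cite{CFG10} upgrades this to the desired uniform bound over $\mathcal{TV}_\varepsilon$.

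The main obstacle will be ensuring that $\tilde y=(F+s_nJ_n)^{\leftarrow}(t,v)$ is well-defined as the unique solution of $F(t,\tilde y)+s_nJ_n(t,\tilde y)=v$ and that $\tilde y$ stays inside $\mathcal{Y}_t$, where the derivatives $f$ and $f'$ are controlled and where the equicontinuity of $J_n$ applies. Here I would use Assumption~\ref{ass:inv}.1, which guarantees $\overline{(\mathcal{Y}_{\varepsilon t})^\varepsilon}\subset\mathcal{Y}_t$: for $(t,v)\in\mathcal{TV}_\varepsilon$ we have $y=F^{\leftarrow}(t,v)\in\mathcal{Y}_{\varepsilon t}$, and the crude bound $|\tilde y-y|=o(\varepsilon_n)$ places $\tilde y$ in an $\varepsilon$-enlargement of $\mathcal{Y}_{\varepsilon t}$ for $n$ large, still inside $\mathcal{Y}_t$. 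Strict monotonicity of $F$ in $y$ (since $f$ is bounded below), combined with $s_n\sup|J_n|=o(\varepsilon_n)$, shows that the map $y\mapsto F(t,y)+s_nJ_n(t,y)$ takes the value $v$ at some point in this enlargement, so the generalized inverse reduces to that point and the MVT argument applies. Passing uniformly to the limit along arbitrary sequences then completes the proof via Lemma~1 of \cite{CFG10}.
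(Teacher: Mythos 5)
Your overall architecture (crude bound on the inverse, then the equicontinuity and smoothness assumptions to refine it, then uniformity over sequences) parallels the paper's proof, but there is a genuine gap at your "core algebraic identity." You start from $F(t,\tilde y)+s_nJ_n(t,\tilde y)=v$, i.e.\ you treat $\tilde y=(F+s_nJ_n)^{\leftarrow}(t,v)$ as an exact root of the perturbed equation. Nothing in Assumption \ref{ass:inv} delivers this: $J_n$ is not assumed continuous in $y$ (in the application it is built from indicator-type empirical quantities and the perturbed map is neither continuous nor monotone), so the generalized inverse only yields one-sided inequalities. Your proposed fix via the intermediate value theorem --- "strict monotonicity of $F$ plus $s_n\sup|J_n|=o(\varepsilon_n)$ shows the map $y\mapsto F(t,y)+s_nJ_n(t,y)$ takes the value $v$, and the generalized inverse reduces to that point" --- does not work: the IVT needs continuity of $J_n(t,\cdot)$, whereas Assumption \ref{ass:inv}.3 only provides an asymptotic modulus of continuity at scale $\max(\varepsilon_n,s_n\delta_n)$ with error $o(\delta_n)$; and even if the value $v$ were attained somewhere, the perturbed map is not monotone, so the generalized inverse need not equal (nor be unique at) that point. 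Consequently the mean-value identity $f(t,\bar y)(\tilde y-y)=-s_nJ_n(t,\tilde y)$, on which the rest of your expansion rests, is unjustified.

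The paper circumvents exactly this issue: writing $\xi_n(t,v)=(F+s_nJ_n)^{\leftarrow}(t,v)$ and $\eta_n(t,v)=\min(s_n\delta_n^2,\xi_n(t,v)-\underline{y}_t)$, it uses only the two-sided inequality
\begin{equation*}
(F+s_nJ_n)(t,\xi_n(t,v)-\eta_n(t,v))\;\leq\; v\;\leq\;(F+s_nJ_n)(t,\xi_n(t,v)),
\end{equation*}
which follows from the definition of the generalized inverse without any continuity or monotonicity of the perturbed map, and then derives the upper and lower bounds on $(\xi_n-\xi)/s_n+J_n(t,\xi)/f(t,\xi)$ separately, absorbing the $\eta_n$ offset and the oscillation of $J_n$ into $o(\delta_n)$ terms. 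Your argument can in principle be repaired along the same lines (the equicontinuity bound does control the "defect" $(F+s_nJ_n)(t,\tilde y)-v$ up to $s_n\,o(\delta_n)$ once you argue from the definition of the infimum on both sides), but that repair is precisely the inequality-based device you replaced with an exact-equation-plus-IVT step; as written, the proof does not go through.
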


\begin{proof}
Without loss of generality, we assume $F(t,y)$ is monotonically
increasing in $y$. Let $\xi (t,v)=F^{\leftarrow }(t,v)$ and $\xi
_{n}(t,v)=(F+s_{n}J_{n})^{\leftarrow }(t,v).$ Since for $n$ sufficiently
large, $\sup_{(t,v)\in \mathcal{TV}_{\varepsilon }}s_{n}|J_{n}^{\leftarrow
}(t,v)|<\varepsilon $ and by the definition of $V_{\varepsilon }$, we can
choose $\xi (t,v)\in \mathcal{Y}_{t}$ and $\xi _{n}(t,v)\in \mathcal{Y}_{t}$%
. In addition, since $F$ is differentiable, we have $F(t,\xi (t,v))=v$.
Denote $\eta _{n}(t,v)=\min (s_{n}\delta _{n}^{2},\xi _{n}(t,v)-\underline{y}%
_{t})$. Then, the definition of the inverse function implies that 
\begin{equation}
(F+s_{n}J_{n})(t,\xi _{n}(t,v)-\eta _{n}(t,v))\leq v\leq
(F+s_{n}J_{n})(t,\xi _{n}(t,v)).  \label{eq:u1}
\end{equation}%
Since $f(t,y)$ is bounded uniformly in $(t,y)\in \mathcal{TY}$, we have 
\begin{equation*}
F(t,\xi _{n}(t,v)-\eta _{n}(t,v))-v=F(t,\xi _{n}(t,v))-F(t,\xi
(t,v))+o(s_{n}\delta _{n})
\end{equation*}%
and 
\begin{equation*}
|s_{n}J_{n}(t,\xi _{n}(t,v)-\eta _{n}(t,v))|\leq \sup_{(t,y)\in \mathcal{TY}%
}s_{n}|J_{n}(t,y)|.
\end{equation*}%
Therefore, \eqref{eq:u1} implies that 
\begin{equation*}
-\sup_{(t,y)\in \mathcal{TY}}s_{n}|J_{n}(t,y)|\leq F(t,\xi
_{n}(t,v))-F(t,\xi (t,v))\leq \sup_{(t,y)\in \mathcal{TY}%
}s_{n}|J_{n}(t,y)|+o(s_{n}\delta _{n}).
\end{equation*}%
Since $f(t,y)$ is bounded and bounded away from zero, we have 
\begin{equation*}
|\xi _{n}(t,v)-\xi (t,v)|=O(\sup_{(t,y)\in \mathcal{TY}%
}s_{n}|J_{n}(t,y)|)+o(s_{n}\delta _{n})=o(\max (\varepsilon _{n},s_{n}\delta
_{n})).
\end{equation*}%
Then, 
\begin{align*}
& F(t,\xi _{n}(t,v)-\eta _{n}(t,v))-F(t,\xi (t,v))+s_{n}J_{n}(t,\xi
_{n}(t,v)-\eta _{n}(t,v)) \\
\geq & F(t,\xi _{n}(t,v))-F(t,\xi (t,v))-o(s_{n}\delta
_{n})+s_{n}J_{n}(t,\xi (t,v))-s_{n}\sup |J_{n}(t,y)-J_{n}(t,y^{\prime })| \\
\geq & f(t,\xi (t,v))(\xi _{n}(t,v)-\xi (t,v))+s_{n}J_{n}(t,\xi
(t,v))-O(\sup_{(t,y)\in \mathcal{TY}}s_{n}^{2}|J_{n}(t,y)|^{2})-o(s_{n}^{2}%
\delta _{n}^{2})-o(s_{n}\delta _{n}) \\
\geq & f(t,\xi (t,v))(\xi _{n}(t,v)-\xi (t,v))+s_{n}J_{n}(t,\xi
(t,v))-o(s_{n}\delta _{n}),
\end{align*}%
where the supremum in the second line is taken over $(t,y,y^{\prime })\in 
\mathcal{T}\mathcal{Y}\mathcal{Y}$, $|y-y^{\prime }|\leq \max (\varepsilon
_{n},s_{n}\delta _{n})$, and the third line is because $f^{\prime }(t,y)$ is
bounded uniformly in $(t,y)\in \mathcal{TY}$.

On the other hand, by \eqref{eq:u1}, 
\begin{equation*}
F(t,\xi _{n}(t,v)-\eta _{n}(t,v))-F(t,\xi (t,v))+s_{n}J_{n}(t,\xi
_{n}(t,v)-\eta _{n}(t,v))\leq 0.
\end{equation*}%
Therefore, we have 
\begin{equation}
\frac{(\xi _{n}(t,v)-\xi (t,v))}{s_{n}}+\frac{J_{n}(t,\xi (t,v))}{f(t,\xi
(t,v))}\leq o(\delta _{n}).  \label{eq:xi1}
\end{equation}%
Similarly, we can show that \ 
\begin{align*}
& F(t,\xi _{n}(t,v))-F(t,\xi (t,v))+s_{n}J_{n}(t,\xi _{n}(t,v)) \\
\leq & f(t,\xi (t,v))(\xi _{n}(t,v)-\xi (t,v))+s_{n}J_{n}(t,\xi
(t,v))+o(s_{n}\delta _{n}).
\end{align*}%
The r.h.s. of \eqref{eq:u1} implies that 
\begin{equation*}
F(t,\xi _{n}(t,v))-F(t,\xi (t,v))+s_{n}J_{n}(t,\xi _{n}(t,v))\geq 0.
\end{equation*}%
Therefore, 
\begin{equation}
\frac{(\xi _{n}(t,v)-\xi (t,v))}{s_{n}}+\frac{J_{n}(t,\xi (t,v))}{f(t,\xi
(t,v))}\geq -o(\delta _{n}).  \label{eq:xi2}
\end{equation}%
\eqref{eq:xi1} and \eqref{eq:xi2} imply that 
\begin{equation*}
\frac{(\xi _{n}(t,v)-\xi (t,v))}{s_{n}}+\frac{J_{n}(t,\xi (t,v))}{f(t,\xi
(t,v))}=o(\delta _{n})
\end{equation*}%
uniformly over $(t,v)\in \mathcal{TV}$. 
\end{proof}

\section{Additional Simulation Results}
\label{sec:addsim} This section investigates the sensitivity of
bootstrap confidence intervals against the tuning parameters $h_1$, $\tilde{\lambda}$, and $\lambda$, reports the finite sample performance for the oracle estimator and the estimator for the mean potential outcomes, and illustrates limitation of our method.

\subsection{Sensitivity Analysis}
We check the sensitivity of our estimation method with respect to three tuning parameters: $h_1$, $\tilde{\lambda}$, and $\lambda$. We focus on the first design in Section \ref{sec:sim}. Figures \ref{fig:comp1} and \ref{fig:comp2} show the coverage probabilities of $q_{\tau}(t)$ and $\beta _{\tau}^{1}(t)$ with $%
h_1'=0.8h_{1}$ and $%
h_1'=1.2h_{1}$, respectively. Figures \ref{fig:comp3} and \ref{fig:comp4} show the coverage probabilities of $q_{\tau}(t)$ and $\beta _{\tau}^{1}(t)$ with $%
\tilde{\lambda}'=0.8\tilde{\lambda}$ and $%
\tilde{\lambda}'=1.2\tilde{\lambda}$, respectively, where $\tilde{\lambda}$ is the penalty used to estimate the conditional density $f_t(X)$. Last, Figures \ref{fig:comp5} and \ref{fig:comp6} show the coverage probability $q_{\tau}(t)$ and $\beta _{\tau}^{1}(t)$ with $%
\lambda'=0.8\lambda$ and $%
\lambda'=1.2\lambda$, respectively, where $\lambda$ is the penalty used to estimate the conditional CDF $\phi_{t,u}(X)$. We observe that the coverage probabilities are in general not sensitive to
the choice of tuning parameters.

\begin{figure}[H]
\centering
\includegraphics[scale = 0.55,angle=0]{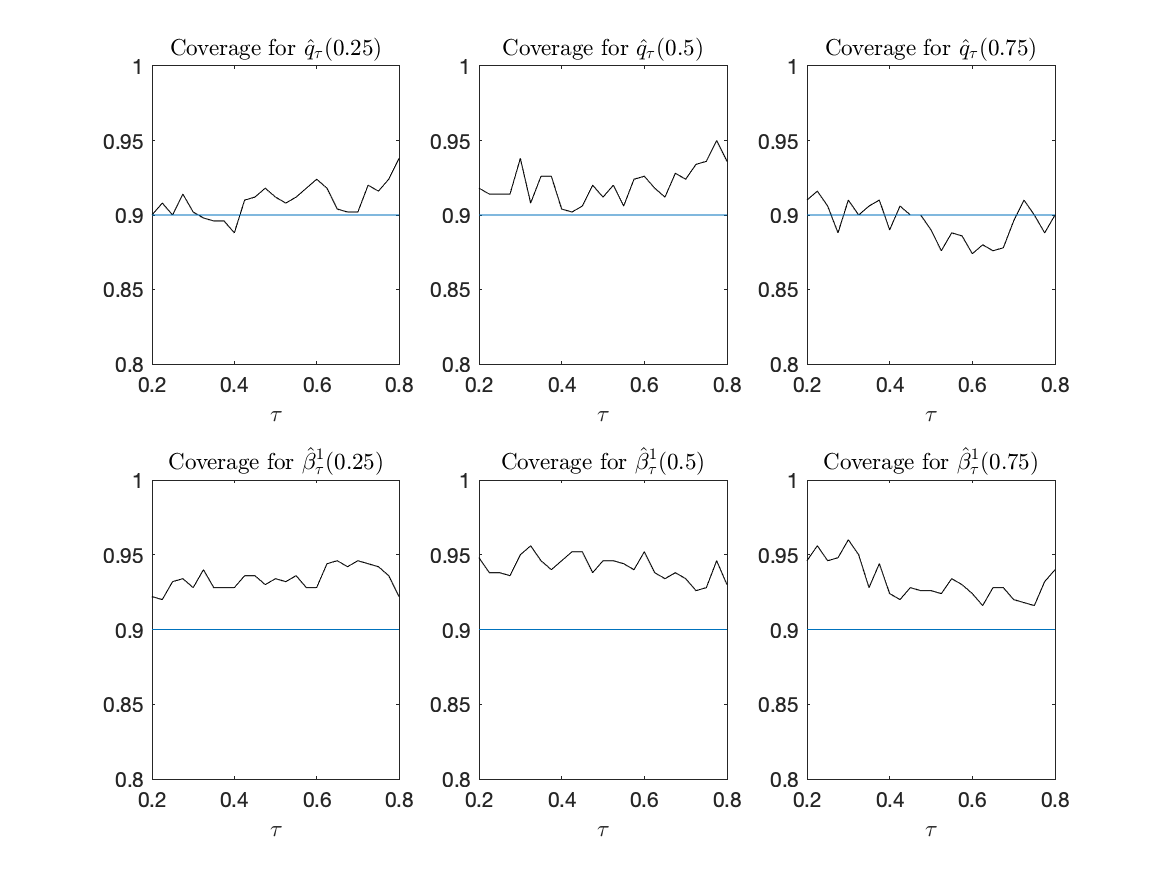}   
\caption{coverage probability for small $h_1$}
\label{fig:comp1}
\end{figure}

\begin{figure}[H]
 \centering
\includegraphics[scale = 0.55,angle=0]{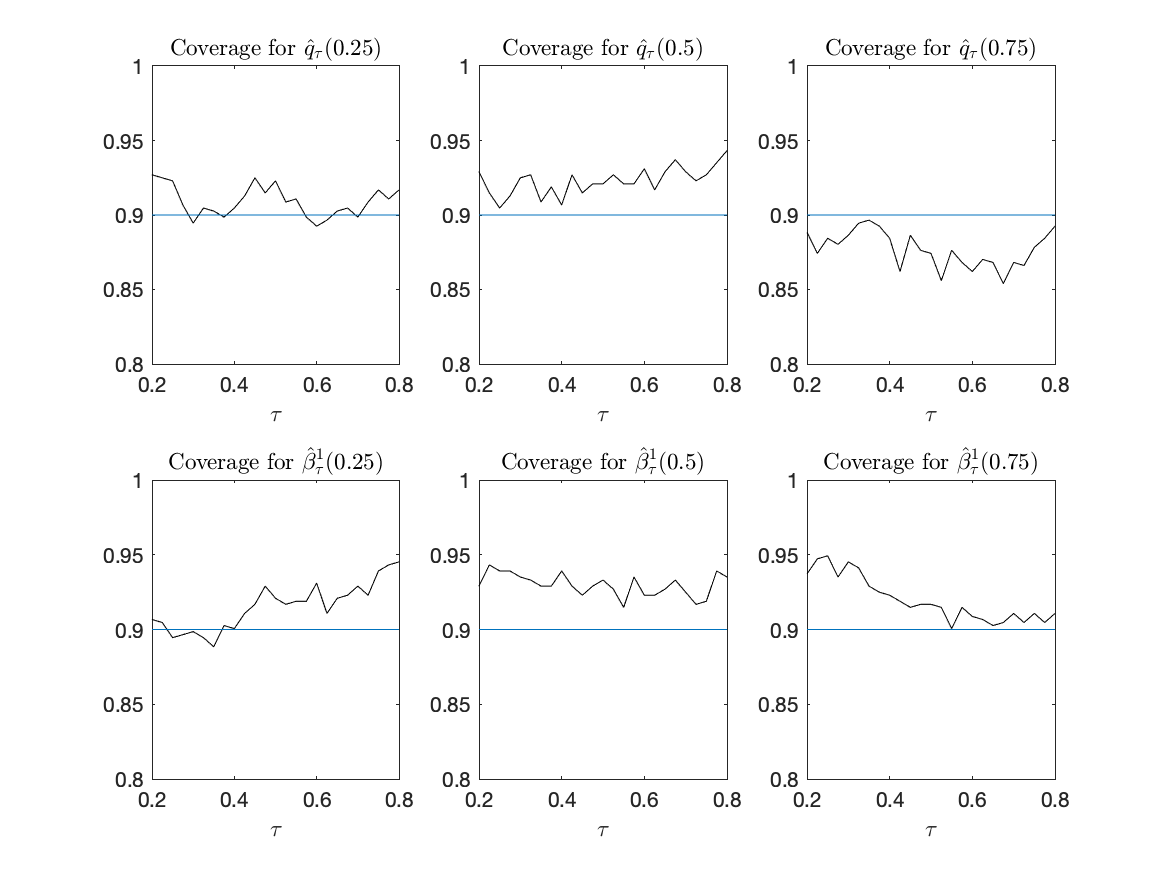}   
\caption{coverage probability for large $h_1$}
\label{fig:comp2}
\end{figure}

\begin{figure}[H]
	\centering
	\includegraphics[scale = 0.55,angle=0]{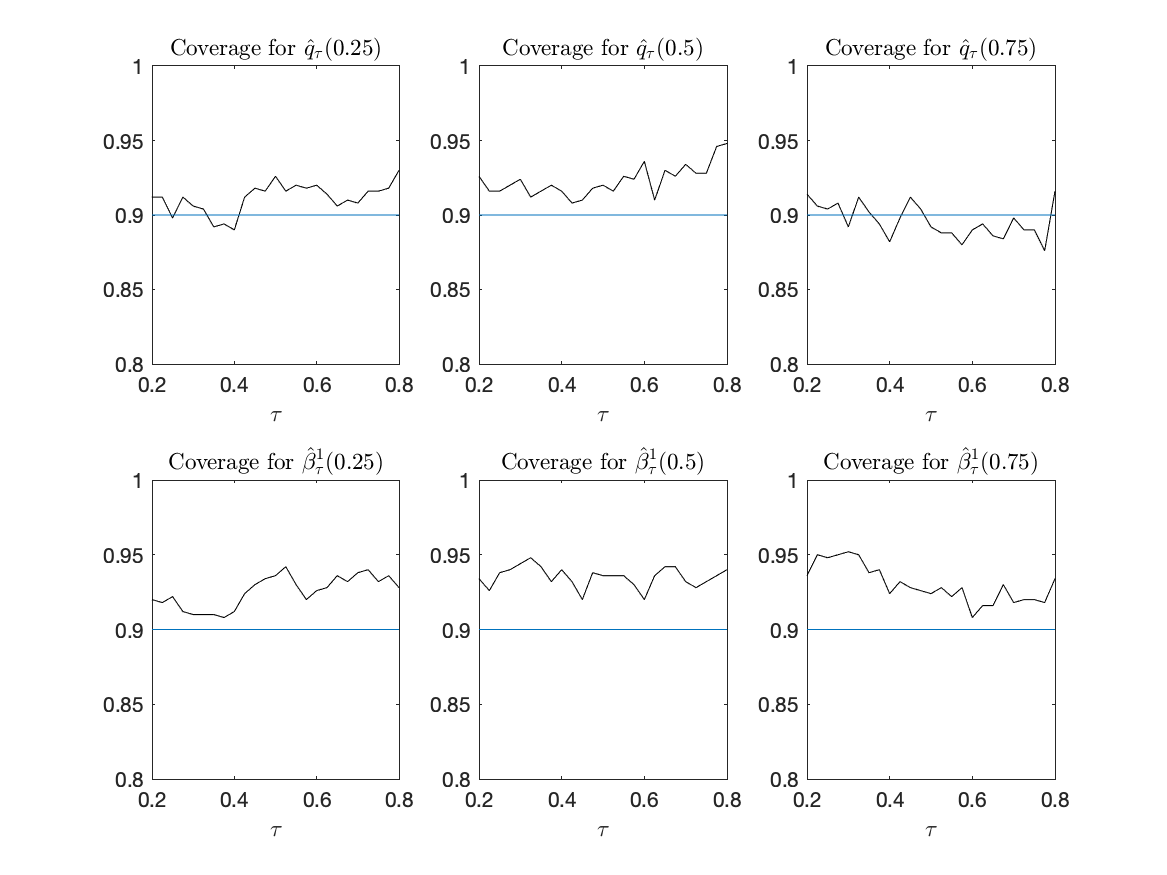}   
	\caption{coverage probability for small $\tilde{\lambda}$}
	\label{fig:comp3}
\end{figure}

\begin{figure}[H]
	\centering
	\includegraphics[scale = 0.55,angle=0]{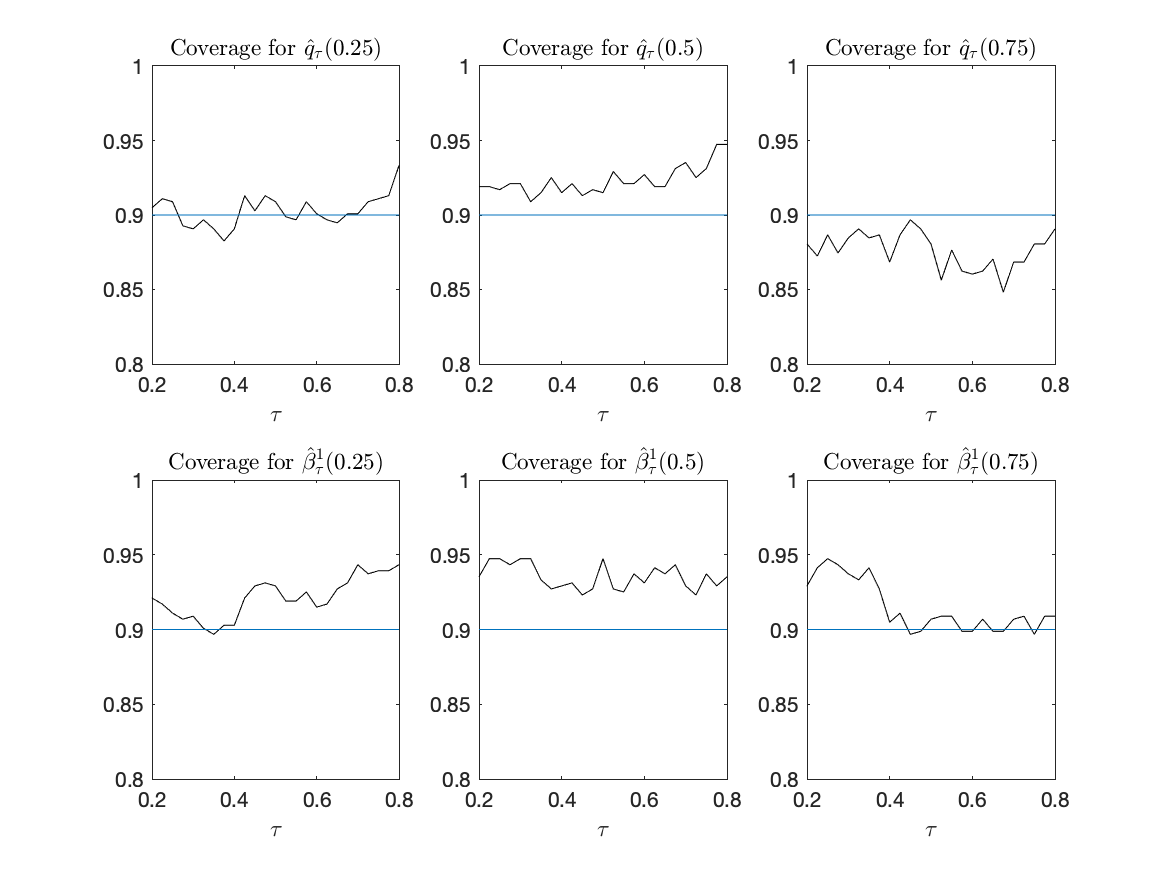}   
	\caption{coverage probability for large $\tilde{\lambda}$}
	\label{fig:comp4}
\end{figure}

\begin{figure}[H]
	\centering
	\includegraphics[scale = 0.55,angle=0]{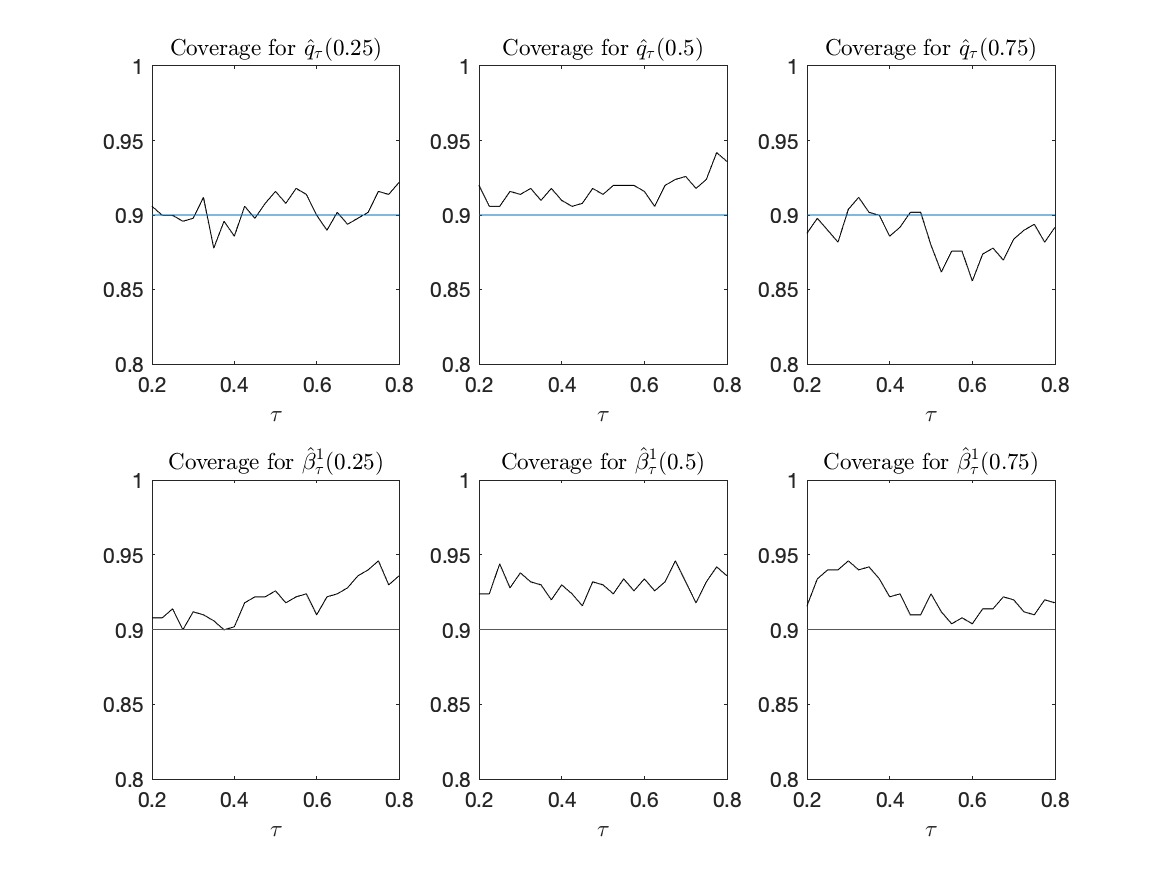}   
	\caption{coverage probability for small $\lambda$}
	\label{fig:comp5}
\end{figure}

\begin{figure}[H]
	\centering
	\includegraphics[scale = 0.55,angle=0]{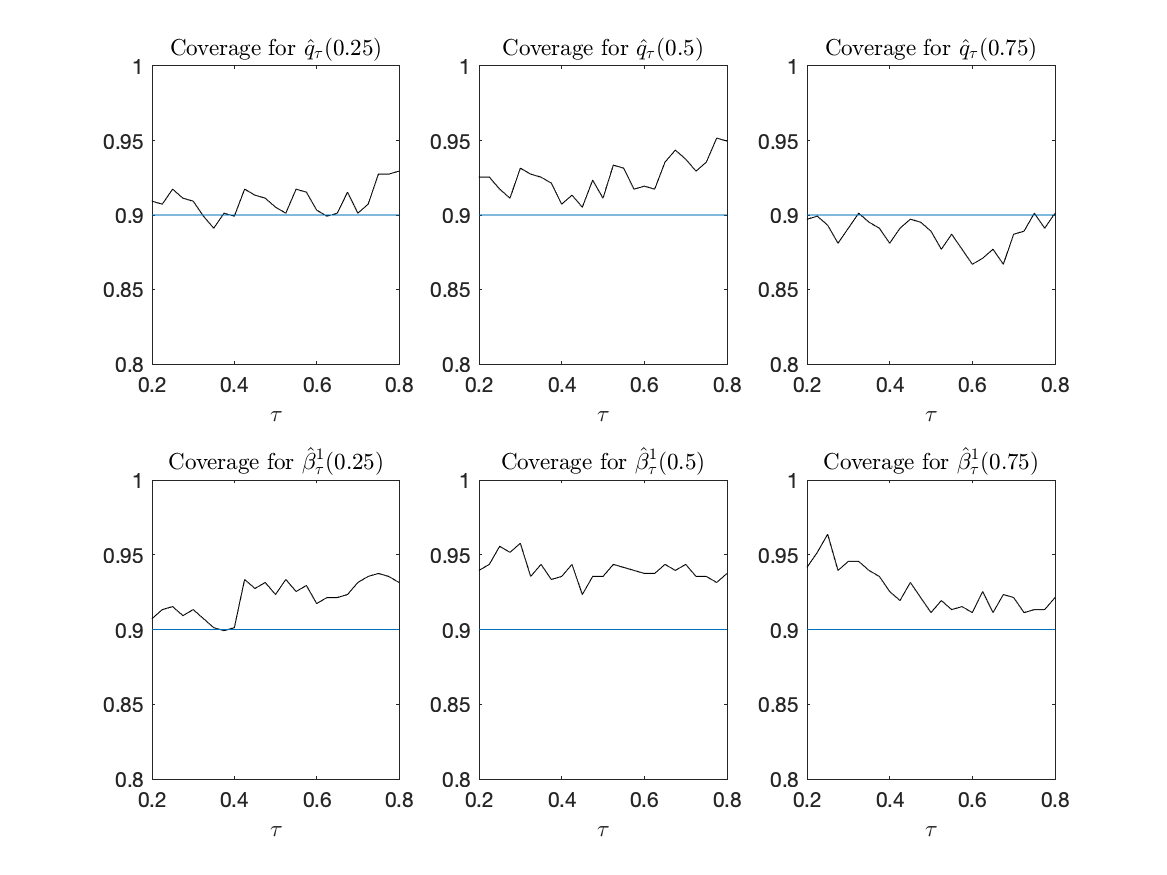}   
	\caption{coverage probability for large $\lambda$}
	\label{fig:comp6}
\end{figure}

\subsection{Oracle Estimators}
Next, we show the coverage probabilities for the oracle estimators in which the infinite-dimensional nuisance parameters are assumed to be known.
\begin{figure}[H]
	\centering
	\includegraphics[scale = 0.55,angle=0]{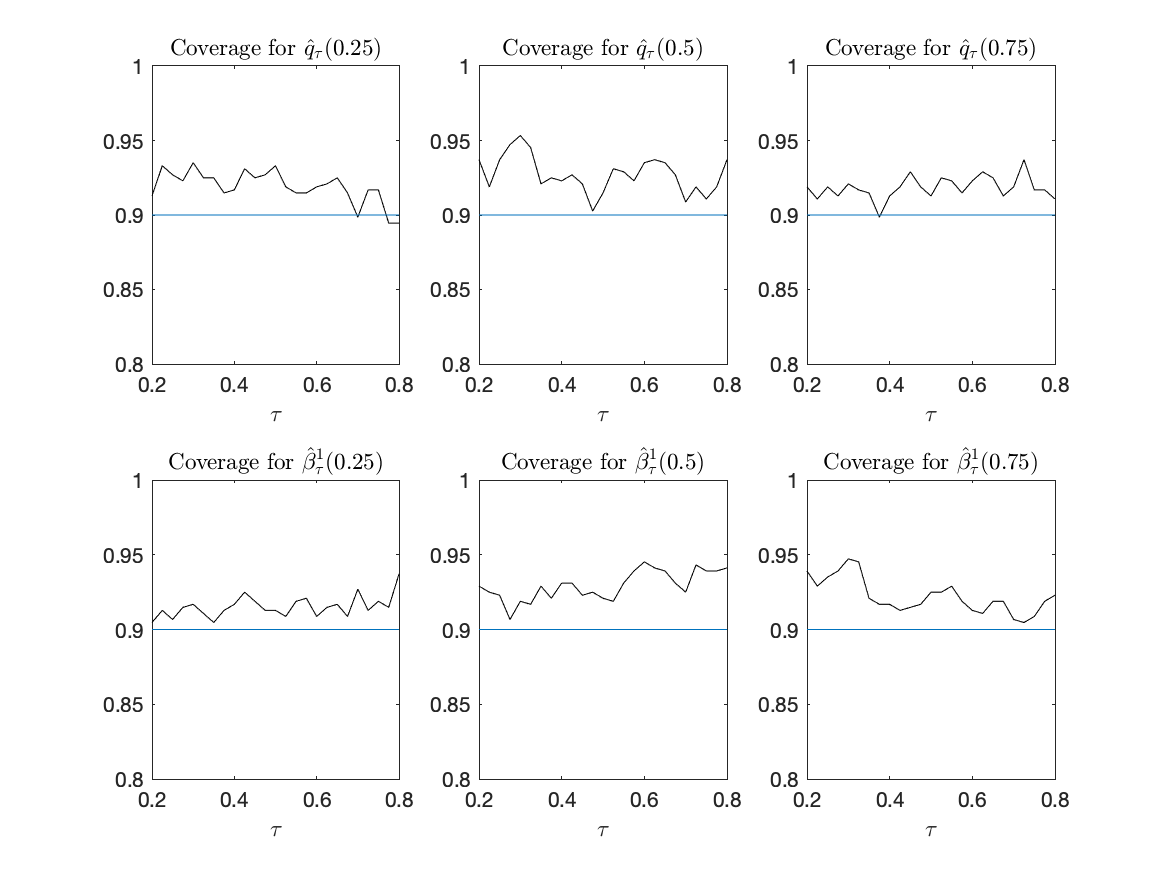}   
	\caption{DGP1, coverage probability for the oracle estimator}
	\label{fig:oracle1}
\end{figure}
\begin{figure}[H]
	\centering
	\includegraphics[scale = 0.55,angle=0]{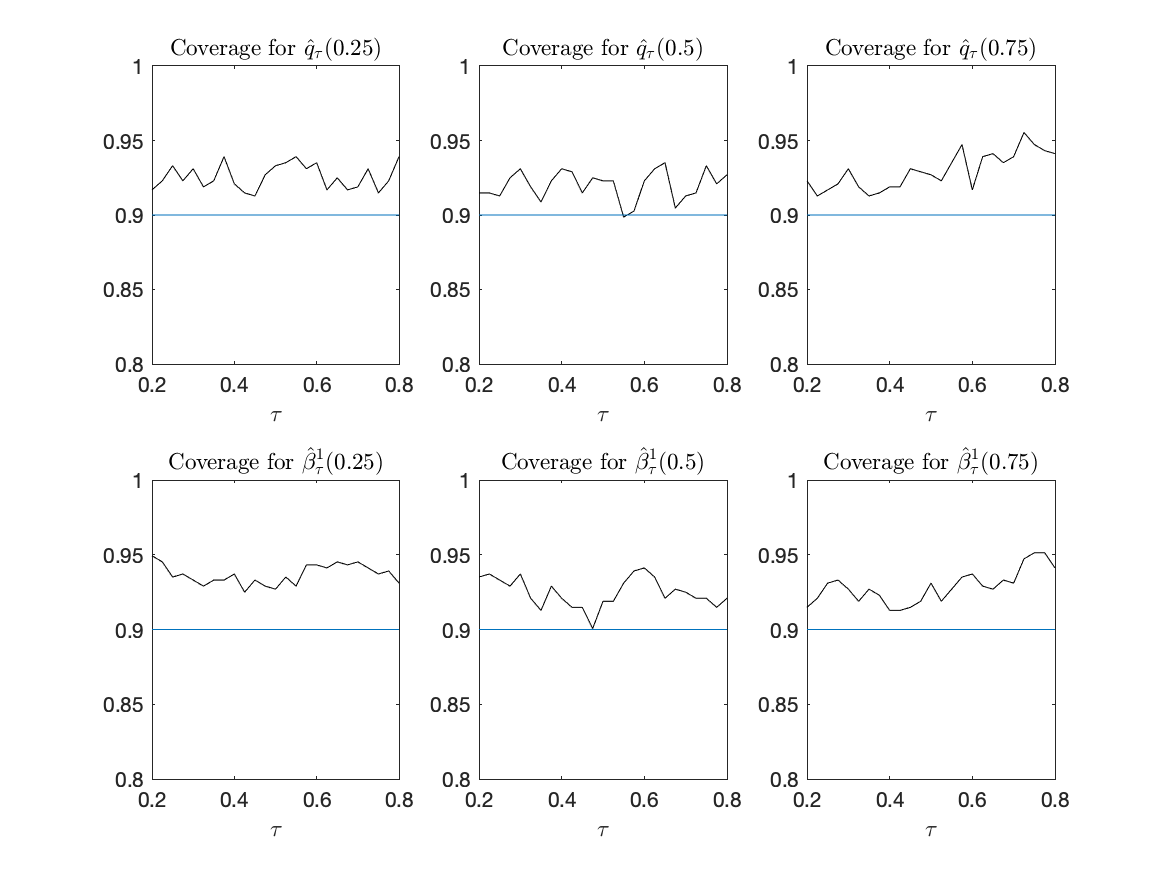}   
	\caption{DGP2, coverage probability for the oracle estimator}
	\label{fig:oracle2}
\end{figure}
\begin{figure}[H]
	\centering
	\includegraphics[scale = 0.55,angle=0]{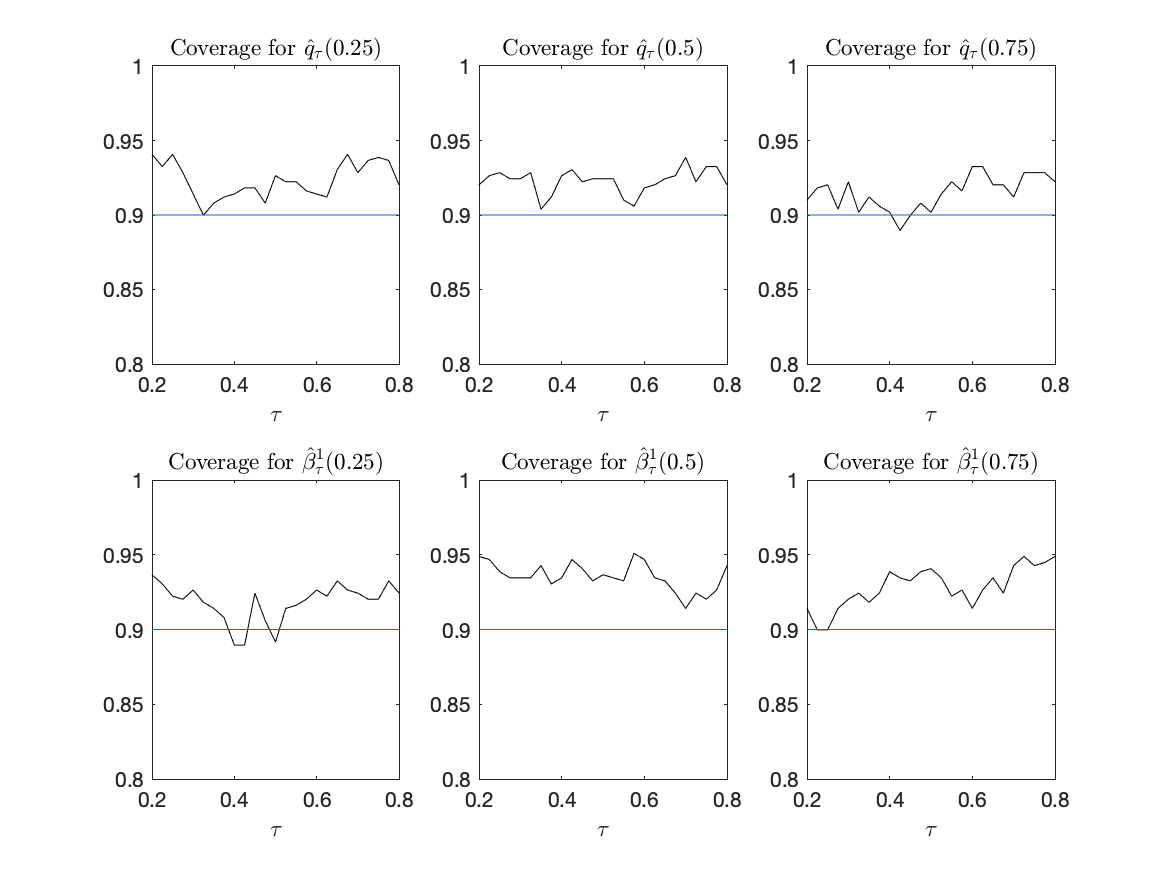}   
	\caption{DGP3, coverage probability for the oracle estimator}
	\label{fig:oracle3}
\end{figure}
We see that the coverage rates for the oracle estimators are conservative, which is due to the way we construct the confidence intervals. However, we can also see that for some values of $t$, the coverage rates are still very close to the nominal level 90\% and most coverage rates do not exceed 95\%.

\subsection{The Mean of the Potential Outcome}

We report the finite sample performance for the estimators for the mean of the potential outcome for $t \in [0.25,0.75]$.  
\begin{figure}[H]
	\centering
	\includegraphics[scale = 0.55,angle=0]{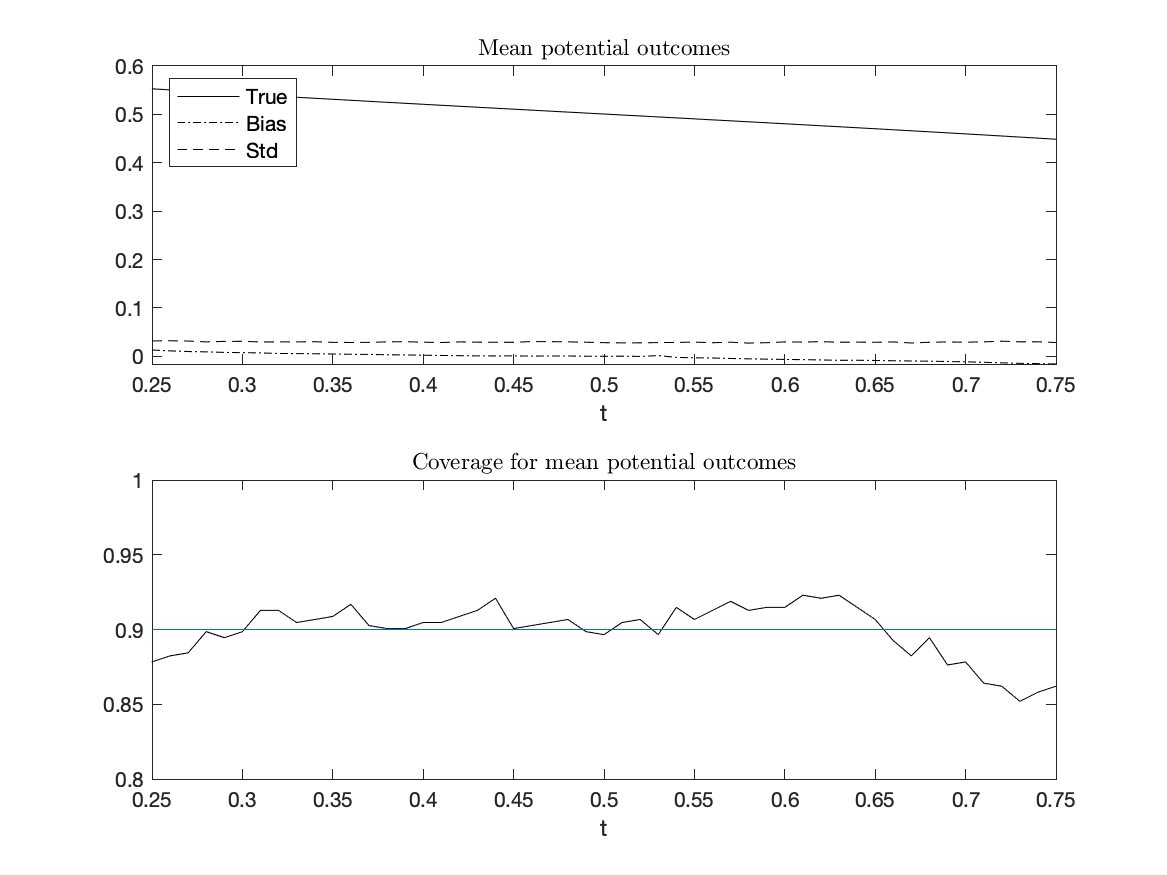}   
	\caption{DGP1, coverage probability}
	\label{fig:EY1}
\end{figure}
\begin{figure}[H]
	\centering
	\includegraphics[scale = 0.55,angle=0]{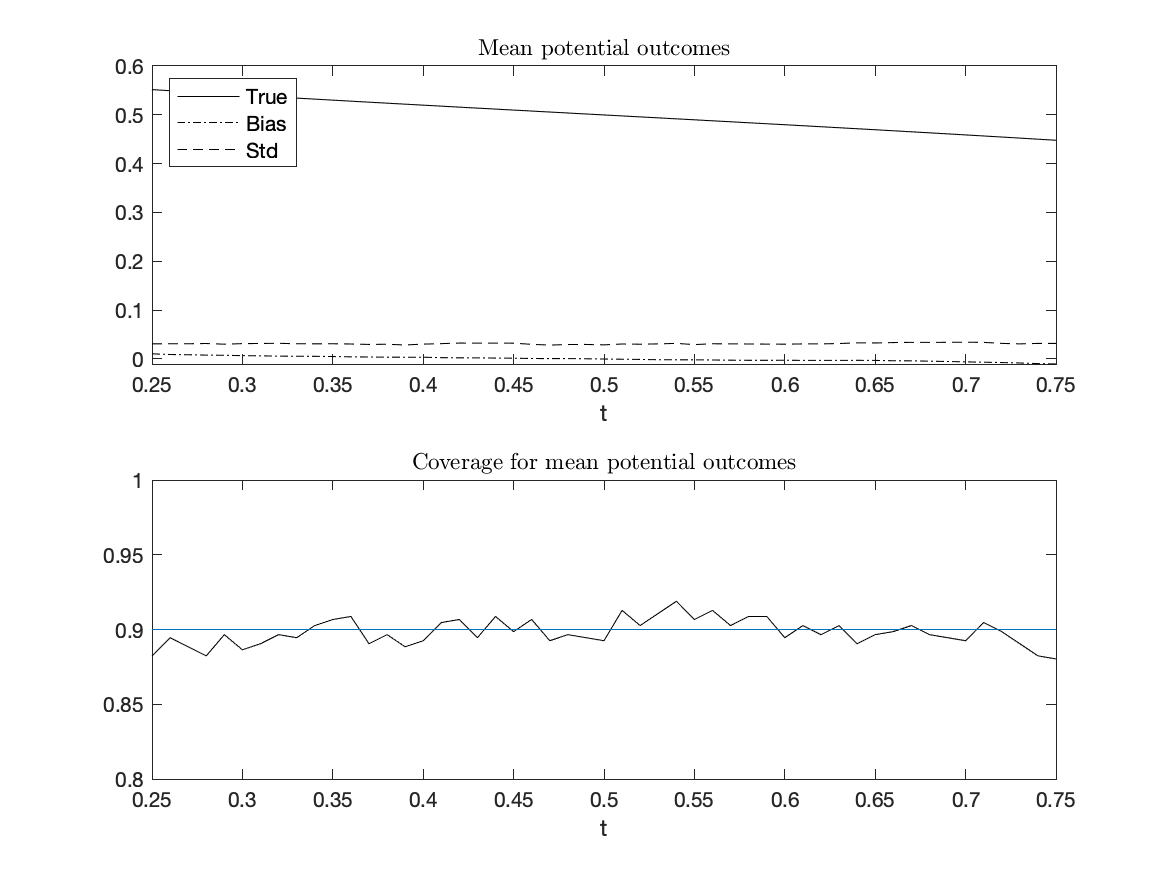}   
	\caption{DGP2, coverage probability}
	\label{fig:EY2}
\end{figure}
\begin{figure}[H]
	\centering
	\includegraphics[scale = 0.55,angle=0]{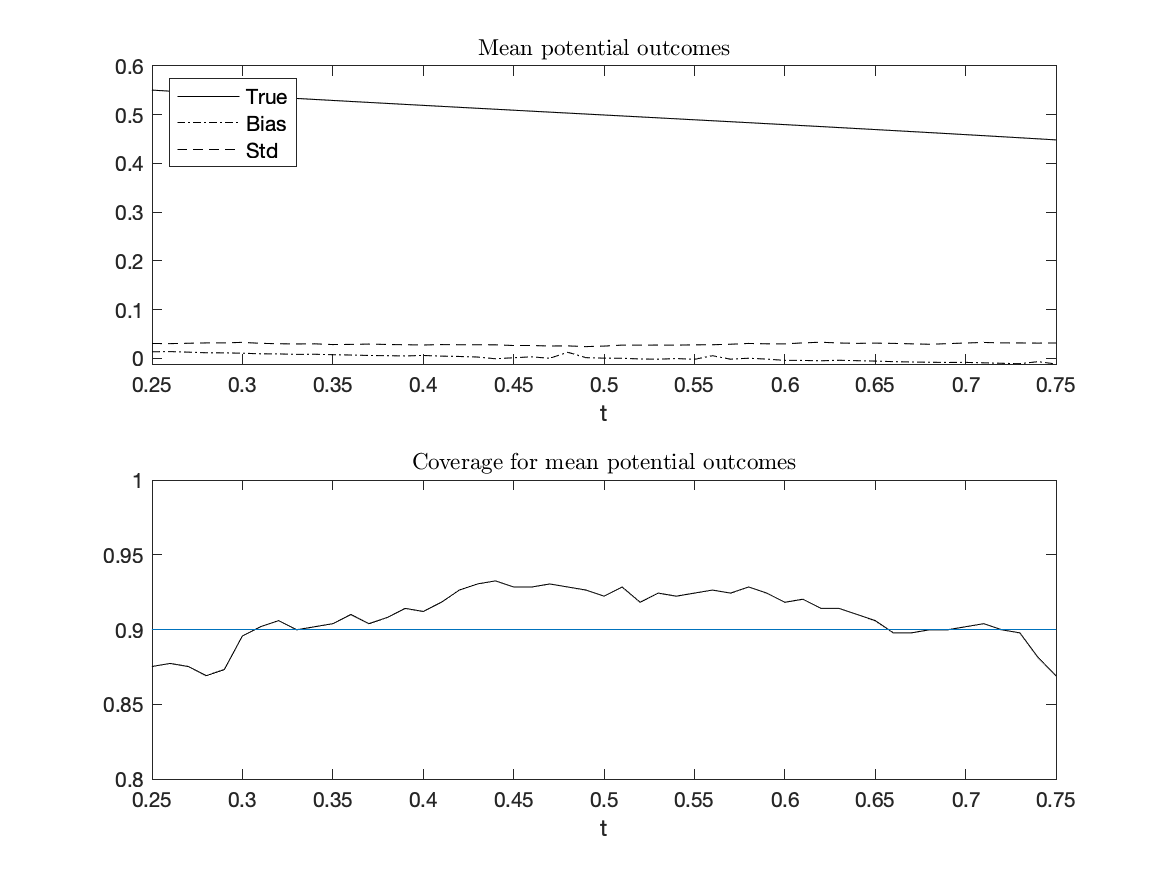}   
	\caption{DGP3, coverage probability}
	\label{fig:EY3}
\end{figure}
We observe that the estimators are quite accurate in terms of bias and variance. The coverage rates are reasonable for $t \in [0.25,0.75]$ in general. However, they are below the nominal rate $90\%$ when $t$ is close to $0.25$ and $0.75$. Comparing with the oracle results reported below, we see that the drop of coverage rates is mainly due to the variable selection, which has a larger effect for $t$ that is closer to the boundary. 
\begin{figure}[H]
	\centering
	\includegraphics[scale = 0.55,angle=0]{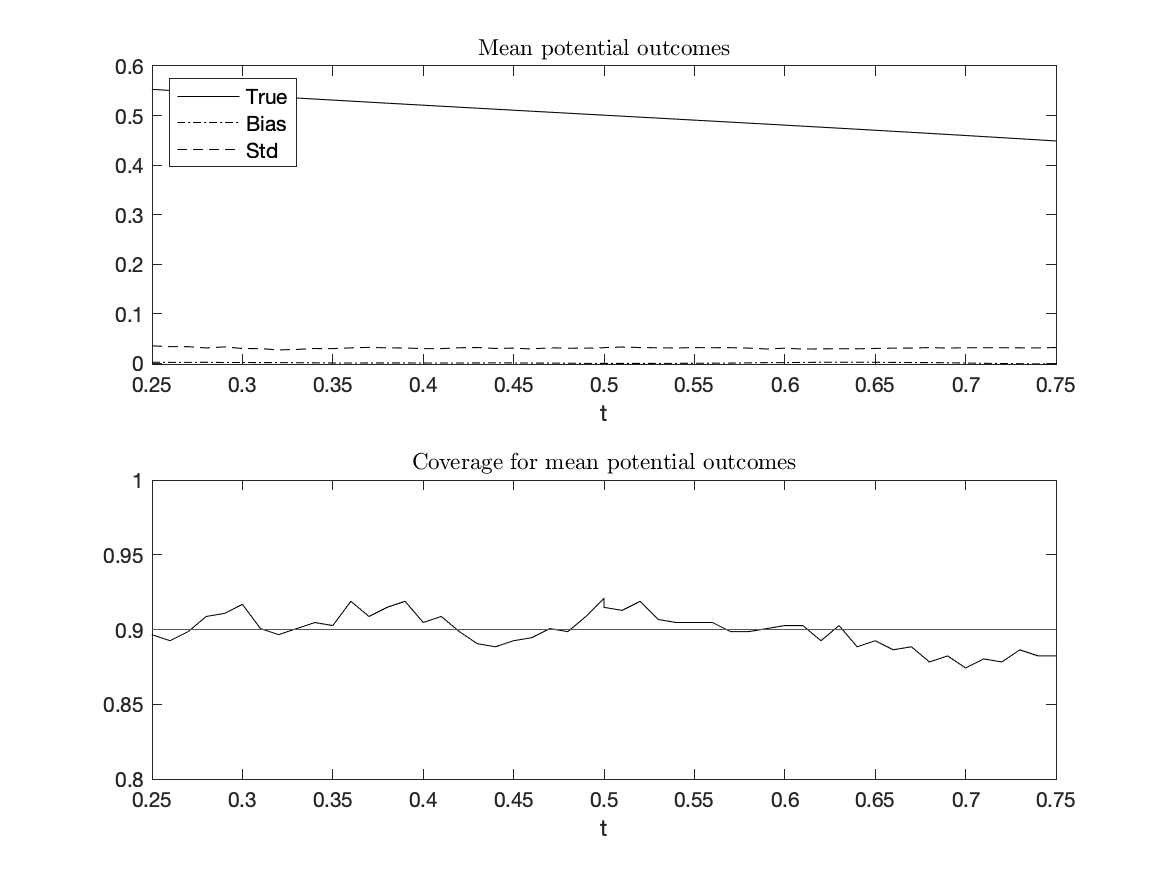}   
	\caption{DGP1, coverage probability for the oracle estimator}
	\label{fig:oracle_EY1}
\end{figure}
\begin{figure}[H]
	\centering
	\includegraphics[scale = 0.55,angle=0]{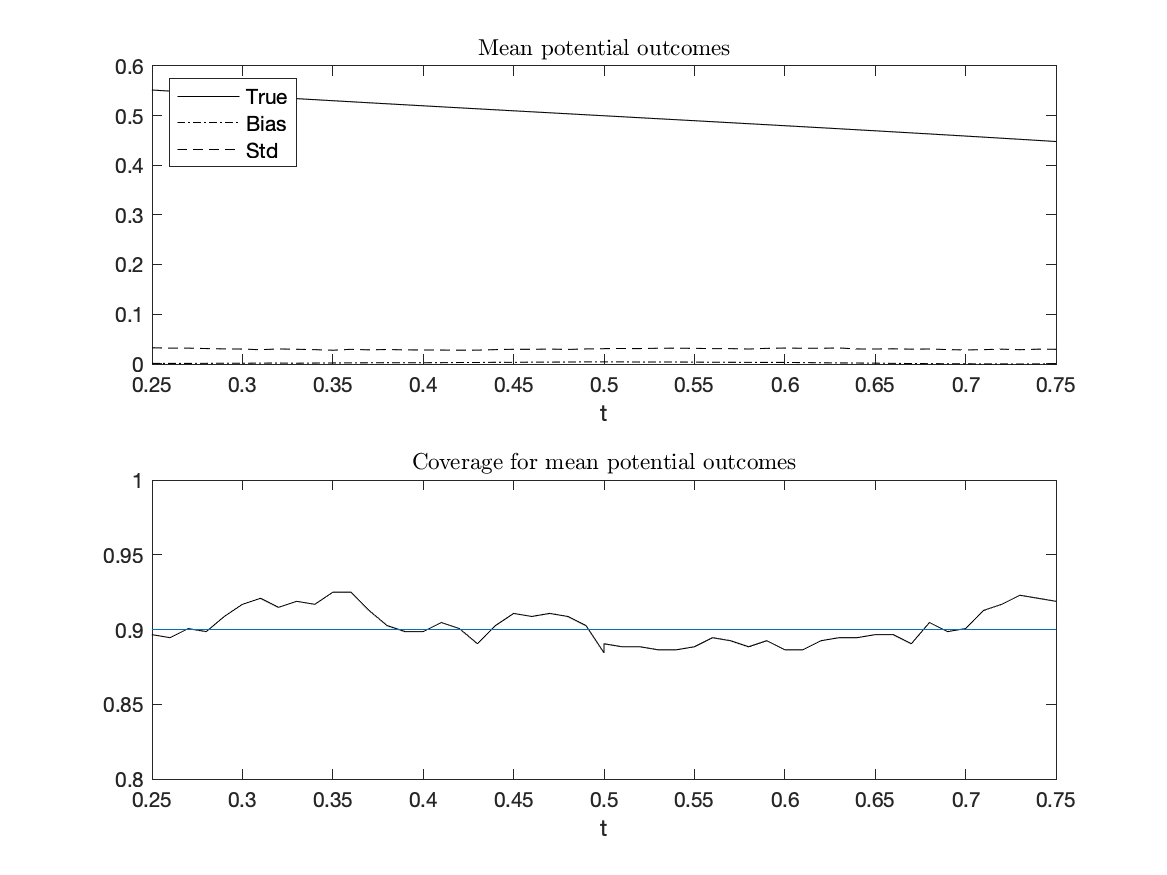}   
	\caption{DGP2, coverage probability for the oracle estimator}
	\label{fig:oracle_EY2}
\end{figure}
\begin{figure}[H]
	\centering
	\includegraphics[scale = 0.55,angle=0]{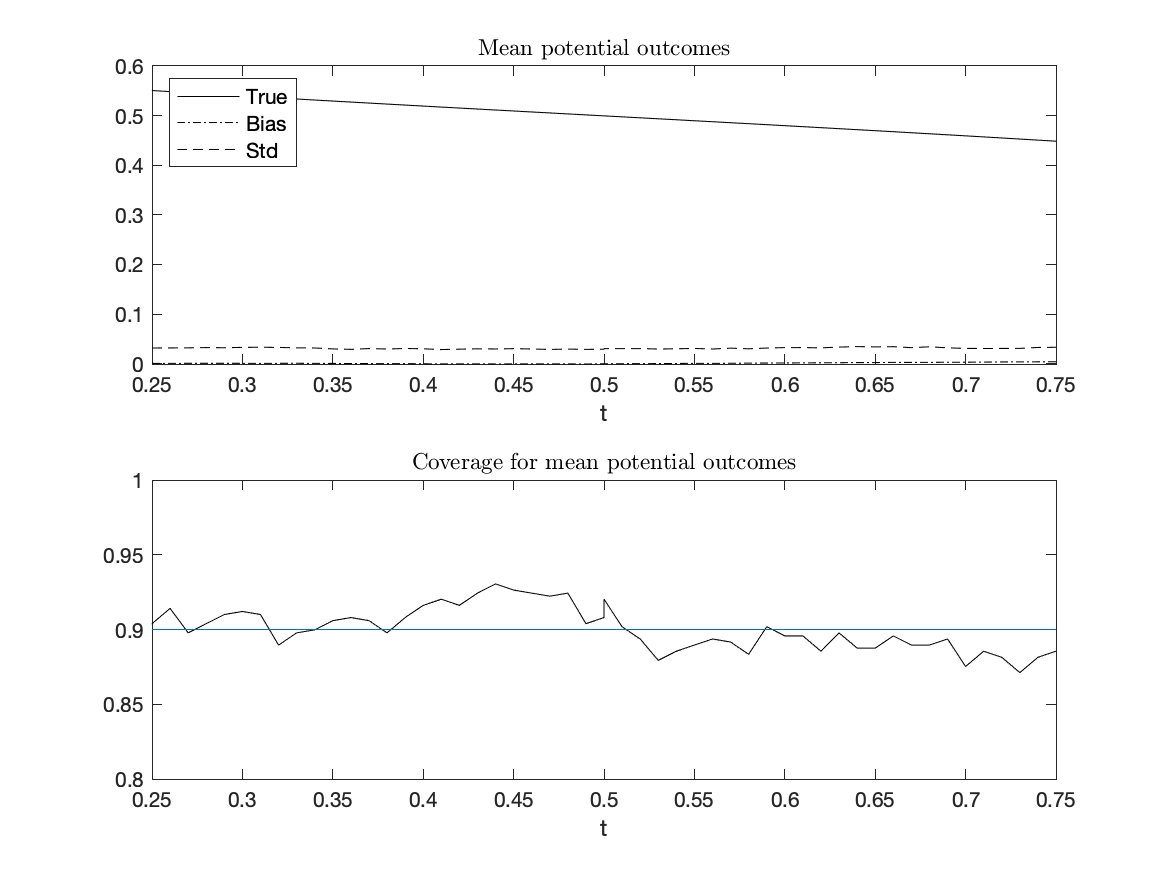}   
	\caption{DGP3, coverage probability for the oracle estimator}
	\label{fig:oracle_EY3}
\end{figure}

\subsection{An Additional Design}

Last, we consider a design that violates the approximate sparsity condition. The outcome and treatment equations are the same as \eqref{eq:YXT} and \eqref{eq:T|X}, respectively. We let $\beta_j = \frac{\pi^2}{24}$ for $j=1,\cdots,10$, $\beta_j = 0$, $j = 11,\cdots,100$, and $b(X) = X$. In this case, $s = 10$. Recall that we have $nh_1 \approxeq 47$. However, our theory requires that $s/\sqrt{nh_1} \rightarrow 0$. Such a condition is violated in this design.  
\begin{figure}[H]
	\centering
	\includegraphics[scale = 0.55,angle=0]{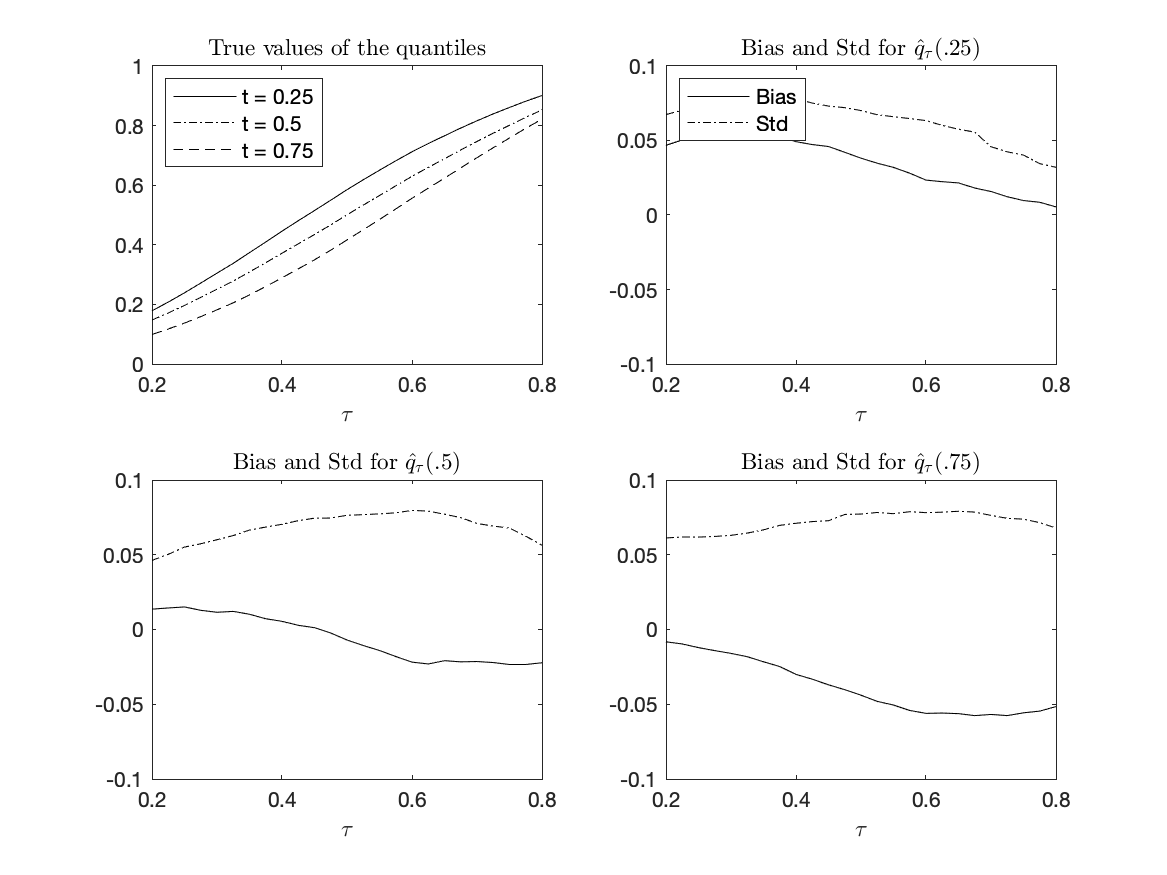}   
	\caption{DGP4, finite sample performance of $\hat{q}_{\protect\tau }(t)$}
	\label{fig:beta0_4}
\end{figure}
\begin{figure}[H]
	\centering
	\includegraphics[scale = 0.565,angle=0]{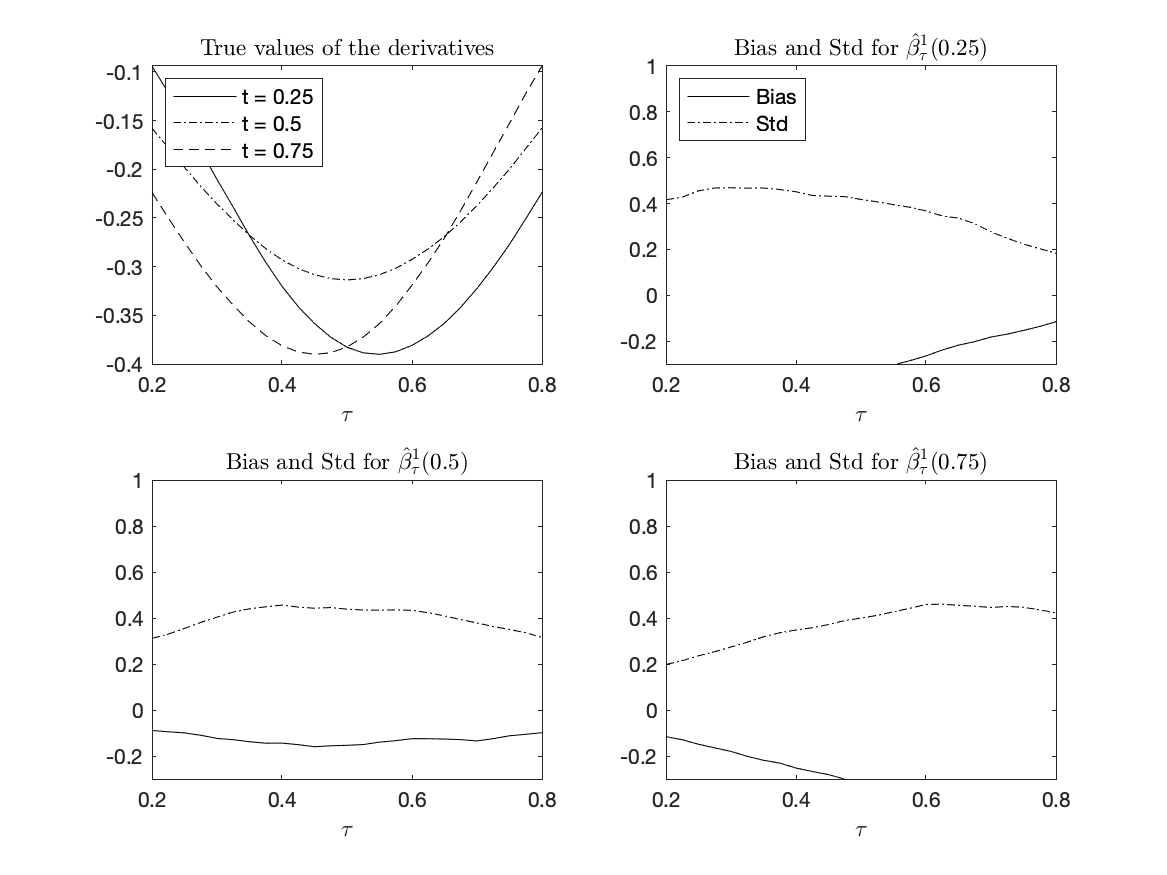}  
	\caption{DGP4, finite sample performance of $\hat{\protect\beta}^1_\protect\tau(t)
		$}
	\label{fig:beta1_4}
\end{figure}

\begin{figure}[H]
	\centering
	\includegraphics[scale = 0.565,angle=0]{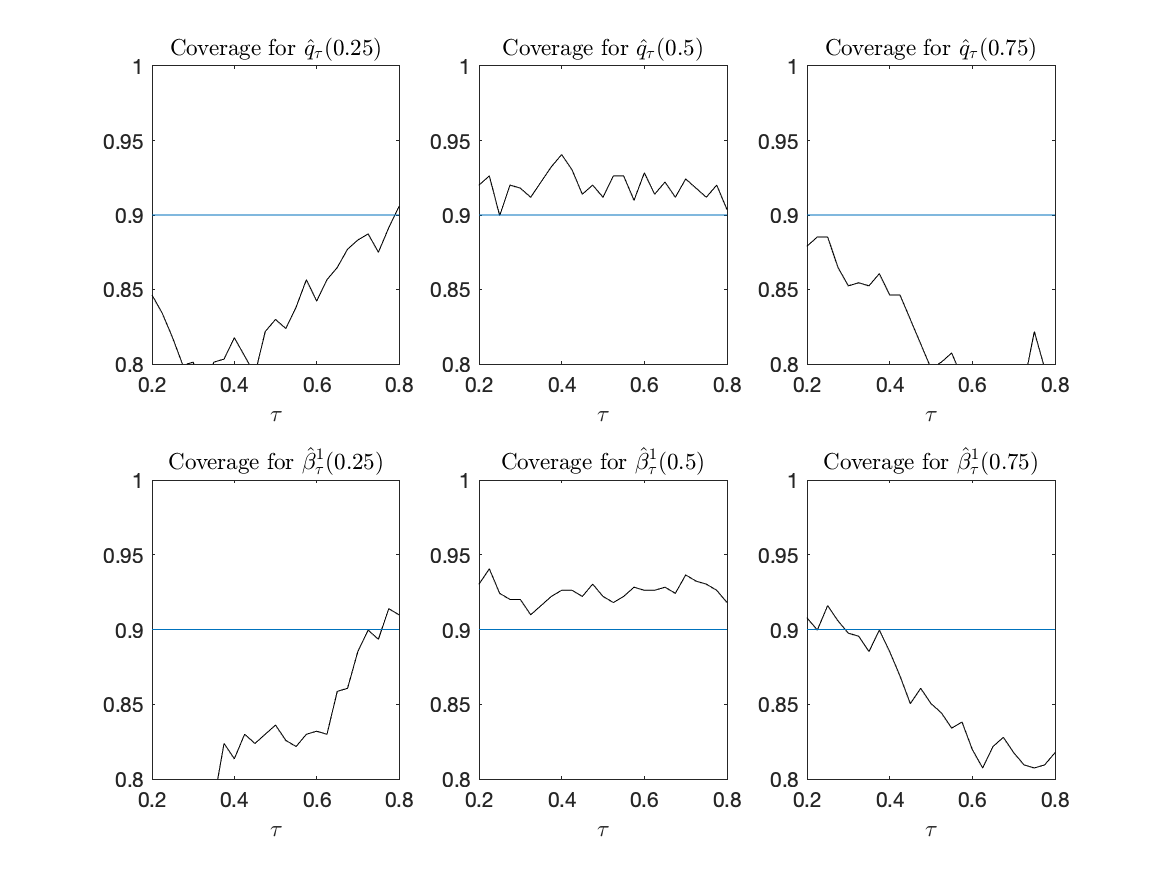}  
	\caption{DGP4, coverage probability}
	\label{fig:cov_4}
\end{figure}
We see that the coverage rates when $t=0.5$ are satisfactory. For $t = 0.25$ and $t=0.75$, the coverage rates are below the nominal 90\%. On the other hand, the coverage rates for the oracle estimators reported below perform quite well. This implies that the drop of coverage rates for our estimators is mainly due to the variable selection, which may have a larger effect when $t$ is away from the center.\footnote{Again, the cross-fitting technique promoted in \cite{CC18} may be helpful for eliminating the variable selection bias.}

\begin{figure}[H]
	\centering
	\includegraphics[scale = 0.55,angle=0]{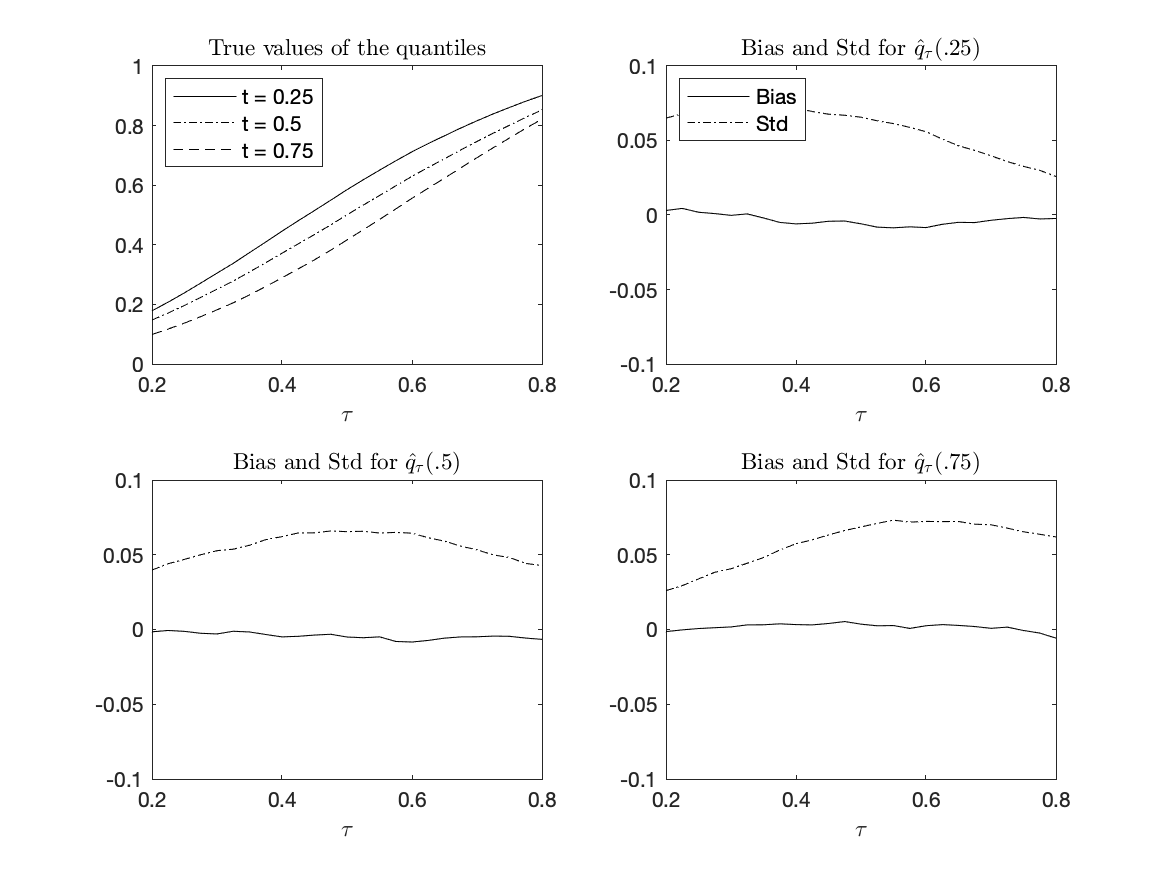}   
	\caption{DGP4, Finite sample performance for the oracle estimator of $q_{\protect\tau }(t)$}
	\label{fig:oracle_beta0_4}
\end{figure}
\begin{figure}[H]
	\centering
	\includegraphics[scale = 0.565,angle=0]{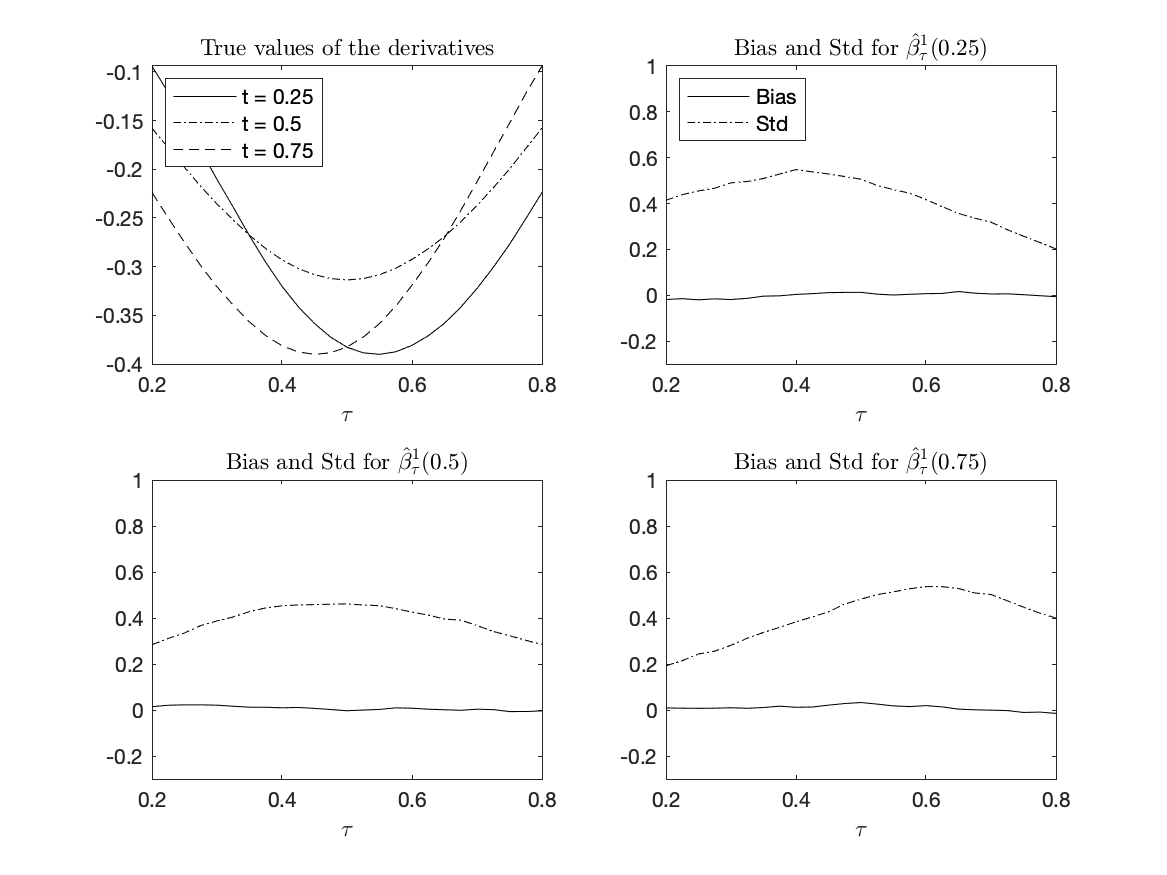}  
	\caption{DGP4, Finite sample performance  for the oracle estimator of $\beta^1_\protect\tau(t)
		$}
	\label{fig:oracle_beta1_4}
\end{figure}

\begin{figure}[H]
	\centering
	\includegraphics[scale = 0.565,angle=0]{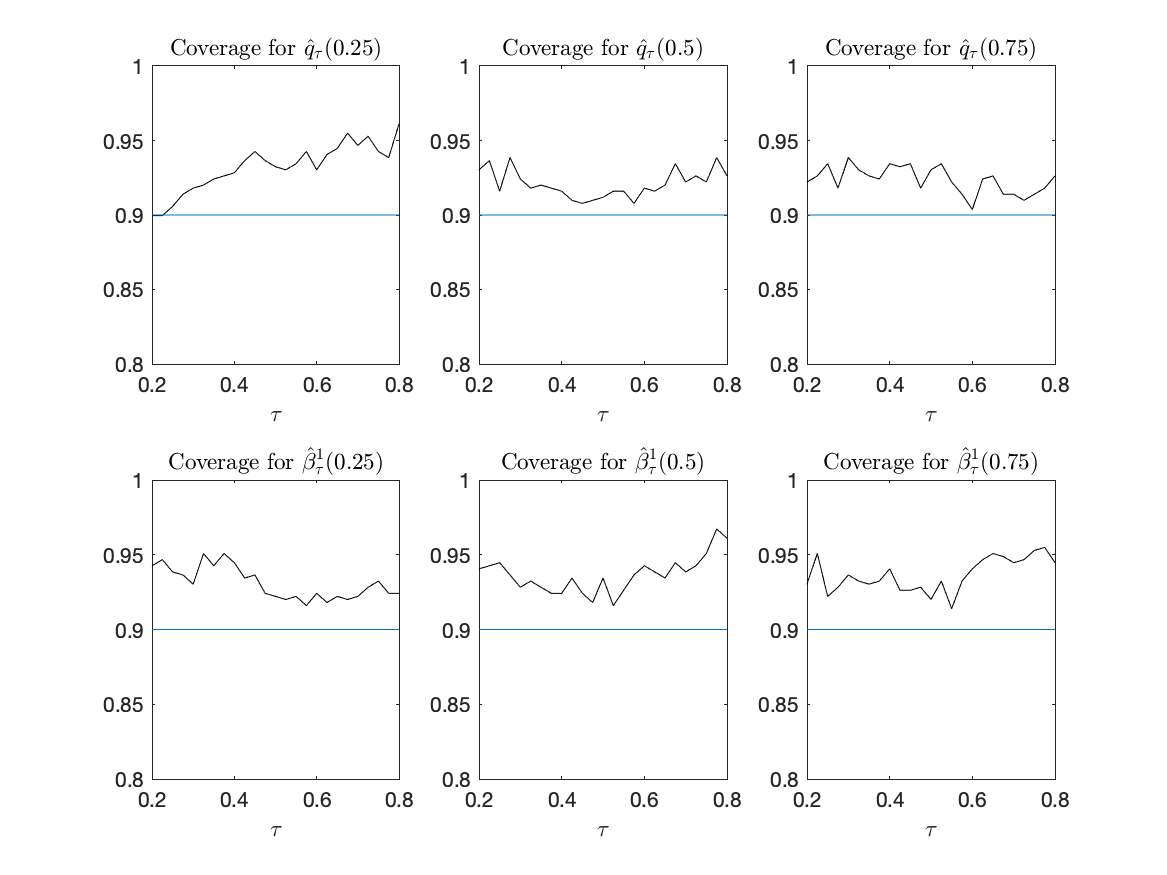}  
	\caption{DGP4, Coverage probability for the oracle estimators}
	\label{fig:oracle_cov_4}
\end{figure}

\section{Additional Empirical Illustration Results}
\label{sec:add_emp}
This section investigates the sensitivity of our empirical application results with respect to three tuning parameters: $h_1$, $\tilde{\lambda}$, and $\lambda$. We use the same model and dataset as in Section \ref{sec:app}. Figures \ref{fig:comp1_app_white}-\ref{fig:comp6_app_white} are about the white individuals, and Figures \ref{fig:comp1_app_black}-\ref{fig:comp6_app_black} are about the black individuals.  
The captions for these figures are the same as in Figures \ref{fig_white} and \ref{fig_black}. 
Figures \ref{fig:comp1_app_white} and \ref{fig:comp2_app_white} show the estimation results for $q_{\tau}(t)$ and $\beta _{\tau}^{1}(t)$ with $%
h_1'=0.8h_{1}$ and $%
h_1'=1.2h_{1}$, respectively. Figures \ref{fig:comp3_app_white} and \ref{fig:comp4_app_white} show the estimation results for $q_{\tau}(t)$ and $\beta _{\tau}^{1}(t)$ with $%
\tilde{\lambda}'=0.8\tilde{\lambda}$ and $%
\tilde{\lambda}'=1.2\tilde{\lambda}$, respectively, where $\tilde{\lambda}$ is the penalty used to estimate the conditional density $f_t(X)$. Last, Figures \ref{fig:comp5_app_white} and \ref{fig:comp6_app_white} show the estimation results for  $q_{\tau}(t)$ and $\beta _{\tau}^{1}(t)$ with $%
\lambda'=0.8\lambda$ and $%
\lambda'=1.2\lambda$, respectively, where $\lambda$ is the penalty used to estimate the conditional CDF $\phi_{t,u}(X)$.

\subsection{Sensitivity results for the white individuals}

\begin{figure}[H]
\centering
\includegraphics[scale = 0.55,angle=0]{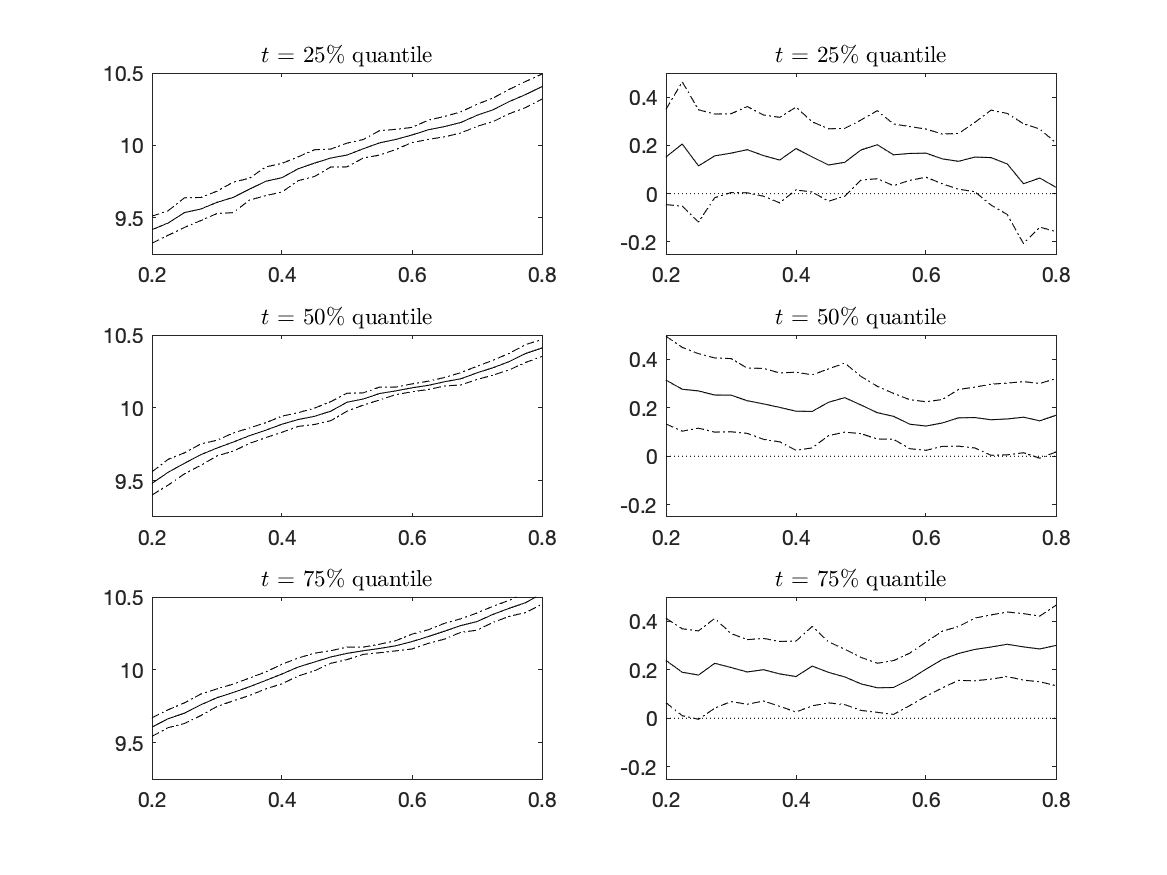}   
\caption{Empirical results for whites with small $h_1$}
\label{fig:comp1_app_white}
\end{figure}

\begin{figure}[H]
 \centering
\includegraphics[scale = 0.55,angle=0]{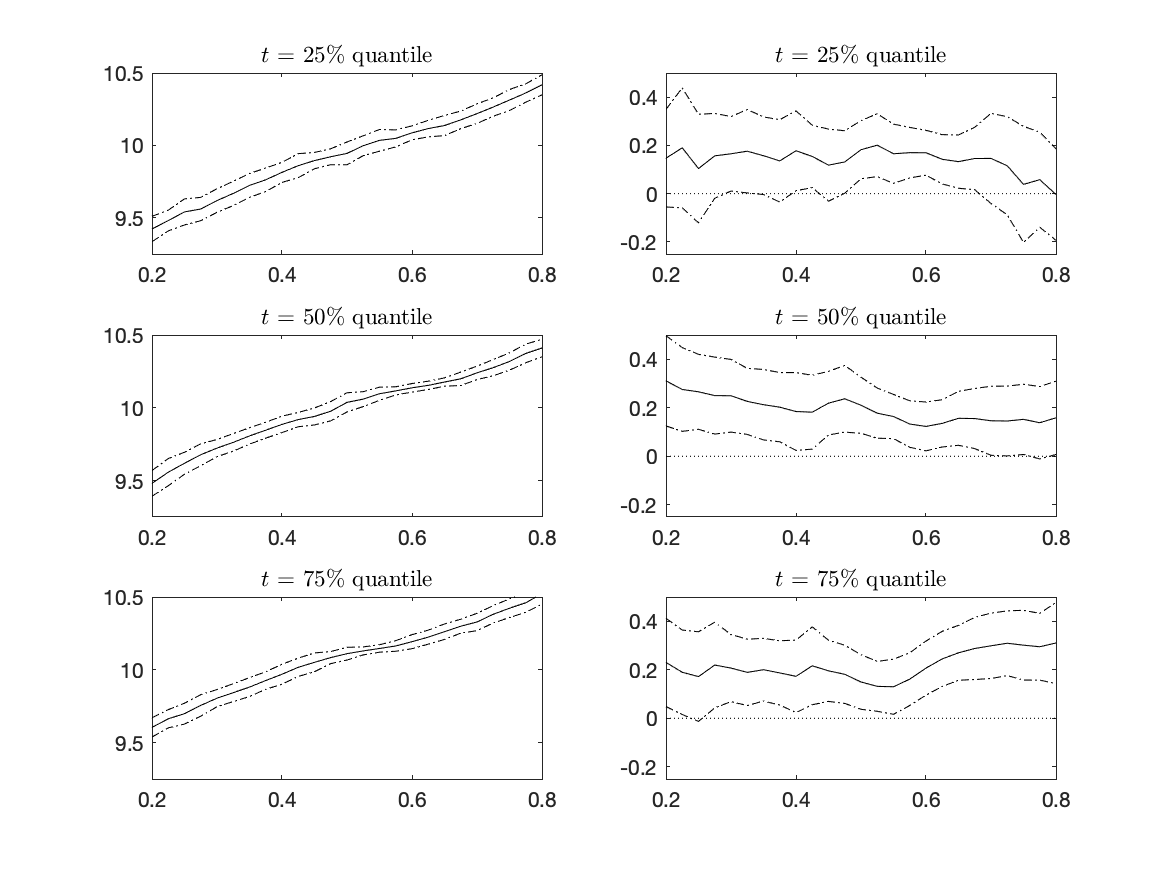}   
\caption{Empirical results for whites with large $h_1$}
\label{fig:comp2_app_white}
\end{figure}

\begin{figure}[H]
	\centering
\includegraphics[scale = 0.55,angle=0]{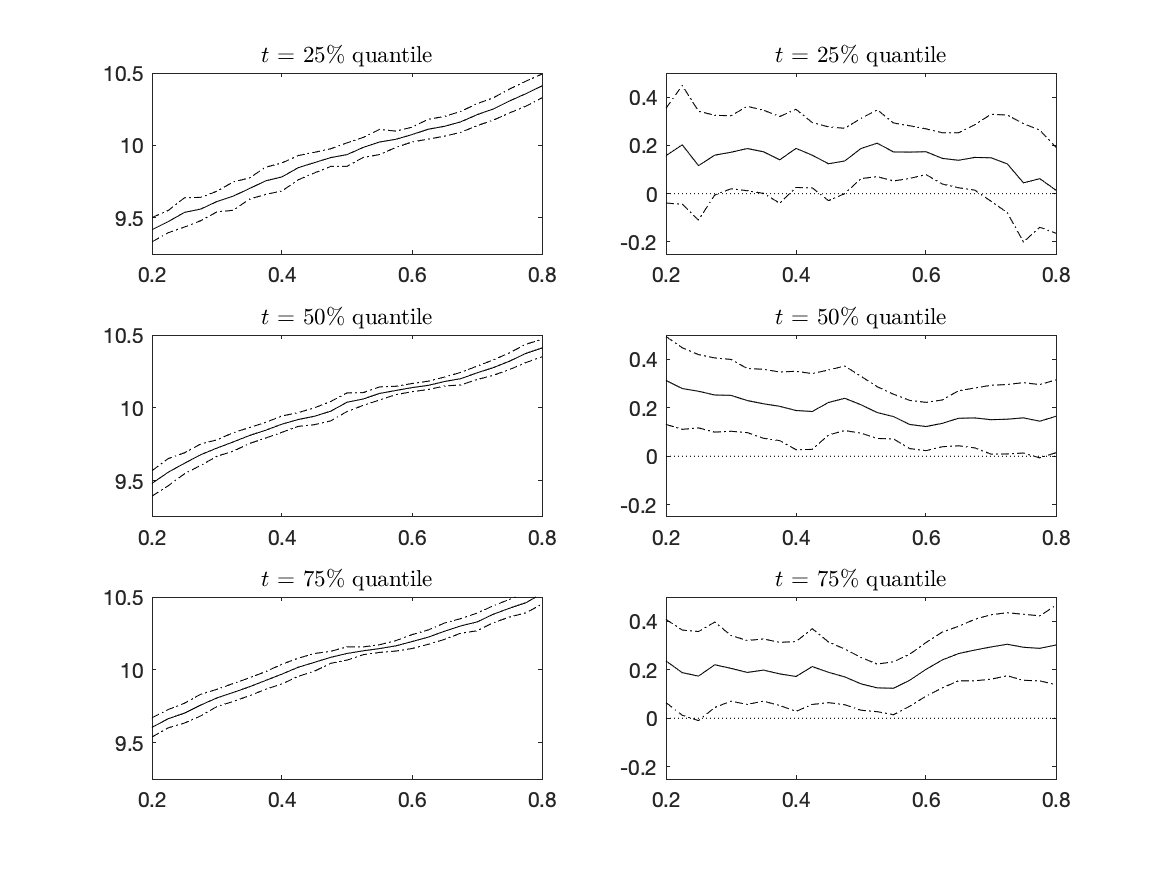}   
	\caption{Empirical results for whites with small $\tilde{\lambda}$}
\label{fig:comp3_app_white}
\end{figure}

\begin{figure}[H]
	\centering
\includegraphics[scale = 0.55,angle=0]{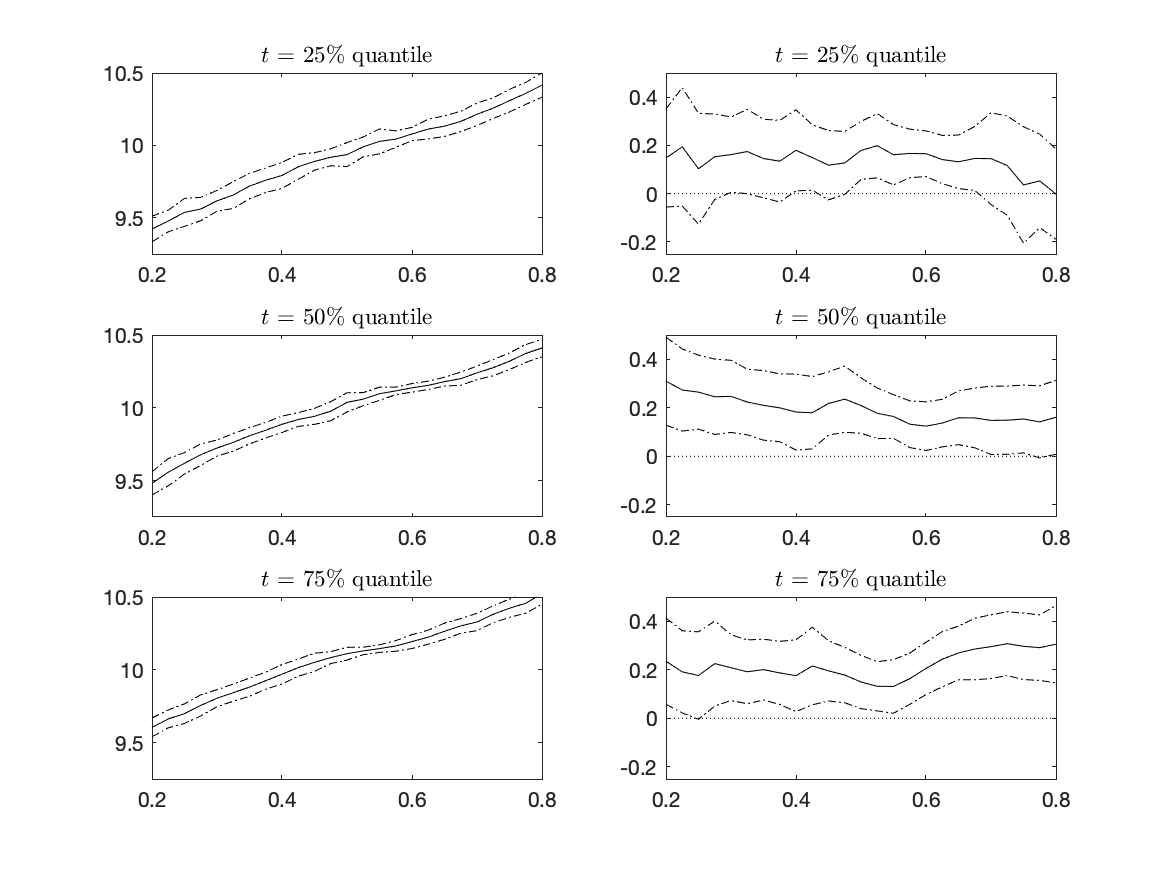}   
	\caption{Empirical results for whites with large $\tilde{\lambda}$}
\label{fig:comp4_app_white}
\end{figure}

\begin{figure}[H]
	\centering
\includegraphics[scale = 0.55,angle=0]{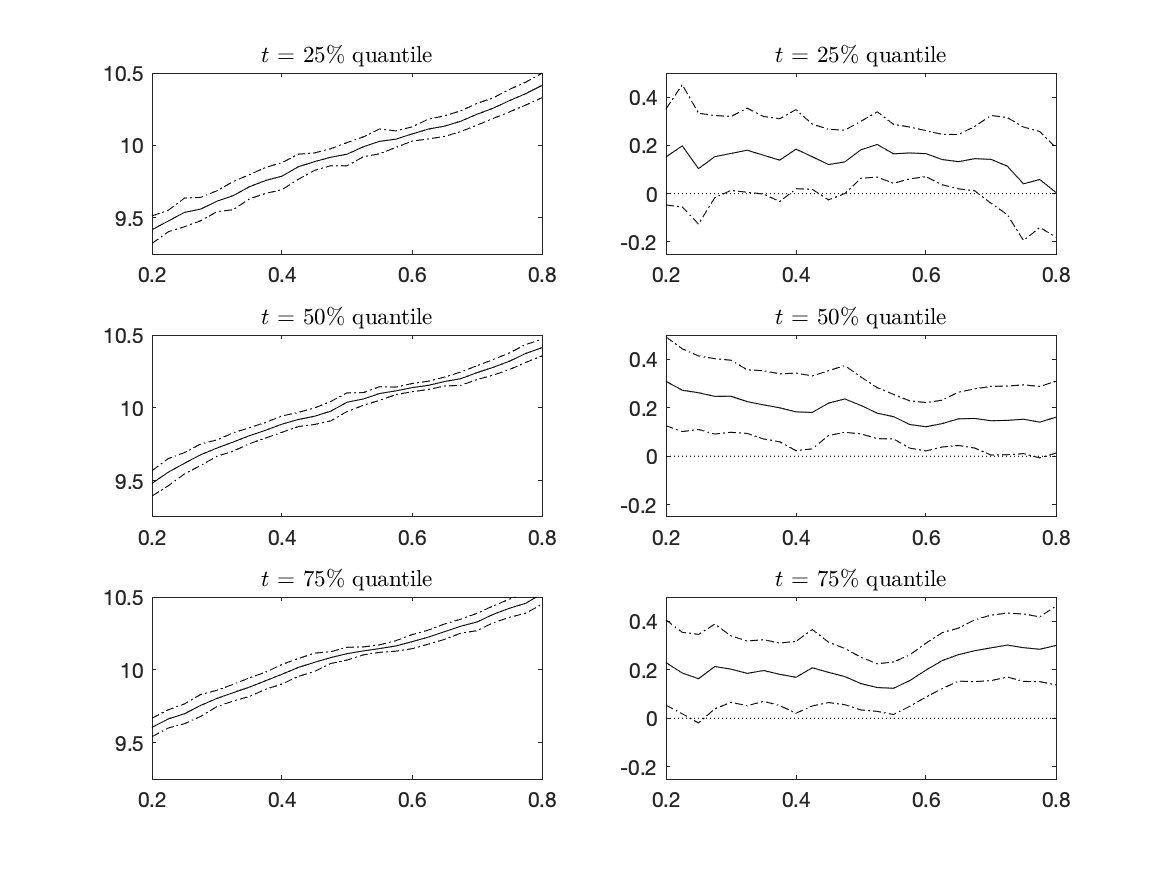}   
	\caption{Empirical results for whites with small $\lambda$}
\label{fig:comp5_app_white}
\end{figure}

\begin{figure}[H]
	\centering
\includegraphics[scale = 0.55,angle=0]{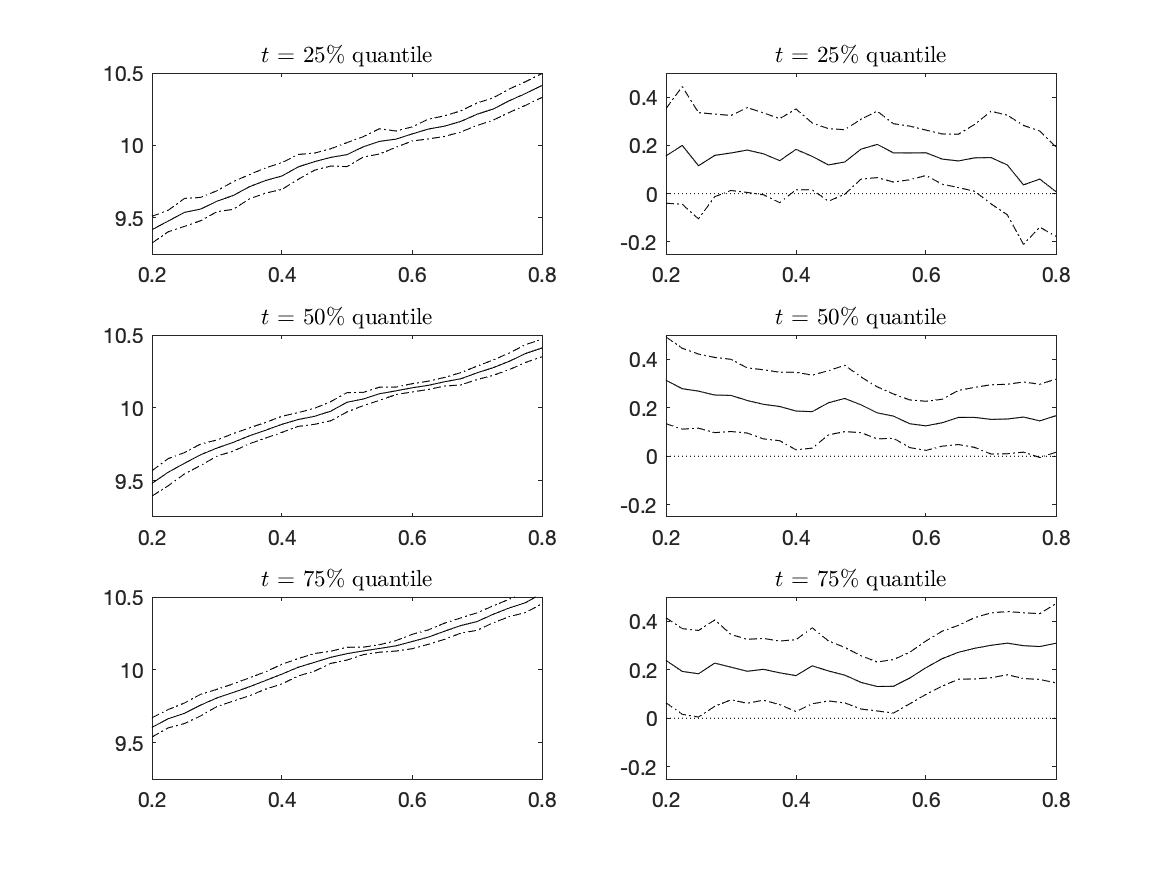}   
	\caption{Empirical results for whites with large $\lambda$}
\label{fig:comp6_app_white}
\end{figure}

\subsection{Sensitivity results for the black individuals}

\begin{figure}[H]
\centering
\includegraphics[scale = 0.55,angle=0]{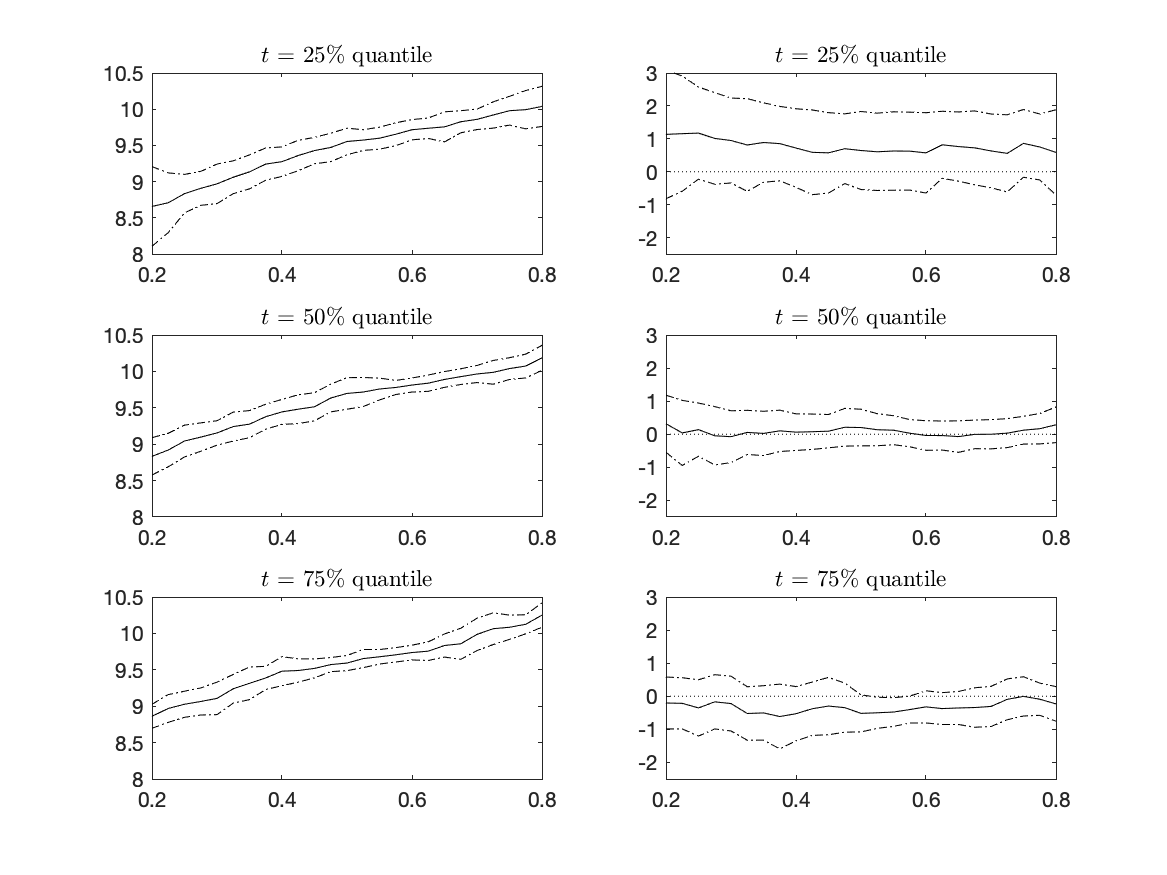}   
\caption{Empirical results for blacks with small $h_1$}
\label{fig:comp1_app_black}
\end{figure}

\begin{figure}[H]
 \centering
\includegraphics[scale = 0.55,angle=0]{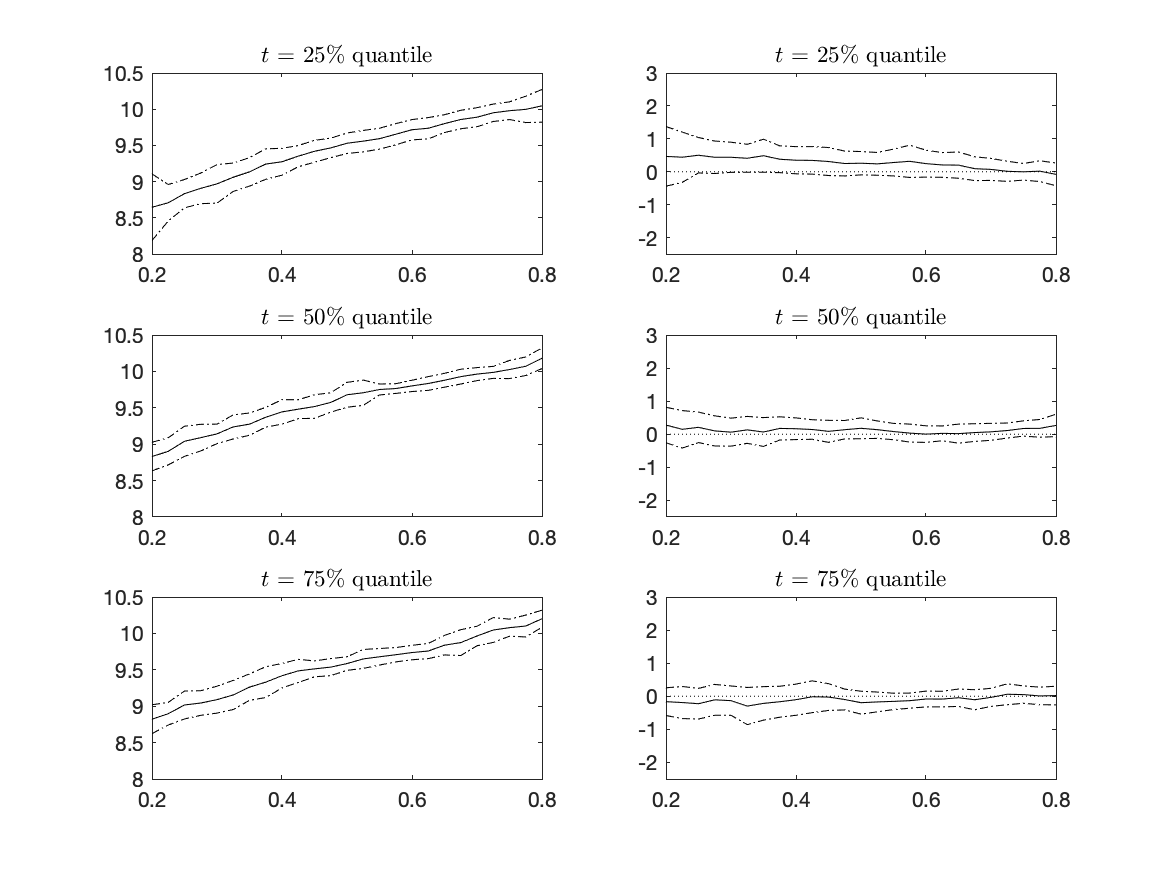}   
\caption{Empirical results for blacks with large $h_1$}
\label{fig:comp2_app_black}
\end{figure}

\begin{figure}[H]
	\centering
\includegraphics[scale = 0.55,angle=0]{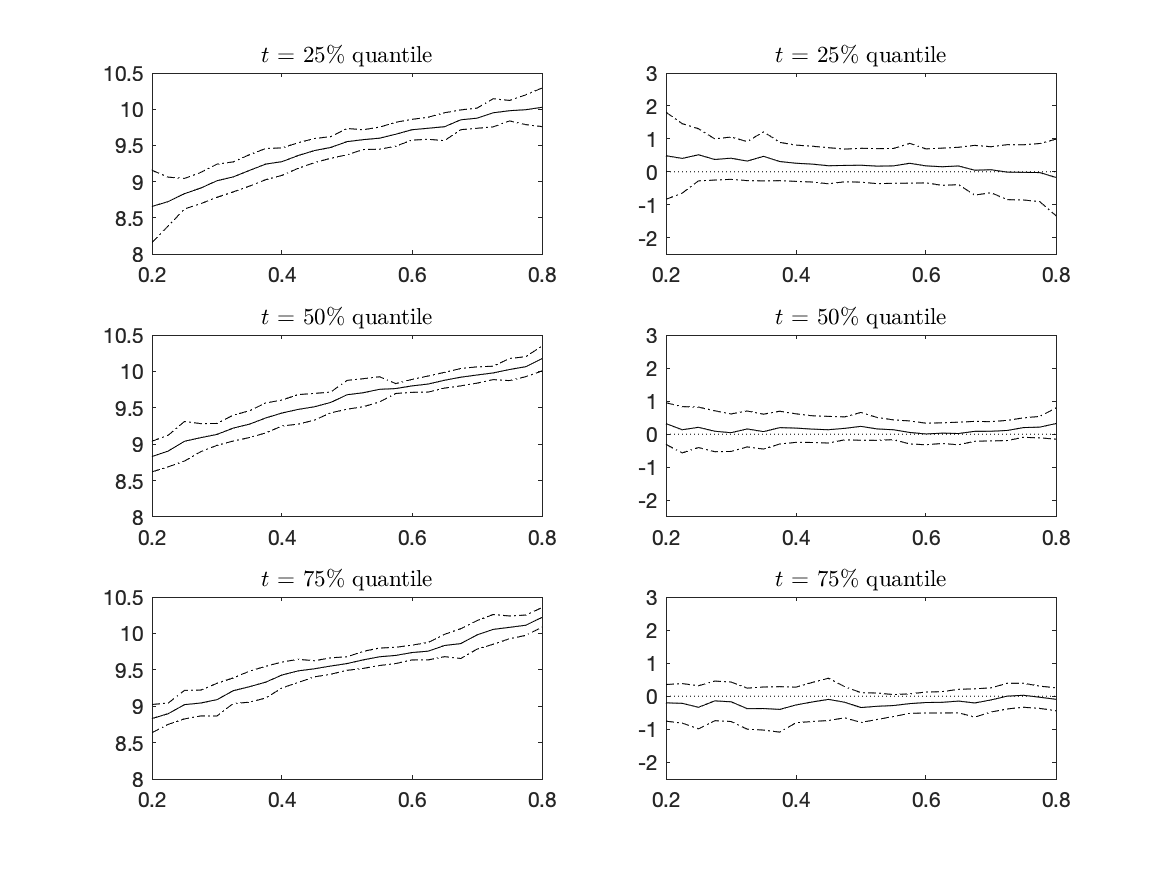}   
	\caption{Empirical results for blacks with small $\tilde{\lambda}$}
\label{fig:comp3_app_black}
\end{figure}

\begin{figure}[H]
	\centering
\includegraphics[scale = 0.55,angle=0]{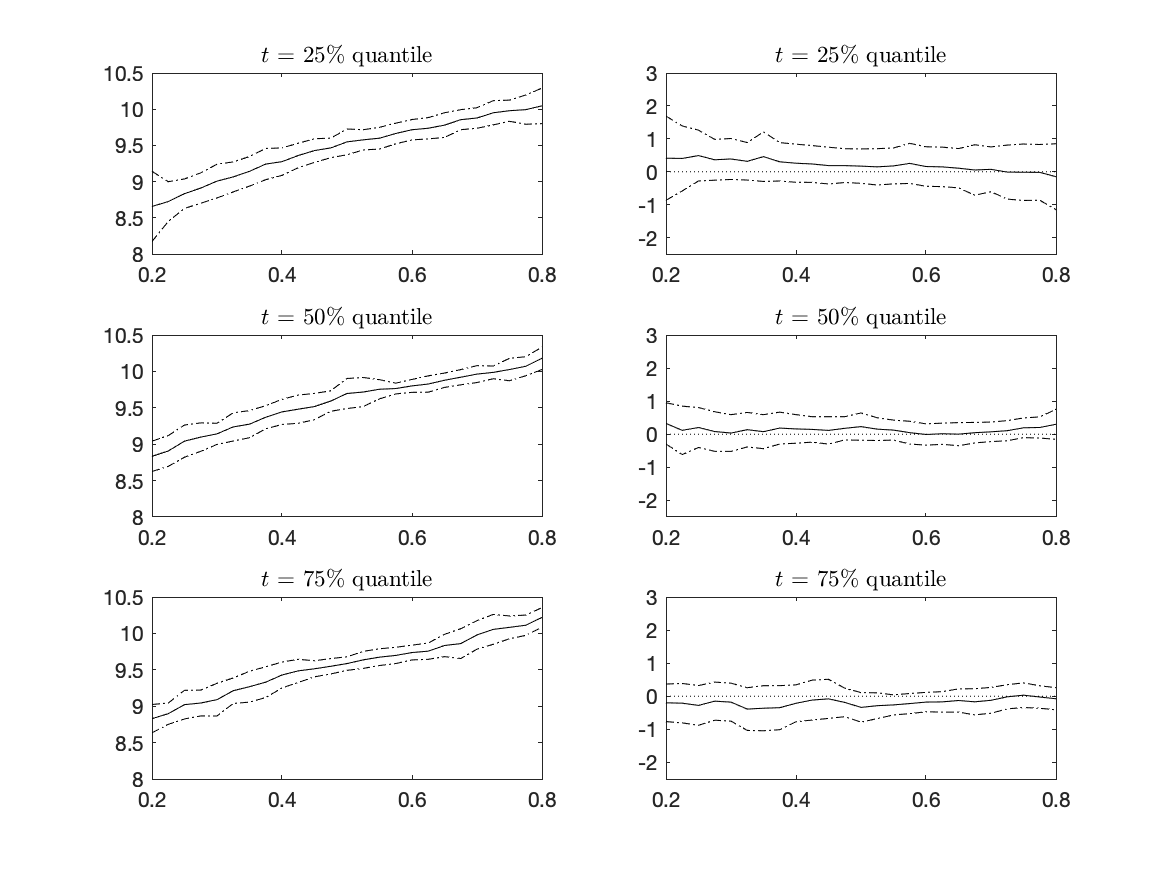}   
	\caption{Empirical results for blacks with large $\tilde{\lambda}$}
\label{fig:comp4_app_black}
\end{figure}

\begin{figure}[H]
	\centering
\includegraphics[scale = 0.55,angle=0]{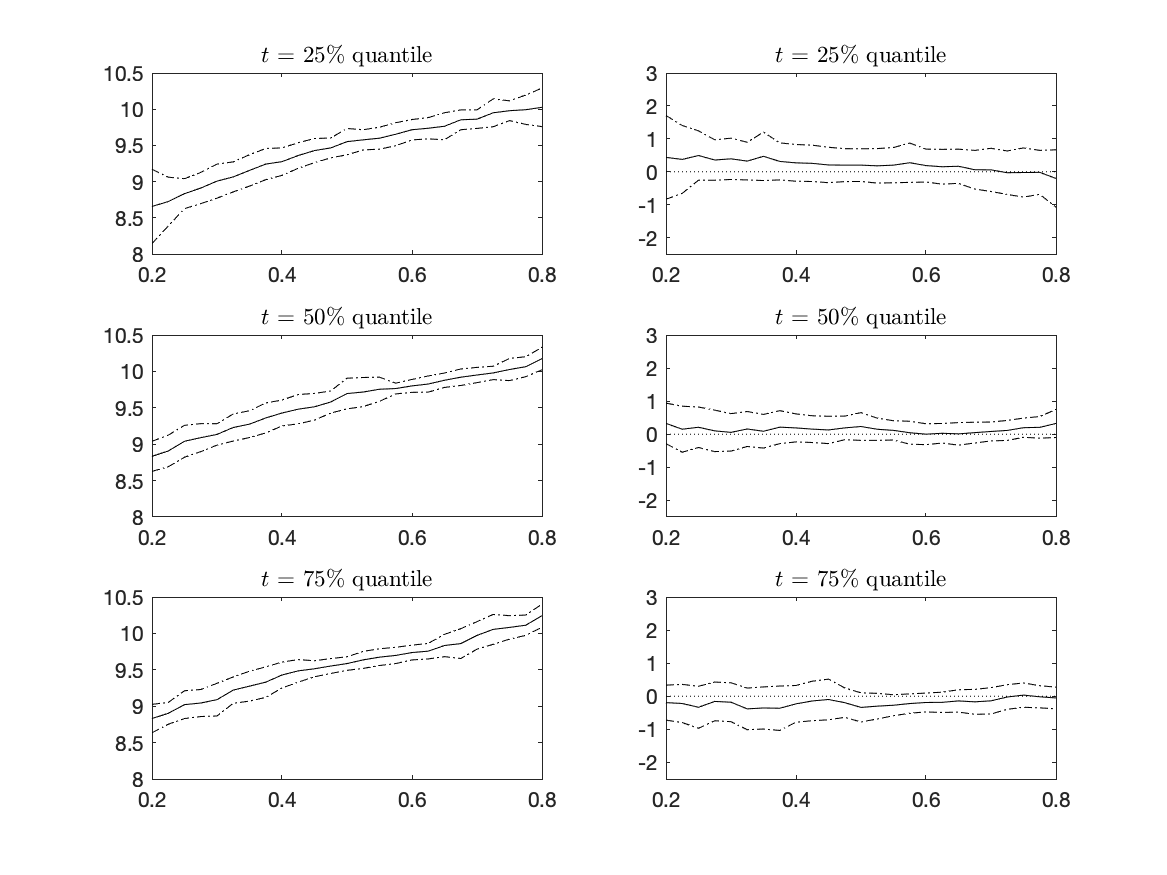}   
	\caption{Empirical results for blacks with small $\lambda$}
\label{fig:comp5_app_black}
\end{figure}

\begin{figure}[H]
	\centering
\includegraphics[scale = 0.55,angle=0]{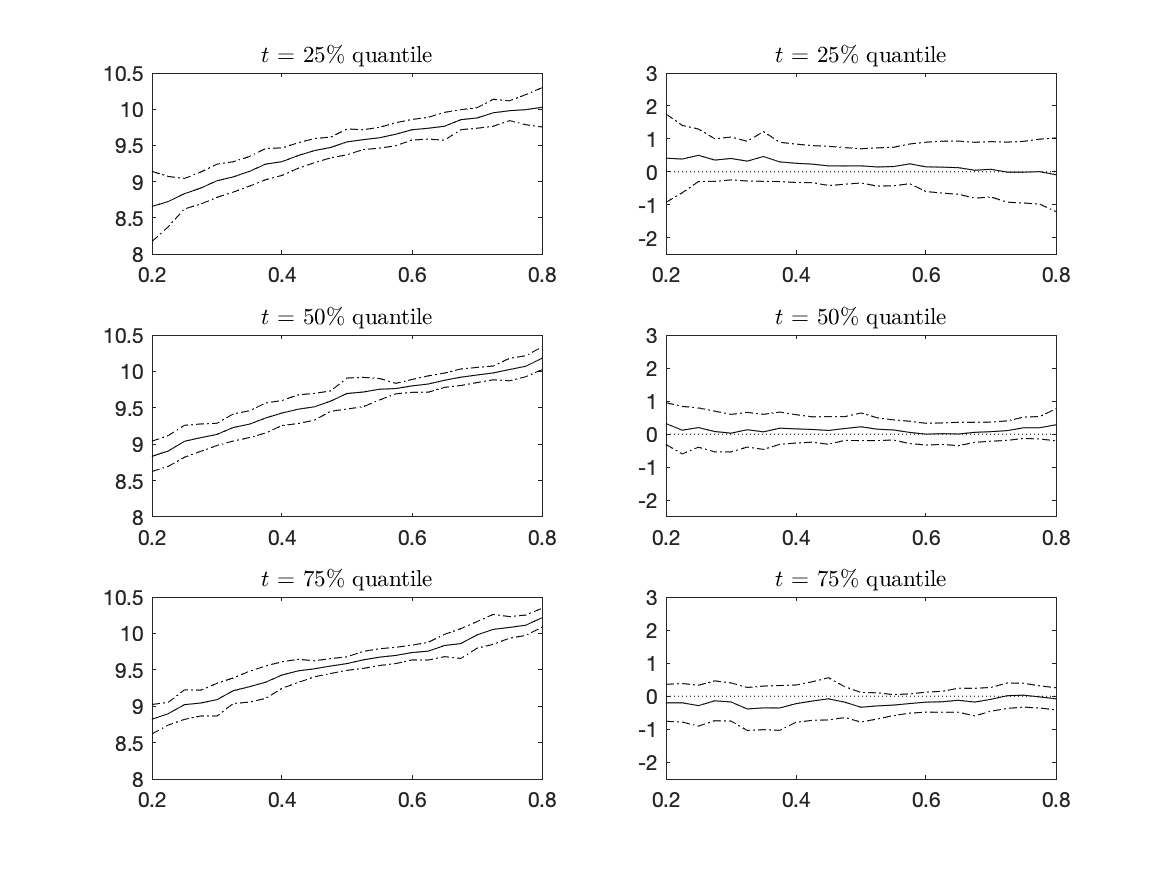}   
	\caption{Empirical results for blacks with large $\lambda$}
\label{fig:comp6_app_black}
\end{figure}

\end{document}